
\documentclass[acmsmall,screen  
]{acmart}

\usepackage{etex}

\usepackage{booktabs}   
\usepackage{subcaption} 
\bibliographystyle{ACM-Reference-Format}
\citestyle{acmauthoryear}   


\setcounter{topnumber}{9}
\setcounter{bottomnumber}{9}
\setcounter{totalnumber}{20}
\setcounter{dbltopnumber}{9}


\usepackage{etex}

\renewcommand\simeq\cong

\usepackage[all,cmtip,2cell]{xy}

\usepackage{booktabs}   
\usepackage{subcaption} 



\usepackage[normalem]{ulem}
\usepackage{stmaryrd}
\usepackage{tikz}
\usetikzlibrary{shapes,calc}
\usepackage{url}
\usepackage{multirow}
\usepackage{multicol}
\usepackage{rotating}
\usepackage{listings}
\usepackage[utf8]{inputenc}
\usepackage[T1]{fontenc}
\usepackage{bussproofs}
\usepackage{hyphenat}

\usepackage{enumitem} 
\usepackage{fancyvrb}
\usepackage{listings}


\usepackage{color}

\definecolor{light-gray}{gray}{0.85}








\newenvironment{myitem}[1][]
{\itemize[leftmargin=5.0ex,topsep=0.3ex,itemsep=1pt, #1]}
{\enditemize}
\newenvironment{myyitem}[1][]
{\itemize[leftmargin=4.0ex,topsep=0.3ex,itemsep=1pt, #1]}
{\enditemize}
\newenvironment{myyyitem}[1][]
{\itemize[leftmargin=3.7ex,topsep=0.3ex,itemsep=1pt, #1]}
{\enditemize}
\newenvironment{mmyitem}[1][]
{\itemize[leftmargin=2.15ex,topsep=0.3ex,itemsep=1pt, #1]}
{\enditemize}
\newenvironment{mmmyitem}[1][]
{\itemize[leftmargin=3.2ex,topsep=0.3ex,itemsep=1pt, #1]}
{\enditemize}

\setcounter{secnumdepth}{4}

\newcommand\mystyle[1]{{\small \textsf{#1}}}

\newcommand\FrVr{\mystyle{FrVr}}

\newcommand\FrAp{\mystyle{FrAp}}
\newcommand\FrLm{\mystyle{FrLm}}
\newcommand\FvVr{\mystyle{FvVr}}
\newcommand\FvAp{\mystyle{FvAp}}
\newcommand\FvLm{\mystyle{FvLm}}

\newcommand\co[1]{#1$_\infty$}

\newcommand\IFrVr{\co{\FrVr}}

\newcommand\VrInv{\mystyle{VrInv}}
\newcommand\IFrAp{\co{\FrAp}}
\newcommand\IFrLm{\co{\FrLm}}
\newcommand\IFvVr{\co{\FvVr}}
\newcommand\IFvAp{\co{\FvAp}}
\newcommand\IFvLm{\co{\FvLm}}

\newcommand\SwVr{\mystyle{SwVr}}
\newcommand\SwAp{\mystyle{SwAp}}
\newcommand\SwLm{\mystyle{SwLm}}
\newcommand\ISwVr{\co{\SwVr}}
\newcommand\ISwAp{\co{\SwAp}}
\newcommand\ISwLm{\co{\SwLm}}

\newcommand\SwId{\mystyle{SwId}}
\newcommand\SwCp{\mystyle{SwCp}}
\newcommand\SwIv{\mystyle{SwIv}}
\newcommand\SwFr{\mystyle{SwFr}}
\newcommand\FrSw{\mystyle{FrSw}}
\newcommand\SwFv{\mystyle{SwFv}}
\newcommand\FvSw{\mystyle{FvSw}}

\newcommand\ISwId\SwId 
\newcommand\ISwCp\SwCp 
\newcommand\ISwIv\SwIv 
\newcommand\ISwFr\SwFr 
\newcommand\IFrSw\FrSw 
\newcommand\ISwFv\SwFv 
\newcommand\IFvSw\FvSw 

\newcommand\RnFr{\mystyle{RnFr}}
\newcommand\FrRn{\mystyle{FrRn}}
\newcommand\IRnFr\RnFr 
\newcommand\IFrRn\FrRn 

\newcommand\FrRnT{\mystyle{FrRn$_2$}}
\newcommand\IFrRnT\FrRnT

\newcommand\SbFr{\mystyle{SbFr}}
\newcommand\FrSb{\mystyle{FrSb}}
\newcommand\ISbFr\SbFr 
\newcommand\IFrSb\FrSb 

\newcommand\SwCg{\mystyle{SwCg}}
\newcommand\ISwCg{\co{\SwCg}}

\newcommand\SwBvr{\mystyle{SwBvr}}
\newcommand\ISwBvr{\co{\SwBvr}}
\newcommand\ISwBvrT{\mystyle{SwBvr}$_{\infty,2}$}

\newcommand\PmVr{\mystyle{PmVr}}
\newcommand\PmAp{\mystyle{PmAp}}
\newcommand\PmLm{\mystyle{PmLm}}
\newcommand\IPmVr{\co{\PmVr}}
\newcommand\IPmAp{\co{\PmAp}}
\newcommand\IPmLm{\co{\PmLm}}

\newcommand\PmId{\mystyle{PmId}}
\newcommand\PmCp{\mystyle{PmCp}}
\newcommand\PmFv{\mystyle{PmFv}}
\newcommand\FvPm{\mystyle{FvPm}}

\newcommand\IPmId\PmId 
\newcommand\IPmCp\PmCp 
\newcommand\IPmFv\PmFv 
\newcommand\IFvPm\FvPm 

\newcommand\PmBvr{\mystyle{PmBvr}}
\newcommand\IPmBvr{\co{\PmBvr}}
\newcommand\IIPmBvr{\PmBvr$_\infty'$}

\newcommand\SbVr{\mystyle{SbVr}}
\newcommand\SbAp{\mystyle{SbAp}}
\newcommand\SbLm{\mystyle{SbLm}}
\newcommand\ISbVr{\co{\SbVr}}
\newcommand\ISbAp{\co{\SbAp}}
\newcommand\ISbLm{\co{\SbLm}}

\newcommand\SbCn{\mystyle{SbCn}}
\newcommand\ISbCn\SbCn 

\newcommand\SbCg{\mystyle{SbCg}}

\newcommand\SbBvr{\mystyle{SbBvr}}
\newcommand\ISbBvr{\co{\SbBvr}}
\newcommand\IISbBvr{\SbBvr$_\infty'$}

\newcommand\RnBvr{\mystyle{RnBvr}}
\newcommand\IRnBvr{\co{\RnBvr}}

\newcommand\RnBvrT{\RnBvr$_2$}

\newcommand\IIRnBvr{\RnBvr$_\infty'$}

\newcommand\RnVr{\mystyle{RnVr}}
\newcommand\RnAp{\mystyle{RnAp}}
\newcommand\RnLmO{\mystyle{RnLm$_1$}}
\newcommand\RnLmT{\mystyle{RnLm$_2$}}

\newcommand\IRnVr{\co{\RnVr}}
\newcommand\IRnAp{\co{\RnAp}}
\newcommand\IRnLmO{\mystyle{RnLm}$_{1,\infty}$}
\newcommand\IRnLmT{\mystyle{RnLm}$_{2,\infty}$}

\newcommand\RnCg{\mystyle{RnCg}}
\newcommand\IRnCg{\co{\RnCg}}

\newcommand\RnId{\mystyle{RnId}}
\newcommand\RnIm{\mystyle{RnIm}}
\newcommand\RnCh{\mystyle{RnCh}}
\newcommand\RnCm{\mystyle{RnCm}}
\newcommand\RnChFr{\mystyle{RnChFr}}
\newcommand\SbChFr{\mystyle{SbChFr}}

\newcommand\IRnId\RnId 
\newcommand\IRnIm\RnIm 
\newcommand\IRnCh\RnCh 
\newcommand\IRnCm\RnCm 
\newcommand\IRnChFr\RnChFr 
\newcommand\ISbChFr\SbChFr 

\newcommand\SbId{\mystyle{SbId}}
\newcommand\SbIm{\mystyle{SbIm}}
\newcommand\SbCh{\mystyle{SbCh}}
\newcommand\SbCm{\mystyle{SbCm}}

\newcommand\ISbId\SbId 
\newcommand\ISbIm\SbIm 
\newcommand\ISbCh\SbCh 
\newcommand\ISbCm\SbCm 

\newcommand\FSupFv{\mystyle{FSupFv}}
\newcommand\FSupFr{\mystyle{FSupFr}}
\newcommand\FvDPm{\mystyle{FvDPm}}
\newcommand\FvDSw{\mystyle{FvDSw}}
\newcommand\FrDSw{\mystyle{FrDSw}}
\newcommand\FrDRn{\mystyle{FrDRn}}
\newcommand\FCB{\mystyle{FCB}}

\newcommand\IFvDPm{\co{\FvDPm}}
\newcommand\IFvDSw{\co{\FvDSw}}
\newcommand\IFrDSw{\co{\FrDSw}}
\newcommand\IFrDRn{\co{\FrDRn}}
\newcommand\IFSupFv{\co{\FSupFv}}
\newcommand\IFSupFr{\co{\FSupFr}}


\newtheorem{mylemma}{Lemma}
\newtheorem{prop}[mylemma]{Prop}
\newtheorem{thm}[mylemma]{Thm}
\newtheorem{defi}[mylemma]{Def}



\newtheorem{exa}[mylemma]{Example}

\newenvironment{proof}{{\noindent\it Proof sketch:\ }}
\newcommand{\leftOut}[1]{}

\newbox\boxA

\newcommand{\oexp}{\mbox{\hphantom{$+_{\mathsf{o}}$}}\llap{$\text{\textasciicircum}_{\mathsf{o}}\kern.1em$}}

\newcommand\hlt[1]{\mbox{\colorbox{light-gray}{$#1$}}}

\newcommand{\cexp}{\mbox{\hphantom{$+_{\mathsf{o}}$}}\llap{$\text{\textasciicircum}_{\mathsf{c}}\kern.1em$}}

\newcommand{\imageOp}{\raise.2ex\hbox{\mbox{$\scriptscriptstyle\bullet$}}}
\newcommand{\vimageOp}{\raise.2ex\hbox{\mbox{$\scriptscriptstyle-\hspace*{-0.7ex}-\kern-.2em\bullet$}}}

\newcommand\TC{\sf}

\newcommand\CHOPFROMUN{.25}
\newcommand\UN{{\setbox\boxA=\hbox{\_}\usebox\boxA\kern-\CHOPFROMUN\wd\boxA{\color{white}\vrule height 0ex depth .444ex width \CHOPFROMUN\wd\boxA}\kern-\CHOPFROMUN\wd\boxA}}



\newcommand{\sm}{\smallsetminus}

\renewcommand{\phi}{\varphi}

\newcommand\wgeq\gtrsim

\newcommand{\su}{\subseteq}
\renewcommand{\iff}{\allowbreak\mathrel{\;\Leftarrow\nobreak\kern-1.6ex\Rightarrow\;}\allowbreak} 

\newcommand{\restr}{\upharpoonright}

\newcommand{\ra}{\rightarrow}
\def\implies{\rightarrow} 

\newcommand{\Ra}{\Rightarrow}
\newcommand{\lra}{\rightarrow}
\newcommand{\llra}{\leftrightarrow}

\newcommand{\LRA}{\Longrightarrow}

\newcommand{\llam}{{\llam}}




\newcommand\alb\allowbreak
\newcommand{\<}{\langle}
\renewcommand{\>}{\rangle}

\newcommand\Obj[1]{\mathsf{Obj}(#1)}

\newcommand{\Props}{\mathsf{Props}}

\newcommand{\Bcat}{\underline{\mathcal{B}}}
\newcommand{\Ccat}{\underline{\mathcal{C}}}

\newcommand{\A}{\mathcal{A}}

\renewcommand{\SS}{\mathcal{I}}
\renewcommand{\AA}{\mathcal{A}}
\newcommand{\MM}{\mathcal{M}}

\newcommand{\TTrm}{\mathcal{T}\hspace*{-0.6ex}r}
\newcommand{\ITTrm}{\mathcal{T}\hspace*{-0.6ex}r_\infty}

\newcommand{\BB}{\mathcal{B}}

\newcommand{\FF}{\mathcal{F}}



     
\newcommand\Sigmac{\Sigma_{\mathsf{ctor}}}
\newcommand\Sigmad{\Sigma_{\mathsf{dtor}}}
\newcommand\Sigmae{\Sigma_{\mathsf{ext}}}
\newcommand\Sym{{\mathsf{Sym}}}
\newcommand\img{{\mathsf{image}}}
\newcommand\vr{{\mathsf{vr}}}
\newcommand\ap{{\mathsf{ap}}}
\newcommand\lm{{\mathsf{lm}}}
\newcommand\swp{{\mathsf{sw}}}
\newcommand\sbs{{\mathsf{sb}}}
\newcommand\ren{{\mathsf{ren}}}
\renewcommand\pm{{\mathsf{pm}}}
\newcommand\fv{{\mathsf{fv}}}
\newcommand\fr{{\mathsf{fr}}}
\newcommand\FVars{{\mathsf{FVars}}}

\newcommand\sem{{\mathsf{sem}}}

\newcommand{\im}{!}
\newcommand{\no}{\mathsf{{noccs}}}

\newcommand{\Vr}{\mathsf{{Vr}}}

\newcommand{\LM}{\mathsf{{LM}}}
\newcommand{\AP}{\mathsf{{AP}}}
\newcommand{\Lm}{\mathsf{{Lm}}}

\newcommand{\Ap}{\mathsf{{Ap}}}

\newcommand{\PVr}{\mathsf{{PVr}}}
\newcommand{\PLm}{\mathsf{{PLm}}}
\newcommand{\PAp}{\mathsf{{PAp}}}

\newcommand{\fresh}{\#}
\newcommand{\ifresh}{\$}

\newcommand{\getApR}{{{\mathsf{getApR}}}}
\newcommand{\getApL}{{{\mathsf{getApL}}}}
\newcommand{\enf}{{{\mathsf{enf}}}}
\newcommand{\enc}{{{\mathsf{enc}}}}
\newcommand{\ddepth}{{{\mathsf{size}}}}

  \newcommand\rep{\TC{rep}}
\newcommand\isVr{\TC{isVr}}
\newcommand\getVr{\TC{getVr}}
\newcommand\isAp{\TC{isAp}}
\newcommand\getAp{\TC{getAp}}
\newcommand\isLm{\TC{isLm}}
\newcommand\getLm{\TC{getLm}}

\newcommand{\subst}{{{\mathsf{subst}}}}
\newcommand{\psubst}{{{\mathsf{psubst}}}}

\newcommand{\Env}{{{\mathsf{Env}}}}

\newcommand{\In}{{{\mathsf{In}}}}
\newcommand{\Vv}{{{\mathsf{V}}}}
\newcommand{\Aa}{{{\mathsf{A}}}}
\newcommand{\Ll}{{{\mathsf{L}}}}

\newcommand{\VV}{{{\mathsf{V}}}}

\newcommand{\PPne}{{{\mathcal{P}_{\!\not=\emptyset}}}}

\newcommand{\pickFresh}{{{\mathsf{pickFresh}}}}

\newcommand{\FV}{{{\mathsf{FV}}}}
\newcommand{\Dest}{{{\mathsf{Dest}}}}
\newcommand{\PDest}{{{\mathsf{PDest}}}}
\newcommand{\dest}{{{\mathsf{dest}}}}

\newcommand{\Pow}{{{\mathcal{P}}}}

\newcommand{\supp}{{{\mathsf{supp}}}}

\newcommand{\ext}{{{\mathsf{ext}}}}

\newcommand{\Ct}{{{\mathsf{Ct}}}}







\newcommand{\Ttrue}{\mathsf{{True}}}

\newcommand{\Var}{\mathsf{{Var}}}

\newcommand{\Trm}{\mathsf{Tr}}
\newcommand{\ETrm}{\mathsf{ETr}}
\newcommand{\ITrm}{\mathsf{Tr}_\infty}

\newcommand{\PTrm}{\mathsf{PTr}}
\newcommand{\PITrm}{\mathsf{PTr}_\infty}

\newcommand{\Perm}{\mathsf{{Perm}}}

\newcommand{\id}{\mathsf{{id}}}

\newcommand{\sw}{\hspace*{-0.25ex}\wedge\hspace*{-0.20ex}}

\newcommand{\F}{{\TC{F}}}
\newcommand{\K}{{\TC{K}}}

\newcommand{\map}{\mathsf{{map}}}
\newcommand{\Vars}{\mathsf{{Vars}}}
\newcommand{\rem}{\mathsf{{rem}}}
\newcommand{\Fmap}{\mathsf{{Fmap}}}

\newcommand{\ccr}{{\mathit{cr}}}

\newcommand{\xs}{{\mathit{xs}}}







\newcommand{\Bool}{{\TC Bool}}

\newcommand{\Nat}{\mathbb{N}}






\newmuskip\originalthinmuskip
\originalthinmuskip=\thinmuskip
\thinmuskip=4mu 
\newmuskip\tinyGapMu
\tinyGapMu=2mu

\renewcommand\ldots{\mathinner{.\mskip\originalthinmuskip .\mskip\originalthinmuskip .}}



\newcommand\Smash[1]{\kern-200mm\smash{#1}\kern-200mm}
\newcommand\SubItem[2]{\indent\hbox to \leftmargini{\hfill#1\enskip}#2}

\newcommand\XDot{\raise1ex\hbox{\Large.\kern.1em}}







\setlength{\jot}{1pt}

\makeatletter


\EnableBpAbbreviations

\clubpenalty = 10000
\widowpenalty = 10000
\displaywidowpenalty = 10000
\begin{document}
	
	\title
	[Nominal 
	Recursors as Epi-
	Recursors]
	{Nominal 
		Recursors as Epi-
		Recursors: Extended Technical Report} 
	\thanks{This is an extended version of the paper ``Nominal 
		Recursors as Epi-Recursors'' published in POPL 2024. 
It includes an appendix that gives more details about the results and their proofs.	
}


	\author{Andrei Popescu}
	\affiliation{
		\department{Department of Computer Science}             
		\institution{University of Sheffield}           
		\city{Sheffield}
		\country{United Kingdom}
	}
	\email{a.popescu@sheffield.ac.uk}        

	\begin{abstract}
			 	We study nominal recursors from the literature on syntax with bindings and compare them with respect to expressiveness. 
		The term ``nominal'' refers to the fact that these recursors operate on a syntax representation 
		where the names of bound variables appear explicitly, as in nominal logic.  
		%
		We argue that nominal recursors can be viewed as \emph{epi-recursors}, a concept that captures abstractly the distinction between the constructors on which one actually recurses, and other operators and properties that further 
		underpin recursion.  
		%
		We develop an abstract framework for comparing epi-recursors and instantiate it to the existing nominal recursors, and also to several recursors obtained from them by 
		cross-pollination.   
		The resulted expressiveness hierarchies 
		depend on how strictly we perform this comparison, and bring insight into the relative merits of different axiomatizations of syntax.  
		We also apply our methodology to produce an expressiveness hierarchy of nominal \emph{corecursors}, which are principles for defining functions targeting infinitary non-well-founded terms (
		which underlie $\lambda$-calculus semantics  concepts such as B\"{o}hm trees).  
		%
		%
		%
		Our results are validated with the Isabelle/HOL theorem prover. 
	\end{abstract}
	
	\begin{CCSXML}
		<ccs2012>
		<concept>
		<concept_id>10003752.10003790.10002990</concept_id>
		<concept_desc>Theory of computation~Logic and verification</concept_desc>
		<concept_significance>500</concept_significance>
		</concept>
		</ccs2012>
	\end{CCSXML}
	
	\ccsdesc[500]{Theory of computation~Logic and verification}

	\keywords{nominal recursion and corecursion, 
		nominal logic, epi-(co)recuror,  
	syntax with bindings, 
		formal reasoning, 
	    theorem proving}  

	\maketitle

\section{Introduction}

Syntax with bindings is pervasive in $\lambda$-calculi, logics and programming languages. 
Powerful  
mechanisms for performing definitions and reasoning involving bindings are important for formalizing 
the meta-theory of such systems  \cite{POPLmark,poplmarkReloaded,DBLP:journals/mscs/FeltyMP18}.  
Central among these mechanisms are \emph{recursion principles}  (\emph{recursors} for short), allowing one 
to define functions by recursing over the syntax---e.g., 
for syntactic translations, semantic interpretations, and 
static analysis. 
%

A large amount of 
research has been dedicated to devising such mechanisms, within three main paradigms: nominal / nameful, nameless / De Bruijn, and 
higher-order abstract syntax (HOAS). 
Each of the three paradigms has pros and cons discussed at length in the literature (e.g., \cite{DBLP:journals/entcs/BerghoferU07,DBLP:conf/tphol/NorrishV07,poplmarkReloaded,momFelty-Hybrid4,DBLP:journals/pacmpl/BlanchetteGPT19}). 
%
A major selling point of the nominal paradigm, of which the most prominent representative is  nominal logic   \cite{DBLP:conf/lics/GabbayP99,UrbanTasson,nominalCoq}, is that it employs a formal representation that is close to the one used in textbooks and informal descriptions, where on the one hand the names of bound variables are shown explicitly, and on the other hand their particular choice is irrelevant. Moreover, definitions and reasoning within this paradigm mimic informal practice,   
such as avoiding the capturing of bound variables by conveniently choosing their names in definition and proof contexts \cite{pitts-AlphaStructural,urban-Barendregt,nominalInAgda}.  

A 
delicate subject, where the nominal paradigm must walk a tightrope to achieve its goals, is 
the recursion principles. 
The specific challenge for recursion here is 
that terms with bindings, 
which are equated modulo (i.e., quotiented to)  $\alpha$-equivalence 
(\S\ref{subsec-terms}),  
do not form a free, hence standardly recursable datatype. 
To overcome this problem, various nominal recursors have been proposed and successfully deployed in formal developments (e.g.,  \cite{DBLP:conf/lics/GabbayP99,pitts-AlphaStructural,primrecFOAS-Norrish04,urbanNominalRec,DBLP:conf/icfp/PopescuG11}). 
These recursors come in a variety of formats and flavors: 
they use different operators and have different features that enhance their cores  (\S\ref{subsec-nominalRec}).  
%

This paper contributes a general, systematic account of nominal recursors, 
highlighting their underlying principles and 
inter-connections. 
We ask two questions. First,  
\emph{what is a nominal recursor?}  
In particular, what are the essential features that 
nominal recursors from the literature have in common (\S\ref{sec-nomRecAsEpiRec})?  
  After an 
  analysis of what the existing recursors aim to achieve and 
  how they operate 
  (\S\ref{subsec-purposeNomRec}) 
and 
the uniform rephrasing of their original presentations  
using signatures and models (\S\ref{subsec-sigMod}),  
we synthesize the concept of an \emph{epi-recursor} (\S\ref{subsec-epiRec}). This concept captures abstractly 
their essential behavior, which can be summarized as follows: On top of the constructor 
infrastructure specific to standard recursion, these recursors take advantage of additional infrastructure employing non-constructor operators, to make the recursive definitions go through. And indeed, all the considered nominal recursors, and others obtained by cross-pollinating them, are particular cases of epi-recursors  (\S\ref{subsec-nomrecAsEpirecFormally}).  


Second, \emph{what does it mean for a nominal recursor to be more expressive than another, and how do the existing recursors compare?}  (\S\ref{sec-compareNominalRec}). 
Apart from its theoretical interest, this question is of practical importance for designers and developers of formal reasoning frameworks. 
%
%
We answer it by introducing two relations 
for comparing the strength of epi-recursors, which differ in the amount of effort 
required 
in simulating one recursor by another.  The first, stricter relation (\S\ref{subsec-compareHeadToHead}) follows naturally from the definition of epi-recursors. The second, laxer relation (\S\ref{subsec-moreGentle}) is more elaborate, and was inspired by previous efforts to make a nominal recursor work on a 
brittle 
terrain where syntax 
meets semantics (\S\ref{subsec-semInt}). 
Instantiating the two relations to compare the nominal recursors  yields two different hierarchies of strength. The comparisons reveal some interesting phenomena about the relative merits of considering various combinations of operations and axioms. Quite surprisingly given the wide variability of the underlying infrastructures, the laxer comparison yields an almost flat hierarchy, revealing that most of the recursors have the same strength---but still revealing that the  symmetric operators (swapping and permutation) fare better than the asymmetric ones (renaming and  substitution). 

Analogous questions make sense when moving from the inductive to the coinductive world (\S\ref{sec-co}). Here, we deal with infinitary non-well-founded $\lambda$-terms
where we allow an infinite number of constructor applications (\S\ref{subsec-infTerms}) and we study \emph{corecursors}, which are principles for defining functions not from but \emph{to} the set of infinitary terms. 
While our abstract notion of epi-corecursor (\S\ref{subsec-epicoRec}) is perfectly dual to that of epi-recursor, 
this is far from the case 
with the nominal corecursor versus recursor instances. 
However, there are 
elements of duality 
between these instances which we explore systematically,  
establishing 
a similar but different nominal corecursor expressiveness hierarchy (\S\ref{subsec-hiarNomCorec}). 

We have mechanized 
the discussed nominal (co)recursors and their comparison results 
in the Isabelle/HOL theorem prover \cite{LNCS2283} (\S\ref{sec-mechResults}).  
App.~\ref{app-isa} gives extensive details on the mechanization. 

\section{Background} 
\label{sec-nomRec}

This section provides 
background on syntax with bindings (\S\ref{subsec-terms}) and 
recalls 
several 
nominal recursors 
recursion from the literature 
(\S\ref{subsec-nominalRec}).    

\vspace*{-0.5ex}\subsection{Terms with bindings}
\label{subsec-terms}

We work with the paradigmatic syntax of 
lambda-calculus, but our results 
generalize to arbitrary binding syntaxes, as in 
 \cite{pitts-AlphaStructural, nominalTwo}.  
%
Let $\Var$ be a countably infinite set of variables, ranged over 
by $x,y,z$. 
The set $\Trm$ of $\lambda$-terms, ranged over by $t,s$, 
is defined by the grammar: 
\looseness=-1
  \vspace*{-1.3ex}
	$$
	t \;::=\; \Vr\;x  \;\mid\; \Ap\;t_1\;t_2  \;\mid\; \Lm\;x\;t
		\vspace*{-1ex}
	$$
with the proviso that terms are equated (identified) modulo $\alpha$-equivalence (a.k.a.\ naming equivalence). 
Thus, for example, $\Lm\;x\;(\Ap\;(\Vr\;x)\;(\Vr\;x))$ and 
$\Lm\;y\;(\Ap\;(\Vr\;y)\;(\Vr\;y))$ are considered to be the same term. 
We will often omit writing the injection $\Vr$ of variables into terms. 

In more detail, 
the above definition means 
the following: One first defines  the set $\PTrm$ of 
\emph{preterms} (also called ``raw terms'') to be freely generated by the 
grammar 
$
p ::= \PVr\;x  \mid \PAp\;p_1\;p_2  \mid \PLm\;x\;p 
$. 
Then one defines $\alpha$-equivalence 
$\equiv\; : 
\PTrm \ra \PTrm \ra 
\Bool$  
inductively 
and defines $\Trm$ by quotienting: 
$\Trm = \PTrm/\equiv$. 
Finally, one proves that the preterm constructors are compatible with $\equiv$, 
which allows to define the constructors on terms:  
$\Vr : \Var \ra \Trm$, $\Ap : \Trm \ra \Trm \ra \Trm$ 
and $\Lm : \Var \ra \Trm \ra \Trm$. 
\looseness=-1

Working with terms rather than preterms has 
well-known advantages, including  
the substitution operator being well-behaved. 
This is why most formal and informal developments prefer terms. 
%
For the rest of this paper, we will focus on terms and mostly forget about preterms---the latter will show up only occasionally, when we discuss certain intuitions. 
\looseness=-1






Let $\Perm$ denote the set of finite permutations (bijections of finite support) on 
variables, 
$\{\sigma : \Var \ra \Var \mid 
\{x \mid \sigma\;x \not=x\}$ finite $\}$.
We will consider generalizations of 
some common operations and relations on 
terms, namely: 
\begin{mmyitem}
	\item the constructors 
	$\Vr : \Var \ra \Trm$, $\Ap : \Trm \ra \Trm \ra \Trm$ 
	and $\Lm : \Var \ra \Trm \ra \Trm$ 
	\item (capture-avoiding) substitution  
	$\_[\_\,/\_] : \Trm \ra \Trm \ra \Var \ra \Trm$; 
	e.g., we have 
	\\$(\Lm\;x\;(\Ap\;x\;y))\;[\Ap\;x\;x \,/\, y] \,= \Lm\;x'\;(\Ap\;x'\;(\Ap\;x\;x))$ for some $x'\not= x$
	\item (capture-avoiding) renaming  
	$\_[\_\,/\_] : \Trm \ra \Var \ra \Var \ra \Trm$,  
    the restriction of substitution to variables, i.e., it substitutes 
	variables for variables rather than terms for variables; 
	e.g., we have $(\Lm\;x\;(\Ap\;x\;y))\;[x \,/\, y] = \Lm\;x'\;(\Ap\;x'\;x)$ for some $x'\not= x$ 
	\item swapping  
	$\_[\_\sw\_] : \Trm \ra \Var \ra \Var \ra \Trm$; 
	e.g., we have 
	$(\Lm\;x\;(\Ap\;x\;y))\,[x \sw y] = 
	\Lm\;y\;(\Ap\;y\;x)$
	\item permutation $\_[\_] : \Trm \ra \Perm \ra \Trm$; 
	e.g., we have 
	$(\Lm\;x\;(\Ap\;z\;y))\,[x \mapsto y,y\mapsto z,z\mapsto x] = 
	\Lm\;y\;(\Ap\;x\;z)$
	\item free-variables 
	$\FV : \Trm \ra \Pow(\Var)$ (the powerset of $\Var$); 
	e.g., %
	we have $\FV (\Lm\;x\;(\Ap\;y\;x)) = \{y\}$ when $y\not=x$
	\looseness=-1
	\item freshness $\_\fresh\_ : \Var \ra \Trm \ra \Bool$; 
	e.g., %
	we have $x \,\fresh\, \Lm\;x\;x$, and $\neg\;x \,\fresh\, \Lm\;y\;x$ when $x\not=y$
\end{mmyitem}

 We let $x \llra y$ be the 
permutation 
 that takes $x$ to $y$, $y$ to $x$ and everything else to itself. 
Note that permutation generalizes swapping, in that $t [x \sw y] = t [x \llra y]$. 
Also, note that free variables and freshness are of course two faces of the same coin: a variable $x$ is fresh for a term $t$ (i.e., $x \,\fresh\, t$) 
 if and only if it is not free in $t$ (i.e., $x \notin \FV\;t$). 
 \looseness=-1

We will not give definitions for the above operators, but count on the reader's familiarity with them.
 The definitions can be done in 
several equivalent ways---see, e.g., \cite{bar-lam,pitts-AlphaStructural}.  
\looseness=-1

\vspace*{-0.5ex}\subsection{Nominal recursors}
\label{subsec-nominalRec}

%
Next we look at 
nominal recursors in their ``natural habitat'', using concepts and terminology used by the authors who introduced them. 
Later on, in \S\ref{sec-nomRecAsEpiRec}, we will recast them in a uniform format. 
%
%
%
For convenience, we 
refer to these recursors by the additional operators 
they are based on; e.g., the ``perm/free'', or ``swap/fresh'' recursor 
(not forgetting though that 
not only the chosen 
operators, but also 
the axioms imposed on 
them 
are 
responsible for a recursor's behavior).  
\looseness=-1



\subsubsection{The perm/free recursor} 
\label{subsubsec-alphaStructRec}
This is 
the best known nominal recursor, originating in the context of nominal logic \cite{DBLP:conf/lics/GabbayP99}. 
In the form we present here, which does not require any special 
logical foundation (e.g., axiomatic nominal set theory), 
it is due to 
\citet{pitts-AlphaStructural},  
who builds on previous work by 
\citet{DBLP:conf/lics/GabbayP99}
and 
\citet{urbanNominalRec}.     
Pitts called this recursor ``$\alpha$-structural'' 
to emphasize that 
it operates on $\alpha$-equivalence classes, i.e., 
on terms rather than preterms. But since this 
is true about all nominal recursors, 
we will instead refer to this as the  ``perm/free recursor'' because it employs the permutation and free-variable operators.  
\looseness=-1

Some preparations are needed for 
describing this recursor. 
 $(\Perm,\id,\circ)$ forms a group, where $\id$ is the identity permutation and $\circ$ is 
 composition.  
A \emph{pre-nominal set} is a set equipped with a $\Perm$-action, i.e., 
a pair $\AA = (A,\_[\_]^\AA)$ where $A$ is a set and 
$\_[\_]^\AA : A \ra \Perm \ra A$ is an action of 
$\Perm$ on $A$, i.e., is idle for identity 
 ($a[\id]^\AA = a$ for all $a\in A$) and compositional ($a[\sigma \circ \tau]^\AA = a[\tau]^\AA[\sigma]^\AA$). 
\looseness=-1

Given a pre-nominal set $\AA = (A,\_[\_]^\AA)$, an 
$a\in A$ and a set $X\su\Var$, we say that \emph{$a$ is supported by $X$}, or  
\emph{$X$ supports $a$}, if 
$a [x \!\llra\! y]^\AA = a$ holds for all $x,y\in\Var \sm X$. 
An element $a\in A$ is called \emph{finitely supported} if there exists a finite set $X$ that supports $a$. 
A \emph{nominal set} is a pre-nominal set 
where every element is finitely supported.
If $\AA = (A,\_[\_]^\AA)$ is a nominal set and $a\in A$, then the smallest set  
that supports $a$ can be shown to exist---it is denoted by $\supp^\AA(a)$ and called the \emph{support of $a$}.  
%
Given two pre-nominal sets $\AA = (A,\_[\_]^\AA)$ and $\BB = (B,\_[\_]^\BB)$,  the set $F = (A \ra B)$ of functions from $A$ to $B$ 
forms a pre-nominal set   $\FF = (F,\_[\_]^\FF)$ by defining $f[\sigma]$ to be the function that sends each $a\in A$ to $f(a[\sigma^{-1}])[\sigma]$.  
%
The set of terms 
with their $\Perm$-action, $(\Trm,\_[\_])$, forms a nominal set, where the support of a term $t$ consists 
of its free variables.
\looseness=-1

The 
recursion theorem states that it is possible to define a function $g$ from terms to any other set provided $A$ is equipped with a nominal-set structure and additionally has some ``term-like'' operators matching the variable-injection, application and $\lambda$-abstraction operator, satisfying a specific condition. Concretely, 
it states that there exists a unique function $g$ that commutes with these operators: 
\looseness=-1

\begin{thm} \rm \cite{DBLP:conf/lics/GabbayP99,pitts-AlphaStructural}
	\label{thm-pittsRec} 
	Let $\AA = (A,\_[\_]^\AA)$ be a nominal set and let $\Vr^\AA : \Var \ra A$, 
	$\Ap^\AA : A \ra A \ra A$ and 
	$\Lm^\AA : \Var \ra A \ra A$ be functions, 
	all supported by a finite set $X$ of variables and such that the following freshness condition for binders (FCB) holds: there exists $x\in\Var$ such that $x\notin X$ and $x \,\fresh^\AA\, \Lm^\AA\;x\;a$ for all $a \in A$. 

	Then there exists a unique 
	$g: \Trm \ra A$ supported by $X$ such that the following hold:
	\begin{myyitem}
		\item[(1)]  $g\,(\Vr\;x) = \Vr^\AA\,x$
		\hspace*{1ex} (2) $g\,(\Ap\;t_1\;t_2) = \Ap^\AA(g\;t_1)\,(g\;t_2)$
			\hspace*{1ex} (3) 
		$g\,(\Lm\;x\;t) = \Lm^\AA\,x\;(g\;t)$ if $x\notin X$ 
	\end{myyitem}
\end{thm}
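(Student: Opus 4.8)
The plan is to reduce the statement to an ordinary structural recursion on the free datatype $\PTrm$ of preterms, and then to descend to terms by working only with carefully chosen representatives. I will use two elementary facts about the data. First, any operator supported by $X$ commutes with the action of every permutation fixing $X$ pointwise --- e.g.\ $(\Lm^{\AA}\,x\,a)[\pi] = \Lm^{\AA}\,(x[\pi])\,(a[\pi])$ whenever $\pi$ fixes $X$ --- since this is merely the unfolding of ``supported by $X$'' in the function pre-nominal set $\Var\ra A\ra A$. Second, \FCB{} can be strengthened: applying permutations fixing $X$ to the witness it provides yields $x \,\fresh^{\AA}\, \Lm^{\AA}\,x\,a$ for \emph{every} $x\notin X$ and \emph{every} $a\in A$. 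Call a preterm \emph{good} if all of its bound variables lie outside $X$; every $\alpha$-equivalence class has a good representative (rename all binders to fresh variables outside $X$), and goodness is preserved when one swaps two variables both lying outside $X$. Define $\hat g : \PTrm \to A$ by the evident recursion, $\hat g\,(\PVr\,x) = \Vr^{\AA}\,x$, $\hat g\,(\PAp\,p_1\,p_2) = \Ap^{\AA}\,(\hat g\,p_1)\,(\hat g\,p_2)$, $\hat g\,(\PLm\,x\,p) = \Lm^{\AA}\,x\,(\hat g\,p)$. Two inductions on $p$ then give: $\hat g$ commutes with every permutation fixing $X$; and, for \emph{good} $p$, $\supp^{\AA}(\hat g\,p) \subseteq X \cup \FV(p)$ (the $\PLm$-case combines the generic bound $\supp^{\AA}(\Lm^{\AA}\,x\,a) \subseteq X\cup\{x\}\cup\supp^{\AA}(a)$ with strengthened \FCB{} to discard the bound variable $x$).

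The core lemma is: if $p,q$ are good preterms with $p \aeq q$, then $\hat g\,p = \hat g\,q$. I would prove it by induction on $\ddepth(p)$ (which equals $\ddepth(q)$). The constructor-matching cases are routine; the crux is $p = \PLm\,x\,p'$ and $q = \PLm\,y\,q'$, where $x,y\notin X$ by goodness. Pick $z$ fresh for $p'$ and $q'$, and additionally $z\notin X\cup\{x,y\}$; then $p'[x \sw z] \aeq q'[y \sw z]$ (this direction holds for any such $z$), both preterms are still good and of strictly smaller depth, so the induction hypothesis gives $\hat g\,(p'[x \sw z]) = \hat g\,(q'[y \sw z])$, and commutation with the swappings (all of $x,y,z$ are outside $X$) rewrites this to $(\hat g\,p')[x \sw z] = (\hat g\,q')[y \sw z]$. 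Applying the map $a \mapsto \Lm^{\AA}\,z\,a$ to both sides and commuting the swappings past this $X$-supported operator turns it into $(\Lm^{\AA}\,x\,(\hat g\,p'))[x\sw z] = (\Lm^{\AA}\,y\,(\hat g\,q'))[y\sw z]$. Now strengthened \FCB{} gives $x \notin \supp^{\AA}(\Lm^{\AA}\,x\,(\hat g\,p'))$, while the support bound together with the freshness of $z$ gives $z \notin \supp^{\AA}(\Lm^{\AA}\,x\,(\hat g\,p'))$; hence $[x \sw z]$ acts as the identity on that element, and symmetrically $[y\sw z]$ on the right-hand side, leaving $\hat g\,(\PLm\,x\,p') = \Lm^{\AA}\,x\,(\hat g\,p') = \Lm^{\AA}\,y\,(\hat g\,q') = \hat g\,(\PLm\,y\,q')$.

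With the core lemma available, define $g : \Trm \to A$ by $g([p]) = \hat g\,p_0$ for any good representative $p_0$ of $[p]$; this is well defined. Clauses~(i)--(iii) follow by choosing good representatives and unfolding $\hat g$ --- for (iii) one uses that $\PLm\,x\,r$ is good whenever $r$ is good and $x\notin X$. That $g$ is supported by $X$ follows because a permutation fixing $X$ sends good representatives to good representatives, so commutation of $\hat g$ with that permutation transfers to $g$. For uniqueness, any $g'$ supported by $X$ and satisfying (i)--(iii) agrees with $g$ on every class by a plain structural induction on a good representative: the abstraction step is legitimate because good representatives bind only variables outside $X$, where clause~(iii) pins down both $g$ and $g'$.

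The main obstacle is the abstraction case of the core lemma, where several constraints must line up simultaneously: the auxiliary variable $z$ must be chosen fresh for $p',q'$ \emph{and} outside $X$; swapping must be observed to preserve goodness so that the induction hypothesis applies; and \FCB{} must be combined with the support bound to certify that $[x\sw z]$ fixes $\Lm^{\AA}\,x\,(\hat g\,p')$. \FCB{} is indispensable precisely here: without it, $\hat g$ need not respect $\alpha$-equivalence even on good preterms --- already for $X = \emptyset$, taking $A$ with $\Lm^{\AA}\,v\,a = v$ one gets $\hat g\,(\PLm\,x\,(\PVr\,x)) = x$, which depends on the choice of $x$ --- and the bound variable cannot be removed from the support.
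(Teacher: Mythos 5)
Your proof is correct, and all the delicate points are handled: the some/any strengthening of \FCB{}, the support bound $\supp^{\AA}(\hat g\,p)\subseteq X\cup\FV(p)$ for good $p$ (which is exactly what lets $[x\sw z]$ fix $\Lm^{\AA}\,x\,(\hat g\,p')$ in the abstraction case), and the preservation of goodness under swappings avoiding $X$. However, it takes a genuinely different route from the one this paper uses. What you have written is essentially the classical Gabbay--Pitts/Pitts argument: define the function by free structural recursion on preterms, restrict to representatives whose binders avoid $X$, prove $\alpha$-invariance there, and descend to the quotient; the paper explicitly cites Thm.~\ref{thm-pittsRec} from the literature and, in its own development, deliberately avoids this preterm lifting. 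Instead it repackages the theorem as the epi-recursor $r_1$ (using Lemmas~\ref{lem-altNomSet}--\ref{lem-equivImplFreshnessPres} to identify nominal sets with constructors with $(\Sigma_1,\Props_1)$-models, and restoring the Barendregt parameter $X$ by the uniform enhancement of App.~\ref{app-addingBacknhancements}), proves initiality directly only for $r_6$ and $r_9$ by an inductively defined graph relation entirely at the level of terms, and then \emph{borrows} $r_1$ through Prop.~\ref{prop-borrow} along the chain $r_1\equiv r_3$, $r_3\wgeq r_6$. Your approach buys a short, self-contained, elementary proof of this one theorem, at the cost of being wedded to the preterm presentation and of having to redo the $\alpha$-invariance induction separately for each recursor; the paper's approach buys all nine recursion theorems from two initiality proofs and lets the expressiveness comparisons do double duty, at the cost of the categorical scaffolding. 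Your closing counterexample ($\Lm^{\AA}\,v\,a=v$) correctly isolates where \FCB{} is indispensable.
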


Note that the recursor features a parameter set of variables $X$, and requires the term-like operators to be supported by $X$; in exchange, it guarantees that the defined function $g$ is also supported by $X$; moreover, the recursive clause for $\Lm$ is conditioned by the abstracted variable $x$ being fresh for $X$.
The rationale of this $X$-parametrization is the modelling of 
Barendregt's famous variable convention \cite{bar-lam}[p.26]: 
``If [the terms] $M_1,\ldots,M_n$ occur in a certain mathematical context (e.g. definition, proof), then in these terms all bound variables are chosen to be different from the free variables.''
%
According to this, functions can be defined on terms while conveniently assuming that the $\lambda$-abstracted variables do not clash with other 
variables in the context of the definition---in the perm/free recursor, the set of these other variables is over-approximated by $X$. 
\looseness=-1

\looseness=-1


\subsubsection{The swap/free recursor} 
\label{subsubsec-freeSwapRec}
The next recursor is due to 
\citet{primrecFOAS-Norrish04}, who 
takes the free-variable operator as a primitive---whereas in nominal logic this operator, called support, is defined in terms of permutation. While this distinction is not important in the concrete case of terms, it does matter when one discusses abstract ``term-like'' 
structure on target domains. 
%
Another 
difference from the perm/free recursor is in taking swapping rather than permutation as primitive.
\looseness=-1

Norrish's recursor employs 
\emph{swapping structures}, which are sets equipped with swapping- and free-variable-like operators, namely triples $\AA = (A,\_[\_\sw\_]^\AA,\FV^\AA)$ where 
$\_[\_\sw\_]^\AA : A \ra \Var \ra \Var \ra A$ and 
$\FV^\AA : A \ra \Pow(\Var)$
such that 
the following hold for all $x,y,z \in \Var$ and $a\in A$:
\looseness=-1
\begin{myitem}
	\item[(i)] $a[x \sw x]^\AA =a$
	\hspace*{22.2ex} (ii) $\;a[x \sw y]^\AA [x \sw y]^\AA = a$
	\item[(iii)]  $x,y\notin \FV^\AA\,a$ implies $a[x \sw y] = a$
\hspace*{4.53ex} (iv) $\;x\in \FV^\AA(a[y\sw z]^\AA)$ if and only if $x[y\sw z] \in \FV^\AA a$
\end{myitem}

The set of terms 
with their 
swapping and free-variable operations, 
$(\Trm,\_[\_\sw\_],\FV)$, form a swapping structure. 
The 
recursion theorem says that, given a suitable ``term-like'' infrastructure on a set $A$, which includes $A$ being a swapping structure, and 
factors in a 
set of parameter variables $X$, %
%
there exists a unique function from terms to $A$ that commutes with the term-like operators in a manner that obeys Barendregt's variables convention. 
And the 
function commutes with swapping and preserves the free variables, again in a Barendregt-convention observing manner. 
(Norrish also considers dynamic parameters,  
but	\citet[Ex.~5.6]{pitts-AlphaStructural} shows 
	how to encode 
	these using static 
	parameters.)   
	\looseness=-1

\begin{thm} \rm \cite{primrecFOAS-Norrish04}
	\label{thm-norrishRec}
	Let $\AA = (A,\_[\_\sw\_]^\AA,\FV^\AA)$  be  a swapping structure, 
	$\Vr^\AA : \Var 
	\ra A$, 
	$\Ap^\AA : (\Trm \times A) \ra (\Trm \times A) \ra A$ 
	and 
   $\Lm^\AA : \Var \ra (\Trm \times A) \ra A$ 
	some functions, and $X$ a finite set of variables such that 
	the following hold: 
	\hspace*{5ex}
	(1) $\FV^\AA(\Vr^\AA x) \su \{x\} \,\cup\,X$
	\begin{mmmyitem}
		%
		\item[(2)] 
		If $\FV^\AA a_1 \su \FV\;t_1 \,\cup\,X$ 
		and $\FV^\AA\,a_2 \su \FV\;t_2 \,\cup\,X$  
		then 
		\\$\FV^\AA(\Ap^\AA\,(t_1,a_1)\;(t_2,a_2)) \su 
		\FV\,(\Ap\;t_1\;t_2) \,\cup\,X$
		\item[(3)]  
		If $\FV^\AA\,a \su \FV\;t  \,\cup\,X$  
		then 
		$\FV^\AA(\Lm^\AA\,x\;(t,a)) \su$  
		$\FV\,(\Lm\;x\;t)  \,\cup\,X$
		\item[(4)]  If $x,y\notin X$, then $(\Vr^\AA\,z)\;[x\sw y]^\AA = 
		\Vr^\AA\,(z[x\sw y])$
		\item[(5)]  If $x,y\notin X$, then $(\Ap^\AA(t_1,a_1)\,(t_2,a_2))\;[x\sw y]^\AA = 
		\Ap^\AA(t_1[x\sw y],a_1[x\sw y]^{\AA})\, 
		(t_2[x\sw y],a_2[x\sw y]^{\AA})$
		\item[(6)]  If $x,y\notin X$, then $(\Lm^\AA\,z\;(t,a))\;[x\sw y]^\AA \;= \Lm^\AA\,(z[x\sw y])\, 
		(t[x\sw y],a[x\sw y]^{\AA})$
	\end{mmmyitem}
	Then there exists a unique function $g: \Trm \ra A$ such that the following hold:  
	\begin{myitem}
		\item[(i)]  $g\,(\Vr\;x) = \Vr^\AA\;x$
	\hspace*{23.9ex}
		(ii) $\;g\,(\Ap\;t_1\;t_2)= \Ap^\AA\,(t_1,g\;t_1)\,(t_2,g\;t_2)$
		\item[(iii)]  
		$g\,(\Lm\;x\;t) = \Lm^\AA\,x\,(t,g\;t)$ 
		if $x\notin X$
			\hspace*{5.5ex}
		(iv) 
		$\;g\,(t[x \sw y]) = (g\;t) [x \sw y]^\AA$ if $x,y\notin X$
		\item[(v)] $\FV^\AA(g\;t) \su \FV\;t \,\cup\, X$   
	\end{myitem}
\end{thm}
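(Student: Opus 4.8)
The plan is to construct $g$ by well-founded recursion on the size of terms --- a measure invariant under $\alpha$-equivalence, hence well defined on $\Trm$ --- and then to read off clauses (i)--(v) and uniqueness. The only point at which the recursive equations do not immediately yield a definition is the $\Lm$-case, since clause (iii) constrains $g$ on $\Lm\;x\;t$ only when $x\notin X$. So I would first record the elementary renaming fact: for every $t$ and every finite $Y$ there is a variable $x'\notin Y$ with $\Lm\;x\;t=\Lm\;x'\,(t[x\sw x'])$ (take $x'\notin Y\cup(\FV\;t\smallsetminus\{x\})$, which exists since $\Var$ is infinite). Fixing once and for all a well-ordering of $\Var$, I then define $g$ unambiguously by $g\,(\Vr\;x)=\Vr^\AA\;x$, $g\,(\Ap\;t_1\;t_2)=\Ap^\AA\,(t_1,g\;t_1)\,(t_2,g\;t_2)$, and $g\,(\Lm\;x\;t)=\Lm^\AA\;x'\,(t[x\sw x'],g\,(t[x\sw x']))$, where $x'$ is the least variable not in $X\cup(\FV\;t\smallsetminus\{x\})$; since $t[x\sw x']$ has the same size as $t$, which is strictly below that of $\Lm\;x\;t$, this recursion is legitimate.

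The technical heart is then a single strong induction on term size that, at each size, establishes three auxiliary statements in the order listed: (a) the \emph{support} property $\FV^\AA(g\;t)\su\FV\;t\,\cup\,X$ (clause (v)); (b) a \emph{congruence} property for the $\Lm$-value --- for all $x',x''\notin X$ with $x',x''\notin\FV\;t\smallsetminus\{x\}$ one has $\Lm^\AA\;x'\,(t[x\sw x'],g\,(t[x\sw x']))=\Lm^\AA\;x''\,(t[x\sw x''],g\,(t[x\sw x'']))$, so that every eligible choice of bound variable yields the value $g\,(\Lm\;x\;t)$; (c) the \emph{equivariance} property $g\,(t[x\sw y])=(g\;t)[x\sw y]^\AA$ for $x,y\notin X$ (clause (iv)). Statement (a) at size $n$ uses only (a) below $n$: for $\Lm\;x\;t$ it follows from conditions (1)--(3) applied to the recursive equations, using $\FV(\Lm\;x'\,(t[x\sw x']))=\FV(\Lm\;x\;t)$. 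Statement (b) at size $n$ uses only (a) and (c) below $n$: taking a sufficiently fresh $z$, one pushes the swap $[x'\sw z]^\AA$ through $\Lm^\AA$ by condition (6), simplifies $t[x\sw x'][x'\sw z]=t[x\sw z]$ and $g\,(t[x\sw x'])[x'\sw z]^\AA=g\,(t[x\sw z])$ --- the latter by (c) at the strictly smaller size of $t[x\sw x']$ --- and observes, using (a), that $[x'\sw z]^\AA$ acts idly on $\Lm^\AA\;x'\,(t[x\sw x'],\ldots)$ because its support lies in $\FV(\Lm\;x\;t)\cup X$, which avoids $x'$ and $z$; doing the same with $x''$ and the same $z$ forces the two values to coincide. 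Statement (c) at size $n$ uses (b) at size $n$ and (a),(c) below $n$: the $\Vr$- and $\Ap$-cases are immediate from conditions (4), (5) and the inductive hypothesis, while the $\Lm$-case proceeds by using (b) to represent $g\,(\Lm\;x_0\;t)$ via a bound variable chosen fresh also for $\{x,y\}$, pushing $[x\sw y]^\AA$ through $\Lm^\AA$ by condition (6), commuting the two term-level swaps (a standard equivariance property of swapping on terms), applying (c) at the smaller size, and recognising the outcome via (b) as $g$ of $(\Lm\;x_0\;t)[x\sw y]$.

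Once (a)--(c) are available, clause (i) is immediate and clause (ii) holds by definition. For clause (iii), given $x\notin X$ and arbitrary $t$, write $g\,(\Lm\;x\;t)=\Lm^\AA\;x'\,(t[x\sw x'],g\,(t[x\sw x']))$ for the canonical $x'$ and apply (b) with $x'':=x$ (legitimate since $x\notin X$), which rewrites the right-hand side to $\Lm^\AA\;x\,(t[x\sw x],g\,(t[x\sw x]))=\Lm^\AA\;x\,(t,g\;t)$. Clauses (iv) and (v) are (c) and (a). Uniqueness is a clean induction on size using only clauses (i)--(iii): on $\Lm\;x\;t$ pick any $x'\notin X\cup(\FV\;t\smallsetminus\{x\})$, rewrite $\Lm\;x\;t=\Lm\;x'\,(t[x\sw x'])$, apply clause (iii) to both candidate functions, and invoke the induction hypothesis on $t[x\sw x']$, whose size equals that of $t$.

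I expect the main obstacle to be orchestrating this interleaved induction: statements (a) and (b) consume only facts of strictly smaller size, but (c) at size $n$ genuinely needs (b) at size $n$, so the three must be proved in the fixed order (a), (b), (c) at each level, and one has to keep precise track of which freshness side-conditions make conditions (1)--(6) and the swapping-structure axioms applicable. A secondary, purely bureaucratic burden is the handful of degenerate sub-cases ($x'=x$, $x=y$, accidental coincidences among the auxiliary fresh variables) in the term-level manipulations such as $t[x\sw x'][x'\sw z]=t[x\sw z]$ and the swap-commutation law.
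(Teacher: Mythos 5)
Your proof is correct, but it takes a genuinely different route from the paper's. The paper does not prove this theorem head-on: it cites Norrish (whose original argument factors through the Gordon--Melham recursor and lifts a function from preterms), and its own mechanized justification derives the swap/free recursor from the swap/fresh recursor $r_6$ along the expressiveness chain $r_6 \geq r_5 \geq r_4$ --- that is, it shows that any target structure satisfying (1)--(6) can be reorganized (negating $\FV^\AA$ into a freshness predicate and deriving the needed congruence/renaming axioms) into a freshness-swapping model, and then invokes initiality of the term model there; the base recursor $r_6$ is itself proved by defining the graph of $g$ as an inductively defined relation and showing it is total and single-valued, again entirely at the level of $\alpha$-classes. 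You instead build $g$ directly by well-founded recursion on ($\alpha$-invariant) size with a canonical fresh bound variable, and run a simultaneous induction establishing support, choice-independence of the $\Lm$-clause, and equivariance, in that order at each level; your dependency analysis is sound ((a) and (b) consume only strictly smaller sizes, (c) needs (b) at the same size), and the key move --- using the ``idle swap on fresh variables'' axiom together with the support bound from (3) to transport both candidate abstractions to a common fresh representative via (6) --- is exactly right. What your route buys is a self-contained, elementary proof needing no other recursor; what the paper's route buys is modularity (two initiality proofs yield all nine recursors, with the comparison functors doing double duty as proof devices) and the avoidance of a canonical-choice clause, since the inductive-relation construction replaces your choice-independence lemma by a single-valuedness proof. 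The one step you gloss over is that the $\Lm$-clause of your definition must be checked to be independent of the chosen representative $(x,t)$ of the $\alpha$-class; this does hold, because your canonical $x'$ depends only on $X$ and $\FV(\Lm\;x\;t)$, and $t[x\sw x']$ is then determined by the class, but it deserves an explicit line.
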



%
%
An enhancement  present in this recursor is 
the enabling of full-fledged (primitive) recursion rather than mere iteration---as seen in the constructor-like operators  
$\Vr^\AA$, 
$\Ap^\AA$ and 
$\Lm^\AA$ taking as inputs not only elements of $A$ but also 
terms. 
Hence the recursive clauses for 
$g$ allow 
the computed value to depend not only on the recursive results for smaller terms, but also on the smaller terms themselves. 
\looseness=-1


\subsubsection{The swap/fresh recursor} 
\label{subsubsec-alphaFreshSwapRec}
The next recursor was described by 
\citet{DBLP:journals/jar/GheriP20}.  
Similarly to the previous recursors,  it uses 
structures that generalize term operators, here freshness and swapping.  
%
It is similar to 
the swap/free recursor by its focus on swapping, but different 
in that it (a) 
uses freshness rather than free variables, 
(b) requires different properties from the models, (c) does not 
support 
Barendregt's 
convention and (d) extends full-fledged recursion  
to non-constructor operators 
(in that these operators also take additional term arguments). 
\looseness=-1

A \emph{freshness-swapping model} is a set equipped with constructor-, swapping- and freshness-like operators, namely a tuple $\AA = (A,\Vr^\AA,\Ap^\AA,\Lm^\AA,\_[\_\sw\_]^\AA,\fresh^\AA)$ where 
$\Vr^\AA : \Var \ra A$, 
$\Ap^\AA : (\Trm \times A) \ra (\Trm \times A) \ra A$, 
$\Lm^\AA : \Var \ra (\Trm \times A) \ra A$, 
$\_[\_\sw\_]^\AA : (\Trm \times A) \ra \Var \ra \Var \ra \Trm$ 
and 
$\fresh^\AA : \Var \ra (\Trm \times A) \ra \Bool$
satisfying: \hspace*{8ex}
(1) $x \not= y$ implies $x \,\fresh^\AA\, \Vr^\AA y$\
\begin{mmmyitem}
	\item[(2)] $x \,\fresh\, t_1$, 
	$x  \,\fresh^\AA\,(t_1,a_1)$, 
	$x \,\fresh\, t_2$ and 
	$x \,\fresh^\AA\,(t_2,a_2)$ implies 
	$x \,\fresh^\AA\, \Ap^\AA\,(t_1,a_1)\,(t_2,a_2)$
	\item[(3)] $y = x$  
	or [$y \,\fresh\, t$ and $y  \,\fresh^\AA\,(t,a)$] 
	implies 
	$y \,\fresh^\AA\, \Lm^\AA\,x\,(t,a)$
	\item[(4)] $(\Vr\,x,\Vr^\AA\,x)[y \sw z]^\AA = \Vr^\AA(x[y \sw z])$ 
\item[(5)] $(\Ap\,t_1\,t_2, \Ap^\AA(t_1,a_1)\,(t_2,a_2))[y \sw z]^\AA \!=\! \Ap^\AA(t_1[y \sw z],(t_1,a_1)[y \sw z]^\AA)\,
(t_2[y \sw z],(t_2,a_2)[y \sw z]^\AA)$
\item[(6)] $(\Lm\,x\,t,\,\Lm^\AA\,x\;(t,a))[y \sw z]^\AA = 
\Lm^\AA\,(x[y \sw z])\,(t[y \sw z],(t,a)[y \sw z]^\AA)$
\item[(7)] $z \not\in \{x_1,x_2\}$, 
$z \,\fresh^\AA\,(t_1,a_1)$, $z\,\fresh^\AA\,(t_2,a_2)$ 
and 
$(t_1,a_1)[z \sw x_1] = (t_2,a_2)[z \sw x_2]$  
implies 
\\$\Lm^\AA\,x_1\;(t_1,a_1) = \Lm^\AA\,x_2\;(t_2,a_2)$
\end{mmmyitem}

The 
recursion theorem states that terms are the initial freshness-swapping model 
(hence initial in a certain Horn theory), i.e., for any freshness-swapping model there exists a unique function from terms that commutes with the constructors and 
swapping,   
and preserves freshness.
\looseness=-1

\begin{thm} \rm \cite
		{DBLP:journals/jar/GheriP20} 
	\label{thm-popRecSwap}
	For any  freshness-swapping model $\AA = (A,\Vr^\AA,\alb\Ap^\AA,\Lm^\AA,\alb\_[\_\sw\_]^\A,\fresh^\AA)$, there exists a unique function $g : \Trm \ra A$  
	such that the following hold: 
	\begin{myitem}
		\item[(i)] $g\;(\Vr\;x) = \Vr^\AA\,x$
		\hspace*{12.5ex} (ii) 
		$\;g\,(\Ap\;t_1\;t_2)  = \Ap^\AA\,(t_1,g\;t_1)\,(t_2,g\;t_2)$
		\item[(iii)]   $g\,(\Lm\;x\;t) = \Lm^\AA\,x\,(t,g\;t)$ 
		\hspace*{2.3ex}
		 (iv) $\;g\,(t[x \sw y]) = (t,g\;t) [x \sw y]^\AA$ 
		 \item[(v)] $x \,\fresh\, t$ implies $x \,\fresh^\AA\, (t,g\;t)$ 
	\end{myitem}
\end{thm}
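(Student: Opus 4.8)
The plan is to establish existence by recursing on \emph{preterms} --- where the grammar is freely generated --- and then checking that the resulting map is $\alpha$-invariant, hence descends to $\Trm = \PTrm/\!\aeq$; uniqueness will then follow from an ordinary structural induction on terms.

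First I would define $\cls g : \PTrm \ra A$ by structural recursion: $\cls g\,(\PVr\,x) = \Vr^\AA\,x$, $\cls g\,(\PAp\,p_1\,p_2) = \Ap^\AA\,(\PE{p_1},\cls g\,p_1)\,(\PE{p_2},\cls g\,p_2)$, $\cls g\,(\PLm\,x\,p) = \Lm^\AA\,x\,(\PE{p},\cls g\,p)$, where $\PE{p}\in\Trm$ is the $\alpha$-class of the preterm $p$. Before proving $\alpha$-invariance I would establish two auxiliary facts, each by structural induction on preterms, that will feed that proof. Fact (a): $\cls g$ commutes with swapping, i.e.\ $\cls g\,(p[x\sw y]) = (\PE{p},\cls g\,p)[x\sw y]^\AA$ for all $x,y$; the $\PVr$, $\PAp$ and $\PLm$ cases use model conditions (4), (5) and (6) respectively, together with the elementary fact $\PE{p[x\sw y]} = \PE{p}[x\sw y]$ that swapping commutes with the quotient. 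Fact (b): $\cls g$ preserves freshness, i.e.\ if $x$ does not occur free in $p$ then $x\,\fresh^\AA\,(\PE{p},\cls g\,p)$; the $\PVr$ and $\PAp$ cases use conditions (1) and (2), and in the $\PLm\,z\,p$ case one uses condition (3), splitting on whether $x=z$ (then (3) applies directly) or $x\neq z$ (then $x$ is not free in $p$, so the induction hypothesis gives $x\,\fresh^\AA\,(\PE p,\cls g\,p)$ and also $x\,\fresh\,\PE p$, which together discharge the premise of (3)).

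With (a) and (b) available, I would prove $p\aeq q \Ra \cls g\,p=\cls g\,q$ by induction on the derivation of $p\aeq q$. The congruence rules for $\PVr$ and $\PAp$, and the equivalence-closure rules if the definition includes them, are immediate from the induction hypotheses. The rule for $\PLm$ is the crux: from $\PLm\,x\,p\aeq\PLm\,x'\,p'$ one obtains (directly, or via a standard equivalent characterisation of $\aeq$) a variable $z$ fresh for $x,x',p,p'$ with $p[z\sw x]\aeq p'[z\sw x']$, and I would then invoke model condition (7) with $(\PE p,\cls g\,p)$ and $(\PE{p'},\cls g\,p')$ in place of $(t_1,a_1),(t_2,a_2)$ and with $z$ as its distinguished variable. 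Its premises $z\notin\{x,x'\}$, $z\,\fresh^\AA\,(\PE p,\cls g\,p)$, $z\,\fresh^\AA\,(\PE{p'},\cls g\,p')$ come from (b), and the premise $(\PE p,\cls g\,p)[z\sw x]=(\PE{p'},\cls g\,p')[z\sw x']$ splits into its first component ($\PE{p[z\sw x]}=\PE{p'[z\sw x']}$, which is the rule's hypothesis) and its second component ($\cls g(p[z\sw x])=\cls g(p'[z\sw x'])$, obtained from the induction hypothesis on that same hypothesis after rewriting both sides with (a)). Condition (7) then gives $\Lm^\AA\,x\,(\PE p,\cls g\,p)=\Lm^\AA\,x'\,(\PE{p'},\cls g\,p')$, i.e.\ $\cls g\,(\PLm\,x\,p)=\cls g\,(\PLm\,x'\,p')$. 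Putting $g\,\PE p = \cls g\,p$, the clauses (i)--(iii) hold by the definition of $\cls g$, clause (iv) is the term-level reading of fact (a), and clause (v) is fact (b). For uniqueness, if $g,g'$ both satisfy (i)--(iii) then $g\,t=g'\,t$ by a plain structural induction on terms: the $\Vr$ case is trivial, and the $\Ap$ and $\Lm$ cases follow by rewriting with the corresponding clause and applying the induction hypotheses --- here clause (iii) holds for \emph{every} bound-variable name, so, unlike for the perm/free and swap/free recursors, no freshness side-condition intervenes.

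I expect the main obstacle to be exactly the $\PLm$ case of the $\alpha$-invariance step: one must align the witness variable of the $\alpha$-derivation with the distinguished variable of condition (7), and verify both its freshness premises and its swapping-equality premise --- this is the only place where the intricately shaped congruence axiom (7) for $\Lm^\AA$ is used, and it is what dictates that facts (a) and (b) be proved first. A secondary, purely bookkeeping difficulty is keeping preterm-level swapping and freshness in sync with their term-level counterparts across the quotient; I would isolate those as small standalone lemmas so they do not clutter the main line of argument.
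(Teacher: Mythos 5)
Your proof is correct, and it is the classical ``lift from preterms'' route, but it is a genuinely different route from the one this paper takes. You define $\cls g$ by ordinary primitive recursion on the free datatype $\PTrm$, prove commutation with swapping and preservation of freshness there by structural induction, use those two facts plus the congruence axiom (7) to establish $\alpha$-invariance, and then descend to the quotient $\Trm=\PTrm/\!\aeq$; this is essentially how Gordon--Melham, Norrish and Pitts proved their recursors. The paper instead proves initiality of the term model entirely at the level of $\alpha$-equivalence classes: it inductively defines the \emph{graph} of the intended morphism as a relation on $\Trm\times A$ (with clauses mirroring (i)--(iv)) and then proves that this relation is total, single-valued, commutes with swapping and preserves freshness --- axiom (7) is what makes single-valuedness go through at the $\Lm$ clause, playing exactly the role it plays in the $\PLm$ case of your $\alpha$-invariance argument. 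The trade-offs are clear: your approach gets the existence of $\cls g$ for free from free-datatype recursion, at the price of the preterm/term bookkeeping you already flag (keeping the two levels of swapping and freshness in sync across the quotient, and choosing a swapping-based inductive characterisation of $\aeq$ so that the induction in the $\PLm$ case has the needed hypothesis as a sub-derivation); the paper's relational approach never touches preterms, which the authors argue is simpler and which fits their uniform epi-recursor treatment, at the price of having to establish totality and functionality of a relation rather than obtaining a function outright. Your uniqueness argument (plain structural induction on terms, with no freshness side-condition on the $\Lm$ clause) coincides with the paper's.
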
 

\subsubsection{The subst/fresh recursor} 
\label{subsubsec-freshSubstRec}
The next recursor, introduced by 
 \citet{DBLP:conf/icfp/PopescuG11},  
has a similar structure to the previous one 
but uses substitution rather than swapping. 
\looseness=-1

A \emph{freshness-substitution model} is similar to a freshness-swapping model, but instead of a swapping-like operator 
it has a substitution-like operator
$\_[\_/\_]^\AA : (\Trm \times A) \ra (\Trm \times A) \ra \Var \ra \Trm$  
and: 
\looseness=-1
\begin{mmyitem}
	\item instead of clauses (4)--(6) 
	of swapping commuting 
	with the constructors, it satisfies 
	similar clauses for substitution---
	but where commutation with $\lambda$-abstraction 
	is restricted by a 
	freshness condition
	\looseness=-1
	\item instead of clause (7), it satisfies a substitution-based renaming clause for $\lambda$-abstraction. 
\end{mmyitem} 
Namely, it satisfies the following clauses: 
(4) $(\Vr\,x,\Vr^\AA\,x)[(t,a) / z]^\AA =$ 
(if $x = z$ then $a$ else $\Vr^\AA x$)
\begin{mmmyitem} 
\item[(5)] $(\Ap\,t_1\,t_2,\,\Ap^\AA\,(t_1,a_1)\,(t_2,a_2))[(s,b) / z]^\AA =\;$ \\
$ \Ap^\AA(t_1[s / z],(t_1,a_1)[(s,b) / z]^{\!\AA})\,
(t_2[s / z],(t_2,a_2)[(s,b) / z]^{\!\AA})$
\item[(6)] $x \not=z$ and $x \,\fresh^\AA (s,b)$ implies 
$(\Lm\,x\,t,\Lm^\AA x\;\alb(t,a))[(s,b) /  z]^\AA \!=\! \Lm^\AA x\,(t[s /  z],(t,a)[(s,b) /  z]^\AA)$
\item[(7)] $z \not= x$ and 
$z \,\fresh^\AA\,(t,a)$ 
implies  
$\Lm^\AA\,z\;[(t,a)[(\Vr\,z,\Vr^\AA\,z) / x]] = \Lm^\AA\,x\;(t,a)$
\end{mmmyitem}

\begin{thm} \rm \cite{
		DBLP:conf/icfp/PopescuG11}
	\label{thm-popRecSubst}
	For any freshness-substitution model $\AA = (A,\Vr^\AA,\Ap^\AA,\Lm^\AA,\alb\_[\_/\_]^\AA,\fresh^\AA)$, there exists a unique 
	$g : \Trm \ra A$  
	such that the clauses listed in Thm.~\ref{thm-popRecSwap}
	hold, 
	except that  the clause for 
	swapping 
	is replaced by a clause for substitution:  
	$g\,(t[s / y]) = (t,g\;t) [(s,g\;s) / y]^\AA$.   
	\looseness=-1
\end{thm}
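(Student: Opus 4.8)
\smallskip
\noindent\emph{Proof plan.}\
The plan is to separate uniqueness from existence, with essentially all the work lying in the latter. Uniqueness needs only clauses (i)--(iii): if $g$ and $g'$ both satisfy them, a structural induction on terms gives $g\,t = g'\,t$ for every $t$ (in the $\Lm$ case, $g\,(\Lm\,x\,t) = \Lm^\AA\,x\,(t,g\,t) = \Lm^\AA\,x\,(t,g'\,t) = g'\,(\Lm\,x\,t)$ by the induction hypothesis), so the freshness and substitution clauses play no role here. For existence, the obstacle is that $\Trm$ is an $\alpha$-quotient and the $\lambda$-clause of the recursor is conditional, so I would not try to define $g$ directly; instead I would first define a function $\cls{g} : \PTrm \ra A$ on the free datatype of preterms by ordinary structural recursion---setting $\cls{g}\,(\PVr\,x) = \Vr^\AA\,x$, $\cls{g}\,(\PAp\,p_1\,p_2) = \Ap^\AA\,(\ov{p_1},\cls{g}\,p_1)\,(\ov{p_2},\cls{g}\,p_2)$ and $\cls{g}\,(\PLm\,x\,p) = \Lm^\AA\,x\,(\ov p,\cls{g}\,p)$, where $\ov q$ is the $\aeq$-class of the preterm $q$ (a bona fide element of $\Trm$)---and then show that $\cls{g}$ descends to $\Trm$.

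The crux is the lemma that $p \aeq p'$ implies $\cls{g}\,p = \cls{g}\,p'$; once it holds, $g\,(\ov p) := \cls{g}\,p$ is well defined and clauses (i)--(iii) follow by construction. I would prove this lemma as one component of a simultaneous induction that also establishes: (a) $\cls{g}$ respects $\aeq$; (b) freshness preservation, $z \notin \FV(p) \Rightarrow z \fresh^\AA (\ov p,\cls{g}\,p)$, derived from the model's freshness clauses (1)--(3); and (c) that $\cls{g}$ commutes with renaming in the form dictated by the model's substitution clauses (4)--(6), at least when the renamed variable is sent to a completely fresh one. Fact (b) is exactly what is needed to discharge the freshness hypotheses of the model's renaming clause (7), and (c) is what turns $\cls{g}$ of a renamed preterm into a model-side substitution term.

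The heart of the proof is the $\Lm$ case of (a): from $\PLm\,x\,p \aeq \PLm\,y\,p'$ I must obtain $\Lm^\AA\,x\,(\ov p,\cls{g}\,p) = \Lm^\AA\,y\,(\ov{p'},\cls{g}\,p')$. I would pick a completely fresh variable $z$ (so $z \notin \{x,y\}$ and $z$ occurs in neither $p$ nor $p'$), invoke (b) to get $z \fresh^\AA (\ov p,\cls{g}\,p)$ and $z \fresh^\AA (\ov{p'},\cls{g}\,p')$, apply clause (7) on each side to rewrite both sides to $\Lm^\AA\,z$ of the corresponding renamed argument, and use (c) to recognise those arguments as $\cls{g}$ applied to the renamed preterms $p[z/x]$ and $p'[z/y]$; since the standard theory of $\alpha$-equivalence makes these two preterms $\aeq$, the induction hypothesis (a) finishes the case. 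After the lemma, descending to $\Trm$ yields $g$ satisfying (i)--(iii); the freshness clause of the conclusion is (b) read on terms; and the substitution-commutation clause $g\,(t[s/y]) = (t,g\,t)[(s,g\,s)/y]^\AA$ is then a further structural induction on $t$ using clauses (4)--(6), where in the $\Lm$ sub-case one chooses a representative $\Lm\,x\,t_0$ with $x$ fresh for $y$ and $s$, so that clause (6)'s side-condition $x \fresh^\AA (s,g\,s)$ is supplied by the freshness clause.

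The step I expect to be the main obstacle is making this bootstrap go through cleanly: the renaming clause (7) is applicable only to model elements already known to be fresh, so (a) and (b) truly have to be proved together rather than in sequence, and one must keep straight the different ``fresh enough'' requirements ($\FV$-freshness for clause (7) and the freshness clauses, occurrence-freshness for the $\alpha$-equivalence bookkeeping, and the interaction between a bound variable and a variable being renamed in the $\Lm$ cases of the auxiliary inductions). A secondary, milder complication---absent from the proof of Theorem~\ref{thm-popRecSwap}, whose swapping clauses are unconditional---is that the model's $\lambda$-clause (6) for substitution carries a freshness side-condition, so every substitution performed on the model side, including in the final substitution-commutation clause, must be preceded by an $\alpha$-renaming that makes the side-condition available.
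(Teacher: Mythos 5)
Your proposal is correct in outline but takes a genuinely different route from the paper's. You follow the classical preterm-lifting strategy: define $\cls{g}$ by free recursion on $\PTrm$, prove $\alpha$-compatibility by a simultaneous induction with freshness preservation and renaming-commutation, descend to $\Trm$, and derive the substitution clause by a fresh-representative induction. The paper never descends to preterms. It proves initiality directly for only two recursors at the top of its expressiveness hierarchy---the swap/fresh recursor $r_6$ and the renaming/fresh variant $r_9$---by inductively defining the graph of $g$ as a relation on $\Trm\times A$ and showing it is functional and a morphism, and then \emph{borrows} the subst/fresh theorem from $r_9$ (Prop.~\ref{prop-borrow}): every freshness-substitution model yields a renaming/fresh-variant model by restricting $\_[\_/\_]^\AA$ to injected variables (clauses (4)--(6) become the renaming-commutation axioms and clause (7) becomes \RnBvr{}), so $r_9$-initiality supplies the unique $g$ commuting with constructors and renaming and preserving freshness, and a final fresh induction upgrades renaming-commutation to the full substitution clause. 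Your last step (induction on $t$, choosing a representative $\Lm\,x\,t_0$ with $x\neq y$ and $x\,\fresh\,s$ so that clause (6) applies) coincides with that final upgrade; what differs is how the core morphism is obtained. Your route is self-contained and elementary; the paper's buys a single initiality proof serving several recursors at once and, by staying at the term level, avoids exactly the preterm bookkeeping you flag as the main obstacle.

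That bookkeeping is the one place your plan is genuinely thin. In part (c), commutation of $\cls{g}$ with a renaming $[z/x]$ must hold for arbitrary preterms, including $\PLm\,w\,q$ with $w=x$ (the renamed variable shadowed by an inner binder), where $p[z/x]=p$ and you therefore need $(\ov{p},\cls{g}\,p)[(\Vr\,z,\Vr^\AA z)/x]^\AA$ to reproduce $\cls{g}\,p$, i.e.\ that model-side substitution for a variable fresh in the pair acts as the identity. This is \emph{not} among axioms (1)--(7); it is derivable for elements in the image of $\cls{g}$ (rename the binder away with (7), push the substitution under with (6), and invoke the inductive hypotheses), but that derivation must be woven into the simultaneous induction, which is precisely what makes (a)--(c) mutually dependent rather than sequential. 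State and prove this ``substitution for a fresh variable is vacuous on the image'' lemma explicitly rather than leaving it implicit in the phrase ``preceded by an $\alpha$-renaming''; with that addition the argument goes through.
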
 

\subsubsection{The renaming recursor}  
\label{subsubsec-renamingRec}
Our last discussed recursor was introduced 
by 
\citet{DBLP:conf/cade/Popescu22}.  
It is more minimalistic than the others since, 
in addition to the constructors, it only uses one operator, 
renaming---subject to an equational theory described next. 
\looseness=-1

A \emph{constructor-enriched renset} is a tuple $\AA =\alb  (A,\_[\_/\!\_]^\AA,\Vr^\AA,\Ap^\AA,\Lm^\AA)$ where  
$\_[\_/\!\_]^\AA : A \ra \Var \ra \alb\Var  \ra A$, 
$\Vr^\AA : \Var \ra A$, $\Ap^\AA : A \ra \alb A \ra A$ and  
	$\Lm^\AA : \Var \ra A \ra A$ are 
such that the following hold: 
\ \ 
(1) 
$a[x/x]^\AA = a$
\hspace*{3ex}
(2) If $x_1\not=y$ then $a[x_1/y]^\AA[x_2/y]^\AA = a[x_1/y]$
\looseness=-1
\begin{myyitem}
	\item[(3)]  
	$y\not=x_2$ then $a[y/x_2]^\AA[x_2/x_1]^\AA[x_3/x_2]^\AA=  a[y/x_2]^\AA[x_3/x_1]^\AA$
	\item[(4)]  
	If $x_2 \not= y_1 \not= x_1 \not= y_2$ then 
	$a[x_2/x_1]^\AA [y_2/y_1]^\AA= a[y_2/y_1]^\AA[x_2/x_1]^\AA$
	\item[(5)]  
	$(\Vr^\AA\;x)[y/z]^\AA = \Vr^\AA (x[y/z])$
	\hspace*{5ex}
	(6) 
	$(\Ap^\AA\;a_1\;a_2)[y/z]^\AA = \Ap^\AA(a_1[y/z]^\AA)\,(a_2[y/z]^\AA)$
	\item[(7)]  
	 if $x\notin \{y,z\}$ then 
	$(\Lm^\AA\,x\;a)[y/z]^\AA = \Lm^\AA\,x\,(a[y/z]^\AA)$
\hspace*{2.5ex}
	 (8) $(\Lm^\AA\,x\;a)[y/x]^\AA = \Lm^\AA\,x\;a$
	\item[(9)]  
	if $z\not=y$ then $\Lm^\AA\,x\;(a[z/y]^\AA) = \Lm^\AA\,y\;(a[z/y]^\AA[y/x]^\AA)$
\end{myyitem}

Equations (1)--(3) refer to standard properties of renaming, while 
(4)--(9) connect renaming and the constructors.  The recursion theorem 
characterizes 
terms as initial model in this equational theory. 
\looseness=-1

\begin{thm} \rm \cite{DBLP:conf/cade/Popescu22} 
\label{thm-popRecRename}
For any constructor-enriched renamable set $\AA = (A,\_[\_/\!\_]^\AA,\Vr^\AA,\alb\Ap^\AA,\Lm^\AA)$, there exists a unique 
$g : \Trm \ra A$  
such that the following hold: 
\looseness=-1
\begin{myitem}
	\item[(i)] $g\;(\Vr\;x) = \Vr^\AA\,x$
	\hspace*{21ex}  (ii) $g\,(\Ap\;t_1\;t_2)  = \Ap^\AA\,(g\;t_1)\,(g\;t_2)$
	\item[(ii)]   $g\,(\Lm\;x\;t) = \Lm^\AA\,x\,(g\;t)$ 
    \hspace*{12.5ex} (iv) $g\,(t[x / y]) = (g\;t) [x / y]^\AA$   
\end{myitem}
\end{thm}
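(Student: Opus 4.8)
The plan is to obtain existence by descending from the free datatype of preterms, and uniqueness from the induction principle of $\Trm$. Fix a constructor-enriched renamable set $\AA = (A,\_[\_/\_]^\AA,\Vr^\AA,\Ap^\AA,\Lm^\AA)$. First I would define, by ordinary structural recursion on the free grammar of $\PTrm$, a map $h : \PTrm \to A$ by $h(\PVr\,x) = \Vr^\AA\,x$, $h(\PAp\,p_1\,p_2) = \Ap^\AA\,(h\,p_1)\,(h\,p_2)$ and $h(\PLm\,x\,p) = \Lm^\AA\,x\,(h\,p)$. The heart of the proof is then to show that $h$ respects $\alpha$-equivalence, i.e.\ that $p \equiv q$ implies $h\,p = h\,q$. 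Granting this, $h$ factors uniquely through the quotient map $\PTrm \to \Trm$, producing $g : \Trm \to A$; since the preterm constructors commute with the quotient map, clauses (i)--(iii) for $g$ are then immediate from the recursion equations defining $h$.

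For the $\alpha$-invariance of $h$ I would prove two auxiliary lemmas, each by induction on the preterm. (a) \emph{$h$ commutes with renaming}: if the variable $y$ does not occur in $p$ (free or bound), then $h(p[y/x]) = (h\,p)[y/x]^\AA$, where on the left $y$ is substituted for $x$ in the preterm $p$ --- no capture can happen since $y$ is absent from $p$; the variable, application and $\lambda$ cases use axioms (5), (6), (7) respectively, the $\lambda$ case exploiting that the bound variable differs from both $x$ and $y$. (b) \emph{the value of $h$ is fresh outside the free variables}: if $y \notin \FV(p)$, then $(h\,p)[z/y]^\AA = h\,p$ for every $z$; the $\lambda$ case splits on whether $y$ is the bound variable (axiom (8)) or not (axiom (7) together with the inductive hypothesis), with the stubborn sub-case ``$z$ equals the bound variable'' handled by first renaming through a fresh auxiliary variable and then collapsing the two consecutive renamings via axiom (2). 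With (a) and (b) in hand, $\alpha$-invariance of $h$ follows by a size-based induction along $\equiv$: the only nontrivial case is $\PLm\,x\,p \equiv \PLm\,y\,q$, where one picks a variable $z$ absent from $p$ and from $q$ and distinct from $x$ and $y$, so that $p[z/x] \equiv q[z/y]$; lemma (a) and the inductive hypothesis give $(h\,p)[z/x]^\AA = (h\,q)[z/y]^\AA$, and the required $\Lm^\AA\,x\,(h\,p) = \Lm^\AA\,y\,(h\,q)$ then follows from axioms (8) and (9) together with lemma (b), by a short equational computation that freshens the bound variable on both sides to $z$. I expect this $\lambda$ case --- and, within it, getting lemma (b) through its awkward sub-case --- to be the main obstacle. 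In particular, nothing here can be delegated to nominal-set (group-action) machinery, because renaming, unlike permutation or swapping, is not invertible; every step must be squeezed out of the bare equations (1)--(9), which is exactly why the bookkeeping is delicate.

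It remains to check the renaming clause (iv) of the theorem and the uniqueness of $g$, both of which are routine. For (iv) one inducts on $t$, using the strong (``Barendregt'') induction principle for terms to pick, in the $\lambda$ case, the bound variable outside $\{x,y\}$, so that capture-avoiding term renaming commutes with $\Lm$ on the nose; the three cases then close via axioms (5), (6), (7) --- indeed lemma (a), reread at the level of terms, is essentially clause (iv) modulo this freshening step. For uniqueness, if $g_1, g_2 : \Trm \to A$ both satisfy clauses (i)--(iii), a plain structural induction on $t$ yields $g_1 = g_2$: this is sound for $\Trm$ even though it is not a free datatype, because $\Trm$ still supports the induction principle over its three constructors, and clause (iii) carries no freshness side-condition, so no freshening of bound variables is needed.
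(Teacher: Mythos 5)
Your outline is sound, but it is a genuinely different route from the paper's. You follow the classical Gordon--Melham/Pitts recipe: define $h$ by ordinary structural recursion on the free datatype of preterms, prove $\alpha$-invariance via the two renaming lemmas, and lift through the quotient. The paper deliberately avoids preterms and, moreover, never proves this theorem head-on: it is \emph{borrowed}. The appendix proves initiality directly only for the swap/fresh recursor $r_6$ and the renaming/fresh variant $r_9$, in each case by inductively defining a relation ${\subseteq}\;\Trm\times A$ at the level of terms and showing it is a total function commuting with the operations; the present theorem (recursor $r_8$) is then obtained by showing that every constructor-enriched renamable set expands to an $r_9$-model once one \emph{defines} freshness from renaming and verifies the $r_9$ axioms, so the $r_9$ initial morphism restricts to the required one. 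The paper's route buys a single initiality proof reused across several recursors and no descent to preterms; yours buys self-containedness at the price of exactly the equational bookkeeping you anticipate. Two spots in your sketch are glibber than the axioms permit: in lemma (a) the $\lambda$-case where the bound variable \emph{equals} the renamed variable needs axiom (8), not (7); and in lemma (b)'s stubborn sub-case axiom (2) alone does not do the job --- you must first freshen the bound variable via axiom (9) so that (7) applies, and the residual equation $a[v/z]^\AA[z/y]^\AA=a[v/z]^\AA$ then needs the commutation axiom (4) interleaved with (2) and the inductive hypothesis, e.g.\ $a[v/z][z/y]=a[v/y][v/z][z/y]=a[v/z][v/y][z/y]=a[v/z][v/y]=a[v/z]$. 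These are repairable details rather than gaps; your uniqueness argument and the fresh-induction derivation of the renaming clause are fine.
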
 

\subsubsection{Enhancements}

The above recursors clearly have many aspects in common, but also display some essential variability 
regarding the non-constructor operators they are based on and the conditions imposed on the target-domain counterparts of these operators. Other dimensions of variability were what we called the ``enhancements'': support for Barendregt's convention and full-fledged recursion. 
It turns out that both types of enhancements 
can be made 
uniformly to all nominal recursors (as we detail in 
App.~\ref{app-addingBacknhancements}).   
So in what follows, for comparing these recursors we will strip them of  
their 
enhancements 
and focus on their essential 
variability only. 
\looseness=-1


\section{Nominal recursors as epi-recursors}
\label{sec-nomRecAsEpiRec}

In this section, we will propose 
regarding nominal recursors as mechanisms for helping recursion to proceed ``as if freely'', i.e.,  
by writing clauses for each constructor as if the datatype of terms were freely generated by the constructors.
We start by describing this 
view informally 
on an example  (\S\ref{subsec-purposeNomRec}).  
 To formalize the view, we introduce signatures and models 
 that describe uniformly the term-like operators featured in the previous section's recursion theorems (\S\ref{subsec-sigMod}). Then we define the central concept of this paper, that of an epi-recursor (\S\ref{subsec-epiRec}), which captures this view in a general category-theoretic form. Finally, we show that
 all the discussed nominal recursors, and others that are obtained as variations or combinations of these, 
 are epi-recursors 
  (\S\ref{subsec-nomrecAsEpirecFormally}).    
 
As mentioned, we will not consider the recursors in their original forms---as introduced by their authors, recalled in \S\ref{subsec-nominalRec}---but their essential cores, stripped of their full-fledged recursion and Barendregt convention enhancements.  
(The enhancements, discussed in 
App.~\ref{app-addingBacknhancements}, 
turn out to be orthogonal.)   
\looseness=-1

\vspace*{-0.5ex}\subsection{
The purpose of nominal recursors
}
\label{subsec-purposeNomRec}

Let us start with recursion 
over a 
free datatype, i.e., freely generated by the constructors, such as that of preterms (recalled in \S\ref{subsec-terms}). To define a function $g: \PTrm \ra A$ between preterms and some target domain $A$, informally speaking we write recursive clauses 
for each of the constructors:
\begin{mmyitem}
	\item %
	$g\;(\PVr\;x) \,=\, \langle \mbox{expression depending on $x$}\rangle$ 
	\item %
	$g\;(\PAp\;p_1\;p_2) \,=\, \langle \mbox{expression depending on $g\,p_1$ and $g\,p_2$}\rangle$
	\item %
	$g\;(\PLm\;x\;p) \,=\, \langle \mbox{expression depending on $x$ and $g\,p$}\rangle$
\end{mmyitem}
%

The above ``expression depending on'' formulation can be made rigorous by considering preterm-like operations on the target domain $A$. Namely, for a recursive definition like the above to be possible, we must organize $A$ as a model $\AA = (A,\PVr^\AA,\PAp^\AA,\PLm^\AA)$, where 
$\PVr^\AA : \Var \ra A$, $\PAp^\AA : A \ra A \ra A$ and 
$\PLm^\AA : \Var \ra A \ra A$. 
Now, the recursive definition of $g$ 
is nothing but the statement 
that $g$ commutes with 
the operations that correspond to each other: 
\begin{mmyitem}
	\item[] \hspace*{-2ex}$g\,(\PVr\;x) = \PVr^\AA\;x$
	\hspace*{4.2ex}
	$g\,(\PAp\;p_1\;p_2) = 
	\PAp^\AA\;(g\;p_1)\;(g\;p_2)$	
		\hspace*{4.2ex}
	$g\,(\PLm\;x\;p) = \PLm^\AA\;x\;(g\;p)$
\end{mmyitem}
In fact, we could say that the model $\AA$ 
\emph{is} the recursive definition of $g$---because it determines 
a unique 
function $g : \PTrm \ra A$ that commutes with the operations. 

\vspace*{0.5ex}
Now, let's switch from preterms to terms. 
We can summarize the purpose of all nominal recursors:  
\begin{quote} 
	\vspace*{-1ex}
to 
define functions 
$g : \Trm \ra A$ between terms and target domains $A$ 
by recursing over the 
constructors 
\emph{as if the datatype of terms was 
	freely generated,} 
\end{quote} 
\vspace*{-1ex}
%
i.e., by writing recursive clauses similarly to
those of the 
free datatype of preterms:
\begin{mmyitem}
	\item %
	$g\;(\Vr\;x) \,=\, \langle \mbox{expression depending on $x$}\rangle$ 
	\item %
	$g\;(\Ap\;t_1\;t_2) \,=\, \langle \mbox{expression depending on $g\,t_1$ and $g\,t_2$}\rangle$
	\item %
	$g\;(\Lm\;x\;t) \,=\, \langle \mbox{expression depending on $x$ and $g\,t$}\rangle$
\end{mmyitem}

But the datatype of terms is not freely generated, so 
such a definition cannot work out of the box. 
%
One needs to further underpin recursion by describing the interaction of the intended function $g$ not only with the constructors, but also with other operators.  
For example, the swap/fresh recursor described in \S
\ref{subsubsec-alphaFreshSwapRec} requires two additional clauses, for the swapping and freshness operators:  
\begin{mmyitem}
	\item %
	$g\;(t[x\sw y]) \,=\, \langle \mbox{expression depending on $x$, $y$  and $g\,t$}\rangle$
	\item %
	$x \;\fresh\;t$ \ implies \ $\langle \mbox{expression depending on $x$ and $g\,t$}\rangle$
\end{mmyitem}
%

This 
is also made rigorous using models. 
The requirement is 
to define term-like operators 
on the target domain $A$ corresponding 
not only to the constructors but also to other operators; 
i.e., in this case, organize $A$ as a model $\AA = (A,\Vr^\AA,\Ap^\AA,\Lm^\AA,\_[\_\sw \_]^\AA,\alb\fresh^\AA)$, consisting of: 
\looseness=-1
\begin{mmyitem}
	\item (as before for preterms) counterparts of the constructors,  $\Vr^\AA : \Var \ra A$, $\Ap^\AA : A \ra A \ra A$, 
	and $\Lm^\AA : \Var \ra A \ra A$ , 
	\item as well as counterparts of the swapping operation and the freshness relation, $\_[\_\sw \_]^\AA : A \ra \Var \ra \Var \ra A$ and 
	$\fresh^\AA : \Var \ra A \ra \Bool$
\end{mmyitem}
Another new requirement compared to the case of free datatypes is that 
the model $\AA$ is similar to terms not only 
in the matching arities of its operators, 
but also in 
satisfying 
specific term-like properties, i.e, $\AA$-counterparts of properties 
of the terms---e.g., swapping commuting with  
$\lambda$-abstraction. 
\looseness=-1

If the above is successfully achieved, i.e., if one provides 
a model $\AA$ satisfying the required properties, 
then the recursor guarantees the existence of a unique function $g: \Trm \ra A$ commuting with the operations (here, constructors and swapping) and preserving the relations (here, freshness).
	%
	%
	%
	%
%


%

The following 
simple example 
illustrates the above discussion. \S\ref{subsec-semInt} and App.~\ref{app-anotherExample} show  
more 
examples;  
many others can be found in the literature, 
e.g.,  \cite{primrecFOAS-Norrish04,pitts-AlphaStructural,DBLP:conf/icfp/PopescuG11}. 

\begin{exa}\rm 
	\textit{(number of free occurrences)} 
Let us consider the task of defining the function $\no : \Trm \ra (\Var \ra \Nat)$, where 
$\no\;t\;x$ counts the number of (free) occurrences of the variable $x$ in the term $t$. 
The natural recursive clauses we would wish to write are 
\begin{myitem}
	\item[(i)]
	$\no\ (\Vr\ y)\ x= \mbox{(if $x = y$ then $1$ else 0)}$
	%
\hspace*{3ex} (ii) 
	$\no\ (\Ap\ t_1\ t_2)\ x= \no\ t_1\ x + \no\ t_2\ x$ 
	\item[(iii)]
	$	\no\ (\Lm\ y\ t)\ x = 
	\mbox{(if $x = y$ then $0$ else $\no\ t\ x$)}$
\end{myitem}
As discussed, 
such a definition does not work out of the box (in that, in itself, it does not constitute a correct recursive definition) because of the non-freeness of the terms. 
To make this work, we can add clauses describing the intended behavior of $\no$ with respect to swapping and freshness: 
%
\begin{myitem}
	\item[(iv)] $\no\ (t[y_1 \sw y_2])\ x = \no\ t \ (x[y_1 \sw y_2])$
	%
	\hspace*{8ex} (v)
	$x\ \fresh\ t\ \mbox{ \rm implies } \no\ t\ x = 0$
	
\end{myitem}
This means 
organizing the target domain 
$\Var \ra \Nat$ as a model $\AA$ 
by defining 
the following 
operators: 
\looseness=-1
\begin{myitem} 
	\item $\Vr^\AA = (\lambda x.\; \mbox{if $x = y$ then $1$ else 0})$
	\hspace*{7.3ex}
	$\bullet\;$ $m\,[y_1\sw y_2]^\AA = (\lambda x.\; m\,(x[y_1 \sw y_2]))$
	
	\item $\Ap^\AA\;m_1\;m_2 = (\lambda x.\;m_1\,x + m_2\,x)$
	\hspace*{8.45ex}
	$\bullet\;$ $x\;\fresh^\AA\;m = (m\;x = 0)$
	
	\item $\Lm^\AA\;y\;m = (\lambda x.\;\mbox{if $x = y$ then $0$ else $m\ x$)}$

	
	

\end{myitem} 
After checking that $\AA$ satisfies some required properties (which in this case are trivial arithmetic properties) 
we obtain a unique function $\no$ satisfying clauses (i)--(v).

%


%

\end{exa}

\vspace*{-0.5ex}\subsection{Signatures and models}
\label{subsec-sigMod}

Next we introduce 
notation that allows us to discuss the various recursors uniformly. 
Let $\Sym$, the \emph{set of (operation or relation) symbols}, be 
$\{\vr,\ap,\lm,\alb\pm,\swp,\sbs,\ren,\fv,\fr\}$.  The symbols refer to variable, application and $\lambda$-abstraction constructors, permutation, swapping, substitution, renaming and free-variable operations, and the freshness relation, respectively. 
A \emph{signature} $\Sigma$ will be any  subset of $\Sym$. 
\looseness=-1

Given a signature $\Sigma$, a \emph{$\Sigma$-model} $\MM$ consists of a set $M$, called the \emph{carrier set}, and operations and/or relations on $M$ as indicated in the signature. 
More precisely: 
	if $\vr \in \Sigma$ 
	then $\MM$ has an operation $\Vr^\MM : \Var \ra M$; 
	if $\ap \in \Sigma$ 
	then $\MM$ has an operation  $\Ap^\MM : M \ra M \ra M$; 
	if $\lm \in \Sigma$ 
	then $\MM$ has 
	$\Lm^\MM : \Var \ra M \ra M$; 
	if $\pm \in \Sigma$ 
	then $\MM$ has 
	$\_[\_]^\MM : M \ra \Perm \ra M$; 
if $\swp \in \Sigma$  
	then $\MM$ has 
	$\_[\_\sw \_]^\MM : M \ra \Var \ra \Var \ra M$; 
	if $\sbs \in \Sigma$ 
	then $\MM$ has 
	$\_[\_ \,/ \_]^\MM : M \ra M \ra \Var \ra M$; 
if $\ren \in \Sigma$ 
then $\MM$ has 
$\_[\_ \,/ \_]^\MM : M \ra \Var \ra \Var \ra M$; 
if $\fv \in \Sigma$ 
then 
$\MM$ has 
$\FV^\MM : M \ra \Pow(\Var)$; 
if $\fr \in \Sigma$ 
then 
$\MM$ has 
$\fresh^\MM : \Var \ra M \ra \Bool$.  

Given two $\Sigma$-models $\MM$ and $\MM'$, a \emph{morphism} 
between them is a function  between their carrier sets $g: M \ra M'$
that commutes with the operations and preserves the relations. For example: 
 if $\vr \in \Sigma$, we require that $g(\Vr^\MM\,x) = \Vr^{\MM'}x$; 
 if $\lm \in \Sigma$, we require that $g(\Lm^\MM\,x\;m) = \Lm^{\MM'}x\;(g\;m)$;  
if $\fr \in \Sigma$, we require that 
	$x\;\fresh^\MM\, m$ implies $x\;\fresh^{\MM'} (g\;m)$; 
if $\fv \in \Sigma$, we require that 
$\FV^{\MM'}(g\;m) \su \FV^\MM\, m$.  
We write $g: \MM \ra \MM'$ to indicate that the function $g$ is a morphism between $\MM$ and $\MM'$. 
%
%
$\Sigma$-models and their morphisms form a category. 
We write $\TTrm(\Sigma)$ for the $\Sigma$-model whose carrier is the set of terms $\Trm$ and whose operations and relations are the standard ones for terms. 

Let $\Sigmac = \{\vr,\lm,\ap\}$ be the signature comprising the 
constructor symbols only.  
Ignoring the full-fledged recursion and Barendregt 
enhancements, 
what all the described nominal recursors have in common, which is also shared with the standard recursors over free datatypes, is that they allow one to recurse over terms using 
constructors, i.e., they  
(1) require the intended target domain to be (at least) a $\Sigmac$-model $\MM$,  and (2) ensure the existence of a function $g$ that commutes with the constructors, i.e., 
a 
morphism $g: \TTrm(\Sigmac) \ra \MM$. 
%
Also, as illustrated in  \S\ref{subsec-purposeNomRec}, another aspect that the nominal recursors have in common is that, to make recursing over terms possible, 
they (1) require extending $\MM$ to a $\Sigmae$-model $\MM'$ 
for an extended signature $\Sigmae \supseteq \Sigmac$ and verifying certain  
properties for $\MM'$, and (2) capitalize on the fact that $\TTrm(\Sigmae)$ is initial among  $\Sigmae$-models that satisfy these properties---which yields a morphism $\TTrm(\Sigmae) \ra \MM'$, i.e., a function $g$ that commutes not only with the constructors but also with the other 
operators in $\Sigmae$. 
In short, what all these recursors do is \emph{underpin 
constructor-based recursion by extending the signature and exploiting initiality of the term model there}. 
\looseness=-1

\vspace*{-0.5ex}\subsection{Epi-recursors}
\label{subsec-epiRec}

We 
capture the above phenomenon in the following concept: 
\looseness=-1

\begin{defi} \rm \label{defi-epirec}
An \emph{epi-recursor} is a tuple $r = (\Bcat,T,\Ccat,I,R)$ where:
\begin{mmyitem}
	\item $\Bcat$ is a category called \emph{the base category}
	\hspace*{9ex}
	$\bullet$ \,$T$ is an object in $\Bcat$ called \emph{the base object}
	\item $\Ccat$ is a category called \emph{the extended category}
		\hspace*{5.12ex} $\bullet$  \,$I$ is an initial object in $\Ccat$
	\item $R : \Ccat \ra \Bcat$  is a functor such that $R\;I = T$   
\end{mmyitem}
\end{defi}
%

In typical examples $\Ccat$ and $\Bcat$ will be categories of models, 
i.e., sets 
with algebraic/relational structure, 
so that  the models in $\Ccat$ have more structure than those in $\Bcat$, and 
$R$ will be a structure-forgetting functor. 
The base object $T$ will be the syntactic model of interest---such as the term model $\TTrm(\Sigmac)$ 
with  constructors only---which is the source object of the intended recursive definitions. Then $I$ is its extension to an object of $\Ccat$ that makes recursion 
possible---for our nominal recursors, this is a model $\TTrm(\Sigma_\ext)$, having other ``recursion-underpinning'' operators 
besides the constructors. 
\looseness=-1

\begin{figure}
	$$
	\xymatrix@C=4pc@R=1pc{
		\Ccat \ar[d]_R &  I \ar[r]^{\im_{I,C}} & C
		\\
		\Bcat &  T = R\,I  \ar[r]^{R\,\im_{I,C}}  & B = R\,C 
	}
	$$
		\vspace*{-1.5ex}
	\caption{Epi-recursor in action}
	\label{fig-epiRP}
	\vspace*{-2.5ex}
\end{figure}

To define a morphism $g: T \ra B$ in $\Bcat$ (to some object $B$ in $\Bcat$) using the epi-recursor $r = (\Bcat,T,\alb\Ccat,I,R)$, we do the following (
see Fig.~\ref{fig-epiRP}): 
	(1) extend $B$ to an object $C$ in $\Ccat$ (with $R\;B = C$) which gives us a morphism $\im_{I,C} : I \ra C$ in $\Ccat$ from the initiality of $I$; 
(2) take $g$ to be $R\;\im_{I,C}$, the restriction of $\im_{I,C}$ to $\Bcat$. 
%
\looseness=-1

\begin{defi}\rm \label{defi-definab}
\hspace*{-1ex}
A morphism 
$g \hspace*{-0.2ex}:\hspace*{-0.2ex} T \!\ra\! B$  is \emph{definable by the 
	epi-recursor $r$} if 
$g = R \,\im_{I,C}$ for some extension $C$ of $B$.   
\looseness=-1
\end{defi} 

So an epi-recursor defines a morphism in the base category $\Bcat$. However, 
beyond having the definition 
go through, we often want to also 
``remember what happened'' in the larger category $\Ccat$ because, 
e.g., 
properties such as 
commutation with the non-constructor operators can be useful in themselves. 
\looseness=-1

\vspace*{-0.5ex}
\subsection{Nominal recursors as epi-recursors, formally}
\label{subsec-nomrecAsEpirecFormally}

\thispagestyle{empty}
\begin{figure*}
\hspace*{-1.9ex}
{\small
\begin{tabular}{c} 
\begin{tabular}{|c|l|}	
	\hline
\textrm{\SwVr} 
&
$(\Vr\;x)[z_1 \sw z_2] = \Vr\;(x[z_1 \sw z_2])$
\\\hline	
\textrm{\SwAp} 
&
$(\Ap\;s\;t)[z_1 \sw z_2] = \Ap\,(s[z_1 \sw z_2])\,(t[z_1 \sw z_2])$
\\\hline 	
\textrm{\SwLm}
&
$(\Lm\;x\;t)[z_1 \sw z_2] \,= $
\\&
$ \Lm\;(x[z_1 \sw z_2])\;(t[z_1 \sw z_2])$
\\\hline
	\textrm{\SwId}
&
$t[z\sw z] = t$
\\\hline
\textrm{\SwCp} 
&
$t[x\sw y][z_1\sw z_2] \,=$ 
\\
&
$(t[z_1\sw z_2]) [(x[z_1\sw z_2])\,\sw\, (y[z_1\sw z_2])]$
\\\hline
\textrm{\SwIv} 
&
$t[x\sw y][x\sw y] = t$
\\\hline
	\textrm{\SwFr} 
&
if $x \;\fresh\;t$ and $y \;\fresh\;t$ 
then $t[x\sw y] = t$
\\\hline 
\textrm{\FrSw} 
&
$z \;\fresh\;t [x\sw y]$ if and only if    
$z[x\sw y] \;\fresh\;t$
\\\hline 
	\textrm{\SwFv} 
&
if $x,y \notin \FV\;t$ then $t[x\sw y] = t$
\\\hline 
\textrm{\FvSw} 
&
$z \in \FV(t [x\sw y])$ if and only if   
\\&
$z[x\sw y] \in \FV\;t$
\\\hline 
\textrm{\SwCg} 
&
if $z \not\in \{x_1,x_2\}$ and 
$z\;\fresh\;t_1,t_2$ 
\\
&and 
$t_1[z\sw x_1] = t_2[z\sw x_2]$ 
\\
&then $\Lm\;x_1\;t_1 = \Lm\;x_2\;t_2$
\\\hline
\textrm{\SwBvr} 
&
if $x' \not= x$ and 
$x'\;\fresh\;t$ 
\\&then 
$\Lm\;x\;t = \Lm\;x'\;(t[x' \sw x])$
\\\hline
\end{tabular} 	
\\ \\  \vspace*{-1.3ex}
\begin{tabular}{|c|l|}	
	\hline
	\textrm{\RnVr}
	& 
	$(\Vr\;x)[y / z] = \Vr\,(x[y/z])$
	\\\hline 
	\textrm{\RnAp}  & 
	$(\Ap\;t_1\;t_2)[y / z] =  \Ap\,(t_1[y / z])\,(t_2[y / z])$
	\\\hline 
	\textrm{\RnLmO}  & if $x \not\in \{y,z\}$ then  
	\\&
	$(\Lm\;x\;t)[y / z] = \Lm\;x\;(t[y / z])$
	\\\hline 
	\textrm{\RnLmT}  & $(\Lm\;x\;t)[z / x] = \Lm\;x\;t$
		\\\hline 
	\textrm{\RnCg}  &
	if $z \not\in \{x_1,x_2\}$ and 
	$z\;\fresh\;t_1,t_2$ 
	\\&and 
	$t_1[z / x_1] = t_2[z / x_2]$ 
	\\&then 
	$\Lm\;x_1\;t_1 = \Lm\;x_2\;t_2$
	\\\hline 
	\textrm{\RnBvr}  &	
	if $x' \not= x$ and 
	$x'\;\fresh\;t$ 
	\\& then 
	$\Lm\;x\;t = \Lm\;x'\;(t[x' / x])$
	\\\hline 
	\textrm{\RnBvrT}  &	
	if $y \not= x'$ then 
	\\&  
	$\Lm\;x\;(t[y/x']) = \Lm\;x'\;(t[y/x'][x' / x])$
	\\\hline 
	\textrm{\RnId}
	&
	$t[z / z] = t$
	\\\hline
	\textrm{\RnIm} 
	& if $x_1\not=y$ then 
	$t[x_1/y][x_2/y]= t[x_1/y]$
	\\\hline 
	\textrm{\RnCh}  & if $y\not=x_2$ then 
	\\&$t[y/x_2][x_2/x_1][x_3/x_2]$ $=$ 
	\\& $t[y/x_2] [x_3/x_1]$
	\\\hline 
	\textrm{\RnCm}  & if $x_2 \not= y_1 \not= x_1 \not= y_2$ then 
	\\&$t[x_2/x_1] [y_2/y_1]= t[y_2/y_1][x_2/x_1]$
\\\hline 
\textrm{\RnFr} 
& if $y \,\fresh\, t$ then  $t[x/y]= t$
\\\hline 
\textrm{\FrRn} 
&
$z \;\fresh\;t [x / y]$ if and only if    
	\\&($z = y$ or $z \;\fresh\;t$) and ($y \;\fresh\;t$ or $x \not= z$) 
\\\hline 
\textrm{\FrRnT} 
&
$z[x/y] \;\fresh\;t[x/y]$ implies $z \;\fresh\;t$  
\\\hline 
\textrm{\RnChFr}  & if $x_2 \,\fresh\, t$ then 
\\&$t[x_2/x_1][x_3/x_2] = t[x_3/x_1]$
\\\hline 
\end{tabular} 
\\ \\ \vspace*{-1.3ex}
	\begin{tabular}{|c|l|}	
	\hline 
	\textrm{\FrVr} & 
	if $z \not=x$ then $z\;\fresh\;\Vr\;x$
	\\\hline
	\textrm{\FrAp} &
	if $z\;\fresh\;s$ and $z\;\fresh\;t$ then $z\;\fresh\;\Ap\;s\;t$
	\\\hline
	\textrm{\FrLm}  &
	if $z = x$ or $z\;\fresh\;t$ then $ z\;\fresh\;\Lm\;x\;t$
	\\\hline
	\textrm{\FvVr} & 
	$\FV(\Vr\;x) \su \{x\}$
	\\\hline
	\textrm{\FvAp} &
	$\FV(\Ap\;t_1\;t_2) \su \FV\,t_1 \cup \FV\,t_2$  
	\\\hline
	\textrm{\FvLm}  &
	$\FV(\Lm\;x\;t) \su \FV\,t \sm \{x\}$ 
	\\\hline
\end{tabular}
\end{tabular} 
\hspace*{-4.4ex}
\begin{tabular}{c}
	\begin{tabular}{|c|l|}	
		\hline
		\textrm{\PmVr} 
		&
		$(\Vr\;x)[\sigma] = \Vr\;(\sigma\;x)$ 
		\\\hline
		\textrm{\PmAp} 
		&
		$(\Ap\;s\;t)[\sigma] =    
		\Ap\,(s[\sigma])\,(t[\sigma])$
		\\\hline
		\textrm{\PmLm} 
		&
		$(\Lm\;x\;t)[\sigma] = 
		\Lm\;(\sigma\;x)\;(t[\sigma])$
		\\\hline
		\textrm{\PmId}  
		&
		$t[\id] = t$
		\\\hline	
		\textrm{\PmCp}  & $t[\sigma][\tau] = t[\tau \circ \sigma]$
		\\\hline
		\textrm{\PmFv} & 
		if $\supp\;\sigma \cap \FV\,t = \emptyset$ then $t[\sigma] = t$
		\\\hline
		\textrm{\FvPm}  & 
		$z \notin \FV(t[\sigma])$ if and only if  
		\\&
		$z[\sigma^{-1}] \notin \FV\,t$
		\\\hline 
		\textrm{\PmBvr} 
		&
		if $x' \not= x$ and 
		$x' \notin \FV\;t$ 
		\\&then 
		$\Lm\;x\;t = \Lm\;x'\;(t[x'  \llra x])$
		\\\hline
	\end{tabular}
\\ \\ \vspace*{-0.5ex}
\begin{tabular}{|c|l|}	
	\hline
	\textrm{\SbVr}
	& 
	$(\Vr\;x)[s / z] = $ 
	\\&
	(if $x = z$ then $s$ else $\Vr\;x$)  
	\\\hline 
	\textrm{\SbAp}  & 
	$(\Ap\;t_1\;t_2)[s / z] \,=$ 
	\\&
	$ \Ap\,(t_1[s / z])\,(t_2[s / z])$
	\\\hline 
	\textrm{\SbLm}  & if $x \not= z$ and $x\;\fresh\;s$ then 
	\\&$(\Lm\;x\;t)[s / z] = \Lm\;x\;(t[s / z])$
	\\\hline 
	\textrm{\SbCg}  &
	if $z \not\in \{x_1,x_2\}$ and 
	$z\;\fresh\;t_1,t_2$ and 
	\\& 
	$t_1[(\Vr\;z) / x_1] = t_2[(\Vr\;z) / x_2]$ 
	\\&then 
	$\Lm\;x_1\;t_1 = \Lm\;x_2\;t_2$
	\\\hline 
	\textrm{\SbBvr}  &	
	if $x' \not= x$ and 
	$x'\;\fresh\;t$ then 
	\\&
	$\Lm\;x\;t = \Lm\;x'\;(t[(\Vr\;x') / x])$
		\\\hline 
	\textrm{\SbId}
	&
	$t[z / z] = t$
	\\\hline
	\textrm{\SbIm} 
	& if $x_1\not=y$ then 
	\\& $t[(\Vr\;x_1)/y][s/y]= t[(\Vr\;x_1)/y]$
	\\\hline 
	\textrm{\SbCh}  & if $y\not=x_2$ then 
	\\&$t[(\Vr\;y)/x_2][(\Vr\;x_2)/x_1][s/x_2]$ $= $ 
	\\& $t[(\Vr\;y)/x_2] [s/x_1]$
	\\\hline 
	\textrm{\SbCm}  & if $x \not= y$, 
	$y \;\fresh\; s$ and $x \;\fresh\; t$
	then 
	\\&$t[s/x] [t/y] = t[t/y][s/x]$
	\\\hline 
	\textrm{\SbFr} 
	& if $y \,\fresh\, t$ then $t[s/y]= t$
	\\\hline 
	\textrm{\FrSb} 
	&
	$z \;\fresh\;t [s / y]$ if and only if    
	\\&($z = y$ or $z \;\fresh\;t$) and ($y \;\fresh\;t$ or $z \;\fresh\; s$) 
	\\\hline 
	\textrm{\SbChFr}  & if $x_2 \,\fresh\, t$ then 
	\\&$t[(\Vr\;x_2)/x_1][s/x_2] = t[s/x_1]$
	\\\hline 
\end{tabular} 
\\ \\ \vspace*{-0.5ex}
\begin{tabular}{|c|l|}	
	\hline 	
	\textrm{\FSupFv}  	& 
	 $\FV\;t$ is finite
	\\\hline 
	\textrm{\FvDPm}  &
	$\FV\,t =  \{x\in \Var \mid \{y \mid t[x \llra  y] \not= t\}$ 
	\\& 
	$\hspace*{17.2ex}\mbox{ is infinite} \}
	$
	\\\hline 
    \textrm{\FvDSw}  &
    $\FV\,t = \{x\in \Var \mid \{y \mid t[x \sw y] \not= t\}$ 
    \\& 
    $\hspace*{17.2ex}\mbox{ is infinite} \}$
	\\\hline 
	\textrm{\FCB}  &	
	there exists $x$ such that
	\\&
	$x \notin \FV(\Lm\;x\;t)$ for all $t$ 
	\\\hline 
	\textrm{\FSupFr}  	&   
	$\{x.\;\neg\;x \;\fresh\;t\}$ is finite
	\\\hline 
		\textrm{\FrDSw}  &
	$x\;\fresh\;t $ if and only if 
	\\& $\{y \mid t[x \sw y] \not= t\}$ is finite 
	\\\hline 
	\textrm{\FrDRn}  &
	$x\;\fresh\;t $ if and only if 
	\\& $\{y \mid t[y / x] \not= t\}$ is finite 
    \\\hline
\end{tabular}
\end{tabular} 
} 
\vspace*{-0.3ex}
\caption{Recursion-relevant properties of operations and relations on terms}
\label{fig-basicProps}
\vspace*{-4ex}
\end{figure*}  

Fig.~\ref{fig-basicProps} collects the properties of the operations and relations on terms that are relevant for the recursors---%
incidentally including some that are generally useful for reasoning about terms. 
%
\SwVr{}, \SwAp{}, \SwLm{} relate swapping with the  constructors.  
\SwLm{} points to one of the main appeals of the swapping operator for developing the theory of $\lambda$-calculus: It shows that swapping commutes with $\lambda$-abstraction on terms exactly in the same way as it does for preterms, i.e., is 
oblivious to the non-injectiveness of $\lambda$-abstraction.  
\SwId{}, \SwCp{}, \SwIv{} are algebraic properties of swapping: 
identity, compositionality and involutiveness. 
\SwFr{} and \FrSw{} are properties connecting swapping to freshness (and \SwFv{} and \FvSw{} are their alternative free-variable-based formulations).  
\SwFr{} says that swapping two fresh variables has no effect on the term. 
\FrSw{} says that freshness of a variable for a swapped term is equivalent to freshness of the swapped variable for the original term---stating for the freshness predicate a variant of what in nominal logic is called \emph{equivariance}. 
\SwCg{} is a swapping-based congruence property 
describing a criterion for the equality of two $\lambda$-abstractions. 
\SwBvr{} is a property allowing the renaming 
of a $\lambda$-bound variable with any fresh variable, again via swapping. \SwCg{} and \SwBvr{} are reminiscent of preterm $\alpha$-equivalence. 
%
Most properties of swapping generalize 
to corresponding properties of permutation, those listed with ``\textsf{Pm}'' in their name.
%
%
%
\looseness=-1
\looseness=-1

\FrVr{}, \FrAp{} and \FrLm{} relate freshness with the constructors, 
corresponding to 
an inductive definition 
of freshness; and \FvVr{}, \FvAp{} and \FvLm{} are their 
 free-variable 
 counterparts. %
Note 
that 
the ``if and only if'' versions of \FrVr{}, \FrAp{} and \FrLm{}  and the equality versions of \FvVr{}, \FvAp{} and \FvLm{} also hold for terms; though 
for recursion it is not the stronger, but the weaker versions of properties that lead to stronger definitional principles---since they mean weaker constraints on 
 models. 
\looseness=-1

Like swapping, substitution 
commutes with the 
constructors, which is expressed in \SbVr{}, \SbAp{}, \SbLm{}. 
As shown by \SbLm{}, unlike in the case of 
swapping, substitution's commutation with 
$\lambda$-abstraction requires  a freshness condition. 
Substitution also enjoys congruence and bound-variable renaming properties similar to those of swapping,  as  
expressed by \SbCg{} and \SbBvr{}, and some algebraic properties, as expressed by \SbId{}, \SbIm{}, \SbCh{} 
and \SbCm{}.  The renaming 
operator of course enjoys all the properties of substitution; e.g., \RnVr{}, \RnAp{}, \RnLmO{} and \RnCg{} are the counterparts of \SbVr{}, \SbAp{}, \SbLm{} and \SbCg{}. 
One may ask why we bother considering renaming, which is a restriction of substitution;  
the reason is that, again, for expressive recursors we want \emph{less} structure and \emph{weaker} properties.  
\looseness=-1

The last group in the figure 
are nominal-logic specific properties.  
\FSupFv{} states that terms have finite support, i.e., finite set of free variables; it can 
also be expressed directly 
in terms of swapping (as in \S\ref{subsubsec-alphaStructRec}).  
\FvDPm{} and \FvDSw{} state the definability of free-variables from permutations and (alternatively) from swapping.  
%
%
%
%
%
%
\FCB{} is the \emph{freshness condition for binders} 
from  the statement of the perm/free recursion theorem (Thm.~\ref{thm-pittsRec}), but with the Barendregt set $X$ removed.  
\FCB{} is weaker than \FvLm{} since it quantifies existentially rather than universally over the bound variable, though in nominal logic they are equivalent (the ``some/any'' property \cite{pitts-AlphaStructural}).   
Finally, this last group also includes alternative, freshness-based and renaming-based formulations of 
some of the above properties. Note that, unlike \FrDSw{}, \FrDRn{} would stay true for terms if we replaced ``finite'' with ``empty''. 
\looseness=-1

Each of the properties listed in Fig.~\ref{fig-basicProps} 
is satisfied by the terms with their basic operations and relations, i.e., 
by the term model $\TTrm(\Sigma)$ for any signature $\Sigma$ that contains all the symbols referred to in the property. 
But we can 
speak of the corresponding properties in relation to any other $\Sigma$-model $\MM$, and they may or may not be satisfied by $\MM$. For example, when we say that the model $\MM$ (with carrier $M$) satisfies \SwCg{}, we mean the following:  
	For all $m_1,m_2\in M$ and $x_1,x_2,z\in \Var$, 
	if $z \not\in \{x_1,x_2\}$ and 
	$z\;\fresh^\MM\;m_1,m_2$ and 
	$m_1[z\sw x_1]^\MM = m_2[z\sw x_2]^\MM$ 
	then 
	$\Lm^\MM\;x_1\;m_1 = \Lm^\MM\;x_2\;m_2$. 
%
As another example, $\MM$ 
satisfying \FCB{} means the following: 
There exists $x \in \Var$ such that 
$x\notin \FV^\MM(\Lm^\MM x\,m)$ for all $m \in M$.  
\looseness=-1

Given a subset $\Props$ of the properties 
in Fig.~\ref{fig-basicProps} and 
a signature $\Sigma$ comprising the symbols referred to in $\Props$, any $\Sigma$-model satisfying $\Props$ will be called a \emph{$(\Sigma,\Props)$-model}. 
Now we can 
(re)formulate 
nominal recursors as epi-recurors: 
\looseness=-1

\newcommand\perFreeRec
{
\begin{tabular}{|c|}
	\hline
	$r_1$ \ (perm/free)
	\\\hline
	\PmVr{}, \PmAp{}, \PmLm{}, 
	\\
	\PmId{}, \PmCp{},  
	\\
	\FvDPm{}, \FCB{} 
	\\
	\sout{\FSupFv{}} 
	\\\hline
\end{tabular}	
}

\newcommand\perFreeRecV
{
	\begin{tabular}{|c|}
		\hline
		$r_2$ \ (perm/free variant)
		\\\hline
		\PmVr{}, \PmAp{}, \PmLm{}, 
		\\
		\PmId{}, \PmCp{},  
		\\
		\PmFv{}, \sout{\FvPm{}}, 
		\\
	    \FvVr{}, \FvAp{}, \FvLm{}  
		\\\hline
	\end{tabular}	
}

\newcommand\swapFreeRec
{
	\begin{tabular}{|c|}
		\hline
		$r_4$ \ (swap/free)
		\\\hline
		\SwVr{}, \SwAp{}, \SwLm{}, 
		\\
		\sout{\SwId{}}, \sout{\SwIv{}},  
		\\
		\SwFv{}, \sout{\FvSw{}}, 
		\\
		\FvVr{}, \FvAp{}, \FvLm{}  
		\\\hline
	\end{tabular}	
}

\newcommand\swapFreeRecV
{
	\begin{tabular}{|c|}
		\hline
		$r_3$ \ (swap/free variant)
		\\\hline
		\SwVr{}, \SwAp{}, \SwLm{}, 
		\\
		\SwId{}, \SwIv{},  \SwCp{},  
		\\
		\FvDSw{}, \FCB{} 
		\\
		\sout{\FSupFv{}} 
		\\\hline
	\end{tabular}	
}

\newcommand\swapFreshRecV
{
	\begin{tabular}{|c|}
		\hline
		$r_5$ \ (swap/fresh variant)
		\\\hline
		\SwVr{}, \SwAp{}, \SwLm{}, 
		\\
		\SwBvr{}, 
		\\
		\FrVr{}, \FrAp{}, \FrLm{}  
		\\\hline
	\end{tabular}	
}

\newcommand\swapFreshRec
{
	\begin{tabular}{|c|}
		\hline
		$r_6$ \ (swap/fresh)
		\\\hline
		\SwVr{}, \SwAp{}, \SwLm{}, 
		\\
		\SwCg{}, 
		\\
		\FrVr{}, \FrAp{}, \FrLm{}  
		\\\hline
	\end{tabular}	
}


\newcommand\substFreshRec
{
	\begin{tabular}{|c|}
		\hline
		$r_7$ \ (subst/fresh)
		\\\hline
		\SbVr{}, \SbAp{}, \SbLm{}, 
		\\
		\SbBvr{}, 
		\\
		\FrVr{}, \FrAp{}, \FrLm{}  
		\\\hline
	\end{tabular}	
}

\newcommand\renamingRec
{
	\begin{tabular}{|c|}
		\hline
		$r_8$ \ (renaming)
		\\\hline
		\RnVr{}, \RnAp{}, \RnLmO{}, 
		\\
		\RnLmT{}, \RnBvrT{}, 
		\\
		\RnId{}, \RnIm{}, \RnCh{}, \RnCm{} 
		\\\hline
	\end{tabular}	
}

\newcommand\renamingFreshRecV
{
	\begin{tabular}{|c|}
		\hline
		$r_9$ \ (renaming/fresh variant)
		\\\hline
		\RnVr{}, \RnAp{}, \RnLmO{}, 
		\\
		\RnBvr{}, 
		\\
		\FrVr{}, \FrAp{}, \FrLm{}  
		\\\hline
	\end{tabular}	
}

\begin{figure}[!t]
	\begin{center} 	
		{\small 
		\begin{tabular}{ccc}
			    \vspace*{-1.2ex}
			\perFreeRec{} & \perFreeRecV{} &
			\swapFreeRecV 
			\\ && \\    \vspace*{-1.2ex}
			\swapFreeRec  & 
			\swapFreshRecV &  \swapFreshRec 
				\\ && \\
			\substFreshRec & 
			 \renamingRec &  \renamingFreshRecV 
		\end{tabular}
	} 
	\end{center}
\vspace*{-1.5ex}
	\caption{Sets of properties 
		underlying different nominal recursors. 
	The crossed-out properties \FSupFv{} in $r_1$ 
	and \SwId{}, \SwIv{}, \FvSw{} in $r_4$ were in the original recursors but turn out not to be needed.}
	\label{fig-iter}
	\vspace*{-3ex}
\end{figure}

\begin{thm}\rm \label{thm-allNominalRecs}
Consider the  nine choices, for $i\in\{1,\ldots,9\}$, of 
tuples $r_i = (\Bcat,T,\Ccat_i,\alb I_i,R_i)$ 
given by the sets of properties $\Props_i$ shown in  
Fig.~\ref{fig-iter}. 
(E.g., $\Props_5$ is $\{\SwVr, \SwAp, \alb\SwLm, \SwBvr, \FrVr, \FrAp, \FrLm\}$.)
Namely, 
we assume that the signature $\Sigma_i$ consists of all the 
operation and relation 
symbols 
occurring in $\Props_i$, and: 
\begin{mmyitem}
	\item $\Bcat$ is the category of $\Sigmac$-models and $T=\TTrm(\Sigmac)$
	\item $\Ccat_i$ is the category of $(\Sigma_i,\Props_i)$-models and 
	$I_i$ is $\TTrm(\Sigma_i)$
	\item $R_i : \Ccat_i \ra \Bcat_i$ is the forgetful functor sending 
	 $(\Sigma_i,\Props_i)$-models to their underlying $\Sigmac$-models
	\looseness=-1 
\end{mmyitem}
\par
Then $r_i$ is an epi-recursor. 
In particular, $\TTrm(\Sigma_i)$ is the initial $(\Sigma_i,\Props_i)$-model.  
\end{thm}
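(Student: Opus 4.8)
The plan is to verify the five bullet conditions of Definition~\ref{defi-epirec} for each $r_i$, with essentially all the work concentrated in the claim that $I_i = \TTrm(\Sigma_i)$ is initial in $\Ccat_i$. The categorical scaffolding is routine: $\Sigmac$-models with their morphisms form the category $\Bcat$; the $(\Sigma_i,\Props_i)$-models form a full subcategory $\Ccat_i$ of the category of $\Sigma_i$-models (the requirements $\Props_i$ restrict only the objects, never the morphisms); the forgetful functor $R_i$ is well defined and functorial because a morphism of $\Sigma_i$-models in particular commutes with the constructors; and $R_i\,I_i = T$ because the underlying $\Sigmac$-model of $\TTrm(\Sigma_i)$ is exactly $\TTrm(\Sigmac)$. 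That $\TTrm(\Sigma_i)$ is an object of $\Ccat_i$ at all --- i.e.\ that the terms satisfy $\Props_i$ --- was already recorded in \S\ref{subsec-nomrecAsEpirecFormally}.

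It then remains to show that for every $(\Sigma_i,\Props_i)$-model $\MM$ there is a unique morphism $\TTrm(\Sigma_i) \ra \MM$. \emph{Uniqueness} is uniform and easy: any two morphisms $g,g' : \TTrm(\Sigma_i) \ra \MM$ commute with $\Vr,\Ap,\Lm$, so $g = g'$ follows by structural induction on terms --- terms are generated, though not freely, by the three constructors, which is all the induction needs. \emph{Existence} is the substance. For the five recursors taken verbatim from the literature --- $r_1$ (Thm.~\ref{thm-pittsRec}), $r_4$ (Thm.~\ref{thm-norrishRec}), $r_6$ (Thm.~\ref{thm-popRecSwap}), $r_7$ (Thm.~\ref{thm-popRecSubst}), $r_8$ (Thm.~\ref{thm-popRecRename}) --- I would obtain existence by specialising the cited theorem: set the Barendregt parameter $X$ to $\emptyset$ wherever it occurs, and drop the full-fledged-recursion term arguments (replace each input of shape $\Trm \times A$ by $A$). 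One then checks that, under these specialisations, the hypotheses of the cited theorem coincide exactly with ``$\MM$ is a $(\Sigma_i,\Props_i)$-model'' and its conclusion with ``$g$ is a $\Sigma_i$-model morphism'' --- a clause-for-clause match against Fig.~\ref{fig-basicProps}.

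For the four cross-pollinated variants $r_2,r_3,r_5,r_9$, which do not occur as such in the literature, I would run a uniform construction. Lift the constructor-plus-extra-operator structure of $\MM$ along the free recursion on $\PTrm$ to get $\widehat g : \PTrm \ra M$; show $p \aeq p' \Rightarrow \widehat g\,p = \widehat g\,p'$ by induction on the derivation of $\aeq$, which is where the binder-congruence axioms in $\Props_i$ are spent (\SwCg{}/\SwBvr{} for the swap-based variants, \RnCg{}/\RnBvr{} for the renaming variant, \FCB{} together with \FvDSw{}/\FvDPm{} for the free-variable-free variants); this descends to $g : \Trm \ra M$. Finally verify, by structural induction on terms using the remaining clauses, that $g$ commutes with the extra operator (permutation, swapping, or renaming) and preserves $\fresh$/$\FV$.

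I expect the genuine obstacle to be precisely the ``$\widehat g$ respects $\aeq$'' step for $r_1$ and $r_3$, where the only binder-congruence information available is the weak, existentially quantified condition \FCB{} (plus \FvDPm{}/\FvDSw{} to recover free variables from the action). There one cannot argue clause by clause; instead one must, for a given $\lambda$-abstraction, choose a variable fresh both for the (empty) Barendregt set and for the finite support of the relevant data in $\MM$, rename into it using equivariance of the action (\PmLm{}$+$\PmCp{}, resp.\ \SwLm{}$+$\SwCp{}$+$\SwIv{}), and then apply \FCB{}. This is exactly the non-trivial kernel of the Gabbay--Pitts and Norrish arguments; the new content is only to confirm that the stated --- in places deliberately weakened --- axiom sets still suffice, and that the removed enhancements played no role here.
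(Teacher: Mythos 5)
Your overall strategy is sound, but it is a genuinely different route from the paper's. The paper does \emph{not} prove the nine initiality statements independently: it gives direct initiality proofs for only two recursors, $r_6$ and $r_9$ (the tops of the $\geq$-hierarchy), and then \emph{borrows} the remaining seven via the expressiveness machinery --- it introduces tight pre-epi-recursors (quasi-initial $T$ plus faithful $R$, both immediate here) and a borrowing proposition (Prop.~\ref{prop-borrow}) saying that a tight pre-epi-recursor sitting below an epi-recursor in the sense of Prop.~\ref{prop-extCriterion}/\ref{prop-WeakExtCriterion} (with $F$ full) is itself an epi-recursor. Moreover, the two direct proofs are carried out entirely at the level of terms, by inductively defining a relation on $\Trm\times M$ and proving it single-valued; the paper explicitly avoids the preterm-lifting argument you propose for the variants, and claims this as a simplification over the literature. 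What your route buys is independence: each recursor is handled on its own, with no bootstrapping entanglement with Thms.~\ref{thm-expr} and \ref{thm-qexpr} (the paper has to be careful here, which is why it needs \emph{both} $r_6$ and $r_9$ proved directly). What it costs is repeating the $\alpha$-respect argument four times down at the preterm level, and trusting that the literature statements really match the $(\Sigma_i,\Props_i)$-model classes.

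That last point is where your plan has a concrete soft spot. For $r_1$ the promised ``clause-for-clause match'' fails at the level of theorem statements: Thm.~\ref{thm-pittsRec} assumes the target is a \emph{nominal set}, hence finitely supported, whereas $\Props_1$ has \FSup{} deliberately crossed out, so $\Ccat_1$ is a strictly larger class than the one Thm.~\ref{thm-pittsRec} quantifies over. You cannot obtain initiality in $\Ccat_1$ by specialising the cited statement; you must either reopen Pitts's proof to check \FSup{} is unused, or prove it directly (or borrow it, as the paper does). A smaller caveat for $r_5$ and $r_9$: these have no finite-support axiom, so ``pick a variable fresh for the relevant data in $\MM$'' is not available for arbitrary model elements; in your preterm-lifting argument the freshness hypotheses of \SwBvr{}/\RnBvr{} can only be discharged for elements in the image of $\widehat g$, using \FrVr{}, \FrAp{}, \FrLm{} and induction on the preterm. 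This is manageable but needs to be tracked explicitly through the induction on the derivation of $\aeq$.
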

%

Next we discuss this theorem's nine statements of epi-recursion principles. 
We 
distinguish between 
five ``original recursors'' from the literature and four ``variant recursors'' obtained from those. 
\looseness=-1 

\subsubsection{The original recursors}  
\label{subsubsec-origRecs}

As suggested by the names in Fig.~\ref{fig-iter}, five of these principles, 
$r_1$, $r_4$, $r_6$, $r_7$ and $r_8$, are reformulations 
of %
the (stripped down versions of) 
nominal recursors 
from \S\ref{subsec-nominalRec}.  
%
%
%
\looseness=-1

This is easy to see in the case of $r_6$ and $r_7$.  Indeed, after removing the term arguments of the operations and relations,  
 Thms.~\ref{thm-popRecSwap} (swap/fresh) and \ref{thm-popRecSubst} (subst/fresh) 
  simply state, for a suitable extension of the constructor signature $\Sigmac$, the initiality of the corresponding term model 
  among
all models satisfying $\Props_6$ or $\Props_7$. 
Moreover, Thm.~\ref{thm-popRecRename} (about renaming recursion) is easily seen to 
be exactly $r_8$.  
\looseness=-1

Seeing that $r_1$ is the stripped down version of the 
perm/free recursor (
from Thm~\ref{thm-pittsRec}) requires a bit of work. 
After removing $X$ from (i.e.,  taking $X$ to be $\emptyset$ in Thm.~\ref{thm-pittsRec}),  
we see that 
a nominal set $\AA = (A,\_[\_]^\AA)$ together with $\emptyset$-supported operations 
$\Vr^\AA$, 
$\Ap^\AA$ 
and $\Lm^\AA$ 
can be equivalently described as a $(\Sigma_1,\Props_1)$-model. 
Moreover, the properties of the unique function $g:\Trm \ra A$ 
guaranteed by Thm.~\ref{thm-pittsRec} are equivalent to those of 
$\Sigma_i$-morphisms. (App.~\ref{app-modreDetailsNomSets}   
 gives 
details.)
Thm.~\ref{thm-pittsRec} does not actually 
need 
the finite-support condition \FSupFv{} for the target domain---which is why in Fig.~\ref{fig-iter} we show it for $r_1$ (and for the variant $r_3$ discussed below) as crossed out. 
%
\looseness=-1

Seeing that $r_4$ is the stripped down version of the swap/free recursor (
from Thm.~\ref{thm-norrishRec}) is also not immediate. After removing the Barendregt parameterization on $X$ from Thm.~\ref{thm-norrishRec}, we obtain operations and relations that fit the pattern of full-fledged recursion, i.e., iteration plus additional term arguments---e.g., $\Vr^\AA : \Var \ra A$, 
$\Ap^\AA : (\Trm \times A) \ra (\Trm \times A)\alb \ra A$ and 
$\Lm^\AA : \Var \ra (\Trm \times A) \ra A$. So the situation becomes similar to that 
of $r_6$ and $r_7$ versus Thms.~\ref{thm-popRecSwap} and \ref{thm-popRecSubst}. 
However, two of Thm.~\ref{thm-norrishRec}'s assumptions, 
(2) and (3), do not 
directly fit the normal full-fledged recursion pattern. 
But after using 
that $	\FV\,(\Ap\;t_1\;t_2) = \FV\,t_1 \cup \FV\,t_2$ and $\FV\,(\Lm\;x\;t) = \FV\,t \sm \{x\}$, they are seen equivalent to: 
\looseness=-1
	\begin{myitem}
	\item[(2)] 
	If $\FV^\AA\,a_1 \su \FV\;t_1$ and $\FV^\AA\,a_2 \su \FV\;t_2$ 
	then  
	$\FV^\AA(\Ap^\AA\,(t_1,a_1)\;(t_2,a_2)) \su \FV\,t_1 \cup \FV\,t_2$
	\item[(3)]  
	If $\FV^\AA\,a \su \FV\;t$  
	then  
	$\FV^\AA(\Lm^\AA\,x\;(t,a)) \su  \FV\,t \sm \{x\}$
	\end{myitem} 
In this form, they are seen to express a kind of full-fledged recursion that is optimized for the free-variable operator. 
Indeed, they 
%
are weaker 
versions of ones that \emph{do} fit the pattern:
\looseness=-1
\begin{myitem}
	\item[(2)] 
	$\FV^\AA(\Ap^\AA\,(t_1,a_1)\;(t_2,a_2)) \su (\FV^\AA\,a_1 \cup \FV\,t_1) \cup (\FV^\AA\,a_2 \cup \FV\,t_2)$
	\item[(3)]  
	$\FV^\AA(\Lm^\AA\,x\;(t,a)) \su  (\FV^\AA\,a  \cup \FV\,t) \sm \{x\}$
\end{myitem} 
(In App.~\ref{app-addingBacknhancements} 
we show how this free-variable-specific optimization can be seen as a general enhancement available to all our discussed recursors that involve freeness or freshness.)  
Removing the term arguments from the latter turns them into Fig.~\ref{fig-basicProps}'s  
\FvAp{} and \FvLm{}; and removing the term arguments from the other assumptions in Thm.~\ref{thm-norrishRec} turns them into the other properties of $\Props_4$.  Finally, the conclusion of Thm.~\ref{thm-norrishRec} corresponds precisely to the $\Sigma_2$-morphism conditions. 
\looseness=-1

Three of the 
properties 
originally postulated by  
\citet{primrecFOAS-Norrish04} 
for the swap/free recursor, \SwId{}, \SwIv{} and \FvSw{}, are not needed, meaning that the recursion theorem holds without them (hence they are crossed out under $r_4$ in Fig.~\ref{fig-iter}.) 
%
This is a surprising result, given the 
careful analysis done by Norrish  
when distilling the required properties for his recursor to work. %
We detected this redundancy while subsuming the $(\Sigma_4,\Props_4)$-models 
to the more general $(\Sigma_5,\Props_5)$-models 
during recursor comparison (discussed in \S\ref{sec-compareNominalRec}), so this strengthening 
owes to the different 
path 
taken when 
proving $r_5$. 
\looseness=-1

\subsubsection{
	The variant recursors}  
\label{subsubsec-crossPolliRecs}
The remaining principles, $r_2,r_3,r_5$ and $r_9$, are obtained by combining  
axioms of the original recursors. They 
act as bridges between the latter 
helping their comparison, 
but are also of independent interest, e.g., $r_9$ will 
be seen to be maximal with respect to expressiveness. 
\looseness=-1

We call $r_2$ a ``perm/free variant'' because it is another recursor based on permutation and freeness, just like the original perm/free recursor $r_1$.  However $r_2$ does not follow the nominal-set route of $r_1$ (which defines the free-variable, i.e., support operator from permutation, via \FvDPm{}) but instead follows the idea of the swap/free recursor $r_4$ (using permutation instead of swapping) and postulates properties connecting the free-variable operator with  permutation (via \PmFv{} and \FvPm{}) and with the constructors (via \FvVr{}, \FvAp{} and \FvLm{}). In short, $r_2$ is a hybrid between $r_1$ and $r_4$. For the symmetry of presentation, under $r_2$ the figure also shows \FvPm{}---the permutation counterpart of \FvSw{}, but crosses it out because, like \FvSw{}, is also not needed. 
Another 
hybrid between the two is the swap-free variant $r_3$, which uses the swapping operator like $r_4$ and nominal-set-like axioms like $r_1$.  
\looseness=-1

$r_5$ 
is an $r_6$--$r_7$ hybrid, 
born from the observation that 
$r_6$ and $r_7$ 
have 
similar structures, in that they both axiomatize the interaction between 
constructors and freshness, 
and between constructors and their specific operator (either swapping or substitution); 
of course, substitution behaves differently from swapping w.r.t.\ constructors, but the respective constructor-commuting properties (\SbVr{}, \SbAp{} and \SbLm{} vs.\ \SwVr{}, \SwAp{} and \SwLm{}) have a similar flavor.
The difference between $r_6$ and $r_7$ lies in the additional property that they use to further underpin recursion over the constructors: in one case via a congruence rule \SwCg{} and in the other via a bound-variable-renaming rule \SbBvr{}. However, both these latter types of rules make sense for the other operator too, \emph{mutatis mutandis}. 
As it turns out, we can replace \SwCg{} with \SbBvr{} in the swap/fresh recursor 
$r_6$, obtaining the swap/fresh variant $r_5$. 
%
But we cannot perform the dual modification to the subst/fresh recursor $r_7$, where replacing  \SbBvr{} with \SbCg{} would not give a valid recursor; the reason is that, unlike swapping, substitution-like operators need a more delicate handling of the bound variables, which \SbBvr{} but not \SbCg{} can achieve. 
\looseness=-1

Finally, $r_9$ is a $r_7$--$r_8$ hybrid, 
in that it has axioms similar to $r_7$, but uses renaming like $r_8$ 
 rather than substitution. 
  And similarly to the case of substitution, replacing \RnBvr{} with \RnCg{} would not work. 
  \looseness=-1

We 
discovered these variant recursors 
during the Isabelle formalization of the originals---observing the roles played by different axioms in underpinning recursion, and noting that in specific contexts some operators and axioms are interchangeable.  
\looseness=-1

\smallskip
\emph{Proof idea for Thm.~\ref{thm-allNominalRecs}.}
For any $i\in \{1,\ldots,9\}$, the only non-trivial part of the statement that 
$r_i$ is an epi-recursor is the initiality theorem, i.e., the fact that $\TTrm(\Sigma_i)$ is the initial $(\Sigma_i,\Props_i)$-model. 

The initiality theorems for the original recursors already have been proved in the literature (as we discussed in \S\ref{sec-nomRec}), 
%
%
%
%
whereas 
the  variant recursors $r_2,r_3,r_5$ and $r_9$ are new. 
%
We (re)proved all these recursors via the following route: First we gave direct proofs for $r_6$ and $r_9$, and then we used the transformations underlying the expressiveness relations in Thm.~\ref{thm-expr} in order to infer (``borrow'') the initiality theorems for the others from the above two (which are at the top of Thm.~\ref{thm-expr}'s expressiveness hierarchy). 
App.~\ref{app-proofSketches} gives details. 

Next, we show the proof idea for 
$r_9$, which is a generalization/adaptation of that for $r_7$ from  \citet{DBLP:conf/cade/Popescu22}. 
Let $\MM$ be a $(\Sigma_9,\Props_9)$-model. 
We first define 
a relation $R : \Trm \ra M \ra \Bool$, with inductive clauses reflecting the desired properties 
of commutation with the constructors:
\looseness=-1
$$R\;(\Vr\;x)\;(\Vr^\MM\;x)
\hspace*{6ex}
\frac{R\;t_1\;m_1 \hspace*{4ex} R\;t_2\;m_2}{R\;(\Ap\;t_1\;t_2)\;(\Ap^\MM\;m_1\;m_2)}
\hspace*{6ex}
\frac{R\;t\;m}{R\;(\Lm\;x\;t)\;(\Lm^\MM\;x\;m)}$$
To obtain a $\Sigma_9$-morphism $f : \Trm \ra M$, it suffices 
to prove that $R$ (1) is total, (2) is functional, (3) preserves renaming and (4) preserves freshness, since then we can take $f$ to be the function induced by $R$. 
Property (1) (totality) follows easily by standard induction on terms. 
The remaining properties, (2)--(4), follow by a  simultaneous inductive proof 
using a form of ``renaming-based induction'' on terms: Given a predicate $\phi : \Trm \ra \Bool$, to show $\forall t\in \Trm.\;\phi\;t$ it suffices to show the following: (i) $\forall x\in\Var.\;\phi\;(\Vr\;x)$, (ii) $\forall t_1,t_2\in\Trm.\;\phi\;t_1 \,\&\, \phi\;t_2 \lra \phi\,(\Ap\;t_1\;t_2)$, and (iii) $\forall x\in\Var,\,t\in\Trm.\;(\forall s\in\Trm.\; \textsf{RConnect}\;t\;s \lra \phi\;s) \lra \phi\,(\Lm\;x\;t)$, where $\textsf{RConnect}\;t\;s$ means that $s$ is obtained from $t$ by a chain of renamings. (So we take $\phi$ to be the conjunction of (2)--(4).) The uniqueness of $f$ follows by induction on terms. 
%
%
%
The proof for $r_6$ is similar to that for $r_9$, but uses a corresponding swapping-based induction.   
\qed

\section{Comparing recursors} 
\label{sec-compareNominalRec}

An advantage of viewing nominal recursors as epi-recursors is 
clear sight on their relative expressiveness. 
In this section, we start with 
a 
direct means of comparing epi-recursor expressiveness 
and instantiate it to our nominal recursors (\S\ref{subsec-compareHeadToHead}). 
Then we analyze a 
problematic example, semantic interpretation (\S\ref{subsec-semInt}), which suggests a gentler comparison---yielding a much flatter expressiveness hierarchy (\S\ref{subsec-moreGentle}).  
While the kind of relationships we establish show how a recursor can replace another, they do not imply that 
the converse is not true, and indeed in some cases the converse is true, making the recursors equivalent (w.r.t.\ 
a tighter or gentler comparison); but in two cases we also know that the converse is not true, meaning the 
relation there is strict 
(\S\ref{subsec-negResRec}). 
\looseness=-1

\vspace*{-0.5ex}\subsection{A head-to-head comparison} 
\label{subsec-compareHeadToHead}

\begin{defi} \rm \label{defi-strong}
Given 
epi-recursors $r = (\Bcat,T,\Ccat,I,R)$ and $r' = \alb(\Bcat,\alb T,\Ccat',I',R')$ 
with the same base category $\Bcat$ and base object $T$, we 
call \emph{$r'$ stronger than $r$}, written $r' \geq r$, if $r'$ can define everything that  
$r$ can, 
i.e.: 
for all objects $B$ in $\Bcat$ and morphisms 
$g: T \ra B$, 
$g$ definable by $r$ implies $g$ definable by $r'$.   
\looseness=-1
\end{defi} 


\begin{figure}[t]
		$$
		\xymatrix@C=4pc@R=0.1pc{
			I & 
			\Ccat \ar[dd]_{R}  \ar[dr]^{\exists F} &  &
			\\
			& & \Ccat' \ar[dl]^{R'}  & I' = F\,I
			\\
			R\,I = R'\,I' = T& \Bcat &  &   & 
		}
	\vspace*{-2ex}
		$$
		\caption{Criterion for comparing expressiveness}
		\label{fig-critStrongerEpirec}
		\vspace*{-2ex}
\end{figure}

It is easy to see that 
$\geq$ is a preorder 
on epi-recursors.  
We 
write $r \equiv r'$ to state that $r$ and $r'$ have equal strengths, i.e., both 
$r' \geq r$ and $r \geq r'$ hold.  
%
%
We can establish $r' \geq r$ by showing how to move from $r'$ to $r$ in an initial-object preserving way, as 
depicted in Fig.~\ref{fig-critStrongerEpirec}: 

\begin{prop}\rm
	\label{prop-extCriterion} 
	Let $r = (\Bcat,T,\Ccat,I,R)$ and 
	$r' = (\Bcat,T,\Ccat',I',\alb R')$, and 
	assume $F : \Ccat \ra \Ccat'$ is a pre-functor 
	(i.e., a functor but without the requirement of preserving identity and composition of morphisms)  
	such that 
	$R' \circ F = R$ and 
	$F\;I = I'$.  
	Then $r' \geq r$. 
\end{prop}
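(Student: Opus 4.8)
The plan is to unfold the definitions of \emph{definable by} and of $\geq$, and then chase the diagram of Fig.~\ref{fig-critStrongerEpirec}, with the universal property of the initial object $I'$ as the crucial ingredient. So I would start by taking an arbitrary object $B$ in $\Bcat$ and an arbitrary morphism $g : T \ra B$ that is definable by $r$. By Def.~\ref{defi-definab} this means there is an object $C$ in $\Ccat$ with $R\,C = B$ and $g = R\,\im_{I,C}$, where $\im_{I,C} : I \ra C$ is the unique morphism supplied by initiality of $I$. The goal is to exhibit an object of $\Ccat'$ extending $B$ through which $g$ factors as required by Def.~\ref{defi-definab}, but now for $r'$.

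The natural candidate is $F\,C$. First I would check that it is an extension of $B$: since $R' \circ F = R$, applying both sides to the object $C$ gives $R'\,(F\,C) = R\,C = B$. Initiality of $I'$ in $\Ccat'$ then supplies a unique morphism $\im_{I',F\,C} : I' \ra F\,C$. The key step is the identification $\im_{I',F\,C} = F\,\im_{I,C}$: because $F$ is a pre-functor it sends the $\Ccat$-morphism $\im_{I,C} : I \ra C$ to a $\Ccat'$-morphism $F\,\im_{I,C} : F\,I \ra F\,C$, and since $F\,I = I'$ this is a morphism $I' \ra F\,C$; uniqueness of morphisms out of the initial object $I'$ then forces it to coincide with $\im_{I',F\,C}$. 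Note that only the object- and morphism-assignments of $F$ enter here---preservation of identities and composition plays no role---which is precisely why the hypothesis ``pre-functor'' suffices.

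It then remains to compute, using that $R' \circ F = R$ holds on morphisms as well (the equation is read as an equality of pre-functors),
$$
R'\,\im_{I',F\,C} \;=\; R'\,(F\,\im_{I,C}) \;=\; (R' \circ F)\,\im_{I,C} \;=\; R\,\im_{I,C} \;=\; g .
$$
Hence $g$ is definable by $r'$, via the extension $F\,C$ of $B$. Since $B$ and $g$ were arbitrary, this yields $r' \geq r$ by Def.~\ref{defi-strong}.

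I do not expect a genuine obstacle here; the argument is a direct diagram chase. The one place that deserves a moment's attention is the identification $\im_{I',F\,C} = F\,\im_{I,C}$, which is immediate from the universal property of $I'$ once one checks that $F\,\im_{I,C}$ has source $I'$ and target $F\,C$; and one should be explicit that ``$R' \circ F = R$'' is meant as an equation of (pre-)functors, so that it may be applied to $\im_{I,C}$ in the final line.
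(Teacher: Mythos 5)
Your proposal is correct and follows essentially the same route as the paper's own proof: take $C' = F\,C$, use initiality of $I'$ together with $F\,I = I'$ to identify $\im_{I',C'}$ with $F\,\im_{I,C}$, and then compute $R'\,\im_{I',C'} = R\,\im_{I,C} = g$. Your additional remarks (checking $R'(F\,C)=B$ explicitly, and noting that only the object- and morphism-assignments of $F$ are used) are sound elaborations of the same argument.
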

\emph{Proof.} 
Assume $g:T \ra B$ is definable by $r$, meaning that $g = R\;\im_{I,C}$ for some $C$ in $\Ccat$. 
Let $C' = F\;C$. 
By the initiality of $I'$ and the fact that $F\;I = I'$, we have that 
$\im_{I',C'} = F\;\im_{I,C}$. Hence 
$g = R\;\im_{I,C} = R'\;(F\;\im_{I,C}) = R'\;\im_{I',C'} $, 
meaning that $g$ is definable by $r'$.  \qed

\medskip 
(In all our examples, the above initial-object preserving pre-functor condition 
will be satisfied by actual functors that are left adjoints.) 
One way to read Prop.~\ref{prop-extCriterion}'s criterion (and Fig.~\ref{fig-critStrongerEpirec}'s picture) is the following: Thinking of $R$ as a kind of ``distance" 
from the extended category $\Ccat$ (and its initial object $I$) to the base category $\Bcat$ (and the base object $B$), we have that 
the 
smaller this distance, the more expressive the recursor. 
We have applied this criterion to prove the following expressiveness hierarchy: 
\looseness=-1


\begin{thm} \rm
	\label{thm-expr}
	The epi-recursors described in Thm.~\ref{thm-allNominalRecs} (and in Fig.~\ref{fig-iter}) compare as follows with respect to their expressiveness: 
	 \ \ $r_6 \geq r_5 \geq r_4 \geq r_2 \geq r_1 \equiv r_3$ 
	\ \ and  \ \  $r_9 \geq r_8, r_7$. 
	%
\end{thm}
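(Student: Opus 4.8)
The plan is to verify every link of both chains via the criterion of Prop.~\ref{prop-extCriterion}: to obtain $r' \geq r$ it suffices to exhibit a pre-functor $F$ from the extended category of $r$ to that of $r'$ with $R' \circ F = R$ and $F\,I = I'$. In all nine comparisons the two extended categories consist of models sharing the same constructor structure, so $F$ always keeps the carrier set and the constructors $\Vr^\MM,\Ap^\MM,\Lm^\MM$ fixed and merely reinterprets (or freshly adjoins) the non-constructor operators; this makes $R' \circ F = R$ hold by construction, and it makes $F\,I = I'$ reduce to the observation that on $\TTrm(\Sigma)$ the reinterpreted operators coincide with the genuine ones on $\Trm$ — which is exactly what the identities collected in Fig.~\ref{fig-basicProps} supply (e.g.\ $t[x \llra y] = t[x \sw y]$, or that renaming is the variable-case of substitution). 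Hence the real content of each link is: choosing the definition of the reinterpreted/adjoined operators, and deriving the target axiom set $\Props'$ from the source axiom set $\Props$ in the reinterpreted model. Since $\geq$ is a preorder and $r_1 \equiv r_3$ abbreviates the pair $r_3 \geq r_1$, $r_1 \geq r_3$, it is enough to treat the individual links $r_6 \geq r_5$, $r_5 \geq r_4$, $r_4 \geq r_2$, $r_2 \geq r_1$, $r_3 \geq r_1$, $r_1 \geq r_3$, $r_9 \geq r_7$, $r_9 \geq r_8$.

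Seven of these are essentially book-keeping. For $r_6 \geq r_5$ the signatures coincide and $F$ is the identity: one derives $\SwCg$ from $\SwBvr$ by rewriting each of $\Lm^\MM x_1\,m_1$ and $\Lm^\MM x_2\,m_2$ into the shape $\Lm^\MM w\,(m_i[w\sw x_i]^\MM)$ at a common fresh $w$ and applying the congruence hypothesis. For $r_5 \geq r_4$, $F$ replaces $\FV^\MM$ by the freshness predicate declaring $x$ fresh for $m$ exactly when $x \notin \FV^\MM(m)$; then $\FrVr,\FrAp,\FrLm$ are immediate from $\FvVr,\FvAp,\FvLm$, and $\SwBvr$ follows since $\SwLm$ rewrites $\Lm^\MM z\,(m[z\sw x]^\MM)$ as $(\Lm^\MM x\,m)[z\sw x]^\MM$, which equals $\Lm^\MM x\,m$ by $\SwFv$ because $x,z \notin \FV^\MM(\Lm^\MM x\,m)$ (using $\FvLm$). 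For $r_4 \geq r_2$ and $r_3 \geq r_1$, $F$ defines a binary swapping operator from the $\Perm$-action by $m[x\sw y]^\MM := m[x\llra y]^\MM$; the algebraic laws $\SwId,\SwIv,\SwCp$ become instances of $\PmId,\PmCp$ together with the group identities $(x\llra y)\circ(x\llra y)=\id$ and $\sigma\circ(x\llra y)=(\sigma x \llra \sigma y)\circ\sigma$, the constructor laws fall out of $\PmVr,\PmAp,\PmLm$, and the free-variable clauses either coincide with their permutation analogues under this identification ($\FvDSw$ with $\FvDPm$, $\SwFv$/$\FvSw$ with $\PmFv$/$\FvPm$) or are inherited outright ($\FvVr,\FvAp,\FvLm$). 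For $r_2 \geq r_1$, $F$ is again the identity and one re-derives $\FvVr,\FvAp,\FvLm,\PmFv,\FvPm$ from $\FvDPm$ and $\FCB$; these are the standard nominal-set computations (support of an equivariant operation on its arguments, and the ``some/any'' consequence of $\FCB$ for $\FvLm$), and one checks they go through using only $\PmCp$, so the deliberately omitted $\FSup$ really is unnecessary. Finally $r_9 \geq r_7$ defines renaming from substitution by $m[y/x]^\MM := m[(\Vr^\MM y)/x]^\MM$: $\RnVr,\RnAp$ come from $\SbVr,\SbAp$, $\RnLmO$ from $\SbLm$ once $\FrVr$ turns its side-condition $x \fresh^\MM \Vr^\MM y$ into $x\neq y$, and $\RnBvr$ is $\SbBvr$ specialised to variable substitutions.

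The two remaining links carry the genuine difficulty. For $r_1 \geq r_3$ we must manufacture an entire $\Perm$-action out of a single binary swapping operator: factor $\sigma \in \Perm$ into transpositions and let $m[\sigma]^\MM$ be the corresponding composite of swaps. The main obstacle — and, I expect, the hardest point of the whole theorem — is well-definedness, i.e.\ independence of the chosen factorisation; this amounts to showing that $\SwId,\SwIv,\SwCp$ force the swap operators to obey the defining relations of the group of finitely-supported permutations on its transposition generators. Once that is in place, $\PmId,\PmCp,\PmVr,\PmAp,\PmLm$ follow by induction on factorisations, $\FvDPm$ again matches $\FvDSw$ on generators, and $\FCB$ is inherited. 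For $r_9 \geq r_8$ there is no freshness operator in $\Props_8$, so $F$ must adjoin one; mirroring $\FvDSw$, it declares $x$ fresh for $m$ precisely when $\{y \mid m[y/x]^\MM \neq m\}$ is finite, and the remaining obligation is to deduce $\FrVr,\FrAp,\FrLm$ and the bound-variable-renaming law $\RnBvr$ from the renaming equations $\Props_8$, with the algebraic laws $\RnId,\RnIm,\RnCh,\RnCm$ and the constructor laws $\RnLmO,\RnLmT$ taking over the role a finite-support assumption would otherwise play. In every link the closing step $F\,I = I'$ is immediate, since each reinterpretation agrees with the genuine operator on $\Trm$, so $F$ sends $\TTrm(\Sigma_i)=I_i$ to $\TTrm(\Sigma_j)=I_j$.
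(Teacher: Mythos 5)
Your proposal is correct and follows essentially the same route as the paper: every link is established via the criterion of Prop.~\ref{prop-extCriterion}, using the same model transformations (identity reinterpretations, negation between freshness and free variables, restriction of permutations to transpositions and the converse decomposition, renaming as substitution of injected variables, and nominal-style freshness adjoined to rensets), and you single out the same two genuinely non-trivial links ($r_1 \geq r_3$ via well-definedness of the transposition decomposition, and $r_9 \geq r_8$ via freshness definable from renaming) that the paper's proof also identifies as the substantive ones.
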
 
\emph{Proof idea.}   
%
%
When proving each 
$r_i \geq r_j$, we instantiate Prop.~\ref{prop-extCriterion} 
taking $r' = r_i$ and $r = r_j$. So here $\Bcat$ is the category of $\Sigmac$-models, $\Ccat'$ that of $(\Sigma_i,\Props_i)$-models, 
and $\Ccat$ that of $(\Sigma_j,\Props_j)$-models; $R'$ 
is the forgetful functor from $(\Sigma_i,\Props_i)$-models to $\Sigmac$-models, and $R$ the forgetful functor from $(\Sigma_j,\Props_j)$-models to $\Sigmac$-models; 
$B=\TTrm(\Sigmac)$, 
$I'=\TTrm(\Sigma_i)$ and $I=\TTrm(\Sigma_j)$.  
In each case, we must define a pre-functor $F : \Ccat \ra \Ccat'$  
such that 
$R' \circ F = R$ and $F\;I = I'$.  
This essentially means showing how to 
transform $(\Sigma_j,\Props_j)$-models into $(\Sigma_i,\Props_i)$-models in such a manner that $\TTrm(\Sigma_j)$ 
becomes $\TTrm(\Sigma_i)$---which gives $F$'s behavior on objects, while on morphisms $F$ will be the identity. Each time, $F$ will transform models by preserving the carrier set and the constructor-like operators, and possibly defining 
(1) permutation-like from swapping-like operators or vice versa, (2) 
freshness-like from 
free-variable-like operators, or (3) renaming-like from substitution-like operators; 
these definitions are done just like for concrete terms (where, e.g., we can standardly define freshness from freeness). In each case, the only interesting fact that needs to be checked is that $F$ is well-defined on objects: when starting with a $\Sigma_j$-model satisfying $\Props_j$, the result $\Sigma_i$-model indeed satisfies $\Props_i$. Everything else 
amounts to either well-known or trivial properties. Thus, $F\;I = I'$ means that the standard inter-definability properties (1)--(3) hold for terms, e.g., $x \,\fresh\, t$ iff $x \notin \FV\;t$; and $R' \circ F = R$ (i.e., $F$ commutes with the forgetful functors to $\Sigmac$-models) follows immediately from the fact that $F$ does not change the carrier set or the constructor-like operators. 
Next, we informally discuss these transformations and highlight the intuitions behind them. 

\textit{
	$r_1 \equiv r_3$}
holds because  permutation-like and swapping-like operators 
correspond bijectively to each other, allowing one to (functorially) move back and forth between $\Props_1$-models and $\Props_3$-models \cite[Section 6.1]{pitts_2013}. 
%
For $r_2 \geq r_1$, we note that 
$r_2$ seems \emph{a priori} more flexible than $r_1$ in that it does not require the free-variable operator to be \emph{definable from} permutation, but only to be \emph{related to} permutation by some weaker properties; and indeed, 
any 
$(\Sigma_1,\Props_1)$-model can be proved to be in particular a $(\Sigma_2,\Props_2)$-model. 
%
%
\textit{
	$r_4\geq r_2$} 
holds essentially for the same reason why $r_3\geq r_1$ holds, 
since the restriction of a permutation to a swapping operator carries over to their axiomatized relationships with free-variable operators,  \PmFv{} versus \SwFv{}. (But the converse is not true because $r_4$ lacks (does not need) some of the swapping axioms that ensure extension to a permutation operator.) 
%
\textit{
	$r_5 \geq r_4$} 
follows using a model transformation that 
turns the free-variable operator of $r_4$ into a freshness operator for $r_5$, using negation; indeed, save for the straightforwardly corresponding \FvVr{}, \FvAp{}, \FvLm{} versus \FrVr{}, \FrAp{} and \FrLm{},  
the only difference between $r_4$ and $r_5$ is the replacement of 
\SwFv{} 
with 
\SwBvr{}; 
and the former axiom implies the latter in the presence of \FvLm{}. 
%
%
\textit{
	$r_6 \geq r_5$} 
follows from the fact that, in the presence of the other axioms in $\Props_5$, 
\SwBvr{} implies 
\SwCg{}. 
\textit{
	$r_9 \geq r_7$} 
holds because the axioms for substitution imply those for renaming (for 
the straightforward restriction of a substitution operator to a renaming operator). 
Finally, the proof of 
\textit{
	$r_9 \geq r_8$} 
takes advantage of the fact that, in a constructor-enriched renset (structures axiomatizing renaming that form the basis of recursor $r_8$), freshness is definable from renaming \cite{DBLP:conf/cade/Popescu22}.  
\qed
\medskip

Thus, there are two recursors 
at the top of the expressiveness hierarchy: 
the swap/fresh recursor $r_6$ and the renaming/fresh variant recursor $r_9$. 
Roughly speaking, these two recursors' expressiveness is strong because their underlying axiomatizations:
\begin{mmyitem}
\item keep freshness only loosely coupled with other operators 
such as swapping, permutation or renaming---unlike $r_1$, $r_3$ and $r_8$ which ask that freshness be definable from them; 
\looseness=-1
\item use congruence or renaming axioms that target exactly the ingredients needed for having recursion go through---unlike those of $r_1,r_2,r_3,r_4$ and $r_8$, which employ 
algebraic axiomatizations 
such as 
nominal sets, swapping structures or rensets;
\looseness=-1
\item keep the structure of their operators minimalistic and non-redundant---unlike $r_7$, whose operator emulates substitution, which is more than needed (since renaming would suffice). 
\end{mmyitem} 


Choosing between swapping and permutation as recursion primitives 
turned out to be interesting. 
 The two 
 are known to be equivalent 
 for nominal sets \cite[\S6.1]{pitts_2013}, as reflected by 
 $r_1 \equiv r_3$.  
 \looseness=-1
 
 But they are no longer equivalent when loosening the axiomatization to include freshness as a primitive---as reflected by the fact that $r_4 \geq r_2$ but (as we will show in \S\ref{subsec-negResRec}) not vice versa.  This is because the 
 proof of the $r_4$ recursor (by 
 \citet{primrecFOAS-Norrish04}) gets away without assuming swapping compositionality $\SwCp$, which is a crucial ingredient for extending swapping to permutation. Moreover, in an indirect way, we also showed the other crucial ingredients needed for this extension, namely $\SwId$ and $\SwIv$, are not required for recursion either. 
 Thus, in this case swapping-based recursion requires significantly weaker 
 assumptions   
 than permutation-based recursion.  
%
\looseness=-1

\vspace*{-0.5ex}\subsection{Semantic-interpretation example}
\label{subsec-semInt}

The notion of interpreting syntax in semantic domains is a well-known challenging example for binding-aware recursion. 
%
%
%
Let $D$ be a set and $\AP : D \ra D \ra D$ 
and $\LM: (D \ra D) \ra D$ be operators modeling semantic notions of application and abstraction. 
(Subject to some axioms that are not of interest here, the structure $(D,\AP,\LM)$ is known as a Henkin model for $\lambda$-calculus \cite{bar-lam}.) 
An environment will be a function $\xi: \Var \ra D$. 
Given $x,y\in\Var$ and $d,e\in D$, we write $\xi\<x:=d\>$ for $\xi$ updated with value $d$ for $x$, 
and write $\xi\<x:=d,y:=e\>$ instead of $\xi\<x:=d\>\<y:=e\>$. 
\looseness=-1

The semantic interpretation 
$\sem : \Trm \ra (\Var \ra D) \ra D$ should go recursively 
by the clauses: 
\looseness=-1
\begin{myitem}
	\item[(1)] $\sem\,(\Vr\;x)\,\xi = \xi\;x$
	\hspace*{12ex}
	(2) 
	$\sem\,(\Ap\;t_1\,t_2)\,\xi = \AP\,(\sem\;t_1\,\xi)\,(\sem\;t_2\,\xi)$
	\item[(3)]  $\sem\,(\Lm\;x\;t)\,\xi = \LM\,(d \mapsto \sem\;t\,(\xi\<x:= d\>))$
\end{myitem}
%

Of course, these clauses do not work out of the box (i.e., do not form a correct recursive definition yet), and here is where the nominal recursors can help.  First, let us attempt to deploy the perm/free recursor $r_1$. To this end, we try to organize the target domain $I = (\Var \ra D) \ra D$ 
as a $(\Sigma_1,\Props_1)$-model $\SS$. 
The three desired clauses above already determine constructor operations $\Vr^\SS$, $\Ap^\SS$ and $\Lm^\SS$ on the set of interpretations, $I = (\Var \ra D) \ra D$, namely:
\begin{myyyitem}
	\item[(1)]  $\Vr^\SS : \Var \ra I$ by $\Vr^\SS x\;i\;\xi = \xi\;x$
	\hspace*{5ex} (2) $\Ap^\SS : I \ra I \ra I$ by $\Ap^\SS i_1\,i_2\;\xi = \AP\,(i_1\,\xi)\,(i_2\,\xi)$
	\item[(3)]   $\Lm^\SS :\Var \ra I \ra I$ by $\Lm^\SS x\;i\;\xi = \LM\,(d \mapsto i\,(\xi\<x := d\>))$  
	\looseness=-1
\end{myyyitem} 
Thus, we already have the $\Sigmac$ component of our intended model. 
Now we must define a permutation operator on $I$. 
The definition is obtained by analyzing the desired behavior of the to-be-defined function $\sem$ w.r.t.\ 
permutation; i.e., determining the value of  $\sem(t[\sigma])$ from $\sem\;t$ and $\sigma$. 
The 
answer is 
(4) $\sem\,(t[\sigma])\,\xi = \sem\;t\;(\xi \circ \sigma)$, 
%
and leads to defining $\_[\_]^\SS$ by $i\,[\sigma]^\SS\,\xi = i\,(\xi \circ \sigma)$. 

Note that, towards the goal of building a $(\Sigma_1,\Props_1)$-model $\SS$, we had no other choice 
on defining the operators $\Vr^\SS,\Ap^\SS,\Lm^\SS$ and $[\_]^\SS$ on the target domain $I$. And the free-variable (support) operator $\FV^\SS$ is also uniquely determined by the axiom \FvDPm{} (definability of freeness from permutation). 
\looseness=-1

Finally, to deploy $r_1$ and obtain a 
function $\sem$ satisfying clauses (1)--(4), it remains to check that $\SS$ satisfies 
$\Props_1$. 
But, as it turns out, $\SS$ does not satisfy one of the axioms in $\Props_1$, namely \FCB{} (freshness condition for binders). Indeed, \FCB{} requires that there exists a variable $x$ such that 
for all $i\in I$, 
$x \notin \FV^\SS (\Lm^\SS\,x\;i)$. 
Applying \FvDPm{} and 
the definitions of $\_[\_]^\SS$ and $\Lm^\SS$, 
we see that $x \notin \FV^\SS (\Lm^\SS\,x\;i)$ means  
%
%
$\LM\;(d \mapsto i\,(\xi \< x:=d,y:=\xi\,x \>) = \LM\;(d \mapsto i\,(\xi\<x:= d\>))$   
holds for all but a finite number of variables $y$.  
The only chance for the above to be true is if $i$, when applied to an environment, say $\xi'$, ignores the value of $y$ in $\xi'$ 
for all but a finite number of variables $y$; 
in other words, $i$ 
only analyzes the values of a finite number of variables in $\xi'$%
---but this is not guaranteed to hold for arbitrary elements $i\in I$.   
%
Thus, 
$r_1$ cannot be deployed directly to define semantic interpretations. 
\looseness=-1

Other recursors in our list 
can. 
E.g., the perm-free variant $r_2$ can be deployed as follows. We use the same definitions 
for $\Vr^\SS,\Ap^\SS,\Lm^\SS$ and $[\_]^\SS\!$, but now we can 
 choose the free-variable operator $\FV^\SS\!$ more flexibly, making sure that the $(\Sigma_2,\Props_2)$-morphism condition holds for $\FV^\SS\!$ versus $\FV$, i.e., that 
	(5) $\FV^\SS\!(\sem\;t) \su \FV\,t$ holds. 
%
Namely, 
we define $\FV^\SS i$ as $\{x\in\Var \mid \exists \xi:\Var \ra D,\,d \in D.\; i\;\xi \not= i\;(\xi\<x :=d\>)\}$. The definition identifies a natural notion of what it means for a variable to ``occur freely'' in a semantic item $i \in I$: when $i$ actually depends on $x$, i.e., when changing the value of $x$ in an input environment $\xi$ makes a difference in the result of applying $i$. 
%
%
%
And indeed, with $\FV^\SS\!$ defined like this, $\SS$ forms a $(\Sigma_2,\Props_2)$-model, which gives us a unique function $\sem$ satisfying (1)--(5). 
\looseness=-1

Thus, semantic interpretation is an 
example where our 
``head-to-head'' 
comparison has a visible outcome. 
%
But there is still 
an unexplored nuance here, which we discuss next.

Above, we argued that the semantic-interpretation example cannot be defined \emph{directly} using the perm/free recursor $r_1$. 
However, as discussed by 
\citet[\S 6.3]{pitts-AlphaStructural}, it turns out that it can be defined in a more roundabout manner, after some technical hassle. 
The trick 
is to restrict the target domain $I$ to a subset $I'$ on which the above defined operators do form an $(\Sigma_1,\Props_1)$-model, and use $r_1$ to define $\sem : \Trm \ra I'$. 
It is interesting to look at Pitts's definition of the subset $I'$, because it will reveal 
a way to relax the expressiveness comparison between epi-recursors.  
$I'$ is defined as 
$\{i \in I \mid \exists V \su \Var.\; V \mbox{ finite and } \forall x\in V.\;\forall \xi,d.\;i\;\xi = i\;(\xi\<x :=d\>)\}$. 
Then one proves that $I'$ is closed under the constructors $\Vr^\SS,\Ap^\SS,\Lm^\SS$. Moreover, for $I'$ the above problem with \FCB{} disappears, roughly because all the elements of $I'$ are finitary. So $I'$, with the same operators as those we tried for $I$, now forms a $(\Sigma_1,\Props_1)$-model, and $r_1$ recursion can proceed and define $\sem : \Trm \ra I'$, hence also $\sem : \Trm \ra I$. 

Having 
different nominal recursors in front of us laid out as epi-recursors, we can view Pitts's 
trick in a new light. Remember that, when deploying $r_2$ to define $\sem$, we used the operator 
$\FV^\SS\!$, which is a laxer notion of free-variable 
than that allowed by  
$r_1$. 
An equivalent definition of $I'$ is as the set of all elements of $I$ that have $\FV^\SS\!$ finite. 
Thus, Pitts's trick can be seen as borrowing the free-variable operator from the different recursor $r_2$, in order to single out a suitable target domain for deploying $r_1$! 
One can also prove 
that, on $I'$, the nominal-logic support (defined from permutation via  \FvDPm{}) \emph{coincides} with $\FV^\SS\!$---which means that, for the target domain $I'$, 
$r_1$ works as well as $r_2$.
\looseness=-1

Thus, on a 
subset of the target domain that is closed under constructors,  the previously deemed weaker recursor $r_1$ can simulate $r_2$. 
As it turns out, this is a general phenomenon, which we can phrase 
for epi-recursors as 
a gentler expressiveness comparison.  
\looseness=-1

\vspace*{-0.5ex}\subsection{A gentler comparison}
\label{subsec-moreGentle}

Our relation $r' \geq r$ compares the strength of epi-recusors directly, 
as 
inclusion between what $r$ can define and what $r'$ can define. 
The discussion ending \S\ref{subsec-semInt} 
suggests that this relation 
may be too strict. 
More flexibly, we could check if what $r$ can define is obtainable from what $r'$ can define \emph{up to composition with a morphism} (which can be 
an inclusion, as in Pitts's trick).  
\looseness=-1

Formalizing this 
for two epi-recursors $r = (\Bcat,T,\Ccat,I,R)$ and $r' = (\Bcat,T,\Ccat',I',R')$ must make sure to avoid pathological dependencies. 
Indeed, a first attempt is: For all objects $B$ in $\Bcat$ and morphisms 
$g: T \ra B$, if $g$ is definable by $r$ then there exists an object $B_0$ 
and two morphisms $g_0: T \ra B_0$ and $h:B_0\ra B$ such that $g_0$ is definable by $r'$ and $g = h \circ g_0$.  
But this would yield a vacuous concept, rendering any epi-recursor $r'$ stronger than any other $r$: just take $B_0=T$, $g_0=1_T$ (which is obviously definable by $r'$) and $h=g$. 
So we should be careful not to allow the above 
``transition'' morphism $h$ 
to depend on the $r$-definability morphism $g$.  
Otherwise, we would use $r$-definability itself to reduce $r$-definability to $r'$-definability. 
%
%
\looseness=-1

For 
producing morphisms to objects $B$ of $\Bcat$ independently of other data, 
the following concept comes handy.  
An \emph{initial segment} of a category $\Ccat$ is a pair $(\Ccat_0,(m(C):o(C)\ra C)_{C \in \Obj{\Ccat}})$ where $\Ccat_0$ is a full subcategory 
of $\Ccat$ and, for each object $C$ of $\Ccat$, $o(C)$ is an object of $\Ccat_0$
 and $m(C)$ a morphism in $\Ccat$. 
Using an ordering metaphor, 
an initial segment of a category provides a ``smaller'' object for any of its objects.   
Now we can formulate our gentler relation for comparing strength, 
called quasi-strength:  
\looseness=-1

\begin{defi} \rm \label{defi-qstrong} 
$r'= (\Bcat,T,\alb \Ccat',I',R')$ is 
\emph{quasi-stronger} than $r = (\Bcat,T,\Ccat,I,R)$, written $r' \wgeq r$,
when there exists an initial segment $(\Bcat_0,(m(B):o(B)\ra B)_{B \in \Obj{\Bcat}})$ of $\Bcat$ such that, 
for all 
$g: T \ra B$ definable by $r$, there exists 
a morphism $g_0: T \ra o(B)$ such that $g_0$ is definable by $r'$ and $g = m(B) \circ g_0$.  
\end{defi} 

Thus, $r' \wgeq r$ says that what $r$ can define is obtainable from what $r'$ can define up to composition with a morphism that only depends on the target object 
 in the base category.  
 Note that we use initial segments to make sure that the morphisms that ``fill the gap'' between the two recursors $r$ and $r'$ are given \emph{before hand}, so that they are independent from any specific recursively defined function (in particular, preventing bogus expressiveness orderings like the one exemplified above). 
 
$\wgeq$ is a preorder 
weaker than $\geq$. We write $r \simeq r'$ to mean 
that 
$r' \wgeq r$ and $r \wgeq r'$, 
i.e., $r$ and $r'$ have quasi-equal strengths. 
\looseness=-1

While being a reasonable weakening of $\geq$, 
the 
relation $\wgeq$ is likely to be more costly to deploy 
than $\geq$. Indeed, 
as suggested by our discussion in \S\ref{subsec-semInt}, 
applying $r' \wgeq  r$, i.e., using $r'$ in lieu of $r$, in particular extracting $o(B)$ from $B$ and using $o(B)$ as a ``more precise'' target domain, 
can 
involve non-negligible formal bureaucracy in concrete situations.
%
\looseness=-1

Our effective criterion for checking $\geq$ (Prop.~\ref{prop-extCriterion}) can be generalized to deal with $\wgeq$. 
Given two categories $\Ccat$ and $\Ccat'$, each with 
initial segments $(\Ccat_0,\alb(m(C):o(C)\ra C)_{C \in \Obj{\Ccat}})$ 
and $(\Ccat_0',\alb(m'(C):o'(C)\ra C)_{C \in \Obj{\Ccat'}})$, a functor $G: \Ccat \ra \Ccat'$ is said to \emph{preserve} the indicated initial segments if $G\,o(C)=o'(G\;C)$ and  $G\,(m(C))=m'(G\;C)$ for all $C \in \Obj{\Ccat}$. 
\looseness=-1

\begin{prop}\rm
	\label{prop-WeakExtCriterion} 
	Let $r = (\Bcat,T,\Ccat,I,R)$ and 
	$r' = (\Bcat,T,\Ccat',I',\alb R')$.  
	Assume 
	$(\Bcat_0,(m(B):o(B)\ra B)_{B \in \Obj{\Bcat}})$ is an initial segment of $\Bcat$
	and $(\Ccat_0,\alb(m_1(C):o_1(C)\ra C)_{C \in \Obj{\Ccat}})$ is an initial segment of $\Ccat$ 
	such that $\Ccat_0$ contains $I$ and 
	$R$ preserves the above initial segments, 
	and $F : \Ccat_0 \ra \Ccat'$ is a 
	pre-functor such that 
	$F\;I = I'$ and $R' \circ F = R_{\restr\Ccat_0}$ 
	 (where $R_{\restr\Ccat_0}$ is the restriction of $R$ to $\Ccat_0$).  
	Then $r' \wgeq r$. 
	\looseness=-1
\end{prop}
	
The gist of this criterion (and also its proof idea) 
is shown in Fig.~\ref{fig-critQuasiStrongerEpirec}: We start with a morphism $g$ 
definable by $r$ and use the two initial segments to factor it 
as 
a morphism $g_0$ definable by $r'$  and a remainder morphism $m(B)$.  
	
\begin{figure*}[!ht]
	\vspace*{-2.8ex}
	$$
	\xymatrix@C=3.0pc@R=1.3pc{		
		\Ccat_0 \su \Ccat \ar@/_1.5pc/[dd]_{R}  \ar[d]^{\exists F} &   I 
		\ar@/_1.5pc/[rrrr]^(.5){!_{I,C}}
		\ar[rr]^{!_{I,o_1(C)}}  
		&& o_1(C) \ar[rr]^{m_1(C)}  && C
		\\
	   \Ccat' \ar[d]^{R'}  
	    & I' = F\,I   
	   \ar[rr]^(.5){F\;!_{I,o_1(C)} \,=\, !_{I',C'}}
	   && C' = F\;o_1(C)   
	   &&  
		\\
		 \Bcat_0 \su  \Bcat &    R\,I \!=\! R'\,I' \!=\! T  \ar@/_1.5pc/[rrrr]^(.5){g \,=\, R\;!_{I,C}}
		 \ar[rr]^(.6){g_0 = R\,!_{I,o_1(C)} = R'\,!_{I',C'}}
		 && o(B) \ar[rr]^{m(B) \,=\, R\;m_1(C)}  && B = R\;C
	}
  	\vspace*{-1.8ex}
	$$
	\caption{Gentler criterion for comparing expressiveness} 
	\label{fig-critQuasiStrongerEpirec}
	\vspace*{-1.7ex}
\end{figure*}

Applying the gentler comparison to our recursors (via 
Prop.~\ref{prop-WeakExtCriterion}) 
yields 
a quite surprising result: 
\looseness=-1  %

\begin{thm} \rm
	\label{thm-qexpr}
	The epi-recursors described in Thm.~\ref{thm-allNominalRecs} (and in Fig.~\ref{fig-iter}) compare as follows by quasi-strength: \ \ 
	$
	r_1 \simeq r_2 \simeq  r_3 \simeq  r_4 \simeq  r_5 \simeq  r_6  
	\wgeq 
	r_8 \simeq r_9  \wgeq r_7  
	$. 
\end{thm}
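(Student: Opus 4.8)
The plan is to combine the head-to-head hierarchy of Thm.~\ref{thm-expr} with the criterion of Prop.~\ref{prop-WeakExtCriterion}. Since $\wgeq$ is weaker than $\geq$---taking all initial segments in Prop.~\ref{prop-WeakExtCriterion} to be trivial recovers Prop.~\ref{prop-extCriterion}---Thm.~\ref{thm-expr} already supplies the ``forward'' relations $r_6 \wgeq r_5 \wgeq r_4 \wgeq r_2 \wgeq r_1 \simeq r_3$, $r_9 \wgeq r_8$ and $r_9 \wgeq r_7$. By transitivity of $\wgeq$ it then suffices to prove three ``collapsing'' relations: $r_1 \wgeq r_6$ (which makes $r_1,\ldots,r_6$ all quasi-equal), $r_8 \wgeq r_9$ (which makes $r_8 \simeq r_9$), and a single bridge $r_6 \wgeq r_8$.

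For all three we use the \emph{same} pair of initial segments. In the base category $\Bcat$ of $\Sigmac$-models, $o(B)$ is the submodel of $B$ generated by the constructors; in an extended category $\Ccat_i$ of $(\Sigma_i,\Props_i)$-models, $o_1(C)$ is the image of the canonical morphism $\TTrm(\Sigma_i) \to C$ provided by the initiality part of Thm.~\ref{thm-allNominalRecs}. This $o_1(C)$ is a submodel of $C$, coincides with the constructor-generated part, has $\Sigmac$-reduct $o(R_i\,C)$ (so $R_i$ preserves the segments), contains $I_i = \TTrm(\Sigma_i)$, and---the key point---is a \emph{quotient} of $\TTrm(\Sigma_i)$. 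Hence $o_1(C)$ satisfies \emph{all} properties of Fig.~\ref{fig-basicProps} over $\Sigma_i$, not merely those in $\Props_i$: for instance swapping is compositional and involutive on $o_1(C)$ even though $\SwCp$ and $\SwIv$ are absent from $\Props_6$, and the renaming laws $\RnLmT,\RnId,\RnIm,\RnCh,\RnCm$ hold on $o_1(C)$ even though they are absent from $\Props_8$ and $\Props_9$. Moreover each element of $o_1(C)$ has finite support, inheriting the finite free-variable set of a term that represents it, so $\FSup$ and $\FCB$ hold on $o_1(C)$ as well.

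The remaining work is to build the pre-functor $F : \Ccat_{i,0} \to \Ccat_j$ that re-equips $o_1(C)$ with the structure demanded by the target recursor $r_j$. For $r_1 \wgeq r_6$ we forget swapping and freshness, reconstruct a $\Perm$-action from swapping (legitimate precisely because on $o_1(C)$ swapping now obeys its full equational theory, exactly as for nominal sets \cite[\S6.1]{pitts_2013}), and define the free-variable operator via $\FvDPm$; the result satisfies $\Props_1$, with $\FCB$ holding because the reconstructed support of any element sits inside the free-variable set of a representing term. For $r_8 \wgeq r_9$ we simply forget freshness, and all of $\Props_8$ then holds on $o_1(C)$ by the quotient observation above. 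For the bridge $r_6 \wgeq r_8$ we forget renaming and reconstruct swapping via the term identity $t[x\sw y] = t[z/x][x/y][y/z]$ for $z$ fresh for $t$ (this descends to $o_1(C)$ once $z$ is chosen fresh for both representatives of an equated pair), and then define freshness via $\FvDSw$; the axioms $\SwVr,\SwAp,\SwLm,\SwCg,\FrVr,\FrAp,\FrLm$ of $\Props_6$ all descend from the term model. In each case $F$ maps $\TTrm(\Sigma_i)$ to $\TTrm(\Sigma_j)$ (the reconstructed operations on the term model are the standard ones), satisfies $R_j \circ F = (R_i)_{\restr\Ccat_{i,0}}$, and is a pre-functor because a $\Sigma_i$-morphism between constructor-generated models factors the canonical quotient maps and therefore automatically commutes with any operation defined by descent along them.

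The step we expect to be the main obstacle is this last one: verifying that each reconstructed auxiliary operation is well-defined, i.e.\ compatible with the kernel congruence of $\TTrm(\Sigma_i) \to C$, and that the re-equipped model satisfies the \emph{entire} property set $\Props_j$---not just the equational clauses, which descend routinely, but also $\FCB$, $\FSup$ and the free-variable-defining clauses $\FvDPm$/$\FvDSw$, whose transfer hinges on finiteness of support on $o_1(C)$. Conceptually this is Pitts's manoeuvre from \S\ref{subsec-semInt}---cutting the target domain down to a finitary subdomain on which the stricter notion of free variable coincides with the laxer one---organised so that it applies uniformly to all nine recursors at once.
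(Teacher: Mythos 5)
Your overall architecture coincides with the paper's: reduce to a handful of $\wgeq$-relations by transitivity and Thm.~\ref{thm-expr}, instantiate Prop.~\ref{prop-WeakExtCriterion} with the constructor-generated submodel $o_1(C)$ (the image of the canonical morphism $\TTrm(\Sigma_i)\ra C$) as the initial segment, and exploit the fact that this submodel is a homomorphic image of the term model to transfer properties. The paper proves $r_3\wgeq r_6$, $r_8\wgeq r_9$ and $r_6\wgeq r_8$; your choice of $r_1\wgeq r_6$ in place of $r_3\wgeq r_6$ is immaterial given $r_1\equiv r_3$. The $r_8\wgeq r_9$ case is essentially as you describe, because $\Props_8$ consists of equations whose only side conditions are on variables.

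The gap is the claim that, being a quotient of $\TTrm(\Sigma_i)$, the submodel $o_1(C)$ ``satisfies \emph{all} properties of Fig.~\ref{fig-basicProps}.'' Surjective images preserve unconditional equations (and equations conditioned only on variables), which is why $\SwId{}$, $\SwIv{}$, $\SwCp{}$ and the renaming laws do descend; but they do \emph{not} preserve Horn clauses whose hypotheses are equations between model elements or freshness assertions about model elements, nor iff-shaped properties such as $\FrSw{}$ or the defining clause $\FvDSw{}$. The canonical morphism preserves freshness without reflecting it and may identify distinct terms, so from a hypothesis like $x\,\fresh^{M'}m$ or $m_1[z\sw x_1]^{M'}=m_2[z\sw x_1]^{M'}$ you cannot recover term representatives satisfying the corresponding term-level hypotheses. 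This is precisely where the paper's proof does its real work, and it is where your argument for $r_1\wgeq r_6$ and $r_6\wgeq r_8$ is incomplete: the freshness predicate on $o_1(C)$ must be \emph{redefined} nominal-style (via $\FrDSw{}$ rather than by restriction or by the inductive clauses), shown to be contained in the ambient freshness so the inclusion remains a morphism (which forces a detour through the inductively defined predicate), and then $\FrLm{}$, $\SwFr{}$ and a \emph{strengthened} form of $\SwCg{}$ must be established in a specific order before $\FCB{}$ and $\FvDSw{}$ follow. Likewise, for $r_6\wgeq r_8$ the representative-independence of swapping reconstructed from renaming has freshness hypotheses and therefore does \emph{not} ``descend from the term model''; the paper must first prove $y\,\fresh^{\MM'}m[u/y]$ and use it to rewrite the statement as an unconditional equation (pre-composing with $[u/^{\MM'}y][u/^{\MM'}y']$) before the term-model connection applies. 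You correctly flag well-definedness as the main obstacle, but the mechanism you offer to discharge it---descent along the quotient---is exactly the step that fails for these conditional properties.
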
 
\emph{Proof idea. } 
%
%
%
When proving each 
$r_i \wgeq r_j$, we instantiate Prop.~\ref{prop-WeakExtCriterion} 
taking $r' = r_i$ and $r = r_j$. So here $\Bcat$ is the category of $\Sigmac$-models, $\Ccat'$ that of $(\Sigma_i,\Props_i)$-models, 
and $\Ccat$ that of $(\Sigma_j,\Props_j)$-models; $R'$ 
is the forgetful functor from $(\Sigma_i,\Props_i)$-models to $\Sigmac$-models, and $R$ the forgetful functor from $(\Sigma_j,\Props_j)$-models to $\Sigmac$-models; 
$B=\TTrm(\Sigmac)$, 
$I'=\TTrm(\Sigma_i)$ and $I=\TTrm(\Sigma_j)$.  

We define the initial segment $(\Bcat_0,(m(\AA):o(\AA)\ra \AA)_{\AA \in \Obj{\Bcat}})$ of $\Bcat$ as follows: For any $\Sigmac$-model $\AA$ 
we take 
$o(\AA)$ to be its minimal submodel (subalgebra), 
i.e., the one generated by 
$\Vr^\AA$, $\Ap^\AA$ and $\Lm^\AA$; we take $m(\AA): o(\AA) \ra \AA$ to be the inclusion morphism; and we take $\Bcat_0$ to be the full subcategory given by constructor-generated models. 
Each time, we will define the initial segment $(\Ccat_0,\alb(m_1(\MM):o_1(\MM)\ra \MM)_{\MM \in \Obj{\Ccat}})$ 
so that, for each $(\Sigma_j,\Props_j)$-model $\MM$, 
$o_1(\MM)$ is a submodel of $\MM$ whose carrier is generated by the constructors ($\Vr^\MM$, $\Ap^\MM$ and $\Lm^\MM$) and will have the other operators from $\Sigma_j$ defined in specific ways; and $\Ccat_0$ will be the full subcategory given by the objects $o_1(\MM)$. 
This way, it will be guaranteed that $R$ preserves initial segments. 
\looseness=-1

To prove the $\simeq$-chain going from $r_1$ to $r_6$, 
thanks to Thm.~\ref{thm-expr} and the fact that $\wgeq$ is weaker than $\geq$, 
it suffices to prove $r_3 \wgeq r_6$.   
%
%
We proceed as follows: Given a $(\Sigma_6,\Props_6)$-model $\MM$ of carrier $M$, we take $o_1(\MM)$ to be 
a submodel $\MM'$ of $\MM$, having as carrier set the subset $M'$ of $M$ generated by the constructors $\Vr^\MM$, $\Ap^\MM$ and $\Lm^\MM$, having the constructors and swapping operators inherited from $\MM$ and having freshness defined from swapping in nominal style (as in \FrDSw{}); crucially, this definition of freshness turns out to be equivalent to an inductive one using \FrVr{}, \FrAp{} and \FrLm{}, making $\MM'$ the minimal $(\Sigma_6,\Props_6)$-submodel of $\MM$. 
Now, the pre-functor $F$ is defined on objects as follows: $F\;\MM'$ 
is the $\Sigma_3$-model having the same constructors and swapping operator as $\MM'$, and having the free-variable operator defined standardly from the freshness operator of $\MM'$, via negation. (And on morphisms, $F$ is the identity.)   $F\;\MM'$ satisfies $\Props_3$: \SwVr{}, \SwAp{}, \SwLm{} and \FvDSw{} hold by construction, and \FCB{}, \SwId{}, \SwIv{} and \SwCp{} follow by induction on the definition of $M'$. The other required properties are trivial, e.g., $F\;I = I'$ here means that the standard definition of free-variables from freshness is correct for terms; and $R' \circ F = R_{\restr\Ccat_0}$ means that $F$ commutes with the forgetful functors. 
%
%
\looseness=-1

To prove the $(\simeq,\wgeq)$-chain going from $r_{6}$ to $r_7$, 
again thanks to Thm.~\ref{thm-expr}   
it suffices to prove 
$r_8 \wgeq r_9$ and 
$r_6 \wgeq r_8$. (We will no longer show explicitly the definitions of the initial segment and the pre-functor, but give the ingredients from which they can be constructed similarly to how we did above.)
For 
	$r_8\wgeq r_9$, we 
start similarly to the proof of $r_3 \wgeq r_6$, namely for a 
$(\Sigma_{9},\Props_{9})$-model $\MM$ 
we take the 
minimal submodel $\MM'$ where freshness definable from renaming (via \FrDRn{}) turns out to coincide 
with the inductively defined version via \FrVr{}, \FrAp{} and \FrLm{}. 
Because the carrier $M'$ of $\MM'$ is the image of the unique 
$\Sigma_{9}$-morphism 
$f: \TTrm(\Sigma_{9}) \ra \MM$ 
ensured by the initiality of  
$\TTrm(\Sigma_{9})$, $\MM'$ satisfies all unconditional equations satisfied 
by $\TTrm(\Sigma_{9})$, in particular, all the $\Props_8$ properties. 
%
%
%
%

Finally, the proof of 
	$r_6 \wgeq r_8$ exploits the observation that renaming is definable from swapping 
not only for terms, but also for any $(\Sigma_8,\Props_8)$-model $\MM$ that guarantees the existence of fresh 
variables, 
i.e., having its elements finitely supported: $m\,[z_1 \wedge^{\MM} z_2]$ is defined as $
m\,[y /^{\MM} z_1]\,\alb[z_1 /^{\MM} z_2]\,[z_2 /^{\MM} y]
$ where $y$ is fresh (and, using the $\Props_8$ axioms, the choice of $y$ can be proved not to matter). 
While arbitrary $(\Sigma_8,\Props_8)$-models $\MM$ do not guarantee finite support, we can again switch to a minimal 
submodel $\MM'$ that does guarantee it---and in $\MM'$ the above definition indeed yields a swapping operator that together with 
the constructors and freshness 
satisfies $\Props_6$.  
\qed
\looseness=-1
\medskip

Thus, $\wgeq$  brings a dramatic flattening of the $\geq$ hierarchy established by Thm.~\ref{thm-expr}: 
All the swapping- and permutation-based recursors $r_1$--$r_6$ have equal quasi-strengths. The intuition for this, as we discovered during the proofs, is the following. Recall that the differences in strength (using $\geq$) between these recursors were due to: (1) looseness or tightness of their connection between swapping/permutation and freeness/freshness, (2) higher flexibility of swapping compared to permutation, and (3) more focused nature of congruence compared to an algebraic axiomatization. Remarkably, all these differences vanish if we are allowed to navigate along submodels, which $\wgeq$ enables. 
This is because (as 
explained in the 
proof of Thm.~\ref{thm-qexpr}), 
certain minimal submodels  
are much more ``term-like'' than an arbitrary model; they generalize Pitts's submodel definition for semantic interpretation, where  nominal-style freshness coincides with other, more loosely axiomatized notions of freshness.   
\looseness=-1

An interesting takeover when switching from $\geq$ to $\wgeq$ 
is the swapping/permutation-based recursors $r_1$--$r_6$ 
becoming \mbox{(quasi-)}stronger 
than the renaming-based recursors $r_8$ and $r_9$. 
Indeed, defining renaming from swapping or vice versa seems impossible in arbitrary models, meaning these two types of recursors are $\geq$-incomparable. But when switching to submodels (allowed by $\wgeq$) one direction is 
possible: The 
swapping of two variables 
can be defined in a renaming-based model similarly to how it is done for concrete terms, via picking an intermediate 
fresh variable; 
and ``picking fresh'' is possible in 
minimal submodels because everything there is finitely supported. 
\looseness=-1

\medskip
\noindent 
\textbf{Summary. }
Epi-recursors are comparable for expressiveness 
by a strict relation $\geq$, saying that everything definable by one is definable by the other, and a laxer relation $\wgeq$, saying that everything definable by one can be defined by the other with the help of an additional morphism, typically a submodel inclusion.  
The handling of the semantic-interpretation example with the nominal-logic recursor 
was our inspiration for $\wgeq$, and 
suggests the additional 
overhead incurred 
by $\wgeq$.  
%
The effective criteria we used to prove these relations for 
concrete recursors (Props.~\ref{prop-extCriterion} and \ref{prop-WeakExtCriterion}), 
can be paraphrased using ``is'' and ``has'': 
\looseness=-1
\begin{mmyitem}
\item $r' \geq r$ holds if any $r$-model $\MM$ 
\emph{is} an $r'$-model---in that it can be regarded (after defining the relevant operations, in a way that ensures functoriality) as an $r'$-model. 
\item $r' \wgeq r$ holds if any  $r$-model $\MM$ \emph{has} an $r'$-submodel---in that there exists a submodel $\MM'$ of $\MM$ that 
still satisfies the properties required by $r$, and 
can be regarded as an $r'$-model.  
\looseness=-1
\end{mmyitem}
%
%
The $\wgeq$-hierarchy 
is significantly flatter than the $\geq$-hierarchy,  
sending an 
egalitarian message: \emph{Most 
	nominal recursors turn out to have the same strength}, with the only nuance that 
those based on 
symmetric operators (swapping and permutation) are more expressive than those based on asymmetric ones (renaming and  substitution). 
\looseness=-1

\subsection{Negative results}
\label{subsec-negResRec}

Thms.~\ref{thm-expr} and \ref{thm-qexpr} establish $\geq$ and $\wgeq$
relationships between recursors, which essentially tell us that a recursor can replace/simulate another recursor (under a tighter or a looser notion of replacement). 
But how about the question of when a recursor \emph{cannot} replace another? 
The discussion in \S\ref{subsec-semInt} suggests that 
$r_1 \geq r_2$ does not hold. 
The next proposition 
states the two negative results we know so far:
\looseness=-1

\begin{prop}\rm \label{prop-negRec}
$r_1 \not\geq r_2$ (i.e., it is not the case that $r_1 \geq r_2$) 
and $r_2 \not\geq r_4$ (i.e., it is not the case that $r_2 \geq r_4$). 
\end{prop}
\vspace{-0.5ex}
\noindent 
\emph{Proof sketch. } 
To prove $r_i \not\geq r_j$, we must provide a $(\Sigma_j,\Props_j)$-model $\MM$ for which the $\Sigmac$-reduct (i.e., the $\Sigmac$-model obtained by forgetting the operators from $\Sigma_j \setminus \Sigmac$) cannot be the $\Sigmac$-reduct of any $(\Sigma_i,\Props_i)$-model. 


 
For $r_1 \not\geq r_2$, 
 we take 
 the $(\Sigma_2,\Props_2)$-model $\MM$ to have as carrier the set $M = \Trm \cup A$, where 
 $A$ consist of all the streams of variables (in $\Var^\Nat$) whose sets of occurring variables are infinite. 
 We let $(t_i)_{i\in \Nat}$ be a family of terms such that all are ground ($\FV\;t_i = \emptyset$)
 and mutually distinct. 
 We define $\MM$'s operators on $M$ by extending the standard term operators from $\Trm$ as follows, for any $\xs \in A$ (where $\map_\sigma$ is the standard stream-map operator and $\rem_y\,\xs$ removes all occurrences of $y$ from $\xs$): 
 \looseness=-1
 \\
 \hspace*{-0.5ex}
 \begin{tabular}{cc}
 	\begin{tabular}{l} 
 		$\bullet$ $\FV^\MM \xs = \Vars\;\xs$
 		\\
 		$\bullet$ $\Lm^\MM\,y\;\xs = \rem_y\,\xs$ for any $y\in\Var$
 		\\
 		$\bullet$ $\Ap^\MM\;\xs\;t_i = \Vr\;\xs_i$ for any $i\in \Nat$  
 	\end{tabular} 
 	&
 	\hspace*{-0.5ex}
 	\begin{tabular}{l} 
 		$\bullet$ $\Ap^\MM\;\xs\;m = t_0$ for any $m \in M \sm \{t_i \mid i \in \Nat\}$
 		\\
 		$\bullet$ $\Ap^\MM\;s\;\xs = t_0$ for any $s \in \Trm$ 
 		\\
 		$\bullet$ $\xs[\sigma]^\MM = \map_\sigma\,\xs$ for any $\sigma\in\Perm$
 	\end{tabular}
 \end{tabular} 

Note that, on $A$, the free-variable-like and abstraction-like operators are 
 natural, in particular $\Lm^\MM$ removes all occurrences of the abstracted variable. On the other hand, the application-like operator is contrived: the only interesting case is  $\Ap^\MM\;\xs\;t_i$, where application emulates the $i$'th projection, retrieving the $i$'th element of the stream $\xs$; in the other cases application simply returns the ground term $t_0$.  We can check that $\MM$ thus defined satisfies the $\Props_2$ properties.
One the other hand, the $\Sigmac$-reduct of $\MM$, i.e., $\Trm \cup A$ equipped with the above-defined constructor-like operators, cannot be the reduct of any $(\Sigma_1,\Props_1)$-model, i.e., there is no way to define the operators $\_[\_]'$ and $\FV'$ on $\Trm \cup A$ that, together with $\Vr^\MM$, $\Ap^\MM$ and $\Ap^\MM$, make it a $(\Sigma_1,\Props_1)$-model.  Indeed, if such operators 
$\_[\_]'$ and $\FV'$ existed, then the $\Props_1$ axioms would imply that 
$\_[\_]'$ extends the standard permutation operators from $\Trm$ and $A$, 
and then that $\FV'\,\xs = \Var$ for all $\xs$, which contradicts \FCB{}. 
\looseness=-1

For $r_2 \not\geq r_4$,  
we take the $(\Sigma_4,\Props_4)$-model $\MM$ to have as carrier the set $M = \Trm \cup \{a\}$ (where $a\not\in \Trm$), i.e., to consist of terms plus an additional element $a$. Let $x$ be a fixed variable. We define $\MM$'s operators on $M$ by extending the standard term operators from $\Trm$ as follows: 
\\
\hspace*{-2.5ex}
\begin{tabular}{cc}
\begin{tabular}{l} 
$\bullet$ $\FV^\MM a = \Var$ (the set of all variables)
\\
$\bullet$ $\Lm^\MM\,y\;a = \Lm^\MM\,y\;(\Vr\;x)$ for any $y\in\Var$
\\
$\bullet$ $\Ap^\MM\;a\;a = \Ap^\MM\;(\Vr\;x)\;(\Vr\;x)$ 
\end{tabular} 
&
\hspace*{-3.5ex}
\begin{tabular}{l}
$\bullet$ $\Ap^\MM\;a\;t = \Ap^\MM\;(\Vr\;x)\;t$ for any $t\in\Trm$
\\
$\bullet$ $\Ap^\MM\;t\;a = \Ap^\MM\;t\;(\Vr\;x)$ for any $t\in\Trm$
\\
$\bullet$ $a[z_1 \hspace*{-0.2ex}\wedge \hspace*{-0.2ex} z_2]^\MM = \Vr\,(x[z_1 \hspace*{-0.2ex}\wedge
\hspace*{-0.2ex}
 z_2])$ for any $z_1,z_2\in\Var$
\end{tabular}
\end{tabular} 

Thus, the free variables of $a$ are the entire set of variables, and the constructor and swapping operators on $a$ yield the same results as for $\Vr\;x$, i.e., have $\Vr\;x$ act in lieu of $a$.
We can check that $\MM$ satisfies $\Props_4$. On the other hand, the $\Sigmac$-reduct of $\MM$, i.e., $\Trm \cup \{a\}$ equipped with the above-defined constructor-like operators, cannot be the reduct of any $(\Sigma_2,\Props_2)$-model, i.e., there is no way to define the operators $\_[\_]'$ and $\FV'$ on $\Trm \cup \{a\}$ that, together with $\Vr^\MM$, $\Ap^\MM$ and $\Ap^\MM$, make it a $(\Sigma_2,\Props_2)$-model.  
Indeed, if such operators $\_[\_]'$ and $\FV'$ existed, 
then the axioms in $\Props_2$ would imply that 
$\_[\_]'$ extends the standard permutation operator on $\Trm$, and 
also that $\_[\sigma]'$ is bijective on $\Trm \cup \{a\}$ for any 
permutation $\sigma$; so the only possibility is that $a[\sigma]' = a$ for any $\sigma$; this together with \PmAp{} would imply that $(\Ap^\MM\,a\;a)[\sigma]' = \Ap^\MM\,(a[\sigma]')\;(a[\sigma]') = \Ap^\MM\,a\;a$, i.e., $(\Ap\;(\Vr\;x)\;(\Vr\;x))[\sigma] = \Ap\;(\Vr\;x)\;(\Vr\;x)$, 
which is false for any $\sigma$ that modifies $x$. 
\qed 
\looseness=-1

\medskip 
Note that, if we write $>$ for the strict version of $\geq$ (defined as 
$r > r'$ iff $r \geq r'$ and $r' \not\geq r$), then assuming $r \geq r'$, a negative 
result $r' \not\geq r$ is a strictness result $r > r'$.  So from Thm.~\ref{thm-expr} , Prop.~\ref{prop-negRec} and Thm.~\ref{thm-qexpr} 
we have $r_2 > r_1$ but $r_2 \simeq r_1$, and also $r_4 > r_2$ 
but $r_4 \simeq r_2$. 
We do not yet have negative/strictness results across the board, in particular, 
none for $\wgeq$.  
\looseness=-1

\section{The coinductive spectrum} 
\label{sec-co}

Next we will shift focus from the standard terms with bindings discussed so far, which were defined inductively, 
to (possibly) infinitary non-well-founded terms with bindings, defined \emph{coinductively}, where the constructors 
can be applied an infinite number of times. Unlike with the inhabitants of standard coinductive datatypes, we will still identify terms modulo  $\alpha$-equivalence. Rather than recursion, we will now study \emph{corecursion}, that is, mechanisms for defining functions having terms not as source domain, but as target domain (codomain). 
%
Building on the experience of having handled the recursors, 
we will now take a more direct route, and at a faster pace:
After recalling infinitary terms (\S\ref{subsec-infTerms}), 
we 
introduce abstract epi-corecursors (\S\ref{subsec-epicoRec}), 
then delve into the spectrum of nominal corecursor instances, connect with pre-existing nominal corecursors, and establish a hierarchy  
(\S\ref{subsec-hiarNomCorec}).
\looseness=-1
%


\vspace*{-0.5ex}\subsection{Infinitary terms with bindings}
\label{subsec-infTerms}

\newcommand\mydots{\makebox[0.8em][c]{.\hfil.\hfil.}}

Let $\Var$ be a set of variables whose cardinality is $\aleph_1$, the first uncountable  cardinal. 
(Any uncountable regular cardinal would 
do---we only care about 
the existence of fresh variables for any term.)  
%
The set $\ITrm$ of \emph{infinitary $\lambda$-terms}, \emph{iterms} for short,  
is defined by the same grammar as before, 
%
$
t ::= \Vr\;x  \mid \Ap\;t_1\;t_2  \mid \Lm\;x\;t
$, 
but interpreted \emph{coinductively}, i.e., allowing 
an infinite number of constructors. 
For example,  
$\;\ldots\,(\Ap\,(\Lm\,x_n\,(\ldots (\Ap\, (\Lm\,x_1\,(\Vr\,x_0))\,(\Vr\,x_1))\ldots))\,(\Vr\,x_n))\,\ldots\;$
 is an iterm, infinitely alternating abstractions and applications. 
%
%
Similarly to terms, iterms are equated 
modulo  $\alpha$. 
%
\looseness=-1

In more detail, the above definition means: One first defines  the set $\PITrm$ of 
\emph{pre-iterms} to be (co)freely generated by the 
grammar \ $p ::= \PVr\;x  \mid \PAp\;p_1\;p_2  \mid \PLm\;x\;p$ \ 
under the coinductive interpretation, i.e., under the assumption that constructors can be applied infinitely. 
Thus, 
$\PITrm$ is a standard coinductive datatype, given by the final coalgebra of the functor on sets taking, on objects, any set $A$ to 
$\Var + A \times A + \Var \times A$ (and operating on morphisms as expected; 
App.~\ref{appsub-preiterm} 
gives full details). 
Then one defines the $\alpha$-equivalence relation 
$\equiv\; : 
\PITrm \ra \PITrm \ra 
\Bool$  coinductively, proves that it 
is an equivalence, and defines $\ITrm$ by quotienting $\PITrm$ to 
it, i.e., takes $\ITrm = \PITrm/\equiv$. 
Finally, one proves that the pre-iterm constructors are compatible with $\equiv$, 
which allows to define the constructors on iterms:  
$\Vr : \Var \ra \ITrm$, $\Ap : \ITrm \ra \ITrm \ra \ITrm$ 
and $\Lm : \Var \ra \ITrm \ra \ITrm$.   
%
We will 
focus on iterms, forgetting about pre-iterms.

The iterms have been 
studied 
in the context of $\lambda$-calculus denotational semantics, 
e.g., 
the B\"{o}hm, L\'{e}vy-Longo and Berarducci trees  
of a $\lambda$-term \cite{bar-lam}. A bottom element $\bot$ is often included in the iterm grammar, but we omit it here since it would be entirely passive in our results.  
\looseness=-1

We also consider the usual operators (just like in the inductive case), namely 
(capture-avoiding) substitution  
	$\_[\_\,/\_] : \ITrm \ra \ITrm \ra \Var \ra \ITrm$, 
	(capture-avoiding) renaming  
	$\_[\_\,/\_] : \ITrm \ra \Var \ra \Var \ra \ITrm$,  
 swapping  
	$\_[\_\sw\_] : \ITrm \ra \Var \ra \Var \ra \ITrm$, 
permutation $\_[\_] : \ITrm \ra \Perm \ra \ITrm$, 
free-variables 
	$\FV : \ITrm \ra \Pow(\Var)$, 
and freshness $\_\fresh\_ : \Var \ra \ITrm \ra \Bool$. 

Finally, for any set $A$, let $\PPne(A)$ denote the set of nonempty subsets of $A$.
We consider the iterm \emph{destructor}, 
$\Dest : \ITrm \ra \Var + 
\ITrm \times \ITrm + \PPne(\Var\times \ITrm)$ , defined as follows, where we write $\Vv$, $\Aa$ and $\Ll$ 
for the three injections into the sum type  $\mathsf{S} = \Var + 
\ITrm \times \ITrm + \PPne(\Var\times \ITrm)$ (so that
$\Vv : \Var \ra \mathsf{S}$, 
$\Aa : \ITrm \times \ITrm \ra \mathsf{S}$ and 
$\Ll : \PPne(\Var\times \ITrm) \ra \mathsf{S}$): 
\vspace*{-0.8ex}
$$
\Dest\;t  =\left\{
\begin{array}{ll}
	\Vv\;x, & \mbox{if $t = \Vr\;x$} \\
	\Aa\,(t_1,t_2), & \mbox{if $t = \Ap\;t_1\;t_2$} \\
	\Ll\,\{(x,t') \mid \mbox{$t = \Lm\;x\;t'$}\}, & \mbox{otherwise (i.e., if $t$ is a $\Lm$-abstraction)} \\
	\end{array}
\right.
\vspace*{-0.5ex}
$$
%
$\Dest$ is the dual of the constructors, peeling off the last constructor from an iterm and returning  
its arguments.\footnote{See page \pageref{alterDest} for a discussion of alternative types for the destructor and destructor-like operators.} 
It is similar to the destructors for standard datatypes, except that on $\Lm$-abstractions it is nondeterministic. 
This is because the $\Lm$ constructor is not injective and therefore an iterm $t$ could have been built in 
different ways using $\Lm$. $\Dest$ considers all these ways, i.e., returns the set of all pairs $(x,t')$ such that $t$ has the form $\Lm\;x\;t'$.  
We thus have: $t = \Lm\;x\;t' \!\iff\! \exists K.\; \Dest\;t = \Ll\;K \mbox{ and } (x,t') \in K$. 
%
%
For iterms (and for terms too, where the destructor is defined in the same way), destructor and constructors are two faces of the same coin. 
But 
since the models for corecursion will have to emulate the destructor, 
we will 
look at 
destructor-based (re)formulations of iterm properties. 
\looseness=-1

Of the basic properties of terms listed in Fig.~\ref{fig-basicProps}, all except for the last group (the 
nominal-logic specific properties) also hold for iterms, so we will consider some of them in the context of iterms as well. 
The properties in this last group 
are tied to the finiteness of a term's free variables; 
for them 
to become true for iterms, 
we must replace ``(in)finite'' with ``(un)countable''. 
%
%

Moreover, 
Fig.~\ref{fig-basicCoProps} collects destructor-based iterm 
counterparts of some term properties from Fig.~\ref{fig-basicProps}.
%
Often, these are just (equivalent) destructor-based reformulations of the constructor-based properties. For example, this is the case of \ISwVr{}, \ISwAp{},  \ISwLm{} versus \SwVr{}, \SwAp{}, \SwLm{}.  
\looseness=-1

\begin{figure*}
\centering 
{\small
\begin{tabular}{c} 
	\vspace*{-1.5ex}
\begin{tabular}{|c|l|}	
	\hline
\textrm{\ISwVr} 
&
if $\Dest\;t = \Vv\;x$ then 
\\&
$\Dest(t[z_1\sw z_2]) = \Vv (x[z_1 \sw z_2])$
\\\hline	
\textrm{\ISwAp} 
&
if $\Dest\;t = \Aa(t_1,t_2)$ then 
\\&
$\Dest(t[z_1 \sw z_2]) =\,$ 
\\&
$\Aa(t_1[z_1 \sw z_2],t_2[z_1 \sw z_2])$
\\\hline 	
\textrm{\ISwLm}
&
if $\Dest\;t = \Ll\;K$ then there 
\\& exists $K'$ 
such that 
\\&
$\Dest\;(t[z_1\sw z_2]) = \Ll\;K'$ and 
\\&
($(x[z_1\sw z_2],t'[z_1\sw z_2]) \in K'$ 
\\&
 \ $\,$for all 
$(x,t') \in K$)
\\\hline 
\textrm{\ISwCg} 
&
if $\Dest\;t = \Ll\;K$ and 
\\&
$\{(x_1,t_1),(x_2,t_2)\} \su K$ 
\\
&
then there exists $z$ such that 
\\&
($z = x_1$ or $z\;\fresh\;t_1$), 
($z = x_2$ or $z\;\fresh\;t_2$), 
\\&
and 
$t_1[z\sw x_1] = t_2[z\sw x_2]$ 
\\\hline
\textrm{\ISwBvr} 
&
if $\Dest\;s = \Ll\;K$ and 
\\& $\{(x,t),(x',t')\} \su K$
then 
\\&($x' = x$ or $x'\;\fresh\;t$) and $t' = t[x' \sw x]$
\\\hline
\!\textrm{\ISwBvrT} 
&
same as \ISwBvr{} but with 
\\&
$x'\notin \FV\,t$ instead of $x'\;\fresh\;t$
\\\hline
\end{tabular} 	
\\ \\  \vspace*{-1.5ex}
\begin{tabular}{|c|l|}	
	\hline
	\textrm{\IRnVr}
	& 
	if $\Dest\;t = \Vv\;x$ then 
	\\&
	$\Dest(t[y / z]) = \Vv(x[y/z])$  
	\\\hline 
	\textrm{\IRnAp}  & 
	if $\Dest\;t = \Aa(t_1,t_2)$ then 
	\\&
	$\Dest(t[y / z]) = \Aa(t_1[y / z],t_2[y / z])$
	\\\hline 
	\textrm{\IRnLmO}  
	&
	if $\Dest\;t = \Ll\;K$  
	then there 
		\\&
		exists $K'$ 
	s.t. $\Dest\;(t[y/z]) = \Ll\;K'$ 
	\\&
	and ($(x,t'[y/z]) \in K'$ 
	for all 
		\\&
	 \ \ \ \hspace*{2ex} $\,(x,t') \in K$ 
	s.t. $x \not\in \{y,z\}$) 
	\\\hline 
	\textrm{\IRnLmT}  & 
	if $\Dest\;t = \Ll\;K$ and $(x,t') \in K$
	\\& then 
	$t[x / z] = t$
		\\\hline 
	\textrm{\IRnCg} 
	&
	if $\Dest\;s = \Ll\;K$ and 
	\\&
	$\{(x_1,t_1),(x_2,t_2)\} \su K$ 
	\\
	&
	then there exists $z$ such that 
	\\&
	($z = x_1$ or $z\;\fresh\;t_1$), 
	($z = x_2$ or $z\;\fresh\;t_2$), 
	\\& and 
	$t_1[z /  x_1] = t_2[z / x_2]$ 
	\\\hline 
	\textrm{\IRnBvr}  &	
	if $\Dest\;s = \Ll\;K$ and 
	\\&
	$\{(x,t),(x',t')\} \su K$ then 
	\\
	&
	($x' = x$ or $x'\;\fresh\;t$) and $t' = t[x' / x]$
	\\\hline 
   \textrm{\IIRnBvr}  &	
	if $\Dest\;s = \Ll\;K$,  $(x,t) \in K$ and 
	\\
	&
	$x'\;\fresh\;t$  then $(x',t[x' / x]) \in K$
	\\\hline 
\end{tabular} 
\\ \\ \vspace*{-1.5ex}
	\begin{tabular}{|c|l|}	
	\hline 
	\textrm{\IFrVr} & 
	if $\Dest\;t = \Vv\;x$ and $z\;\fresh\;t$ then $z\not=x$
	\\\hline
	\textrm{\IFrAp} &
	if $\Dest\;t = \Aa(t_1,t_2)$ and $z\;\fresh\;t$
	\\& then $z\;\fresh\;t_1$ and $z\;\fresh\;t_2$
	\\\hline
	\textrm{\IFrLm}  &
	if $\Dest\;t = \Ll\;K$, $(x,t')\in K$ and $z\;\fresh\;t$
	\\& then $z = x$ or $z\;\fresh\;t'$
	\\\hline
\end{tabular}
\end{tabular} 
\hspace*{-0.9ex}
\begin{tabular}{c}
	\vspace*{-1.5ex}
	\begin{tabular}{|c|l|}	
		\hline 
		\textrm{\IFvVr} & 
		if $\Dest\;t = \Vv\;x$ then $x \in \FV(t)$
		\\\hline
		\textrm{\IFvAp} &
		if $\Dest\;t = \Aa(t_1,t_2)$ then
		\\&  
		$\FV\,t_1 \cup \FV\,t_2 \su \FV\;t$  
		\\\hline
		\textrm{\IFvLm}  &
		if $\Dest\;t = \Ll\;K$ and $(x,t') \in K$ then
		\\&  
		$\FV\,t' \sm \{x\} \su \FV\,t$ 
		\\\hline
	\end{tabular}
\\ \\ \vspace*{-1.5ex}
	\begin{tabular}{|c|l|}	
		\hline
		\textrm{\IPmVr} 
		&
		if $\Dest\;t = \Vv\;x$ then 
		\\&
		$\Dest(t[\sigma]) = \Vv (x[\sigma])$
		\\\hline
		\textrm{\IPmAp} 
		&
		if $\Dest\;t = \Aa(t_1,t_2)$ then 
		\\&
		$\Dest(t[\sigma]) = \Aa(t_1[\sigma],t_2[\sigma])$
		\\\hline
		\textrm{\IPmLm} 
		&
		if $\Dest\;t = \Ll\;K$ then there 
		\\& exists $K'$ 
		such that 
				\\&
				$\Dest\;(t[\sigma]) = \Ll\;K'$ and 
		\\&
		($(x[\sigma],t'[\sigma]) \in K'$ 
		for all 
		$(x,t') \in K$)
		\\\hline
		\textrm{\IPmBvr} 
		&
		if $\Dest\;s = \Ll\;K$ and 
		\\& $\{(x,t),(x',t')\} \su K$
		then 
		\\&($x' = x$ or $x'\notin\FV\;t$) 
		\\& and $t' = t[x'  \llra x]$
				\\\hline
	 	\textrm{\IIPmBvr}  &	
		if $\Dest\;s = \Ll\;K$,  
		$(x,t) \in K$ and 
		\\
		&
		$x'\;\fresh\;t$ then $(x',t[x \llra x]) \in K$
		\\\hline 
	\end{tabular}
\\ \\ \vspace*{-1.5ex}
\begin{tabular}{|c|l|}	
	\hline
	\textrm{\ISbVr}
	& 
	if $\Dest\;t = \Vv\;x$ then 
	\\&
	$\Dest(t[s / z])$ $=$ 
	\\& (if $x = z$ then $\Dest\;s$ else $\Vv\;x$)	
	\\\hline 
	\textrm{\ISbAp}  & 
	if $\Dest\;t = \Aa(t_1,t_2)$ then 
	\\&
	$\Dest(t[s / z]) = \Aa(t_1[s / z],t_2[s / z])$ 
	\\\hline 
	\textrm{\ISbLm}   
	&
	if $\Dest\;t = \Ll\;K$  
	then there \\& 
	exists $K'$ 	
	such that 
	\\&
	$\Dest\;(t[s/z]) = \Ll\;K'$ and 
	\\&
	($(x,t'[s/z]) \in K'$ 
	 for all 
	\\& \ $\,(x,t') \in K$ 
	such that $x \not=z$ and $x \;\fresh\;s$) 
	\\\hline 
	\textrm{\ISbBvr}  &	
	if $\Dest\;s = \Ll\;K$ and 
	\\&
	$\{(x,t),(x',t')\} \su K$ then 
	\\
	&
	($x' = x$ or $x'\;\fresh\;t$) 
	and $t' = t[(\Vr\;x') / x]$
	\\\hline 
	\textrm{\IISbBvr}  &	
	if $\Dest\;s = \Ll\;K$,  
	$(x,t) \in K$ and 
	\\
	&
	$x'\;\fresh\;t$ then $(x',t[(\Vr\;x') / x]) \in K$
	\\\hline 
\end{tabular} 
\\ \\ \vspace*{-1.5ex}
\begin{tabular}{|c|l|}	
	\hline 	
	\textrm{\IFSupFv}  	& 
	$\FV\;t$ is countable
	\\\hline 
	\textrm{\IFvDPm}  &
	$\FV\,t =  \{x\in \Var \mid \{y \mid t[x \llra  y] \not= t\}$ 
	\\
	& 
	$\hspace*{17.2ex}\mbox{ is uncountable} \}
	$
	\\\hline 
    \textrm{\IFvDSw}  &
    $\FV\,t = \{x\in \Var \mid \{y \mid t[x \sw y] \not= t\}$ 
    \\
    & 
    $\hspace*{17.2ex}\mbox{ is uncountable} \}
    $
    \\\hline 
	\textrm{\IFSupFr}  	&   
	$\{x.\;\neg\;x \;\fresh\;t\}$ is countable
	\\\hline 
	\textrm{\IFrDSw}  &
	$x\;\fresh\;t $ if and only if 
	\\& $\{y \mid t[x \sw y] \not= t\}$ is countable 
	\\\hline 
	\textrm{\IFrDRn}  &
	$x\;\fresh\;t $ if and only if 
	\\& $\{y \mid t[y / x] \not= t\}$ is countable 
	\\\hline
\end{tabular}
\end{tabular} 
} 
\vspace*{-0.3ex}
\caption{Corecursion-relevant properties of 
	iterms. 
We only list properties that are 
counterparts of those from Fig.~\ref{fig-basicProps} involving constructors 
and finiteness conditions. As for the others, 
namely the algebraic properties of the operators,  
their formulation does not change, so we will use the same notation.
%
For example, \SwId{} %
from Fig.~\ref{fig-basicProps} denotes a property that makes sense not only for terms but also for iterms (and will, in due course, for our corecursor models too). 
\looseness=-1
%
%
}
\label{fig-basicCoProps}
\vspace*{-4ex}
\end{figure*} 

However, 
sometimes we reformulate not the original property from Fig.~\ref{fig-basicProps}, but a converse (or ``almost converse'') of it. 
For example, the converse of 
\SwCg{} from Fig.~\ref{fig-basicProps} is:
%
$\Lm\;x_1\;t_1 = \Lm\;x_2\;t_2$ implies that there exists $z$ such that 
$z \not\in \{x_1,x_2\}$, 
$z\;\fresh\;t_1,t_2$, and 
$t_1[z\sw x_1] = t_2[z\sw x_2]$. 
%
This converse does hold for terms, and for iterms as well. However, we prefer to consider a weaker version of it: 
%
	$\Lm\;x_1\;t_1 = \Lm\;x_2\;t_2$ implies that there exists $z$ such that 
	($z = x_1$ or $z\;\fresh\;t_1$), ($z = x_2$ or $z\;\fresh\;t_2$), 
    and 
	$t_1[z\sw x_1] = t_2[z\sw x_2]$. 
The latter, reformulated using destructor notation, is   exactly \ISwCg{} from Fig.~\ref{fig-basicCoProps}.  
The reason why we prefer a weaker version (here due to a weaker conclusion) 
is the same as why 
we preferred a weaker version of \SwCg{} in the inductive case (there, due to a stronger hypothesis): because, to make the (co)recursors  
as expressive as possible, 
we want the models 
to have axioms as weak as possible. 
Sometimes we include in Fig.~\ref{fig-basicCoProps} two different destructor-based counterparts of a 
constructor-based property, e.g., \IRnBvr{} and \IIRnBvr{} for \RnBvr{}. 
\looseness=-1


Save for the 
finite vs.\ countable nuance in the  last group, all 
properties in Figs.~\ref{fig-basicProps} and \ref{fig-basicCoProps} 
hold 
for both terms and iterms. Their selection 
becomes 
relevant when 
regarding them as properties 
of 
models.  
%
The duality between the Fig.~\ref{fig-basicProps} and Fig.~\ref{fig-basicCoProps} properties, which informs the naming of the latter, is neither perfect nor fully systematic. But this naming 
will allow us to draw 
parallels. 
\looseness=-1


\vspace*{-0.5ex}\subsection{Epi-corecursors}
\label{subsec-epicoRec}

We introduce abstract epi-corecursors as a natural dual of epi-recursors. The idea is the same: 
A definition of a morphism in a base category is underpinned by adding more structure coming from an extended category. The difference is that the base object 
is now not the source, but the target of the to-be-defined morphism, and the underpinning 
occurs not via initiality but via finality.

\begin{defi} \rm \label{defi-epirec}
	An \emph{epi-corecursor} is a tuple $\ccr = (\Bcat,T,\Ccat,J,R)$ where:
	\begin{mmyitem} 
		\item $\Bcat$ is a category called \emph{the base category}
			\hspace*{9ex} 
		$\bullet$ $T$ is an object in $\Bcat$ called \emph{the base object}
		\item $\Ccat$ is a category called \emph{the extended category}
			\hspace*{5.12ex} 
		$\bullet$ $J$ is a final object in $\Ccat$
		\item $R : \Ccat \ra \Bcat$  is a functor such that $R\;J = T$   
	\end{mmyitem}
\end{defi}
%

Just like for epi-recursors, 
in typical epi-corecursor examples $\Ccat$ and $\Bcat$ will be categories of models, 
with the  models in $\Ccat$ having more structure than those in $\Bcat$, and 
$R$ will be a structure-forgetting functor.   
%
To define a morphism $g: B \ra T$ in $\Bcat$ (where $B$ is some object in $\Bcat$) using an  epi-corecursor $\ccr = (\Bcat,T,\Ccat,J,R)$, we (1) extend $B$ to an object $C$ in $\Ccat$ (with $R\;C = B$) yielding a morphism $\im_{C,\,J} : C \ra J$ in $\Ccat$ from the finality of $J$, 
then (2) take $g$ to be $R\;\im_{C,\,J}$, the restriction of $\im_{C,\,J}$ to $\Bcat$. 
%
Thus, we call a morphism $g: B \ra T$ \emph{definable by the 
		epi-corecursor $\ccr$} if 
	$g = R \,\im_{C,\,J}$ for some extension $C$ of $B$.    
\looseness=-1

\leftOut{
\begin{figure}
	$$
	\xymatrix@C=4pc@R=1pc{
		\Ccat \ar[d]_R &  C  \ar[r]^{\im_{C,\,J}} &  J
		\\
		\Bcat &  B = R\,C   \ar[r]^{R\,\im_{C,\,J}}  & T = R\,J
	}
	$$
	   \vspace*{-2ex}
	\caption{Epi-corecursor
	}
	\label{fig-epiCRP}
	\vspace*{-2ex}
\end{figure}
}

\vspace*{-0.5ex}\subsection{A hierarchy of nominal corecursors}
\label{subsec-hiarNomCorec}

To discuss concrete nominal corecursors, we slightly adapt the notions of signature and model 
used for nominal recursors from \S\ref{subsec-sigMod}. Namely, we use the same notions 
except that we replace the constructor symbols $\vr,\ap,\lm$ and their interpretations with 
a destructor symbol $\dest$, interpreted accordingly. 
All signatures $\Sigma$ now extend not the constructor signature $\Sigmac = \{\vr,\ap,\lm\}$, but the destructor signature 
$\Sigmad = \{\dest\}$. A $\Sigma$-model $\MM$ has a carrier set $M$,  interprets 
the signature's non-destructor symbols 
as described in \S\ref{subsec-sigMod}, and interprets $\dest$ as an operation 
$\Dest^\MM : M \ra \Var + M \times M + \PPne(\Var \times M)$. 
%
The iterm $\Sigma$-model $\ITTrm(\Sigma)$ is the $\Sigma$-model whose carrier set is  
$\ITrm$ 
and whose operations and relations are the standard ones for iterms (discussed in \S\ref{subsec-infTerms}). 

The notion of morphism of $\Sigma$-models $g : \MM \ra \MM'$ is defined like in \S\ref{subsec-sigMod}, but 
replacing commutation with the constructors by sub-commutation with the destructor:
$(1_\Var + g \times g + \img(1_\Var \times g))\,(\Dest^\MM m) \sqsubseteq \Dest^{\MM'}(g\;m) $ for all $m$ in the carrier set $M$. 
The above relation $\sqsubseteq$ on $\Var + M' \times M' + \PPne(\Var \times M')$ 
is defined by taking $u \sqsubseteq v$ to mean that: 
either $u=\Vv\;x = v$ for some $x$; 
or $u=\Aa(m_1',m_2') = v$ for some $m_1',m_2'$;
or $u=\Ll\;K$, $v=\Ll\;K'$ and $K \su K'$ for some $K,K'$. 
Thus, the \emph{sub}-commutation shows 
in the abstraction case (which is nondeterministic), where we allow inclusion instead of equality. 
To see why sub-commutation is the natural condition here,  
note that 
for a morphism that targets iterms, 
$g : \MM \ra \ITTrm(\Sigma)$, it is equivalent to the conjunction of the following three conditions:  
(1) $\Dest^{\MM}\,m = \Vv\;x$ implies $g\;m = \Vr\;x$; 
(2) $\Dest^{\MM}\,m = \Aa (m_1,m_2)$ implies $g\;m = \Ap\;(g\;m_1)\;(g\;m_2)$; 
(3) $\Dest^{\MM}\,m = \Ll \;K$ and $(x,m')\in K$ implies $g\;m = \Lm\;x\;(g\;m')$. 

Our nominal (epi-)corecursors will underpin corecursive definitions having $\ITTrm(\Sigmad)$ as target 
model by considering extensions of $\Sigmad$ to larger signatures $\Sigma$, along with certain axiomatizations of $\Sigma$-models given by subsets of the properties in Fig.~\ref{fig-basicCoProps} (interpreted not on iterms, but on $\Sigma$-models).

\newcommand\perFreeCoRec
{
	\begin{tabular}{|c|}
		\hline
		$\ccr_1$ \ (perm/free)
		\\\hline
		\IPmVr{}, \IPmAp{}, \IPmLm{}, 
		\\
		\IPmId{}, \IPmCp{},  
		\\
		\IFvDPm{}, \IPmBvr{}
		\\\hline
	\end{tabular}	
}

\newcommand\perFreeBVrCoRec
{
	\begin{tabular}{|c|}
		\hline
		$\ccr_{2}$ \ (perm/free variant)
		\\\hline
		\IPmVr{}, \IPmAp{}, \IPmLm{}, 
		\\
		\IPmId{}, \IPmCp{},  
		\\
		\IPmFv{},  
		\IPmBvr{}, 
		\\
		\IFvVr{}, \IFvAp{}, \IFvLm{}  
		\\\hline
	\end{tabular}	
}

\newcommand\swapFreeCoRecV
{
	\begin{tabular}{|c|}
		\hline
		$\ccr_3$ \ (swap/free variant)
		\\\hline
		\ISwVr{}, \ISwAp{}, \ISwLm{}, 
		\\
		\ISwId{}, \ISwIv{},  \ISwCp{},  
		\\
		\IFvDSw{}, \ISwBvrT{}
		\\\hline
	\end{tabular}	
}

\newcommand\swapFreeBVrCoRec
{
	\begin{tabular}{|c|}
		\hline
		$\ccr_{5}$ \ (swap/fresh variant)
		\\\hline
		\ISwVr{}, \ISwAp{}, \ISwLm{}, 
		\\
		\hlt{$\ISwId{}$}, \hlt{$\ISwIv{}$},  \hlt{$\ISwCp{}$},  
		\\
		\ISwFr{}, 
		\ISwBvr{}, 
		\\
		\IFrVr{}, \IFrAp{}, \IFrLm{}  
		\\\hline 
	\end{tabular}	
}

\newcommand\swapFreeCgCoRec
{
	\begin{tabular}{|c|}
		\hline
		$\ccr_{6}$ \ (swap/fresh)
		\\\hline
		\ISwVr{}, \ISwAp{}, \ISwLm{}, 
		\\
		\hlt{$\ISwId{}$}, \hlt{$\ISwIv{}$},  \hlt{$\ISwCp{}$}, 
		\\
		\ISwFr{}, \hlt{$\IFrSw{}$}, 
		\ISwCg{} 
		\\
		\IFrVr{}, \IFrAp{}, \IFrLm{}  
		\\\hline
	\end{tabular}	
}


\newcommand\substFreshCoRec
{  
	\begin{tabular}{|c|}
		\hline
		$\ccr_{7}$ \ (subst/fresh)
		\\\hline
		\ISbVr{}, \ISbAp{}, \ISbLm{}, 
		\\
		\ISbId{}, \ISbChFr{}, \ISbCm{} 
		\\
		\ISbFr{}, \IFrSb{},  \ISbBvr{}, \IISbBvr{} \!\!\!
		\\
		\IFSupFr{}, \IFrVr{}, \IFrAp{}, \IFrLm{}  \!\!\!
		\\
		 \VrInv{}, \FrVr{}
		\\\hline
	\end{tabular}	
}

\newcommand\renamingCoRec
{
	\begin{tabular}{|c|}
		\hline
		$\ccr_8$ \ (renaming)
		\\\hline
		\IRnVr{}, \IRnAp{}, \IRnLmO{}, 
		\\
		\IRnId{}, \IRnIm{}, \IRnCh{}, \IRnCm{} \!\!\!
		\\
		\IFrDRn{}, \IRnBvr{}, \IIRnBvr{}
		\\
		\IFSupFr{}, \IFrRnT
		\\\hline
	\end{tabular}	
}

\newcommand\renamingFreshVCoRec
{
	\begin{tabular}{|c|}
		\hline
		$\ccr_9$ \ (renaming/fresh variant)
		\\\hline
		\IRnVr{}, \IRnAp{}, \IRnLmO{}, 
		\\
		\IRnId{}, \IRnChFr{}, \IRnCm{}, 
		\\
      \IRnFr{}, \IFrRn{},  \IRnBvr{}, \IIRnBvr{}\!\!\!
		\\
		\IFSupFr{}, \IFrVr{}, \IFrAp{}, \IFrLm{}  
		\\\hline
	\end{tabular}	
}

\begin{figure}[!t]
\hspace*{-1.2ex}
		{\small 
			\begin{tabular}{ccc} \vspace*{-1.2ex}
				\perFreeCoRec{} & \perFreeBVrCoRec{} & \swapFreeCoRecV 
				\\  & & \\ \vspace*{-1.2ex}
				  & \swapFreeBVrCoRec &  \swapFreeCgCoRec 
					\\    	\vspace*{-1.3ex}
					  & & \\		
		\substFreshCoRec \hspace*{-2.5ex}	& \renamingCoRec  \hspace*{-2.5ex} & \renamingFreshVCoRec 
			\end{tabular}
		} 
    \vspace*{-1.5ex}
	\caption{Sets of properties 
		underlying different nominal corecursors. 
		The highlighted properties are ones that turned out to be redundant in the 
		analogous nominal recursor, but must be added back for the corecursor.}
	\label{fig-coiter}
	\vspace*{-2ex}
\end{figure}

Previous work \cite{DBLP:conf/cmcs/KurzPSV12,DBLP:journals/pacmpl/BlanchetteGPT19}  
discovered corecursive counterparts of two nominal recursors. 
Next we show that this is 
a quite pervasive phenomenon: 
 \looseness=-1

\begin{thm}\rm \label{thm-allNominalCoRecs}
	Consider the eight choices, for $i\in\{1,2,3,5,6,7,8,9\}$, of 
	tuples $\ccr_i = (\Bcat,T,\Ccat_i,\alb J_i,R_i)$  
	given by the sets of properties $\Props_i$ shown 
	in Fig.~\ref{fig-coiter}. 
	Namely (
	analogously to what we assumed in Thm.~\ref{thm-allNominalRecs}), 
	we 
	assume that $\Sigma_i$ consists of 
	the 
	operation and relation 
	symbols 
	occurring in $\Props_i$, and: 
	\looseness=-1
	\begin{mmyitem}
		\item $\Bcat$ is the category of $\Sigmad$-models and $T=\ITTrm(\Sigmad)$
		\item $\Ccat_i$ is the category of $(\Sigma_i,\Props_i)$-models and 
		$J_i$ is $\ITTrm(\Sigma_i)$
		\item $R_i : \Ccat_i \ra \Bcat_i$ is the forgetful functor sending  $(\Sigma_i,\Props_i)$-models to their underlying $\Sigmad$-models
		\looseness=-1
	\end{mmyitem}

	Then $\ccr_i$ is an epi-corecursor. 
	In particular, $\ITTrm(\Sigma_i)$ is the final $(\Sigma_i,\Props_i)$-model.  
\end{thm}
\leftOut{
\emph{Proof idea.}
We take a similar approach as in the case of recursors (Thm.~\ref{thm-allNominalRecs}),  
in that we take advantage of the  expressiveness relations in Thm.~\ref{thm-exprCo} to borrow the finality theorems for all corecursors from the finality theorem of the corecursor from the top of Thm.~\ref{thm-exprCo}'s hierarchy, namely $\ccr_2$.  As for $\ccr_2$, we do a direct proof of the fact that $\ITTrm(\Sigma_2)$ is final, along the following lines. Given a $(\Sigma_2,\Props_2)$-model $\MM$, we first define a function $g': M \ra \PITrm$ corecursively, taking advantage of the fact that $(\PITrm,\Dest)$ is a final coalgebra, and using a corresponding coalgebra on $M$ built from $\Dest^\MM$ which, in the abstraction case $\Ll\;K$, chooses one of the pairs $(x,m)$ from $K$. Then we define $g : M \ra \ITrm$ as $g\;m = (g'\,m)/\equiv\,$, i.e., by taking the equivalence classes. 
By construction, $g$ sub-commutes with the destructor in the variable and application case, but in the abstraction case only sub-commutes in a weak sense, i.e., making a choice of some $(x,m)$. Before addressing this, we prove 
preservation of freshness, and 
commutation with permutation in the more general form of $g(m[\sigma]^\MM)[\tau] = g(m)[\tau\circ \sigma]$ for all $m\in M$ and $\tau,\sigma\in\Perm$ (both using custom forms of coinduction for iterms); then full sub-commutation in the abstraction case follows as well, thanks to being able to vary the choice of $(x,m)$ via \IPmBvr{} and commutation with permutation. 
App.~\ref{app-proofCoSketches} gives extensive details, including on the underlying coinduction principles for iterms 
that are needed in the proof. 
\qed
\medskip
} 

Next we unpack Thm.~\ref{thm-allNominalCoRecs}'s 
statements of epi-corecursion principles, 
exploring the connections with Thm.~\ref{thm-allNominalRecs}'s 
%
nominal epi-recursors. 
%
We used for the corecursors the same names 
as for the recursors to which they roughly correspond---although, as we 
will 
discuss, a corecursor will often ``inherit'' axioms from two different recursors. (We do not have a $\ccr_4$ corecursor  
because 
the axioms specific to $r_4$ 
were mixed into $\ccr_5$ and  $\ccr_6$; either of $\ccr_5$ and $\ccr_6$ could have alternatively been named ``$\ccr_4$''.) 
\looseness=-1

$\ccr_1$ and $\ccr_3$ are corecursors in the style of nominal logic. 
Like their recursor counterparts $r_1$ and $r_3$, they 
have the free-variable (support) operator completely determined from permutation (via \IFvDPm{}), or alternatively swapping (via \IFvDSw{}). However, these corecursors are not strictly speaking nominal-logic based, because this determination of free-variables involves not finiteness, but countability.  
%
%
Another 
difference between $\ccr_1$ / $\ccr_3$ and $r_1$ / $r_3$ is that the freshness condition for binders \FCB{} (or anything analogous to it) is no longer needed; but instead we need the (corecursive counterpart of) the bound-variable renaming axiom which was specific to the more expressive recursor $r_5$---in 
permutation or swapping form  (\IPmBvr{} or \ISwBvrT{}).   
Thus, when switching from recursion to corecursion, the nominal-logic style definitional principles trade \FCB{} for \IPmBvr{} or \ISwBvrT{}; they  
are the only ones \emph{not} to 
become axiomatically heavier during 
this switch. 
\looseness=-1

The $\ccr_{2}$ corecursor requires both the algebraic properties of permutation and freshness specific to $r_2$ (\IPmId{}, \IPmCp{} and \IPmFv{}) and the bound-variable renaming property specific to $r_5$ (
 converted from swapping 
 to permutation form, \IPmBvr{}).
 The situation is similar for $\ccr_{5}$, the swapping-based counterpart of $\ccr_{2}$, which gets axioms from both $r_4$ (with freeness 
 converted to freshness) and $r_5$. 
 All these are in sharp contrast to the recursion case, where, at the recursor $r_5$, bound-variable renaming (\SwBvr{}) was the only axiom needed (in addition to the ``unavoidable'' ones describing the  
 interaction of constructors with the other operators). 
%
%
%
Similarly to $\ccr_5$ which ``descends'' from $r_4$ and $r_5$, 
$\ccr_{6}$ ``descends'' from $r_4$ and $r_6$. 
Unlike in the recursive case where $r_4$ did not need \SwCp{} and turned out not to need \SwId{} and \SwIv{} either, here all three axioms, \ISwId{}, \ISwIv{} and \SwCp{}, are actually needed by its corecursor ``descendants''  $\ccr_{5}$ and $\ccr_{6}$. Additionally $\ccr_{6}$ requires \IFvSw{}, another axiom we had discovered to be redundant for $r_4$. 
%
%
Thus, for the principles discussed in this paragraph, the axiomatizations become heavier when switching from recursion to corecursion, 
because: (1) axioms from different recursors now need to be joined, and (2) 
previous axioms that were seen to be redundant for recursors must be added back to their corecursor counterparts.
\looseness=-1

As for the substitution- and renaming-based principles $\ccr_{7}$, $\ccr_8$ and $\ccr_9$, their axiomatizations also become heavier in a similar way,  
in that both algebraic axioms (e.g., \IRnId{}, \IRnIm{}, \IRnCh{}, \IRnCm{}) and bound-variable renaming axioms must be present. But their axiomatizations are even heavier, because they feature (1) 
two versions of the bound-variable renaming axioms (e.g., \IRnBvr{} and \IIRnBvr{} as opposed to just \IRnBvr{}) as well as (2) countable support (\IFSupFr{}). Roughly speaking, these additional axioms are needed to make corecursion go through (i.e., establish finality of the iterm model) because,  substitution/renaming not commuting unconditionally with abstractions, stronger 
bound-variable avoidance facilities must be supplied by an (arbitrary) model; 
this was not 
a problem for recursors, where fresh induction on (concrete) terms could handle that elegantly. 
%
%
%

Specific to the substitution 
corecursor $\ccr_7$ is that it features, for the variable case, not only 
the destructor freshness axiom \IFrVr{}, but also its \emph{constructor} counterpart $\FrVr{}$, 
and the implicit requirement that the signature $\Sigma_7$ contains the variable-constructor symbol $\vr$. So a $\Sigma_7$-model $\MM$ has, in addition to the destructor $\Dest^\MM\!$, a variable-constructor-like operator $\Vr^\MM : \Var \ra \MM$; the two are required to act as mutual inverses by the following axiom \VrInv{} (which, due its 
hybrid nature, fits neither Fig.~\ref{fig-basicProps} nor Fig.~\ref{fig-basicCoProps}): 
$\Dest\;t = \Vv\;x$ if and only if $t = \Vr\;x$. 
This monad-like
variable-injection setting is needed to accommodate 
the substitution of arbitrary elements $m\in M$ for variables $x$. 
\looseness=-1


\smallskip
\emph{Connection with previous corecursors. } 
Thm.~\ref{thm-allNominalCoRecs} recovers, and slightly improves on, the two existing nominal corecursors from the literature we are aware of:  
that developed by \citet{DBLP:conf/cmcs/KurzPSV12} for $\lambda$-terms and extended 
by \citet{petrisanNominalCodatatypes} to functors on nominal sets, and that developed by \citet{DBLP:journals/pacmpl/BlanchetteGPT19} in a functorial framework covering complex binders. 
Next we discuss these corecursors' instantiations to the syntax of $\lambda$-calculus.   
The Blanchette et al.\ corecursor 
corresponds to $\ccr_2$ almost exactly, with the only difference that it 
assumes \IFvPm{} which is not needed. (See \S\ref{con-blanchette}.) 
\looseness=-1

Designed 
for 
nominal logic, the Kurz et al.\ corecursor 
assumes finite support, 
and 
targets 
not the entire 
$\ITrm$ but the subset $\ITrm'\su \ITrm$ of finitely supported iterms.
Their corecursor can be obtained from our 
$\ccr_1$ by noting that, 
 if we assume the source model $\MM$ to satisfy finite support (\FSupFv{}), then 
	the image of the unique morphism $g: \MM \ra \ITTrm(\Sigma_1)$ 
	guaranteed by $\ccr_1$ is included in $\ITrm'$ (thanks to $g$'s preservation of free variables). 
So we obtain a unique morphism from $\MM$ to the submodel of $\ITTrm(\Sigma_1)$ with carrier set $\ITrm'$, i.e., the term model of Kurz et al.  
\label{alterDest}
The above summary 
ignores one technicality: 
The 
Kurz et al.\ destructor
does not have type $M \ra \Var + 
M \times M + \PPne(\Var\times M)$ like ours, but 
$M \ra \Var + 
M \times M + [\Var]M$, 
where 
$[\Var]M$ is the nominal set of abstractions, obtained by quotienting $\Var\times M$ to an $\alpha$-like equivalence relation $\sim$ defined by $(x,m) \!\sim\! (x',m')$ iff 
$m[z \llra x]^{\MM} = m[z \llra x']^{\MM}$ for some fresh $z$. 
Since $[\Var]M$ consists of $\sim$-equivalence classes, we have  
$[\Var]M \su \PPne(\Var\times M)$, so the only difference is that 
our destructor has a less constrained codomain. But our $\ccr_1$ axiom \IPmBvr{}
constrains the elements of $\PPne(\Var\times M)$ from the image of the 
destructor to contain
mutually $\sim$-equivalent items. If we also added 
\IIPmBvr{} 
to the axiomatization of $\ccr_1$, we would further constrain these to be entire $\sim$-equivalence classes, obtaining exactly the  
Kurz et al.\ models. 
Hence, due to its models being 
less constrained, 
$\ccr_1$ is (slightly) more expressive than the Kurz et al.\ corecursor. 
\looseness=-1

\smallskip
\emph{A 
	note on nominal abstractions.} 
The above recalled abstractions are a standard concept in nominal logic \cite{DBLP:conf/lics/GabbayP99}, and using abstractions as primitives is a valid alternative when introducing nominal recursors and corecursors. For the recursors, 
the $\Lm$-constructor in models would have type $[\Var]M \ra M$ rather than $\Var \ra M \ra M$. 
However, 
like the authors of the nominal recursors reviewed in \S\ref{subsec-nominalRec}, we too 
favor the abstraction-free (hence quotient-free) (co)recursors, and this is for two reasons. First, they are likely easier to deploy:  
During a recursive definition, it seems inconvenient for the user to have to provide an operator in $[\Var]M \ra M$, which usually requires making a choice and showing that the choice is immaterial; providing instead a ``free''  operator in $\Var \ra M \ra M$ and verifying an additional axiom (such as $\SwBvr$) seems more manageable. Second, 
they can be more expressive than their abstraction-based alternatives. For example, most of the recursors in Thm.~\ref{thm-allNominalRecs} do not require swapping/permutation to have the algebraic properties needed for $\sim$ to be an equivalence, so quotienting is not an option unless we 
strengthen the model axiomatization, thus 
placing a higher proof burden on the user. 
Admittedly, these advantages are 
less consequential when talking about corecursors, where the relevant algebraic properties  
are required across the board. 
\looseness=-1


\smallskip
\emph{Comparing expressiveness. }   
We use a strength relation 
that is similar to 
that from our  ``head-to-head'' comparison of epi-recursors (in \S\ref{subsec-compareHeadToHead}):
%
	Given 
	epi-corecursors $\ccr = (\Bcat,T,\Ccat,J,R)$ and $\ccr' = \alb(\Bcat,\alb T,\Ccat',J',R')$,  
	we 
	call \emph{$\ccr'$ stronger than $\ccr$}, written $\ccr' \geq \ccr$, if $\ccr'$ can define everything that  
	$\ccr$ can, 
	in that: 
	for all objects $B$ in $\Bcat$ and 
	$g: B \ra T$, 
	$g$ definable by $\ccr$ implies $g$ definable by $\ccr'$.   
	\looseness=-1
%
%
Again, we 
write $\ccr \equiv \ccr'$ to state that $\ccr$ and $\ccr'$ have equal strengths, i.e., both 
$\ccr' \geq \ccr$ and $\ccr \geq \ccr'$ hold. 
\looseness=-1
%
%
%
\leftOut{
\begin{prop}\rm
	\label{prop-extCoCriterion} 
	Let $\ccr = (\Bcat,T,\Ccat,J,R)$ and 
	$\ccr' = (\Bcat,T,\Ccat',J',\alb R')$, and 
	assume $F : \Ccat \ra \Ccat'$ is a pre-functor  
	such that 
	$R' \circ F = R$ and 
	$F\;J = J'$.  
	Then $\ccr' \geq \ccr$. 
\end{prop}
}
%

\begin{thm} \rm
	\label{thm-exprCo}
	The epi-corecursors 
	from Thm.~\ref{thm-allNominalCoRecs} (and Fig.~\ref{fig-coiter}) compare as follows w.r.t.\  expressiveness: \ 
	\\
	\hspace*{4ex} $\ccr_{2} \equiv \ccr_{5} \geq  \ccr_{6}  \geq \ccr_3 \equiv \ccr_1  \geq \ccr_8$ \  \  \  \  and \  \  \  \  
$\ccr_{5}  \geq  \ccr_9 \geq \ccr_7,\ccr_8$. 
\end{thm} 
\leftOut{
\emph{Proof idea.}
$\ccr_3 \equiv \ccr_1$ follows by similar techniques as in the proof of 
$r_3 \equiv r_1$ in Thm.~\ref{thm-expr}, using the observation that the correspondence between the  permutation-based and swapping-based axiomatizations of nominal sets (i.e., finitely supported pre-nominal sets) \cite[Section 6.1]{pitts_2013} also works for countably supported pre-nominal sets.  
$\ccr_2 \equiv \ccr_5$ follows from a generalization of the above correspondence to pre-nominal sets equipped with axiomatized freshness/freeness that are less constrained than the nominal ones.  
$\ccr_6 \geq \ccr_3$ follows by exploiting (like we did for $r_2 \geq r_1$ in  Thm.~\ref{thm-expr}) that $\ccr_6$ is looser, in that it does not force swapping to be defined from freshness/freeness (via \IFvDSw{}), but the more constrained definition is well behaved in that it satisfies the $\Props_6$ axioms. 
$\ccr_5 \geq \ccr_6$ follows from the fact that \ISwCg{} implies \ISwBvr{} (in the presence of the other $\Props_6$ axioms). 
$\ccr_9 \geq \ccr_8$ follows essentially by adapting the proof of $\ccr_6 \geq \ccr_3$, from using swapping-like to using renaming-like operators. $\ccr_9 \geq \ccr_7$ follows using that the axioms for substitution imply those for renaming. 
The proof of $\ccr_3 \geq \ccr_8$ uses a similar idea to that of $r_3 \wgeq r_6$ in Thm.~\ref{thm-qexpr}, but employing  
countable instead of finite support; unlike in the case of recursion, since the smallness assumption (here, 
countable support, \IFSupFr{}) is part of the $\Props_8$ axiomatization, we do not need to consider a minimal submodel but can operate on the original $(\Sigma_8,\Perm_8)$-model. 
\qed
\medskip 
} 

Let us 
discuss this hierarchy in connection with 
the recursor hierarchy from Thm.~\ref{thm-expr}: 

\emph{Permutation versus swapping.} Recall that, in the recursor hierarchy, choosing between permutation and swapping 
was 
consequential to expressiveness as soon as we no longer assumed   
the tight coupling between freeness/freshness and swapping/permutation; namely, for the tight-coupling recursors $r_1$ and $r_3$ we had $r_1 \equiv r_3$, 
but for the for loose-coupling recursors $r_2$ and $r_4$
we only had $r_4 \geq r_2$.  
%
But 
on corecursors this nuance disappears: Swapping is now as expressive as permutation in both the 
tight-coupling 
($\ccr_1 \equiv \ccr_3$) 
and loose-coupling ($\ccr_{2} \equiv \ccr_{5}$) cases. This is because for swapping-based corecursors we cannot dispense with 
the algebraic axioms \ISwId{}, 
\ISwIv{} and \ISwCp{}, which 
are sufficient to ensure the extension of swapping 
to a (well-behaved) permutation operator.  
\looseness=-1

\emph{Congruence versus bound-variable renaming.} Recall that, for recursors, the congruence axiom \SwCg{} led to higher expressiveness than the bound-variable renaming axiom \SwBvr{}, yielding $r_6 \geq r_5$.
And this was because (in the presence of 
other mild axioms) \SwBvr{} implies \SwCg{}. 
The same is true here for corecursors, in that \ISwBvr{} 
implies \ISwCg{}. However, in the presence of the other $\ccr_{5}$ axioms, \ISwBvr{} is sufficient for proving a corecursion principle; whereas \ISwCg{} is not, unless we add the additional axiom \IFrSw{} (which is not needed by $\ccr_{5}$). 
And if we assume \IFrSw{} then 
\ISwCg{} also implies \ISwBvr{}. In short, congruence-based corecursion requires \IFrSw{}, and as such is less expressive than 
bound-variable renaming-based corecursion, meaning that the hierarchy gets shifted, with $\ccr_{5} \geq \ccr_{6} $.
\looseness=-1


\emph{Finiteness versus countability.}  Proving $r_2 \geq r_1$ relied on the fact that, in a pre-nominal set (i.e., a model satisfying \PmId{}, \PmCp{}) equipped with equivariant constructors (satisfying \PmVr{}, \PmAp{}, \PmLm{}) and assuming \FCB{}, if we define freshness from permutation using finiteness (via \FvDPm{}), then this freshness operator behaves well w.r.t.\ the constructors (satisfies \FrVr{}, \FrAp{}, \FrLm{}).  
This also works if we replace ``finite'' with ``countable'' and the constructors with the destructor, and use \IPmBvr{} instead of \FCB{}, which shows why we also have $\ccr_2\geq \ccr_1$ 
(and similarly for the swapping-based versions). 
\looseness=-1

\emph{Symmetric versus asymmetric operators, second round.} 
Recall that, in the strict ``head-to-head'' comparison relation, recursors based on 
symmetric operators (swapping and permutation) 
were incomparable to those based on 
asymmetric ones (renaming and  substitution), but only a laxer comparison 
deemed the symmetric ones more expressive. 
But in the case of corecursors, the symmetric ones emerge as more expressive already in a head-to-head comparison. This is not too surprising if we 
recall the reason why symmetric-operator recursors  eventually 
emerged as more expressive: because, if the model has finite support (which in the laxer criterion was possible by taking the minimal submodel), then swapping becomes definable from renaming. 
Here, our asymmetric-operator based models already have countable support (which, as discussed, seems necessary for corecursion), hence can also define swapping from renaming similarly to how this is done in the finite-support case.  
\looseness=-1


\emph{Laxer comparison relation?} It is worth asking whether (1) an analogue of the laxer comparison relation we introduced for epi-recursors is available for epi-corecursors, and whether (2) it would yield any 
flattening of the corecursor hiererchy 
(analogous to Thm~\ref{thm-qexpr}). While the answer to the first question is clearly 'yes' because a perfectly dual concept applies to epi-corecursors, 
to the second question we are inclined to answer 
`no': 
Now we would not be able to use submodels, but something akin to quotient models, and 
quotienting tends to not preserve (let alone strengthen) 
our axiomatizations. 
\looseness=-1

\medskip
\noindent 
\textbf{Summary.} 
%
Nominal corecursors can be construed and compared as epi-corecursors, following a similar methodology to that 
for nominal recursors. A corecursor axiomatization 
corresponds to one or two recursor axiomatizations via identical and quasi-dual axioms. 
%
The corecursor axiomatizations are 
heavier.  
%
%
%
We have a corecursor hierarchy that partly matches the strict-relation ($\geq$) recursor hierarchy 
but is more fine-grained, in particular it already subsumes asymmetric-operator principles to the symmetric-operator ones without the need for a laxer comparison relation (in the style of $\wgeq$). 
\looseness=-1

\section{Mechanized Results}
\label{sec-mechResults}

We have mechanized in Isabelle/HOL 
	the recursion theorem (Thm.~\ref{thm-allNominalRecs}), 
	the two recursor comparison theorems (Thms.~\ref{thm-expr} and \ref{thm-qexpr}), 
	 the two negative (strictness) results on recursor comparison (Prop.~\ref{prop-negRec}), 
	 the corecursion theorem (Thm.~\ref{thm-allNominalCoRecs}), and 
	the corecursor comparison theorem (Thm.~\ref{thm-exprCo}). 
%
What we have \emph{not} mechanized are the abstract criteria for comparing epi-recursors, namely  
Props.~\ref{prop-extCriterion} and \ref{prop-WeakExtCriterion}.
 In our mechanized results, rather than invoking these criteria, we have inlined their content on a need basis. 
 \looseness=-1

The 
mechanization is available as an archive \cite{isa-nomialEpiRec},
and is extensively documented  
in App.~\ref{app-isa}.   
It uses Isabelle's structuring mechanisms called \emph{locales} \cite{DBLP:conf/tphol/KammullerWP99,DBLP:journals/jar/Ballarin14} 
to represent the 
model axiomatizations, and uses sublocale relationships for the transformations between these axiomatizations that underlie the 
expressiveness comparisons. 
%
\looseness=-1

We have also provided a top-level, locale-free reformulation of the 
mechanized results, which 
match closely the statements from the paper, and whose inspection  does not require 
knowledge of locales. 
%
The end results about recursors, 
Thms.~\ref{thm-allNominalRecs}, \ref{thm-expr} and \ref{thm-qexpr} 
and Prop.~\ref{prop-negRec}, are mechanized in homonymous Isabelle theories,  
located in the archive's directory 
\textsf{Stripped$\_$Down/LocaleFree$\_$versions}: 
\begin{itemize} 
	\item 
Thm.~\ref{thm-allNominalRecs} is mechanized in the Isabelle theory 
\textsf{Theorem9}. 
%
That theory contains the definitions of the epi-recursor structure for each 
of the nine recursors $r_i$, and proofs that these structures indeed form epi-recursors, 
e.g., their components are categories, functors etc. The initiality theorems are named \textsf{init$\_$I$i$} where 
$i$ is a number between $1$ and $9$. 
\item Thm.~\ref{thm-expr} is mechanized in the Isabelle theory 
\textsf{Theorem12}, where the main formal theorems are named r$i\_$ge$\_$r$j$ (formalizing $r_i \geq r_j$) for the relevant choices of 
$i$ and $j$. 
\item Thm.~\ref{thm-qexpr}  is mechanized in the Isabelle theory 
\textsf{Theorem15}, where the main formal theorems are named r$i\_$quasi$\_$ge$\_$r$j$ (formalizing $r_i \wgeq r_j$), again for the relevant choices of 
$i$ and $j$. 
\item Prop.~\ref{prop-negRec}  is mechanized in the Isabelle theory 
\textsf{Prop16}, where the main formal theorems are named \textsf{not$\_$r1$\_$ge$\_$r2}  and \textsf{not$\_$r2$\_$ge$\_$r4} 
(formalizing $r_1 \not\geq r_2$ and $r_2 \not\geq r_4$). 
\end{itemize}

And similarly for corecursors, 
in directory 
\textsf{Corecursors/LocaleFree$\_$versions}: 
\begin{itemize} 
	\item 
	Thm.~\ref{thm-allNominalCoRecs} is mechanized in the Isabelle theory 
	\textsf{Theorem18}, 
	%
which contains the definitions and proofs for the epi-corecursor structure, 
	including the finality theorems named \textsf{final$\_$J$i$}. 
	\looseness=-1
	\item Thm.~\ref{thm-exprCo} is mechanized in the Isabelle theory 
	\textsf{Theorem19}, where the main formal theorems are named cr$i\_$ge$\_$cr$j$ (formalizing $\ccr_i \geq \ccr_j$) for the relevant choices of 
	$i$ and $j$. 
\end{itemize} 

App.~\ref{app-subsec-localeFree} 
gives more details about the locale-free statements of the results. 

\section{
	More Related Work}
\label{sec-relWork} 


\indent 
\hspace*{2ex}
\textit{Definitional packages for syntax with bindings.} 
A direct application of our results would be on informing the 
design of binding-aware definitional packages in proof assistants, in the style of  Nominal Isabelle \cite{nominalTwo}. 
In addition to our theoretical results on expressiveness,  
one should also consider the pragmatic aspects of how lightweight the required structure (operations and relations on the target domain) is and how easy the conditions are to solve. Ideally, in a definitional package implementation one should provide 
the maximally expressive (co)recursor as the core, but also infer from it (via "borrowing") and make available other (co)recursors which may have pragmatic advantages. For example, the recursors $r_3$ and $r_8$ 
are minimalistic in terms of structure. 
\looseness=-1


\smallskip
\textit{(Co)recursors in different paradigms.} 
Binding-aware recursors have also been developed in the other two major 
paradigms. 
%
Scope-safe versions of nameless recursion based on category theory have been studied extensively, e.g.,   \citet{fio-abs,DBLP:conf/lics/Hofmann99,BirdP-lambda,DBLP:conf/csl/AltenkirchR99,allais-bindingsByDependentTypes-agda,DBLP:conf/cpp/KaiserSS18}. 
A nameless recursor is in principle 
easier to 
deploy because the 
constructors are free; the price 
is additional index-shifting overhead \cite{DBLP:journals/entcs/BerghoferU07}.   
Nameless corecursion has been studied by  \citet{DBLP:journals/tcs/MatthesU04}, building on previous work by \citet{DBLP:journals/tcs/AczelAMV03,DBLP:journals/tcs/Moss01,DBLP:journals/mscs/GhaniLMP03}. 
\looseness=-1

Hybrid nameless/nominal solutions have also been proposed, notably 
the locally named \cite{DBLP:journals/jar/McKinnaP99,DBLP:journals/jar/PollackSR12}
and locally nameless \cite{aydemirPOPL08,locallyNameless} representations.  
\citet{pittsLocNamSets} 
introduced locally nameless sets, an algebraic axiomatization of syntax under the locally nameless representation, and characterizes the locally nameless recursor \cite{locallyNameless} using initiality in a functor category (similarly to recursors in the nameless 
setting \cite{fio-abs,DBLP:conf/lics/Hofmann99}).  
He also 
proved that the category of locally nameless sets is isomorphic to that of finitely supported rensets \cite{DBLP:conf/cade/Popescu22} and to categories given by other axiomatizations of renaming from the literature \cite{statonThesis,gabbayHofmann-nominalRenamingSets}; this suggests that the expressive power of the locally nameless recursor might be located in the vicinity of $r_8$ (which is based on rensets).  
On the way to his results, Pitts gave an alternative axiomatization of finitely supported rensets, using instead of \RnCh{} a simpler (unconditional) axiom, let us call it \RnCh{}': $t[x_2/x_1][x_3/x_2] = t[x_3/x_2][x_3/x_1]$. Replacing \RnCh{} with \RnCh{}' would yield a recursor $r_8'$ such that $r_8  \geq  r_8'$ (since \RnCh{}' implies \RnCh{} in the presence of \RnIm{}) and $r_8 \simeq r_8'$ (since 
the converse implication 
is true for finitely supported rensets, hence for a suitable minimal submodel).  
\looseness=-1

In \emph{strong HOAS}, 
as implemented in dedicated logical frameworks 
\cite{DBLP:conf/cade/PfenningS99,abellaJournalPaper,beluga}, 
the $\lambda$-constructor has type $(\Trm \ra \Trm) \ra \Trm$. Here, the difficulty with recursion is not the non-freeness of the constructors, but the fact that 
binding constructors are not recursable 
in the typical well-foundedness manner. Solutions to this have been designed using modality operators \cite{DBLP:journals/tcs/SchurmannDP01} and contextual types 
\cite{DBLP:conf/esop/0001P17}.  
Recursion mechanisms have also been designed within \emph{weak HOAS} 
\cite{weakHOAS}, 
where the $\lambda$-constructor, having  
type $(\Var \ra \Trm) \ra \Trm$, \emph{is} standardly recursable---yielding a free datatype that contains 
all terms but also additional entities referred to as ``exotic terms''.  Partly 
due to the exotic terms, this free datatype 
is not 
very 
helpful for recursively defining useful functions on terms. 
But the situation is significantly improved in a variant called 
\emph{parametric HOAS (PHOAS)} \cite{chlipalaParamHOAS2008}, 
which accommodates recursive definitions 
%
in the style of the semantic-interpretation pattern (\S\ref{subsec-semInt}).  
\looseness=-1

A  nominal/HOAS hybrid can be found in Gordon and Melham's characterization of the $\lambda$-term datatype \cite{DBLP:conf/tphol/GordonM96}, which employs the nameful constructors but 
features weak-HOAS style recursion over $\Lm$.  
\citet{primrecFOAS-Norrish04} inferred his swap/free 
recursor $r_4$ from the Gordon-Melham one. 
Weak-HOAS recursion also 
has interesting connections with nameless recursion: In presheaf toposes as in 
\citet{fio-abs}, 
\citet{DBLP:conf/lics/Hofmann99} and 
\citet{DBLP:conf/icfp/AmblerCM03},  
the function space $\Var \Ra T$ is isomorphic 
to the De Bruijn level-shifting transformation applied to $T$; this effectively equates 
the weak-HOAS and nameless recursors.  
\looseness=-1


\smallskip
\textit{Recursion over non-free datatypes.} 
Some of the discussed 
nominal recursors operate by characterizing terms as the non-free datatype determined as 
initial model of an equational theory \cite{BurrisSankappanavar1981} 
or more generally of a Horn theory \cite{DBLP:journals/jcss/Makowsky87}, employing an infinite number of axioms. In such cases, and ignoring the Barendregt enhancement, nominal recursion 
becomes a particular case of 
Horn recursion. (This 
is not true for the nominal-logic recursor $r_1$, since 
$\FvDPm$ 
is not a Horn formula.)   
Our concept of epi-recursor 
applies to general Horn recursion as well---provided one identifies a constructor-like  subsignature of the given signature, i.e., 
such that the 
initial model of the Horn theory has its carrier generated by its operations. In 
algebraic specifications, this property is called \emph{sufficient completeness} \cite{DBLP:journals/acta/GuttagH78}. 
\looseness=-1

The non-free datatypes of sets and bags are degenerate 
cases of the above, where the constructors 
form the entire signature. 
\citet{DBLP:conf/icalp/TannenS91}  
and 
\citet{DBLP:journals/tcs/BunemanNTW95} 
study 
Horn recursors for these datatypes  
when designing database languages. 
They prove connections 
between their 
axiomatizations that could be captured 
using our $\geq$ relation between epi-recursors. 
\looseness=-1


\begin{acks}
	We thank the reviewers and the artifact reviewers for the careful reading of our paper, and for their insightful comments and suggestions, which have led to improvements both in the text and in the documentation of what has been mechanized.  
	We gratefully acknowledge support from the EPSRC grant EP/X015114/1 ``Safe and secure COncurrent programming for adVancEd aRchiTectures (COVERT)''. 
\end{acks}

%
%



\appendix

\ \\ \ \\
\begin{center}
{\huge APPENDIX}
\end{center}

\ \\ \ \par
This  appendix provides details, proof sketches and extensions for the concepts and results presented in the main paper. Specifically, it provides:
\begin{mmyitem}
	\item some technical lemmas on nominal sets that are relevant for regarding the perm/free recursor 
	 as an epi-recursor (App.~\ref{app-modreDetailsNomSets})
	 \item some additional examples of functions defined by nominal recursion 
	 (App.~\ref{app-anotherExample}) 
	\item proof sketches for all the stated results on recursors (App.~\ref{app-proofSketches})
	\item the description of a uniform way to enhance the recursors with full-fledged recursion and Barendregt's convention (App.~\ref{app-addingBacknhancements})
	\item the definition of infinitary $\lambda$-terms (iterms) and their operators, and a description of the relevant proof principles for them (App.~\ref{app-detailsIterms})
	\item more details on epi-corecursors and nominal corecursors (App.~\ref{app-moreDetailsCorec})
 	\item proof sketches for all the stated results on corecursors (App.~\ref{app-proofCoSketches})
 	\item a discussion of the notion of enhancing corecursors  (App.~\ref{app-enhCorec})
 	 \item an  example of a function defined by nominal corecursion, namely parallel substitution 
 	(App.~\ref{app-exaCorec}) 
	\item a detailed presentation of our Isabelle mechanization (App.~\ref{app-isa})
\end{mmyitem} 

\section{More details on nominal sets}
\label{app-modreDetailsNomSets}

Next, we will give details on the justification for the following claim made in the main paper: $r_1$ is the stripped down version of the perm/fresh  recursor, where here ``stripped down'' refers to removing the Barendregt parameter $X$, i.e., taking $X=\emptyset$. 

Consider the following result that gives a more direct description of the support function: 
\smallskip
\begin{mylemma}\rm \cite{pitts-AlphaStructural}
	\label{lem-charSupp}
	Let $\AA = (A,\_[\_]^\AA)$ be a nominal set. Then, for every $a\in A$, there exists the smallest set $X\su A$ that supports $a$, denoted $\supp^\AA(a)$. 
	Moreover, it holds that 
	$\supp^\AA(a) = 
	\{x\in\Var \mid \{y\in\Var \mid a [x \!\llra\! y]^\AA \not= a\} \mbox{ is infinite}\}$.   \qed 
\end{mylemma}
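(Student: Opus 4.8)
The plan is to prove both assertions simultaneously by exhibiting the set
$$
S \;=\; \{x\in\Var \mid \{y\in\Var \mid a[x\llra y]^\AA \not= a\}\ \mbox{is infinite}\}
$$
as the least support of $a$. Two facts suffice: (1) $S$ supports $a$; and (2) $S$ is finite and is contained in every finite support of $a$. Granting these, $S$ is a finite support contained in all finite supports, hence the smallest support $\supp^\AA(a)$ exists and is exactly the displayed set.

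Fact (2) is a short complementation argument. Since $\AA$ is a nominal set, $a$ has some finite support $X_0$. If $x\notin X_0$ then for every $y$ we have $a[x\llra y]^\AA = a$: when $y=x$ this is the identity axiom $a[\id]^\AA = a$, and when $y\not=x$ it follows since $x,y\notin X_0$ and $X_0$ supports $a$. Hence $\{y \mid a[x\llra y]^\AA\not=a\}\su X_0$ is finite, so $x\notin S$. This shows $S\su X_0$, so $S$ is finite; and running the same argument with any finite support $X$ in place of $X_0$ gives $S\su X$.

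Fact (1) is the heart of the proof. Let $x,y\notin S$; I must show $a[x\llra y]^\AA = a$. If $x=y$ this is the identity axiom. If $x\not=y$, then by definition of $S$ the sets $T_x=\{z\mid a[x\llra z]^\AA\not=a\}$ and $T_y=\{z\mid a[y\llra z]^\AA\not=a\}$ are both finite, so I may pick a variable $z\notin T_x\cup T_y\cup\{x,y\}$; for this $z$ we have $a[x\llra z]^\AA = a$ and $a[y\llra z]^\AA = a$. Now invoke the permutation identity $x\llra y = (x\llra z)\circ(y\llra z)\circ(x\llra z)$ (valid since $z\notin\{x,y\}$) together with the compositionality axiom $a[\sigma\circ\tau]^\AA = a[\tau][\sigma]^\AA$ to compute
$$
a[x\llra y]^\AA \;=\; ((a[x\llra z]^\AA)[y\llra z]^\AA)[x\llra z]^\AA \;=\; (a[y\llra z]^\AA)[x\llra z]^\AA \;=\; a[x\llra z]^\AA \;=\; a ,
$$
which finishes the argument.

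The main obstacle I anticipate is Fact (1) --- more precisely, getting the bookkeeping of the group action correct: one decomposes the transposition $x\llra y$ through a sufficiently fresh atom $z$ and then unfolds the composite action in the order dictated by the convention $a[\sigma\circ\tau]^\AA = a[\tau][\sigma]^\AA$. The freshness of $z$ (avoiding the two finite ``moving sets'' $T_x,T_y$ as well as $x$ and $y$) is exactly what makes the three successive applications collapse, and such a $z$ exists precisely because $x,y\notin S$; after that, the identity and compositionality axioms are plugged in mechanically. Note that the relation $S\su X$ is only claimed for finite supports $X$ --- which is what ``smallest support'' means in the nominal-sets setting --- so nothing needs to be said about infinite supports.
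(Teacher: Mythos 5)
Your proof is correct and follows essentially the same route as the paper's argument for the companion Lemma~\ref{lem-altNomSet} (and Pitts's original proof): exhibit the set of variables that are moved by infinitely many transpositions, show it supports $a$ via the fresh-$z$ decomposition $x\llra y=(x\llra z)\circ(y\llra z)\circ(x\llra z)$, and show it is finite and contained in every finite support by complementation. One phrasing slip in Fact (2): you assert ``for every $y$ we have $a[x\llra y]^\AA=a$'' while your justification only covers $y\notin X_0$ --- which is all you actually use, since the conclusion $\{y\mid a[x\llra y]^\AA\not=a\}\su X_0$ is precisely the contrapositive of that restricted claim.
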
 
\medskip 

In turn, this enables an alternative description of nominal sets: 

\smallskip
\begin{mylemma}\rm 
	\label{lem-altNomSet}
	Let $\AA = (A,\_[\_]^\AA)$. Then the following are equivalent:  
		\begin{myitem}
		\item[(1)] $\AA$ is a nominal set; 
		\item[(2)] There exists a (necessarily unique) function $\FV : A \ra \Pow(\Var)$ such that $(A,\_[\_]^\AA,\FV)$ 		
		satisfies $\PmId$, $\PmCp$, 
		$\FvDPm$ 
		and $\FSupFv$. 		
	\end{myitem} 
\end{mylemma}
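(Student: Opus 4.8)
The plan is to reduce the equivalence to a round of unfolding definitions plus the one substantive ingredient, Lemma~\ref{lem-charSupp}. First I would unfold the two notions appearing in (1): ``$\AA$ is a pre-nominal set'' is by definition the conjunction of $\PmId$ (the action is idle for the identity permutation) and $\PmCp$ (compositionality), while ``every element of $\AA$ is finitely supported'' is precisely $\FSup$ --- since to say that a finite set $Y\su\Var$ supports $a$ means exactly that $a[x\llra y]^\AA = a$ for all $x,y\notin Y$. Hence the only clause that requires an argument is $\FvDPm$, which asserts that $\FV^\AA(a)$ coincides with $\{x\in\Var \mid \{y\in\Var \mid a[x\llra y]^\AA \not= a\}$ is infinite$\}$.

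For $(1)\Rightarrow(2)$ I would assume $\AA$ is a nominal set. Then $\PmId$ and $\PmCp$ hold because $\AA$ is in particular a pre-nominal set, and $\FSup$ holds because every element is finitely supported. For $\FvDPm$: in a nominal set the least support $\supp^\AA(a)$ exists and $\FV^\AA$ denotes $\supp^\AA$, so $\FvDPm$ is exactly the characterization of $\supp^\AA(a)$ supplied by Lemma~\ref{lem-charSupp}.

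For $(2)\Rightarrow(1)$ I would assume $\PmId$, $\PmCp$, $\FvDPm$ and $\FSup$. From $\PmId$ and $\PmCp$, $\AA$ is a pre-nominal set; from $\FSup$, each element is supported by a finite set, i.e.\ is finitely supported, which is precisely the extra requirement in the definition of a nominal set. Hence $\AA$ is a nominal set, and Lemma~\ref{lem-charSupp} then confirms that the $\FV^\AA$ provided by $\FvDPm$ is indeed $\supp^\AA$. I would also remark that $\FSup$ is genuinely needed and not derivable from the others: $\FvDPm$ on its own is merely a definition of $\FV^\AA$ and does not force its values to be finite, as witnessed by the pointwise $\Perm$-action on $\Var \to \{0,1\}$, which has elements of infinite support while still satisfying $\PmId$, $\PmCp$ and (vacuously) $\FvDPm$.

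I expect the only real obstacle to be Lemma~\ref{lem-charSupp} itself --- the classical nominal fact that two finite supports have a finite-support intersection, proved by swapping through an intermediate fresh variable --- which I would simply cite. Given it, the present lemma is bookkeeping; the single delicate point is to match the transposition notation $x\llra y$ in the statement of $\FSup$ against the definition of ``a set supports an element'', which I flagged above. This lemma is also what makes precise the remark following Theorem~\ref{thm-pittsRec}: the $(\Sigma_1,\Props_1)$-models need not be nominal sets --- they are pre-nominal sets carrying the $\FvDPm$-defined $\FV^\AA$, together with the constructor axioms and $\FCB$ --- and they are genuine nominal sets exactly when $\FSup$ holds on top.
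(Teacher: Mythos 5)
Your proof is correct. Both you and the paper reduce the substantive content to Lemma~\ref{lem-charSupp} and treat the direction $(1)\Rightarrow(2)$ identically (the first three properties are definitional, and $\FvDPm$ is exactly the characterization of $\supp^\AA$ from that lemma), but you handle $(2)\Rightarrow(1)$ by a genuinely different decomposition. The paper does not invoke Lemma~\ref{lem-charSupp} in that direction; instead it verifies by hand that the operator prescribed by $\FvDPm$ is the \emph{least} support of each element --- first showing that $\FV\,a$ supports $a$ (using $\FvDPm$ to pick an intermediate variable $z$ fresh for $x,y$ and for $a$, and transposing through it), then showing that $\FV\,a$ is contained in every support (using $\FSup$ to reduce to finite supports) --- and only then concludes nominality. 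You instead observe that nominality already follows from $\PmId$, $\PmCp$ and $\FSup$ alone, since $\FSup$ is verbatim the statement that every element is finitely supported, and then re-apply Lemma~\ref{lem-charSupp} to the now-established nominal set to identify the $\FvDPm$-defined $\FV^\AA$ with $\supp^\AA$. Your route is shorter and concentrates all the work in the cited lemma; the paper's route has the minor advantage of making explicit that ``$\FV\,a$ supports $a$'' needs only $\PmId$, $\PmCp$ and $\FvDPm$, not $\FSup$. Your closing remark on the independence of $\FSup$ (the pointwise action on $\Var\to\{0,1\}$) is correct, and is an addition not present in the paper.
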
 
\smallskip 
\begin{proof}
(1)	 implies (2): We take $\FV$ to be $\supp^\AA$. Then $\PmId$, $\PmCp$ and $\FSupFv$ are part of the definition of nominal sets, and $\FvDPm$ is ensured by Lemma~\ref{lem-charSupp}. 

(2) implies (1): Thanks to $\PmId$ and $\PmCp$,  $\AA$  is a pre-nominal set. 

Next, we show that  the operator defined by $\FvDPm$ is the same as the support operator, i.e., for all $a\in A$, $\FV\;a$ is the smallest set of variables that supports $a$: 
\begin{description}
	\item{- $\FV\;a$ supports $a$:}  Let $x,y\in\Var \sm (\FV\;a)$.  
	If $x=y$, the desired fact, $a[x \!\llra\! y]^\AA = a$, follows from $\PmId$. So let us assume $x\not=y$. 
	Since $x\notin \FV\;a$, thanks to \FvDPm{} we have that the set $\{z\in\Var \mid a [x \!\llra\! z]^\AA \not= a\}$ is finite. Similarly, the set $\{z\in\Var \mid a [y \!\llra\! z]^\AA \not= a\}$ is finite. Then we can find $z\in \Var \sm \{x,y\}$ such that 
	$a [x \!\llra\! z]^\AA = a$ and $a [y \!\llra\! z]^\AA = a$. 
	Next, applying the properties of swapping in pre-nominal sets, we have:
	$a = a [x \!\llra\! z]^\AA [y \!\llra\! z]^\AA = 
	  a [x[y \sw z] \!\llra\! z[y\sw z]]^\AA = a [x \!\llra\! y]^\AA$, as desired.
	 \item{- $\FV\;a$ is included in any set that supports $a$:} 
	 Assume $X \su \Var$ supports $a$.  Let $X' = X \cap \FV\;a$. Since both $X$ and (as we have just proved) $\FV\;a$ support $a$, we have that $X'$ supports $a$. And since,  
	 by \FSupFv{}, $\FV\;a$ is finite, we have that $X'$ is finite. 
	 %
	 To show $\FV\;a \su X$, it suffices to show $\FV\;a \su X'$. 
	 Let $x\in \FV\;a$, meaning that the set $\{y \mid a[x \!\llra\! y]^\AA \not= a\}$ is infinite. By the finiteness of $X'$, we obtain $y\notin X'$ such that $a[x \!\llra\! y]^\AA \not= a$. Then, since $X'$ supports $a$, it cannot be the case that $x\notin X'$. Hence $x \in X'$, as desired.   	
\end{description}
We have thus proved that $\AA$ is a pre-nominal set and that 
$\FV$ coincides with $\supp^\AA$---which, thanks to \FSupFv{}, means that the finite support property holds for $\AA$.  We obtain that $\AA$ is a nominal set, as desired. \qed 
\end{proof}
\medskip 

From Lemma~\ref{lem-altNomSet}, it follows 
that 
a nominal set $\AA = (A,\_[\_]^\AA)$ together with $\emptyset$-supported operations 
$\Vr^\AA : \Var \ra A$, 
$\Ap^\AA : A \ra A \ra A$ and 
$\Lm^\AA : \Var \ra A \ra A$ is the same as a $(\Sigma_1,\Props_1)$-model.

It remains to show that the properties of the unique function $g:\Trm \ra A$ 
guaranteed by Thm~\ref{thm-pittsRec} are the same as those defining 
$\Sigma_1$-morphisms.  
Indeed, clauses (1)--(3) in Thm~\ref{thm-pittsRec} are the $\Sigmac$-part of the morphism conditions. Moreover, $g$'s commutation with permutation (in nominal terminology, equivariance) is the same as being supported by $\emptyset$, as a particular case of the following 
lemma: 
%
\smallskip 
\begin{mylemma}\rm 
	\label{lem-funSupp} 
	Let $f: A \ra B$ be a function between two nominal sets 
	$\AA = (A,\_[\_]^\AA)$ and $\BB = (B,\_[\_]^\BB)$ and $X$ a set of variables. Then the following are equivalent:
	\begin{myitem}
	\item[(1)] $f$ is supported by $X$; 
	\item[(2)] $f(a[\sigma])^\AA = (f\,a)[\sigma]^\AA$ for all $\sigma\in \Perm$ 
	such that $\supp\,\sigma \cap X = \emptyset$.  
	\end{myitem} 
\end{mylemma}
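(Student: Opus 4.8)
The plan is to prove the two implications separately. The implication $(2)\Rightarrow(1)$ is essentially immediate once the definition of support is unfolded, and $(1)\Rightarrow(2)$ rests on the standard fact that a finite permutation can be written as a composition of transpositions of elements of its support.

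First I would make the notion of support explicit. By definition, ``$f$ is supported by $X$'' means that $f$, regarded as an element of the function pre-nominal set $\FF = (A \ra B)$ with action $g[\sigma]^\FF(a) = (g(a[\sigma^{-1}]^\AA))[\sigma]^\BB$, satisfies $f[x\llra y]^\FF = f$ for all $x,y\notin X$. Unfolding the action and using that $x\llra y$ is its own inverse, this says $(f(a[x\llra y]^\AA))[x\llra y]^\BB = f\,a$ for all $a\in A$; and since transpositions act involutively in any pre-nominal set (an immediate consequence of \PmId{} and \PmCp{}), applying $[x\llra y]^\BB$ to both sides shows this is equivalent to
$$f(a[x\llra y]^\AA) \;=\; (f\,a)[x\llra y]^\BB \qquad\text{for all } a\in A \text{ and all } x,y\notin X. \qquad (\ast)$$
Since $(\ast)$ is exactly condition $(2)$ restricted to transpositions $\sigma = x\llra y$ with $x,y\notin X$ (and $\supp(x\llra y)\su\{x,y\}$ is then disjoint from $X$, the case $x=y$ being trivial because $x\llra y = \id$), the implication $(2)\Rightarrow(1)$ follows at once.

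For $(1)\Rightarrow(2)$, fix $\sigma\in\Perm$ with $\supp\,\sigma\cap X=\emptyset$ and write $\sigma$ as a composition $\tau_1\circ\cdots\circ\tau_n$ of transpositions, each $\tau_i$ being a transposition of two elements of $\supp\,\sigma$; by the hypothesis on $\sigma$, every $\tau_i$ therefore avoids $X$. Using \PmCp{} (and \PmId{} in the base case $n=0$), the action of $\sigma$ on any $a\in A$ decomposes into the successive actions of $\tau_1,\ldots,\tau_n$, and likewise for the action on $B$; I would then apply $(\ast)$ once for each $\tau_i$ and conclude $f(a[\sigma]^\AA) = (f\,a)[\sigma]^\BB$ by a routine induction on $n$. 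Since $\sigma$ was an arbitrary permutation with support disjoint from $X$, this gives condition $(2)$.

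I expect the only non-bookkeeping ingredient --- and hence the ``main obstacle'', although it is entirely standard --- to be the factorization of $\sigma$ into transpositions all of whose entries lie in $\supp\,\sigma$: this is the statement that the symmetric group on a finite set is generated by the transpositions of its elements, applied to the restriction of $\sigma$ to the finite set $\supp\,\sigma$ (extended by the identity outside it). Everything else is careful unwinding of the function-space action together with the two pre-nominal-set axioms \PmId{} and \PmCp{}, plus the particular case of this lemma for transpositions, which is exactly how ``supported by $X$'' was defined.
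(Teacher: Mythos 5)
Your proposal is correct and follows essentially the same route as the paper's proof: both unfold ``supported by $X$'' via the function-space action and involutiveness of transpositions to the commutation condition for transpositions avoiding $X$, note that this is the restriction of (2) to transpositions, and then recover the general case by decomposing $\sigma$ into transpositions of elements of $\supp\,\sigma$ (the paper phrases this as induction on the finite set $\supp\,\sigma$, you as induction on the length of a transposition decomposition --- the same argument). No gaps.
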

\begin{proof}
	We have the following equivalencies (where $\FF$ is the pre-nominal set of functions from $A$ to $B$): 
	\begin{center}
		(1)
		\\iff (by the definition of ``supported'')
		$\forall x,y\notin X.\;f[x \!\llra\! y]^\FF = f$
		\\iff (by the definition of swapping for functions)
		$\forall x,y\notin X.\;\forall a\in A.\;f(a[x \!\llra\! y]^\AA)[x \!\llra\! y]^\BB =
		 f\;a$
		\\iff (by the idempotency of swapping in $\BB$)
		$\forall x,y\notin X.\;\forall a\in A.\;f(a[x \!\llra\! y]^\AA) =
		(f\;a)[x \!\llra\! y]^\BB$
	\end{center}
It remains to show that the last property in the above chain of equivalencies is in turn equivalent to (2). It is clearly implied by (2), since it is a particular case of (2) for the permutation $\sigma$	being $x \!\llra\! y$. Conversely, the fact that it implies (2) follows by induction on  the finite set $\supp\,\sigma$ (employing the inductive characterization of finiteness) using the properties of permutation (including that any permutation is a composition of transpositions).  
	\qed
\end{proof}
\smallskip 

Finally, the preservation of the freshness operator (which is required by the notion of $\Sigma_1$-morphism but is not explicitly stated in Thm~\ref{thm-pittsRec}), is implied by commutation with permutation---more precisely, the following holds:
\smallskip 
\begin{mylemma}\rm 
	\label{lem-equivImplFreshnessPres} 
	Let $f: A \ra B$ be a function between two nominal sets 
	$\AA = (A,\_[\_]^\AA)$ and $\BB = (B,\_[\_]^\BB)$ that commutes with permutation (i.e., is equivariant, i.e., is supported by $\emptyset$). Then $\supp^\BB(f\;a) \su \supp^\AA\;a$ for all $a\in A$.   
\end{mylemma}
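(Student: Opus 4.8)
The plan is to reduce the claim to the defining property of the support operator, namely that $\supp^\AA(a)$ is the \emph{smallest} set of variables supporting $a$ (Lemma~\ref{lem-charSupp}). Since, for the same reason, $\supp^\BB(f\,a)$ is the smallest set supporting $f\,a$ in $\BB$, it suffices to show that $\supp^\AA(a)$ itself supports $f\,a$; the desired inclusion $\supp^\BB(f\,a) \su \supp^\AA(a)$ then follows immediately by minimality.

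To show that $\supp^\AA(a)$ supports $f\,a$, I would fix $x,y \in \Var$ with $x,y \notin \supp^\AA(a)$ and compute $(f\,a)[x \llra y]^\BB$ in two moves. First, since $f$ commutes with swapping (equivalently, by Lemma~\ref{lem-funSupp} instantiated with $X = \emptyset$, is supported by $\emptyset$), we have $(f\,a)[x \llra y]^\BB = f\,(a[x \llra y]^\AA)$. Second, because $\supp^\AA(a)$ supports $a$ and $x,y \notin \supp^\AA(a)$, the definition of ``supports'' gives $a[x \llra y]^\AA = a$. Chaining these two equalities yields $(f\,a)[x \llra y]^\BB = f\,a$, which is exactly the assertion that $\supp^\AA(a)$ supports $f\,a$.

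There is no genuinely hard step: once one unfolds the definition of ``supports'' and invokes Lemma~\ref{lem-charSupp}, the argument is a two-line chain of equalities. The only points needing minor care are bookkeeping ones. On the one hand, the lemma statement conflates the equivalent formulations of equivariance (``commutes with swapping'', ``supported by $\emptyset$'', ``commutes with all finitely supported permutations''), but only the swapping form is used here, so no appeal to the permutation version is required. On the other hand, the computation above does not need $x \neq y$: when $x = y$ the swapping $x \llra y$ is the identity permutation, so $a[x \llra y]^\AA = a$ already by $\PmId$, and the conclusion is immediate.
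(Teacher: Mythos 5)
Your proof is correct and follows exactly the paper's argument: reduce to showing that $\supp^\AA(a)$ supports $f\,a$, then combine equivariance with the fact that $\supp^\AA(a)$ supports $a$. The extra remarks about the $x=y$ case and about which formulation of equivariance is used are fine but not needed.
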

\begin{proof}
Thanks to the definition of support, it suffices to check that 
$\supp^\AA\;a$ supports $f\;a$ (in $\BB$). Indeed, assume $x,y\notin \supp^\AA\;a$; then $a[x \!\llra\! y]^\AA = a$, hence $f(a[x \!\llra\! y]^\AA) = f\;a$. And since by equivariance $f(a[x \!\llra\! y]^\AA) = (f\;a)[x \!\llra\! y]^\BB$, we obtain 
 $(f\;a)[x \!\llra\! y]^\BB = f\;a$, as desired. \qed
\end{proof}

\medskip 
This concludes the justification of the fact that $r_1$ is the stripped down version of the perm/free recursor.  

\section{Other examples of nominal recursion}
\label{app-anotherExample}

Next we show some more examples of nominal recursion taken from the literature. 
In all these examples, checking the necessary properties for the target models, i.e., 
the $\Props_i$ properties, is completely routine. 

\begin{exa}\rm 
	\textit{(the size (depth) of a term defined using $r_4$)} \cite{primrecFOAS-Norrish04}
	Consider the task of defining the size function on terms, 
	$\ddepth : \Trm \ra 
	\Nat$. 
	The desired constructor-based recursive clauses are the following: 
	\begin{myitem}
		\item[(i)] $\ddepth\;(\Vr\;x) = 1$
		%
		\hspace*{7ex}
		(ii) $\ddepth\;(\Ap\;t_1\;t_2) = 
	    \ddepth\;t_1  + \ddepth\;t_2 + 1$
		\item[(iii)] 
		$\ddepth\;(\Lm\;x\;t) = 
		\ddepth\;t + 1$
	\end{myitem}
	To make this work, we add clauses describing the intended behavior of $\ddepth$ with respect to swapping and free-variables: 
	\begin{myitem}
		\item[(iv)] $\ddepth\;(t\,[x\sw y]) = \ddepth\;t$
		\item[(v)] $\emptyset \su \FV\;t $  
		(for this particular definition, this clause is vacuous) 
	\end{myitem}
	
	This means 
	organizing the target domain 
	$\Nat$ 
	as a model $\AA = (\Nat,\Vr^\AA,\Ap^\AA,\Lm^\AA,\_[\_\,\sw \_]^\AA,\FV^\AA)$ as follows:
	$$
	\begin{array}{c} 
		\Vr^\AA\;x = 1  \hspace*{7ex}
		\Ap^\AA\;m\;n = m + n + 1 \hspace*{7ex}
		\Lm^\AA\;x\;m = m + 1 
		\\
		m\,[x\sw y]^\AA = m
		\hspace*{11ex}
		\FV^\AA\;m = \emptyset
	\end{array}
	$$
	After checking that $\AA$ satisfies the properties required by $r_4$ 
	(i.e., $\Props_4$) 
	we obtain a unique function $\ddepth$ satisfying clauses (i)--(v). 
\end{exa}

In the following example, we will use as target domain a set $\ETrm$ 
of ``extended terms'', which are defined like terms but with an 
additional constructor $\Ct\;c$, where $c$ ranges over constants from 
a set $C$; we will also assume that $C$ includes $\ap$ and $\lm$. 

\begin{exa}\rm 
	\textit{(HOAS encoding defined using $r_7$)} \cite{DBLP:conf/icfp/PopescuG11}
	Consider the task of defining a function 
	$\enc : \Trm \ra \ETrm$ that encodes terms into extended terms in a higher-order abstract syntax 
	(HOAS) fashion. This is a simplified version of HOAS encodings in logical frameworks such as 
	LF \cite{har-fra}. 
	The desired constructor-based recursive clauses are the following: 
	\begin{myitem}
		\item[(i)] $\enc\;(\Vr\;x) = \Vr\;x$
		%
		\hspace*{7ex}
		(ii) $\enc\;(\Ap\;t_1\;t_2) = \Ap\;(\Ap\;(\Ct\;\ap)\;(\enc\;t_1))\;(\enc\;t_2)$
		\item[(iii)] 
		$\enc\;(\Lm\;x\;t) = \Ap\;(\Ct\;\lm)\;(\Lm\;x\;(\enc\;t))$
	\end{myitem}
	To make this work, we add clauses describing the intended behavior of $\enc$ with respect to substitution and freshness: 
	\begin{myitem}
		\item[(iv)] $\enc\;(t\,[s/x]) = (\enc\;t)[(\enc\;s)/x]$ 
		\item[(v)] $x \,\fresh\, t$ implies $x \,\fresh\, (\enc\;t)$ 
	\end{myitem}
(Both (iv) and (v) have a stand-alone importance for HOAS encodings.) 
	
	This means 
	organizing the target domain 
	$\ETrm$ 
	as a model $\AA = (\ETrm,\Vr^\AA,\Ap^\AA,\Lm^\AA,\_[\_\,/ \_]^\AA,\fresh^\AA)$ 
	where $\Vr^\AA$, $\_[\_\,/ \_]^\AA$ and $\fresh^\AA$ are the usual variable-injection, 
	swapping and freshness on extended terms, 
	and application- and abstraction-like operators are defined as follows:
	\begin{itemize}
		\item[] $\Ap^\AA\;e_1\;e_2 = \Ap\;(\Ap\;(\Ct\;\ap)\;e_1)\;e_2$
		\hspace*{7ex} $\Lm^\AA\;x\;e = \Ap\;(\Ct\;\lm)\;(\Lm\;x\;e)$
	\end{itemize}
	After checking that $\AA$ satisfies the properties required by $r_7$ 
	(i.e., $\Props_7$) 
	we obtain a unique function $\enc$ satisfying clauses (i)--(v). 
\end{exa}

Next we show two examples that use the enhancements 
discussed in App.~\ref{app-addingBacknhancements}. 

\begin{exa}\rm
		\textit{(eta normal form using enhanced $r_4$)}  \cite{primrecFOAS-Norrish04}
	Consider the task of defining the function 
$\enf : \Trm \ra \Bool$ which checks whether a term is in $\eta$-normal form. 
We will write $\isAp : \Trm \ra \Bool$ for the function that checks whether a term is an application, 
and $\getApL, \getApR : \Trm \ra \Trm$ for the functions that return the left- and right- argument 
respectively if the term is an application (otherwise it does not matter, e,.g., they return the term itself); 
thus, $\getApL\;(\Ap\;t_1\;t_2) = t_1$ and $\getApR\;(\Ap\;t_1\;t_2) = t_2$. 
 
The desired constructor-based recursive clauses are the following: 
\begin{myitem}
	\item[(i)] $\enf\;(\Vr\;x) = \Ttrue$
	%
	\hspace*{7ex}
	(ii) $\enf\;(\Ap\;t_1\;t_2) = (\enf\;t_1 \wedge \enf\;t_2)$
	\item[(iii)] 
	$\enf\;(\Lm\;x\;t) = 
	(\enf\;t \wedge (\isAp\;t \wedge \getApR\;t = \Vr\;x \implies x \in \FV(\getApL\;t)))$
\end{myitem}
To make this work, we add clauses describing the intended behavior of $\eta$ with respect to swapping and free-variables: 
\begin{myitem}
	\item[(iv)] $\eta\;(t\,[x\sw y]) = \eta\;t$
	\item[(v)] $\emptyset \su \FV\;t$  
	(again, this clause is vacuous here) 
\end{myitem}

This means 
organizing the target domain 
$\Bool$ 
as a $(X,\Sigma_4)$-model 
$\AA = (\Bool,D = \Trm \times \Bool,\Vr^\AA,\Ap^\AA,\Lm^\AA,\_[\_\,\sw \_]^\AA,\FV^\AA)$ as follows:
$$
\begin{array}{c} 
	\Vr^\AA\;x = \Ttrue  \hspace*{7ex}
	\Ap^\AA\;(t_1,b_1)\;(t_2,b_2) = (b_1 \wedge b_2) 
	\\
	\Lm^\AA\;x\;(t,b) = (b \wedge (\isAp\;t \wedge \getApR\;t = \Vr\;x \lra x \in \FV(\getApL\;t)))
	\\
	(t,b)\,[x\sw y]^\AA = b
	\hspace*{11ex}
	\FV^\AA\;m = \emptyset 
\end{array}
$$
After checking that $\AA$ satisfies the properties required by $r_4$ 
(i.e., is an $(X,\Sigma_4)$-model satisfying $\Props_4$) 
we obtain a unique function $\eta$ satisfying clauses (i)--(v). 
\end{exa}

The above definition takes advantage of the full-recursion enhancement, 
but did not need the Barendregt enhancement. Indeed, the desired properties, 
i.e., $\Props_4$, already hold for the target model 
in the stronger form, non-relativized to the finite set of variables $X$. 
The next definition is the standard situation where the Barendregt enhancement comes handy.


\begin{exa}\rm
	\textit{(substitution defined using enhanced $r_1$)}  \cite{pitts-AlphaStructural} 
	Let $s$ and $y$ be a fixed term and a fixed variable. 
	Consider the task of defining the function $\subst_{s,y}$ that takes any term $t$ and performs the (capture-free) 
	substitution of $s$ for $y$ in $t$. 
	(Thus, $\subst_{s,y}\,t$ will be the same as $t[s/y]$.)
		The desired constructor-based recursive clauses are the following: 
	\begin{myitem}
		\item[(i)] $\subst_{s,y}\,(\Vr\;x)= (\mbox{if $x=y$ then $s$ else $\Vr\;x$})$
	    \item[(ii)] $\subst_{s,y}\,(\Ap\;t_1\;t_2) = \Ap\;(\subst_{s,y}\,t_1)\;(\subst_{s,y}\,t_2)$
		\item[(iii)] 
		$\subst_{s,y}\,(\Lm\;x\;t) = \Lm\;x\;(\subst_{s,y}\,t)$ if $x\notin \FV\,s \cup \{y\}$ 
	\end{myitem}
	To make this work, 
	we add a clause describing the intended behavior of $\subst_{s,y}$ with respect to permutation: 
	\begin{myitem}
		\item[(iv)] $(\subst_{s,y}\;t)[\sigma] = 
		\subst_{s,y}\,(t[\sigma])$ if $\supp(\sigma) \cap (\FV\,s \cup \{y\})= \emptyset$ 
		%
	\end{myitem}
	
	This means taking $X = \FV\,s \cup \{y\}$ and 
	organizing the target domain 
	$\Trm$ 
	as an $(X,\Sigma_1)$-model 
	$\AA = (\Trm,D = \Trm \times \Trm,\Vr^\AA,\Ap^\AA,\Lm^\AA,\_[\_\,\sw \_]^\AA) 
	$ as follows:
	$$
	\begin{array}{c} 
		\Vr^\AA\;x = (\mbox{if $x=y$ then $s$ else $\Vr\;x$}) \hspace*{7ex}
		\Ap^\AA\;(t_1',t_1)\;(t_2',t_2) = \Ap\;t_1\;t_2 
		\\
		\Lm^\AA\;x\;(t',t) = \Lm\;x\;t
		\hspace*{7ex}
		(t',t)\,[\sigma]^\AA = t[\sigma]
	\end{array}
	$$
	After checking that $\AA$ satisfies the properties required by $r_1$ 
	(i.e., is an $(X,\Sigma_1)$-model satisfying $\Props_1$) 
	we obtain a unique function $\subst_{s,y}$ satisfying clauses (i)--(iv). 
\end{exa}

Note that above we made crucial use of the Barendregt enhancement, but have not used 
the full-recursion enhancement (as seen in the fact that the first components of the pairs 
are ignored in the definitions of the model operators).

%



\section{Proof Sketches for the Recursor Results}
\label{app-proofSketches} 

\subsection{Proof idea for the nominal recursion theorems (Thm.~\ref{thm-allNominalRecs})}
\label{app-subsec-proofIdeaRecursionThm}

Our discussion in \S\ref{subsec-nomrecAsEpirecFormally} 
shows how 
$r_1$, $r_4$, $r_6$, $r_7$ and $r_8$ coincide with nominal recursors 
from the literature---so we are in a position to cite these literature results as justification for their share of Thm.~\ref{thm-allNominalRecs}. On the other hand, we cannot cite the literature for the variant epi-recursors $r_2$, $r_3$, $r_5$ and $r_9$. 
\looseness=-1

We will actually prove Thm.~\ref{thm-allNominalRecs} in a uniform way, 
taking advantage of the expressiveness comparisons between these recursors. We will infer the recursion theorems for all nine recursors from those of just two of them. 
We will come back to this in  App.~\ref{app-subsec-backToProofRecThms}.   

\subsection{Proofs of the recursor expressiveness comparison results}
\label{app-subsec-proofsofOthers}

\noindent 
\textbf{More detailed proof of Thm.~\ref{thm-expr}.}   
	When proving each 
	$r_i \geq r_j$, we instantiate Prop.~\ref{prop-extCriterion} 
	taking $r' = r_i$ and $r = r_j$. So here $\Bcat$ is the category of $\Sigmac$-models, $\Ccat'$ that of $(\Sigma_i,\Props_i)$-models, 
	and $\Ccat$ that of $(\Sigma_j,\Props_j)$-models; $R'$ 
	is the forgetful functor from $(\Sigma_i,\Props_i)$-models to $\Sigmac$-models, and $R$ the forgetful functor from $(\Sigma_j,\Props_j)$-models to $\Sigmac$-models; 
	$B=\TTrm(\Sigmac)$, 
	$I'=\TTrm(\Sigma_i)$ and $I=\TTrm(\Sigma_j)$.  
	In each case, we must define a pre-functor $F : \Ccat \ra \Ccat'$  
	such that 
	$R' \circ F = R$ and $F\;I = I'$.  
	This essentially means showing how to 
	transform $(\Sigma_j,\Props_j)$-models into $(\Sigma_i,\Props_i)$-models in such a manner that $\TTrm(\Sigma_j)$ 
	becomes $\TTrm(\Sigma_i)$---which gives $F$'s behavior on objects, while on morphisms $F$ will be the identity. Each time, $F$ will transform models by preserving the carrier set and the constructor-like operators, and possibly defining 
	(1) permutation-like from swapping like operators or vice versa, (2) 
	freshness-like operators from 
	free-variable-like operators, or (3) renaming-like from substitution-like operators; 
	these definitions are done just like for concrete terms (where, e.g., we can standardly define freshness from free-variables). In each case, the only interesting fact that needs to be checked is that $F$ is well-defined on objects: when starting with a $\Sigma_j$-model satisfying $\Props_j$, the result $\Sigma_i$-indeed satisfies $\Props_i$. Everything else 
	amounts to either well-known or trivial properties. Thus, $F\;I = I'$ means that the standard inter-definability properties (1)--(3) hold for terms, e.g., $x \,\fresh\, t$ iff $x \notin \FV\;t$; and $R' \circ F = R$ (i.e., $F$ commutes with the forgetful functors to $\Sigmac$-models) follows immediately from the fact that $F$ does not change the carrier set or the constructor-like operators. 

	Next, we informally discuss these transformations and highlight the intuitions behind them. 
	At the end, we also explain why we believe the stated inequalities are strict (in that the opposite inequalities do not hold), although, with the two exceptions expressed in Props.~\ref{prop-negRec}, 
	we do not yet have  
	proofs 
	 for the strictness  conjectures.  
	
	\textbf{Proof of $r_1 \equiv r_3$:} Recall that $r_1$ is the original nominal-logic recursor and $r_3$ is its variation that uses swapping rather than permutation, the difference being the use of 
	the identity, involutiveness and compositionality properties for swapping (\SwId{}, \SwIv{} and \SwCp{}) rather than the identity and compositionality for permutation (\PmId{} and \PmCp{}). $r_1 \equiv r_3$ holds because  operations with these  properties correspond bijectively and functorially to each other, allowing us to move back and forth between $\Props_1$-models and $\Props_3$-models \cite[Section 6.1]{pitts_2013}. In one direction, starting with an operation $\_[\_]^\MM : M \ra \Perm \ra M$ satisfying \PmId{} and \PmCp{}, we define $\_[\_\sw\_]^\MM : M \ra \Var \ra \Var \ra M$ as its restriction to transposition permutations---and it satisfies SwId, SwIv and 
	\SwCp{}. Conversely, starting with an operation $\_[\_\sw\_]^\MM$ satisfying \SwId{}, \SwIv{} and \SwCp{}, we define $\_[\_]^\MM$ by $m[\sigma] = m[x_1\sw y_1]\ldots[x_n\sw y_n]$ where $(x_1,y_1) \cdot \ldots \cdot (x_n,y_n)$ is any decomposition of $\sigma$ into transpositions; thanks to \SwId{}, \SwIv{} and \SwCp{}, we can prove that this definition is independent of the particular decomposition and that $\_[\_]^\MM$ satisfies \PmId{} and \PmCp{}. 
	
	\textbf{Proof of $r_2 \geq r_1$:} 
	The difference between the two recursors is the following: $r_1$ requires the definability of the free-variable (support) operator from  permutation \FvDPm{},  and the freshness condition for binders \FCB{}. By contrast, $r_2$ requires instead that the free-variable operator is related to the constructor by the usual inductive clauses \FvVr{}, \FvAp{} and \FvLm{}, and is related to permutation via \PmFv{}. 
	%
	In particular, $r_2$ is looser in that it does not require the free-variable operator to be \emph{definable from} permutation, but only to be \emph{related to} permutation by some weaker properties. And indeed, this looseness translates into higher flexibility, because any $(\Sigma_1,\Props_1)$-model can be proved to be in particular a $(\Sigma_2,\Props_2)$-model, more precisely:
	\begin{itemize} 
		\item \PmFv{} follows from \FvDPm{} in the presence of \PmId{}, \PmCp{}; 
		\item \FvVr{}, \FvAp{} and \FvLm{} follow from \FvDPm{} and \FCB{} 
		in the presence of \PmId{}, \PmCp{}, \PmVr{}, \PmAp{}, \PmLm{}.
	\end{itemize} 	
	Thus, $r_2 \geq r_1$ follows from Prop.~\ref{prop-extCriterion} taking $F$ to be the identity functor, more precisely the inclusion functor between the categories 
	of $(\Sigma_1,\Props_1)$-models and $(\Sigma_2,\Props_2)$-models. 
	(Since working with permutation is much heavier than working with swapping, 
	we preferred to do this proof while taking advantage of the permutation-swapping connection. Namely, starting with a $(\Sigma_1,\Props_1)$-model $\MM$, and already knowing from before that $\MM$ with 
	its swapping operator corresponding to permutation satisfies $\Props_3$, 
	we proved that it also satisfies $\SwFv$, 
	as well as $\FvVr,\FvAp$ and $\FvLm$ (thus essentially establishing on the way that $r_4 \geq r_3$).  
	Then, using again the permutation-swapping connection, we inferred $\PmFv$ 
	from $\SwFv$.) 
	
	\textbf{Proof of $r_4\geq r_2$:} This 
	holds essentially for the same reason why $r_3\geq r_1$ holds, i.e., because the 
	restriction to transpositions of a permutation operator satisfying \PmId{} and \PmCp{} will satisfy \SwId{}, \SwIv{} and \SwCp{}, together with the fact \PmFv{} implies \SwFv{} along this transposition-restriction operator.  
		Thus, $r_4\geq r_2$ follows from Prop.~\ref{prop-extCriterion} taking $F$ to be the same functor  
		as that used for $r_3\geq r_1$.   
	
	But note that, this time, we don't have \SwCp{} on the swapping side, in that $r_4$ does not require \SwCp{}---so the converse construction described above when proving $r_1\geq r_1$ (using decomposition into transpositions) does not hold, forbidding us from establishing the converse inequality $r_2 \geq r_4$; and indeed, we later (in  Prop.~\ref{prop-negRec}) prove that $r_2 \not\geq r_4$.  
	This reveals a perhaps unexpected phenomenon: that swapping-based recursors can be strictly more expressive than their permutation-based counterparts, as is indeed the case of $r_4$ versus $r_2$ (though not of $r_3$ versus $r_1$). 
	
	\textbf{Proof of $r_5 \geq r_4$:} This follows by applying 
	Prop.~\ref{prop-extCriterion} with the functor $F$ that transforms the 
	free-variable operator into a freshness operator using negation. Indeed, save for the straightforwardly corresponding free-variable operator's 
  properties \FvVr{}, \FvAp{} and \FvLm{} and their freshness counterparts 
	 \FrVr{}, \FrAp{} and \FrLm{}, 
	the only difference between $r_4$ and $r_5$ is the replacement of 
	\SwFv{} 
	with 
	\SwBvr{}.
	And the latter follows from the former in the presence of \FvLm{}.\footnote{\citet{primrecFOAS-Norrish04}'s original recursor also assumed, for the swapping-like operator, equivariance (\textsf{FvSw}) and two algebraic properties (\textsf{SwId}, \textsf{SwIv}), but these turn out to not be needed for his recursion theorem to hold. We discovered this redundancy when proving $r_5 \geq r_4$ and realizing that \textsf{FvSw}, \textsf{SwId}, \textsf{SwIv} are not needed to reduce $r_4$ to $r_5$
		---which means that $r_5$ can ``lend'' its recursion theorem to this axiom-lighter version of  $r_4$.} 
	Going in the other direction, 
	namely proving $r_4 \geq r_5$, does not seem possible: \SwBvr{} is a (more abstract) weaker assumption than 
	\SwFv{}, 
	even in the presence of all the other assumptions. 
	
	\textbf{Proof of $r_6 \geq r_5$:} This follows from the fact that, in the presence of the other axioms in $\Props_5$, 
	the bound-variable renaming property \SwBvr{} implies the congruence property \SwCg{} (though not the other way around). So again we use an identity functor $F$, more precisely the inclusion functor between the categories 
	of $(\Sigma_5,\Props_5)$-models and $(\Sigma_5,\Props_5)$-models. 
	
	\textbf{Proof of $r_9 \geq r_7$:} This follows by applying 
	Prop.~\ref{prop-extCriterion} with the functor $F$ that transforms the 
	substitution operator of a $(\Sigma_7,\Props_7)$-model $\MM$ into a renaming operator just like this is done for terms, by inserting the free-variable injection operator into the second argument: 
	$m[y/x]^\MM = m[(\Vr^\MM\;y)/x]^\MM$. 
	This yields a model satisfying $\Props_9$
	because the axioms for substitution straightforwardly imply those for renaming (though not the other way around, since substitution requires additional structure). 
	
	\textbf{Proof of $r_9 \geq r_8$:} The proof here takes advantage of the fact that, in a constructor-enriched renset (structures axiomatizing renaming that form the basis of recursor $r_8$), freshness is definable from renaming \cite{DBLP:conf/cade/Popescu22} in several equivalent ways, 
	including via \FrDRn{}; and with this definition, the $\Props_9$ properties 
	follow from $\Props_8$.  So $r_9 \geq r_8$ is proved using the functor that takes any $(\Sigma_8,\Props_8)$-model to a $(\Sigma_9,\Props_9)$-model that has the same carrier set, constructors and renaming operator, and has freshness defined via \FrDRn{}. 
	\qed

\ \par 
One may wonder whether any relation can be established between the strength of swapping-based recursors on the one hand, and renaming- or substitution-based recursors on the other hand. The answer seems to be negative: Substitution-like operators $\_[\_/\_]^\MM : M \ra {M} \ra \Var \ra M$ are structurally more complex than swapping-like operators 
$\_[\_\sw\_]^\MM : M \ra {\Var} \ra \Var \ra M$
as they (intuitively) refer to the replacement not of variables for variables, but of entities from the models for variables; so there seems to be no hope of defining the former from the latter in a freshness-swapping model; and it seems that not even renaming-like operators can be defined from swapping-like operators, since the latter but not the former preserve the free variables. 

Conversely, defining a swapping-like operator in a substitution-based or renaming-based model seems  superficially more plausible. However, as far as we see, the only way to achieve this while ensuring the required properties for swapping would be along the lines of the standard trick of 
employing an additional fresh variable, namely picking a fresh $z$ and defining 
$m[x \sw y]^\MM = m[z / x]^\MM [x / y]^\MM [y / z]^\MM $. But this does not work since the substitution-based and renaming-based 
models do not guarantee the existence of fresh variables---and adding axioms guaranteeing that would severely restrict the recursors' expressiveness. 
Thus, the swapping/permutation-based recursors and substitution/renaming-based recursors seem incomparable w.r.t.\ expressiveness (that is, using the ``head-to-head'' comparison relation $\geq$; but this situation changes when we switch to the laxer comparision $\wgeq$, as discussed in 
\S\ref{subsec-moreGentle}).  

\smallskip 
\ \\
\textbf{Proof of Prop.~\ref{prop-WeakExtCriterion}.}     
%
%
%
%
%
(See Fig.~\ref{fig-critQuasiStrongerEpirec} from the main paper.)  
Assume $g:T \ra B$ is definable by $r$, meaning that $g = R\;\im_{I,C}$ for some $C$ in $\Ccat$. 	
Let $g_0 = R\;\im_{I,o_1(C)}$. 	
Because $R$ preserves the initial segments and $B = R\;C$, we have 
that $o(B) = R\;o_1(C)$, hence $g_0 : T \ra o(B)$.
We must show that (i) $g_0$ is $r'$-definable and (ii) $g = m(B) \;\circ\; g_0$.   

To show (i), let $C' = F\;o_1(C)$.  From $R' \circ F = R_{\restr\Ccat_0}$, we have $R'\;C' = R'\;(F\;o_1(C)) = R\;o_1(C) = o(B)$.  
Moreover, by the initiality of $I'$ and the fact that $F\;I = I'$, we have $\im_{I',C'} = \;\im_{I',F\;o_1(C)} = F\;\im_{I,o_1(C)}$. 
Hence, using that $R' \circ F = R_{\restr\Ccat_0}$, we have $R'\;\im_{I',C'} = R'\;(F\;\im_{I,o_1(C)}) = R\;\im_{I,o_1(C)} = g_0$, which 
proves  
(i). 

Next we show (ii). 
By the initiality of $I$, we have $m_1(C) \,\circ \;\im_{I,o_1(C)}  = \;\im_{I,C}$; hence, by the functoriality of $R$, we have 
$R\;m_1(C) \circ R\;\im_{I,o_1(C)}  = R\;\im_{I,C}$. Hence, 
since $R$ preserves the initial segments which implies $R\;m_1(C) = m(R\;C) = m(B)$, we obtain $m(B) \circ g_0 = g$, as desired. 
\qed

\ \\
\textbf{More detailed proof of Thm.~\ref{thm-qexpr}.}     	
		When proving each 
		$r_i \wgeq r_j$, we instantiate Prop.~\ref{prop-WeakExtCriterion} 
		taking $r' = r_i$ and $r = r_j$. So here $\Bcat$ is the category of $\Sigmac$-models, $\Ccat'$ that of $(\Sigma_i,\Props_i)$-models, 
		and $\Ccat$ that of $(\Sigma_j,\Props_j)$-models; $R'$ 
		is the forgetful functor from $(\Sigma_i,\Props_i)$-models to $\Sigmac$-models, and $R$ the forgetful functor from $(\Sigma_j,\Props_j)$-models to $\Sigmac$-models; 
		$B=\TTrm(\Sigmac)$, 
		$I'=\TTrm(\Sigma_i)$ and $I=\TTrm(\Sigma_j)$.  
		
		We define the initial segment $(\Bcat_0,(m(\AA):o(\AA)\ra \AA)_{\AA \in \Obj{\Bcat}})$ of $\Bcat$ as follows: For any $\Sigmac$-model $\AA$ of carrier $A$ we take 
		$o(\AA)$ to be its minimal submodel (subalgebra), 
		i.e., the one generated by 
		$\Vr^\AA$, $\Ap^\AA$ and $\Lm^\AA$; we take $m(\AA): o(\AA) \ra \AA$ to be the inclusion morphism; and we take $\Bcat_0$ to be the full subcategory given by constructor-generated models. 
		Each time, we will define the initial segment $(\Ccat_0,\alb(m_1(\MM):o_1(\MM)\ra \MM)_{\MM \in \Obj{\Ccat}})$ 
		so that, for each $(\Sigma_j,\Props_j)$-model $\MM$, 
		$o_1(\MM)$ is a submodel of $\MM$ whose carrier is generated by the constructors ($\Vr^\MM$, $\Ap^\MM$ and $\Lm^\MM$) and will have the other operators from $\Sigma_j$ defined in specific ways; and $\Ccat_0$ will be the full subcategory given by the objects $o_1(\MM)$. 
		This way, it will be guaranteed that $R$ preserves initial segments. 
		\looseness=-1
		
		\textbf{Let us first consider the $\simeq$-chain going from $r_1$ to $r_6$.} 
		Since, by Thm.~\ref{thm-expr}, $r_6$ (the swap/fresh recursor), is the strongest w.r.t.\ $\geq$, and $r_1$ (the perm/free recursor) and $r_3$ (the swap/free variant recursor) are the weakest and are equivalent with each other, 
		and because the relation $\wgeq$ is weaker than $\geq$, 
		it suffices to prove $r_3 \wgeq r_6$.  
		
		\textbf{Proof of $r_3 \wgeq r_6$:} 
		To satisfy the conditions of Prop.~\ref{prop-WeakExtCriterion}, we need to define an initial segment of the category of $(\Sigma_6,\Props_6)$-models and an initial-segment-preserving pre-functor to the category of $\Sigmac$-models. 
		
		To achieve this, it turns out that the natural route goes through the properties of the swap/free recursor $r_4$---not only those in the axiom-lighter version that we are considering, but also the additional properties used by \citet{primrecFOAS-Norrish04}'s original recursor (\FvSw{}, \SwId{}, \SwIv{}).  
		We proceed as follows: Given a $(\Sigma_6,\Props_6)$-model $\MM$ of carrier $M$, we 
		define a submodel $\MM'$ of $\MM$ on the subset $M'$ of $M$ generated by the constructor-like operations $\Vr^\MM$, $\Ap^\MM$ and $\Lm^\MM$ 
		and prove that, modulo the translation between freshness and freeness (via negation), $\MM'$ is a $(\Sigma_4,\Props_4)$-model. 
		The operations on $\MM'$ should of course be inherited from the ones of $\MM$. 
		For this to be possible, we first need that $M'$ is closed under the $\Sigma_6$-operations. By definition it is closed under the constructor-like operations and, 
		thanks to $\MM$ satisfying \SwVr{}, \SwAp{} and \SwLm{}, it is also closed under the swapping-like operation.   Now, we take the freshness predicate on $\fresh^{\MM'}$ to be 
		defined in the style of nominal logic, i.e., by 
		\FrDSw{}: 
		 given $x\in\Var$ and $m\in M'$, we define 
		$x \,\fresh^{\MM'} m$ to mean that $m[x \wedge y]^{\MM'} \not= m$ for only a finite number of variables $y$.
		
		Alternatively, we could have defined $\fresh^{\MM'}$ inductively by the clauses \FrVr{}, \FrAp{} and \FrLm{}, which would make $\MM'$ the minimal submodel of $\MM$. Indeed, these two definitions will eventually turn out to be equivalent, but our proof needs to follow a delicate sequence of steps which require that we start with the nominal-like definition. For now, let us write $\ifresh^{\MM'}$ for this alternative, inductively defined freshness predicate.  Note that $\ifresh^{\MM'}$ is smaller than the restriction of $\fresh^{\MM}$ to $M'$. 
		
		To establish that $\MM'$ is a submodel of $\MM$, it remains to prove that, for items in $M'$, $\fresh^{\MM'}$ is smaller than $\fresh^{\MM}$. We will actually prove the stronger statement that $\fresh^{\MM'}$ is smaller than $\ifresh^{\MM'}$. The proof proceeds by expanding the definition of $\fresh^{\MM'}$, and needs that \FrDSw{} and \FrSw{} hold for 
		$\ifresh^{\MM'}\!$.  Both these last properties follow by induction on the definition of $\ifresh^{\MM'}\!$. (Note that a direct proof that $\fresh^{\MM'}$ is smaller than $\fresh^{\MM}$ would not work along the same lines, since neither \FrDSw{} and \FrSw{}  are guaranteed to hold for the restriction of $\fresh^{\MM}$ to $M'$; the tighter predicate $\ifresh^{\MM'}$ is actually needed.)
		
		So $\MM'$ is a submodel of $\MM$. We now need to prove that 
		$\MM'$ is a $(\Sigma_6,\Props_6)$-model. 
		$\MM'$ satisfies \SwVr{}, \SwAp{}, \SwLm{} because these are equations and its supermodel $\MM$ satisfies them.\footnote{Note that 
		we cannot say the same for \textsf{FrVr}, \textsf{FrAp} and \textsf{FrLm}.}
		That $\MM'$ satisfies \FrVr{} and \FrAp{} follows easily by applying the definition of $\fresh^{\MM'}$. 
		That $\MM'$ satisfies \SwCg{} follows immediately from $\MM$ satisfying \SwCg{}  and $\fresh^{\MM'}$ being included in $\fresh^{\MM}$. 

		For proving the next facts, we need to make heavy use of the fact that $\MM'$ satisfies the structural properties of swapping, \SwId{}, \SwIv{} and \SwCp{}. 
		All these three follow by easy induction on the definition of $\MM'$ and the fact that $\MM'$ satisfies \SwVr{}, \SwAp{}, \SwLm{}. 
		
		To prove that $\MM'$ satisfies \FrLm{}, we first prove that $x \,\fresh^{\MM'}\, \Lm^{\MM}x\,d$ holds for all $d\in M'$, in particular, that $\MM'$ satisfies \FCB{}.  
		A prerequisite for the latter is that 
		$\MM'$ satisfies \FrSw{}, which 
		follows from the definition of $\fresh^{\MM'}$ together 
		with the fact that $\MM'$ satisfies \SwCp{} and \SwIv{}. 
		Now, that $\MM'$ satisfies \FrLm{}  almost (but not quite) follows 
		from 
		\SwCg{} (for $\fresh^{\MM'}$); we actually need a stronger version of \SwCg{} to hold for $\fresh^{\MM'}$, namely: For all $x,x',z\in\Var$ and $m,m'\in M'$, if ($z=x$ or $z\fresh^{\MM'} m$), ($z=x'$ or $z\fresh^{\MM'} m'$) 
		and $m[z \sw x]^{\MM'} = m[z \sw x']^{\MM'} $, then $\Lm^{\MM'}x\,m = \Lm^{\MM'}x'\,m'$. 
		To prove the latter, we need to cover the cases not covered by \SwCg{}, namely 
		(1) that of $z=x=x'$, which follows from the fact that $\MM'$ satisfies \SwId{} and 
		(2) that of $z =x$ and $z\,\fresh^{\MM'} m'$ (and a case symmetric to it), 
		which needs that $\MM'$ satisfies 
		\SwFr{} and 
		\FrDSw{}, i.e., that \FrDSw{} holds for $\fresh^{\MM'}$ 
		(whereas so far we only know that it holds for $\ifresh^{\MM'}$). 
		That $\MM'$ satisfies 
		\SwFr{} follows from the definition of $\fresh^{\MM'}$ together 
		with the fact that $\MM'$ satisfies \SwCp{}. 
		That $\MM'$ satisfies 
		\FrDSw{} follows from the fact that (3) \FrDSw{} holds for
		$\fresh$ (the predicate on terms), and that (4) the unique $\Sigma_6$-morphism $g: \TTrm(\Sigma_6) \ra \MM'$ guaranteed by the initiality of $\TTrm(\Sigma_6)$ 
		has its image included in $M'$ and it turns out to preserve freshness not only in the form ``$x\fresh t$ implies $x\fresh^{\MM'}g\,t$'', but in the stronger form 
		``$x\fresh t$ iff $x\fresh^{\MM'}g\,t$''.  Fact (4) follows from applying 
		\FrDSw{} to both $\fresh$ and $\fresh^{\MM'}$ and using that $g$ commutes with swapping. 
		
		In summary, along the above route, we proved that $\MM'$ is a submodel of $\MM$ that satisfies the properties in $\Props_6$, meaning that the inclusion function is a morphism between $\MM'$ and $\MM$ in the category of $(\Sigma_6,\Props_6)$-models, as desired. On the way, we also proved that 
		$\MM'$ satisfies \FCB{}, \SwIv{}, \SwId{}, \SwCp{}, \FrSw{}, \SwFr{} and 
		\FvDSw{}. 
		So, in the notations of Prop.~\ref{prop-WeakExtCriterion}. we take $o_1(\MM)$ to be $\MM'$ and we take $m_1(\MM): o(\MM) \ra \MM$ to be the inclusion morphism.  The forgetful functor between $(\Sigma_6,\Props_6)$-models and $\Sigmac$-models ($R$ in Prop.~\ref{prop-WeakExtCriterion}'s notations) is easily seen to be initial-segment preserving. 
		
		We define Prop.~\ref{prop-WeakExtCriterion}'s operator $F$ to take any model $\MM$ to the model $F\;\MM$ that replaces $\fresh^{\MM}$ with $\FV^{\MM}$ standardly defined from $\fresh^{\MM}$ and to be the identity on morphisms. 
		We must show that $F\;\MM'$ satisfies the properties in $\Props_3$. This is true, because \SwVr{}, \SwAp{} and \SwLm{} are already in $\Props_6$ and all the other properties in $\Props_3$ have already been proved above. 
		It is easily seen that $F$ is a pre-functor (a functor actually) such that $R' \circ F = R_{\restr\Ccat_0}$ and 
		$F\;\TTrm(\Sigma_6) = \TTrm(\Sigma_3)$, as required by Prop.~\ref{prop-WeakExtCriterion}. 
		This concludes the proof of $r_3 \wgeq r_6$.%
	
		(Note that, on the way to proving that our constructed submodel $\MM'$ satisfies $\Props_6$
			and (via the freshness to free-variable translation) satisfies $\Props_3$, we actually proved that $\MM'$ satisfies (again via the translation) all properties of $\Props_4$ as well. So a natural question is whether the above-sketched fairly intricate proof could not be made more modular along a $r_3 \wgeq r_4 \wgeq r_6$ relationship, i.e., split into (1) a proof that any $\Props_6$-model produces a $\Props_4$-submodel and (2) any $\Props_4$-model produces a $\Props_3$-submodel. In particular, for proving (1) one may hope to avoid defining the nominal-style freshness operator and work with $\ifresh^{\MM}$ instead. After some trial and error, we came to believe that the answer to the above questions is 
			`no'. It seems that the route through nominal-style freshness is necessary even for constructing a $\Props_4$-submodel. In particular, using 
			$\ifresh^{\MM}$ works for everything except for proving the satisfaction of 
			\SwFr{} (translated as \SwFv{}). Indeed, \SwFv{} has \emph{two} hypotheses involving freshness, and we must induct on one of them; and, unlike when we work in the term model (where the constructors are ``almost free''), here we do not have any well-behaved inversion rules 
			corresponding to \FrVr{}, \FrAp{} and \FrLm{} 
			to apply to the other freshness hypothesis,  which seem necessary to make the proof go through; for example, we cannot infer $z\not= x$ from $z \,\ifresh^{\MM}\, \Vr\,x$.  
			In conclusion, if we want to build a submodel satisfying $\Props_4$, we seem to actually 
			need to build one that satisfies the  stronger properties $\Props_3$.)  

		\textbf{Let us now consider the $(\simeq,\wgeq)$-chain going from $r_{6}$ to $r_7$.} 
		Since, by Thm.~\ref{thm-expr}, we have that $r_9 \geq r_8,r_7$, 
		and again since $\wgeq$ is weaker than $\geq$, all we have to prove are two relationships: 
		$r_8 \wgeq r_9$ and 
		$r_6 \wgeq r_8$.
		
		\textbf{Proof of $r_8\wgeq r_9$:}  
		Similarly to the previous proof, 
		given a $(\Sigma_{9},\Props_{9})$-model $\MM$ of carrier $M$, we 
		define a submodel $\MM'$ of $\MM$ on the subset $M'$ of $M$ generated by the constructor-like operations $\Vr^\MM$, $\Ap^\MM$ and $\Lm^\MM$. 

		Then we prove that $f: \TTrm(\Sigma_{9}) \ra \MM$, the unique $\Sigma_{9}$-morphism ensured by the initiality of  
		$\TTrm(\Sigma_{9})$, has its image equal to $M'$, i.e., is also a morphism between $\TTrm(\Sigma_{9})$ and $\MM'$. (The proof goes smoothly: one direction by induction on terms using depth (size) as measure and the other direction by induction on the definition of $M'$.) This connection between $\TTrm(\Sigma_{9})$ and $\MM'$ will allows us to ``borrow'' any equation from terms to elements of $\MM'$; in what follows, we will refer to it as ``the term-model connection''. 
		
		So we define $M'$ to be the subset of $M$ (the carrier of $\MM$) generated by the constructor-like operations. 
		To organize $M'$ into a submodel $\MM'$ of $\MM$, we need that $M'$ is closed under the renaming operator $\_[\_\,/^{\MM'}\!\!\_]$, which follows from the term-model connection. Now we can define the model $\MM'$ to be formed on $M'$ using the restrictions of the $\MM$ operations and of the $\MM$ freshness predicate. 
		By virtue of being a full submodel of $\MM'$ (i.e., a submodel where freshness is the restriction to $M'$ of 
		the freshness predicate from the supermodel, and not just a subset of it)
		and all the properties in  $\Props_{9}$ being Horn clauses, it follows that $\MM'$ is a $(\Sigma_{9},\Props_{9})$-model. So we take, as before, $o_1(\MM) = \MM'$ and $m_1(\MM)$ to be the inclusion morphism, and the forgetful functor is easily seen to preserve the initial segments. 
		
		We define $F$ by taking $F\;\MM'$ to be the restriction of $\MM'$ obtained by forgetting the freshness operator, and taking $F$ on morphisms to be the identity. Since all the properties of $\Props_8$ are unconditional equations, we can prove that $F\,\MM'$ satisfies all of them along the term-model connection, from the corresponding properties on terms. 
		Finally, (as before) it is easily seen that $F$ is a functor such that $R' \circ F = R_{\restr\Ccat_0}$ and 
		and 
		$F\;\TTrm(\Sigma_{9}) = \TTrm(\Sigma_8)$. 
		This concludes the proof of $r_8 \wgeq r_9$.  
		
		\textbf{Proof of $r_6 \wgeq r_8$:}  
		Similarly to the previous cases, 
		given a $(\Sigma_8,\Props_8)$-model $\MM$ of carrier $M$, we 
		define a submodel $\MM'$ of $\MM$ on the subset $M'$ of $M$ generated by the constructor-like operations $\Vr^\MM$, $\Ap^\MM$ and $\Lm^\MM$. 
		%
		%
		Like in the previous proof, we establish a term-model connection, which immediately gives us that $\MM'$ is a submodel of $\MM$; and $\MM'$ satisfies $\Props_8$ because its supermodel $\MM$ does and $\Props_8$ consists of equations. 
		
		It remains to define a swapping-like operator on $\MM'$ and prove that it forms a $(\Sigma_6,\Props_6)$-model. To this end, we first define 
		a freshness-like operator on $M'$ in the style of nominal-logic, but using renaming rather than swapping \cite{DBLP:conf/cade/Popescu22}: $x\,\fresh^{\MM'}\,m$ iff $\{y \mid m[y/^{\MM'}x] \not= m\}$ is finite. 
		Next we prove that $\fresh^{\MM'}$ has finite support---by induction on the definition of $M'$. This allows us to define a ``$\pickFresh$'' 
		operator that, given any lists of variables and elements of $M'$, produces a variable that is fresh for all items in the lists. 
		Now, a swapping-like operator $\_[\_\wedge^{\MM'}\!\!\_]$ is defined as follows: 
		$$
		\begin{array}{c} 
		d[z_1 \wedge^{\MM'} z_2] = 
		d\,[y /^{\MM'} z_1]\,[z_1 /^{\MM'} z_2]\,[z_2 /^{\MM'} y]
		\\
		\mbox{ \ where $y = \pickFresh\;[z_1,z_2]\;[d]$} 
		\end{array} 
		$$
		(So we define $F\,\MM'$ to be the $\Sigma_6$-model obtained from $\MM'$ by replacing $\_[\_\,/^{\MM'}\!\!\_]$ with $\_[\_\wedge^{\MM'}\!\!\_]$ in $\MM'$, and we take $F$ to be the identity on morphisms.) 
		In other words, the definition proceeds just like we would define swapping from renaming on terms. 
		Now the only question is whether we have enough assumptions on our abstract $(\Sigma_8,\Props_8)$-model $\MM'$ in order to infer the properties of swapping from those of substitutions, like we could for concrete terms (hence prove that $F\,\MM'$ 
		satisfies $\Props_6$). 

		The rest of our proof consists of building a positive answer to this question. 
		To help with the proofs, we first establish a fresh induction principle for $\MM'$ similar to the nominal-logic one for terms \cite{pitts-AlphaStructural}. While this takes us a long way, it is not able to prove the following property stating that the definition of swapping is independent of the chosen fresh representative (which in turn is crucial for proving that swapping commutes with $\Ap$ and $\Lm$, i.e., that $\SwAp$ and $\SwLm$ hold for $\F\;\MM'$), namely: 
		\begin{center}
			For all $m\in M'$ and $y,y',z_1,z_2 \in \Var$,  
			\\
			if $y,y' \not\in \{z_1,z_2\}$ and $y,y' \fresh^{\MM'} m$, then 
			$$
			\begin{array}{c} 
			d\,[y /^{\MM'} z_1]\,[z_1 /^{\MM'} z_2]\,[z_2 /^{\MM'} y] \;=
			\\
			d\,[y' /^{\MM'} z_1]\,[z_1 /^{\MM'} z_2]\,[z_2 /^{\MM'} y'] \phantom{\;=}
			\end{array} 
			$$
		\end{center}  
		The reason why fresh induction on $\MM'$ cannot prove this property (unlike fresh induction on terms which could prove its term counterpart) is that, because of the freshness assumptions $y,y' \fresh^{\MM'} m$, we would need to apply inversion rules for freshness w.r.t.\ constructors (e.g., infer 
		$y \,\fresh^{\MM'} m_1$ from $y \,\fresh^{\MM'} \Ap^{\MM'}m_1m_2$ ), which hold for terms but not for $\MM'$. And doing some kind of induction on the freshness assumption (after proving the minimality of $\fresh^{\MM'}$) does not help either, since there are two such assumptions, and one of them would still need an inversion rule. 
		
		To deal with this difficulty, we rephrase the above definition by eliminating freshness completely.  Namely, first we prove that (*) $y\, \fresh^{\MM'} m[u/y]$ whenever $u \not= y$. This allows us to rephrase the above choice-independence property as an equation:\footnote{We display this highlighting the main additions, and crossing out what has been removed.}
		\begin{center} 
			For all $m\in M'$ and $y,y',z_1,z_2,\hlt{u} \in \Var$, 
			\\
			if $y,y' \not\in \{z_1,z_2,u\}$  
			\sout{and $y,y' \fresh^{\MM'} m$}, then 
			$$
			\begin{array}{c}
			d\,\hlt{[u /^{\MM'} y]}\,\hlt{[u /^{\MM'} y']}\,[y /^{\MM'} z_1]\,[z_1 /^{\MM'} z_2]\,[z_2 /^{\MM'} y] \\=\\
			d\,\hlt{[u /^{\MM'} y]}\,\hlt{[u /^{\MM'} y']}\,[y' /^{\MM'} z_1]\,[z_1 /^{\MM'} z_2]\,[z_2 /^{\MM'} y'] 
			\end{array} 
			$$
		\end{center} 
		
		Indeed, the previous version follows from this one, using (*). Now, this version \emph{can} be inferred from the term-model connection, using the corresponding property for terms. 
		
		After overcoming this difficulty, the desired properties in $\Props_6$ all follow smoothly either using the term-model connection or by fresh induction. Again, it is easily seen that $F$ is a functor such that $R' \circ F = R_{\restr\Ccat_0}$ and 
		$F\;\TTrm(\Sigma_8) = \TTrm(\Sigma_6)$. This concludes the proof of $r_6 \wgeq r_8$.  
		\qed

\ \\
\textbf{Proof of Prop.~\ref{prop-negRec}.}  
Recall that $\Sigmac = \{\vr,\ap,\lm\}$ is the constructor signature and all 
signatures $\Sigma_i$ extend $\Sigmac$. For proving $r_i \not\geq r_j$, we must provide a $(\Sigma_j,\Props_j)$-model $\MM$ for which the $\Sigmac$-reduct (i.e., the $\Sigmac$-model obtained by forgetting the operators from $\Sigma_j \setminus \Sigmac$) cannot be the $\Sigmac$-reduct of any $(\Sigma_i,\Props_i)$-model. 
Indeed, this would mean exhibiting a $\Sigmac$-morphism 
(namely the reduct of the unique morphism defined by $r_j$ using $\MM$) 
that is $r_j$-definable but not $r_i$-definable. 

\smallskip
We write $\Var^\Nat$ for the set of streams of variables, i.e., families $\xs = (\xs_i)_{i\in \Nat}$ with $\xs_i \in \Var$. Given $\xs \in \Var^\Nat$, $y\in\Var$ and $\sigma\in\Perm$, we write: 
$\Vars\;\xs$ for the set of all variables appearing in $\xs$, $\{\xs_i \mid i \in \Nat\}$; 
$\rem_{y}\,\xs$ for the stream obtained from $\xs$ by removing from it all occurrences of $y$; 
$\map_\sigma\,\xs$ for the stream obtained by mapping $\sigma$ on $\xs$, 
$(\sigma(\xs_i))_{i\in \Nat}$. 

To prove $r_1 \not\geq r_2$, recall that $\Sigma_1 = \Sigma_2 = \Sigmac \cup \{\pm,\fv\}$. We take the $(\Sigma_2,\Props_2)$-model $\MM$ to have as carrier the set $M = \Trm \cup A$, where $A = \{\xs \in \Var^\Nat \mid \Vars\;\xs \,\mbox{ is  infinite}\}$. Note that both $\rem_y$ and $\map_\sigma$ (for $\sigma\in \Perm$) preserve the property that $\Vars\;\xs$ is infinite. Let $(t_i)_{i\in \Nat}$ be a family of terms such that all are ground ($\FV\;t_i = \emptyset$)
and mutually distinct ($i\not=j$ implies $t_i \not=t_j$). (For example, we can take $t_i$ to be $(\Lm\;x)^i\,(\Vr\;x)$ for some fixed variable $x$.)

We define $\MM$'s operators on $M$ by extending the standard term operators from $\Trm$ as follows, for any $\xs \in A$: 
\\
\hspace*{-0.5ex}
\begin{tabular}{cc}
	\begin{tabular}{l} 
		$\bullet$ $\FV^\MM \xs = \Vars\;\xs$
		\\
		$\bullet$ $\Lm^\MM\,y\;\xs = \rem_y\,\xs$ for any $y\in\Var$
		\\
		$\bullet$ $\Ap^\MM\;\xs\;t_i = \Vr\;\xs_i$ for any $i\in \Nat$  
	\end{tabular} 
	&
	\hspace*{-0.5ex}
	\begin{tabular}{l} 
		$\bullet$ $\Ap^\MM\;\xs\;m = t_0$ for any $m \in M \sm \{t_i \mid i \in \Nat\}$
		\\
		$\bullet$ $\Ap^\MM\;s\;\xs = t_0$ for any $s \in \Trm$ 
		\\
		$\bullet$ $\xs[\sigma]^\MM = \map_\sigma\,\xs$ for any $\sigma\in\Perm$
	\end{tabular}
\end{tabular} 

Note that, on $A$, the free-variable-like and abstraction-like operators are 
 natural, in particular $\Lm^\MM$ removes all occurrences of the abstracted variable. On the other hand, the application-like operator is contrived: the only interesting case is  $\Ap^\MM\;\xs\;t_i$, where application emulates the $i$'th projection, retrieving the $i$'th element of the stream $\xs$; in the other cases application simply returns the ground term $t_0$.  We can check that $\MM$ thus defined satisfies the $\Props_2$ properties (they are known to hold form terms, so it remains to check these properties when elements $\xs$ of $A$ are involved):
\begin{itemize}
	\item \PmId{}, \PmCp{} and \PmFv{} hold on $A$ thanks to standard properties of $\map$ and $\Vars$ for streams. 
	\item For \PmVr{} and \FvVr{} there is nothing to check because they does not involve elements of $A$.
	\item \PmLm{} on $A$ 
	means $\map_\sigma(\rem_y\,\xs) = \rem_{(\sigma\,y)}(\map_\sigma \xs)$, again a standard property on streams. 
	\item To check \PmAp{}, we distinguish between three cases (according to the above definition of $\Ap^\MM$ on arguments involving items from $A$):
	\begin{itemize}
		\item $\Ap^\MM\,(\xs[\sigma]^\MM)\;(t_i[\sigma]^\MM) = 
		\Ap^\MM\,(\map_\sigma\,\xs)\;t_i = \Vr\,(\map_\sigma\,\xs)_i = \Vr\,(\sigma\;\xs_{i}) = (\Vr\,x_i) [\sigma] = (\Vr\,x_i) [\sigma]^\MM = 
		(\Ap^\MM\,\xs\,t_i)[\sigma]^\MM$. 
		\item  Assume $m \in M \sm \{t_i \mid i \in \Nat\}$, and note that we also have 
		$m[\sigma]^\MM \in M \sm \{t_i \mid i \in \Nat\}$. Then:
		$\Ap^\MM\,(\xs[\sigma]^\MM)\;(m[\sigma]^\MM)  = t_0 = t_0[\sigma] = t_0[\sigma]^\MM = (\Ap^\MM\,\xs\;m)[\sigma]^\MM$. 
		\item $\Ap^\MM\,(s[\sigma]^\MM)\;(\xs[\sigma]^\MM) = \Ap^\MM\,(s[\sigma])\;(\xs[\sigma]^\MM) = 
		t_0 = t_0[\sigma] = t_0[\sigma]^\MM = $ \\$(\Ap^\MM\,s\;\xs)[\sigma]^\MM$. 
	\end{itemize} 
	\item \FvLm{} on $A$ means $\Vars\,(\rem_{y}\,\xs) \su \Vars\;\xs \sm \{y\}$, which actually 
	holds for streams in equality form.  
	\item Finally, \FvAp{} holds trivially in cases where $\Ap^\MM$ returns $t_0$, since $\FV\,t_0 = \emptyset$; and for $\Ap^\MM\,\xs\,t_i$, it amounts to $x_i \in \Vars\;\xs$.  
	\looseness=-1
\end{itemize}

It remains to check that the $\Sigmac$-reduct of $\MM$, i.e., $\Trm \cup A$ equipped with the above-defined constructor-like operators, cannot be the reduct of any $(\Sigma_1,\Props_1)$-model, i.e., there is no way to define the operators 
$\_[\_]'$ and $\FV'$ on $\Trm \cup A$ that, together with $\Vr^\MM$, $\Ap^\MM$ and $\Ap^\MM$, make it a $(\Sigma_1,\Props_1)$-model.  So let us assume otherwise, 
i.e., that such operators $\_[\_]' : (\Trm \cup A) \ra \Perm \ra (\Trm \cup A)$ and $\FV' : (\Trm \cup A) \ra \Pow(\Var)$ exist, 
and reach a contradiction. 

We note that $\_[\_]'$ is uniquely determined on $\Trm$ because of \PmVr{}, \PmAp{} and \PmLm{}, so on $\Trm{}$ it must coincide with the standard permutation operator. 
Moreover, for each $\xs \in A$, thanks to \PmAp{} used for $\Ap^\MM\,\xs\,t_i$ we have that, for any $i$, $(\xs[\sigma]')_i = \sigma\;\xs_i$, meaning that $\_[\sigma]'$ must be $\map_\sigma$ on $A$. In other words, $\_[\_]'$ must be the same as $\_[\_]^\MM$. 

Now let $\xs$ be any element of $A$. We will show that $\FV'\,\xs$ is necessarily the entire set of variables $\Var$. Let $y\in \Var$. If $y \in \Vars\;\xs$, then $y=\xs_i$ for some $i$, hence using \FvAp{} for $\Ap^\MM\,\xs\;t_i$ we obtain $y \in \FV'\;\xs$. 
Now assume $y \in \Var \sm \Vars\;\xs$. Then, for any $z\in \Vars\;\xs$, $\xs[z \llra y]'$, i.e., 
$\map_{z \llra y}\,\xs$, is different from $\xs$. Thus, $\xs[z \llra y]' \not= \xs$ 
for an infinite number of variables $z$, which by \FvDPm{} implies $y \in \FV'\,\xs$.
 
We thus showed that $\FV'\,\xs = \Var$ for any $\xs\in A$. But this contradicts \FCB{}, according to which we must have $y \notin \FV'\;(\Lm^\MM\,y\;\xs)$.

\medskip
To prove $r_2 \not\geq r_4$, recall that $\Sigma_2 = \Sigmac \cup \{\pm,\fv\}$ and $\Sigma_4 = \Sigmac \cup \{\swp,\fv\}$. We take the $(\Sigma_4,\Props_4)$-model $\MM$ to have as carrier the set $M = \Trm \cup \{a\}$ (where $a\not\in \Trm$), i.e., to consist of terms plus an additional element $a$. Let $x$ be a fixed variable. We define $\MM$'s operators on $M$ by extending the standard term operators from $\Trm$ as follows: 
\\
\hspace*{-2.5ex}
\begin{tabular}{cc}
	\begin{tabular}{l} 
		$\bullet$ $\FV^\MM a = \Var$ (the set of all variables)
		\\
		$\bullet$ $\Lm^\MM\,y\;a = \Lm^\MM\,y\;(\Vr\;x)$ for any $y\in\Var$
		\\
		$\bullet$ $\Ap^\MM\;a\;a = \Ap^\MM\;(\Vr\;x)\;(\Vr\;x)$ 
	\end{tabular} 
	&
	\hspace*{-3.5ex}
	\begin{tabular}{l}
		$\bullet$ $\Ap^\MM\;a\;t = \Ap^\MM\;(\Vr\;x)\;t$ for any $t\in\Trm$
		\\
		$\bullet$ $\Ap^\MM\;t\;a = \Ap^\MM\;t\;(\Vr\;x)$ for any $t\in\Trm$
		\\
		$\bullet$ $a[z_1 \hspace*{-0.2ex}\wedge \hspace*{-0.2ex} z_2]^\MM = \Vr\,(x[z_1 \hspace*{-0.2ex}\wedge
		\hspace*{-0.2ex}
		z_2])$ for any $z_1,z_2\in\Var$
	\end{tabular}
\end{tabular} 

(Thus, the free variables of $a$ are the entire set of variables, and the constructor and swapping operators on $a$ yield the same results as for $\Vr\;x$, i.e., have $\Vr\;x$ act in lieu of $a$.) 

We can check that $\MM$ satisfies $\Props_4$ (they are known to hold for terms, so we only need to check these properties when $a$ is involved):
\FvVr{}, \FvAp{} and \FvLm{} immediately hold thanks to $\FV^\MM\;a$ being $
\Var$; 
moreover, 
\SwVr{}, \SwAp{} and \SwLm{} hold because any application of constructor or swapping operator turns $a$ into $\Vr\;x$; finally, 
\SwFv{} holds trivially for $a$, since, $\FV^\MM a$ being $\Var$, the hypothesis of \SwFv{} is vacuously false.
\looseness=-1

It remains to check that the $\Sigmac$-reduct of $\MM$, i.e., $\Trm \cup \{a\}$ equipped with the above-defined constructor-like operators, cannot be the reduct of any $(\Sigma_2,\Props_2)$-model, i.e., there is no way to define the operators $\_[\_]'$ and $\FV'$ on $\Trm \cup \{a\}$ that, together with $\Vr^\MM$, $\Ap^\MM$ and $\Ap^\MM$, make it a $(\Sigma_2,\Props_2)$-model.  
So let us assume otherwise, 
i.e., that such operators $\_[\_]'$ and $\FV'$ exist, 
and reach a contradiction. 

We note that $\_[\_]'$ is uniquely determined on $\Trm$ because of \PmVr{}, \PmAp{} and \PmLm{}, so on $\Trm{}$ it must coincide with the standard permutation operator. Moreover, 
\PmId{} and \PmCp{} 
imply that $\_[\sigma]'$ is bijective on $\Trm \cup \{a\}$ for any 
permutation $\sigma$. 
Hence, because the restriction $\_[\sigma]'$ to $\Trm$ is also a bijection on $\Trm$ (being equal to the standard permutation operator), the only possibility is that $a[\sigma]' = a$ for any $\sigma$. But this stands in contradiction with \PmAp{}, because together with \PmAp{} it would imply that $(\Ap^\MM\,a\;a)[\sigma]' = \Ap^\MM\,(a[\sigma]')\;(a[\sigma]') = \Ap^\MM\,a\;a$, i.e., $(\Ap\;(\Vr\;x)\;(\Vr\;x))[\sigma] = \Ap\;(\Vr\;x)\;(\Vr\;x)$, 
which is false for any $\sigma$ that modifies $x$. 
\qed 
%

\ \\
\subsection{Back to the proof of the recursion theorems} 
\label{app-subsec-backToProofRecThms} 

The heart of the proof of an epi-recursion principle, 
i.e., of the fact that a tuple $r = (\Bcat,T,\Ccat,I,R)$ forms an epi-recursor, 
is a proof of initiality, namely the initiality of the object $I$ in the category $\Ccat$. And indeed, this is the difficult part in the proof of all the nominal recursors listed in Thm~\ref{thm-allNominalRecs}. 

Next, we show how we can take advantage of the expressiveness comparisons 
to ``borrow'' a (quasi)weaker recursion principle from a (quasi)stronger one, and to infer all nominal recursors from only two of them---those located at the top of the expressiveness hierarchy. 

Indeed, the idea behind our expressiveness comparison criteria (Props.~\ref{prop-extCriterion} and \ref{prop-WeakExtCriterion}) 
has been the possibility of one recursor to \emph{simulate the behavior of another 
recursor}, so it feels natural to use this idea for ``borrowing'' purposes. 
To this end, we first introduce pre-epi-recursors, which are epi-recursors without the initiality condition, and a possible property of them called tightness: 

\medskip 
\begin{defi}\rm 
	A \emph{pre-epi-recursor} is a tuple $r = (\Bcat,T,\Ccat,I,R)$ subject to the same condition as an epi-recursor (Def.~\ref{defi-epirec}), but without the requirement that $I$ is the initial object of $\Ccat$. 
	
	A pre-epi-recursor is called \emph{tight}  
	if the following hold:
	\begin{myitem} 
		\item $T$ is a quasi-initial object in $\Bcat$ (in that for every object $B$ in $\Bcat$ there exists at most one morphism from $T$ to $B$). 
		\item The functor $R$ is faithful (in that it is injective on morphisms).   	\qed 
	\end{myitem} 
\end{defi}
\medskip

All the nominal pre-epi-recursors we discussed in this paper are tight, because $T$ is a quotient of  the term algebra (known to be quasi-initial)
and the functor $R$ is the identity on morphisms. 

Now, to make the borrowing possible, 
we take advantage of the fact that the definitions of $\geq$ and $\wgeq$ make sense, and also  
Props.~\ref{prop-extCriterion} and \ref{prop-WeakExtCriterion} hold, 
not only for epi-recursors, but also for pre-epi-recursors. 

\medskip 
\begin{prop}\rm \label{prop-borrow} 
	Assume the following:
	\begin{itemize}
		\item $(\Bcat,T,\Ccat,I,R)$ is a tight pre-epi-recursor
		\item $r'= (\Bcat,T,\Ccat',I',R') $ is an epi-recursor
		\item The hypotheses of Prop.~\ref{prop-WeakExtCriterion} hold, 
		namely we assume: 
		\begin{itemize} 
			\item a pre-functor $F : \Ccat_0 \ra \Ccat'$, 
			\item an initial segment $(\Bcat_0,(m(B):o(B)\ra B)_{B \in \Obj{\Bcat}})$ of $\Bcat$, 
			\item an initial segment $(\Ccat_0,\alb(m_1(C):o_1(C)\ra C)_{C \in \Obj{\Ccat}})$ of $\Ccat$, 
		\end{itemize}  		
		such that $\Ccat_0$ contains $I$,  
		$R$ preserves the above initial segments, 
		$F\;I = I'$ and $R' \circ F = R_{\restr\Ccat_0}$ 
		(where $R_{\restr\Ccat_0}$ is the restriction of $R$ to $\Ccat_0$).   
    	\item We additionally assume that the pre-functor $F$ is full. 
	\end{itemize}
	Then $r$ is an epi-recursor (i.e., $I$ is initial).  
\end{prop} 
\medskip
\begin{proof}
	We need to prove that $I$ is initial. To this end, let $C$ be an object in $\Ccat$. 
	Since $F$ is full and $I'=F\;I$, we obtain $h : I \ra o_1(C)$ 
	such that $F\;h = \;\im_{I',F\;o_1(C)}$.  We define $!_{I,C}$ to be $m_1(C) \circ h$.
	
	It remains to prove the uniqueness of $!_{I,C}$ as a morphism from $I$ to $C$. To this end, let $k : I \ra C$. 
	Then both $R\;!_{I,C}$ and $R\;k$ are morphisms between $R\;I = T$ and $R\;C$, hence $R\;!_{I,C} = R\;k$ by the quasi-initiality of $T$. Finally, 
	$!_{I,C} = k$ follows from the faithfulness of $R$.  \qed
	%
\end{proof}
\medskip

\medskip 
\begin{prop}\rm \label{prop-borrowStrong} 
	Assume the following:
	\begin{itemize}
		\item $(\Bcat,T,\Ccat,I,R)$ is a tight pre-epi-recursor
		\item $r'= (\Bcat,T,\Ccat',I',R') $ is an epi-recursor
		\item The hypotheses of Prop.~\ref{prop-extCriterion} hold, namely 
		we assume a pre-functor 
		$F : \Ccat \ra \Ccat'$  
		such that 
		$R' \circ F = R$ and 
		$F\;I = I'$.   
		\item We additionally assume that the pre-functor $F$ is full. 
	\end{itemize}
	Then $r$ is an epi-recursor (i.e., $I$ is initial).  
\end{prop} 
\medskip
\begin{proof}
	This already follows from Prop.~\ref{prop-borrow}, because 
	the hypotheses of Prop.~\ref{prop-extCriterion} are stronger than those of 
	Prop.~\ref{prop-WeakExtCriterion}. However, let us also give a direct proof:
	
	We need to prove that $I$ is initial. To this end, let $C$ be an object in $\Ccat$. 
	Since $F$ is full and $I'=F\;I$, we obtain $h : I \ra C$ 
	such that $F\;h = \;\im_{I',F\;C}$.  We define $!_{I,C}$ to be $h$.
	
   	It remains to prove the uniqueness of $!_{I,C}$ as a morphism from $I$ to $C$.  
   	To this end, let $k : I \ra C$. 
	Then both $R\;!_{I,C}$ and $R\;k$ are morphisms between $R\;I = T$ and $R\;C$, hence $R\;!_{I,C} = R\;k$ by the quasi-initiality of $T$. Finally, 
	$!_{I,C} = k$ follows from the faithfulness of $R$.  \qed

\end{proof} 
\medskip 

Taking advantage of Prop.~\ref{prop-borrowStrong}, our proof of Thm.~\ref{thm-allNominalRecs} goes along the following route:
\begin{itemize}
	\item We prove that $r_1$--$r_9$ are all tight pre-epi-recursors, which is immediate. 
	\item We prove that $r_6$ and $r_9$
	are epi-recursors, i.e., we do direct initiality proofs for $r_6$ and $r_9$ (as sketched in the main paper). 
	\item We use the fact that, according to Thm.~\ref{thm-expr}, $r_6$ and $r_9$ are stronger than all the others---and they are so by virtue of the criterion in Prop.~\ref{prop-extCriterion}. 
	\item We verify the additional hypothesis of Prop.~\ref{prop-borrowStrong}, namely that $F$ is full in all cases (which is again immediate).  Finally, we apply Prop.~\ref{prop-borrowStrong} to borrow all the other recursion theorems from those of $r_6$ and $r_9$. 
\end{itemize}
%

\section{Adding (Back) Enhancements to the Recursors} 
\label{app-addingBacknhancements}

It was convenient to discuss and compare the expressiveness of the stripped down versions of the recursors---since their enhancements, while useful, require some heavier notation that can clutter the main ideas.  Here, we add back the enhancements and 
show how our results generalize to cover the enhanced recursors. 

Recall from \S\ref{subsec-nominalRec} that the enhancements referred to support for the Barendregt variable convention and for full-fledged (primitive) recursion. 
In the case of full-fledged recursion, we noted that different degrees of support are possible: the additional term parameters  can affect constructor only, or the other operations as well (as we have seen with the swap/fresh and subst/fresh recursors); and they can be optimized for the freeness operator (as we have seen with the swap/free recursor).  
The table in Fig.~\ref{fig-recFeatures} summarizes the situation. 


\begin{figure}
	\centering
	\begin{tabular}{l| c | c }
		\begin{tabular}{c}
			Existing \\recursor
		\end{tabular}
		& 
		\begin{tabular}{c}
			Full-fledged \\recursion? 
		\end{tabular}
		& 
		\begin{tabular}{c}
			Barendregt \\convention? 
		\end{tabular}
		\\\hline
		Perm/free
		& No  & Yes 
		\\
		Swap/free 
		& 
		\begin{tabular}{c}
			For constructors, \\
			freeness-optimized  
		\end{tabular}
		& Yes 
		\\
		Swap/fresh 
		& 
		\begin{tabular}{c}
			For all \\operators  
		\end{tabular}
		& No
		\\
		Subst/fresh 
		& 
		\begin{tabular}{c}
			For all \\operators  
		\end{tabular}
		& No
		\\
		Renaming  
		& No  & Yes 
	\end{tabular}
	\caption{Enhancements exhibited by nominal recursors from the literature} 
	\label{fig-recFeatures}
\end{figure}

It turns out that 
all the nominal recursors have the following in common: 
\begin{myitem} 
\item[(a)] all these enhancements work on all the recursors, and 
\item[(b)] the enhanced versions can still be presented as epi-recursprs (for suitably chosen categories of models) and their expressiveness comparisons discussed in the main paper still apply. 
\end{myitem} 

Concerning point (a), 
we find it quite remarkable that the Barendregt convention enhancement can be applied democratically to recursors based on swapping/permutation and renaming/substitution alike, and also does not discriminate based on the particular axiomatization.  
Concerning the different degrees of ful-fledged recursor enhancement listed in Fig.~\ref{fig-recFeatures}---%
namely whetherit recursion affects the non-constructor operators too, and whether the  freeness optimization is being considered---we show that, in each case, the strongest version of the enhancement is applicable. 

Moreover, point (b) tells us that our general epi-recursion framework can be applied \emph{directly to the enhanced recursors}, as opposed to having to regard the enhancements as a form of ``hacks'' that are added after the fact on top of some categorically clean recursion principles. 

In what follows, we sketch the enhancement-extended version of our results. 
%
%
We first extend the notion of model from \S\ref{subsec-sigMod} as follows. 
We fix a finite set $X$ of variables (to be ``avoided'' according to the Barendregt convention). 

\medskip 
\begin{defi} \rm  \label{defi-model-extended}
Given a signature $\Sigma$, an \emph{$(X,\Sigma)$-model} $\MM$ consists of a set $M$, called the \emph{carrier set}, 
a subset $D \su \Trm \times M$ which we will call the \emph{domain}, 
and operations and/or relations on $D$ with values in $M$ according  the signature, more precisely: 
\begin{mmyitem}
	\item 
	if $\vr \in \Sigma$ 
	then $\MM$ has an operation $\Vr^\MM : \Var \ra M$; 
	\item if $\ap \in \Sigma$ 
	then $\MM$ has an operation  $\Ap^\MM : D \ra D \ra M$; 
	\item if $\lm \in \Sigma$ 
	then $\MM$ has 
	$\Lm^\MM : \Var \ra D \ra M$; 
	\item 
	if $\pm \in \Sigma$ 
	then $\MM$ has 
	$\_[\_]^\MM : D \ra \Perm \ra M$; 
	\item 
	if $\swp \in \Sigma$  
	then $\MM$ has 
	$\_[\_\sw \_]^\MM : D \ra D \ra \Var \ra M$; 
	\item 
	if $\sbs \in \Sigma$ 
	then $\MM$ has 
	$\_[\_ \,/ \_]^\MM : D \ra D \ra \Var \ra M$; 
	\item 
	if $\ren \in \Sigma$ 
	then $\MM$ has 
	$\_[\_ \,/ \_]^\MM : D \ra \Var \ra \Var \ra M$; 
	\item 
	if $\fv \in \Sigma$ 
	then the model $\MM$ has 
	$\FV^\MM : D \ra \Pow(\Var)$; 
	\item 
	if $\fr \in \Sigma$ 
	then the model $\MM$ has 
	$\fresh^\MM : \Var \ra D \ra \Bool$.  
\end{mmyitem} 

It is also required that the domain $D$ is closed under the operations modulo the avoidance of the variables in $X$ when binding or substituting, in that the following hold (if applicable, i.e., if the given operation is in the signature $\Sigma$) for all $t,t_1,t_2\in  \Trm$ and 
$m,m_1,m_2\in  M$ such that $(t,m),\;(t_1,m_1),\;(t_2,m_2) \in D$, all $x,y,z \in \Var$ such that 
$x,y \notin X$, 
and all $\sigma\in\Perm$ such that $\{u \in \Var \mid \sigma\;u \not= u\} \cap X = \emptyset$: 
\begin{mmyitem} 
	\item $(\Vr\;z,\Vr^\MM\;z) \in D$
\item $(\Ap\;t_1\;t_2,\;\Ap^\MM(t_1,m_1)\,(t_2,m_2)) \in D$; 
\item $(\Lm\;x\;t,\;\Lm^\MM x\,(t,m)) \in D$; 
\item  $(t[\sigma],\,(t,m)[\sigma]^\MM ) \in D$; 
\item  $(t[x \sw y],\,(t,m)[x \sw y]^\MM ) \in D$; 
\item  $(t[t_1 / x],\,(t,m)[(t_1,m_1) / x]^\MM ) \in D$ 
(if $\sbs \in \Sigma$); 
\item $(t[y / x],\,(t,m)[y / x]^\MM ) \in D$ 
(if $\ren \in \Sigma$).    \qed 
\end{mmyitem}
\end{defi} 
\medskip 

We can note a few things about this definition:
\begin{mmyitem} 
\item The closedness conditions only make sense for the constructor, permutation, swapping, renaming and substitution operators; and not for the free-variable operator or the freshness relation.
\item Full-fledged (primitive) recursion typically refers to having extra term arguments for the constructor operators only, but (as already pointed out) we consider them for the other operators and relations as well, since it makes the recursor more general.
\item The presence of the domain $D$ in the definition, as opposed to working with the entire product $\Trm \times M$ as would be customary, has a technical reason: In order to recover the results about the quasi-strength comparison relation $\wgeq$ (Thm.~\ref{thm-qexpr}), we must build submodels of these enhanced models; and those cannot have the form $\Trm \times M'$ for some $M' \su M$, but must be more flexible subsets $D' \su \Trm \times M$. In short, this small generalization was needed in order to close the category of models under a notion of submodel that works for generalizing our results. 
\end{mmyitem} 
\medskip 

Now the notion of morphism between $(X,\Sigma)$-models is defined in a similarly $X$-avoiding manner:

\medskip
\begin{defi}\rm \label{defi-morphismExtended}
Given two $(X,\Sigma)$-models $\MM$ and $\MM'$, a \emph{morphism} 
between them is a function  between their carrier sets $g: M \ra M'$
that preserves the domain, 
commutes with the operations and preserves the relations modulo $X$. More precisely, 
the following properties hold (if applicable, i.e., if the given operation is in the signature $\Sigma$) for all $t,t_1,t_2\in  \Trm$ and 
$m,m_1,m_2\in  M$ such that $(t,m),\;(t_1,m_1),\;(t_2,m_2) \in D$, all $x,y,z \in \Var$ such that 
$x,y \notin X$, 
and all $\sigma\in\Perm$ such that $\{u \in \Var \mid \sigma\;u \not= u\} \cap X = \emptyset$: 
\begin{mmyitem}
\item $(t,m) \in D$ implies $(t,g\;m) \in D'$;
\item $g(\Vr^\MM\,z) = \Vr^{\MM'}z$; 
\item $g(\Ap^\MM\,(t_1,m_1)\,(t_2,m_2)) = \Ap^{\MM'} (t_1,g\;m_1)\,(t_2,g\;m_2)$; 

\item $g(\Lm^\MM\,x\;(t,m)) = \Lm^{\MM'}x\;(t,g\;m)$; 

\item $g((t,m)[\sigma]^\MM) = (t,g\;m)[\sigma]^{\MM'}$; 

\item $g((t,m)[x \sw y]^\MM) = (t,g\;m)[x \sw y]^{\MM'}$; 

\item $g((t,m)[(t_1,m_1) / y]^\MM) = (t,g\;m)[(t_1,g\;m_1) / y]^{\MM'}$; 

\item $g((t,m)[x / y]^\MM) = (t,g\;m)[x / y]^{\MM'}$; 

\item $x\;\fresh^\MM\, (t,m)$ implies $x\;\fresh^{\MM'} (t,g\;m)$;

\item $\FV^{\MM'}\,(t,g\;m) \su X \cup \FV^\MM\,(t,m)$.   \qed 
\end{mmyitem}
\end{defi} 
\medskip 

$(X,\Sigma)$-models and morphisms thus defined form a category. 
We write $\TTrm(\Sigma)$ for the $(X,\Sigma)$-model whose carrier is the set of terms $\Trm$, 
whose domain is the diagonal $\{(t,t) \mid t \in \Trm\}$, 
and whose operations and relations are the obvious adaptations of the standard ones for terms.

It remains to interpret the properties in Fig.~\ref{fig-basicProps} in  $(X,\Sigma)$-models. In other words, given a signature $\Sigma$, an $(X,\Sigma)$-model $\MM$, a property $p$ from  Fig.~\ref{fig-basicProps} whose operations and relations are covered by $\Sigma$, we must state what it means for $\MM$ to satisfy $p$. The interpretation proceeds according to the following transformation rules:
\begin{myitem}
	\item[(1)] Any variable participating in $\lambda$-bindings, swappings, permutations, substitutions or freshness assertions is assumed to not belong to $X$.
	\item[(2)]  If it is the conclusion of the property's implication, any equation 
	or freshness relation 
	becomes a corresponding equation or relation referring to the operations in the model. 
	\item[(3)] Any equation in the hypotheses 
	becomes a conjunction between the term equation itself and a corresponding equation referring to the operations in the model. 
	\item[(4)] Any freshness relation in the hypotheses becomes a conjunction between 
	  \begin{itemize}
	  	\item the freshness relation itself (on terms)
	  	\item the implication between the freshness relation on terms and the corresponding one on items in the model, universally quantified on the participating variable (again assumed to not belong to $X$) 
	  \end{itemize} 
\end{myitem}

Let us illustrate the above on the same examples as those we considered in the main paper:
When we say that the $(X,\Sigma)$-model $\MM$ (with carrier $M$ and domain $D$) satisfies \SwCg{}, we mean the following:  
\begin{center} 
For all $t,t_1,t_2\in  \Trm$ and 
$m,m_1,m_2\in  M$ such that $(t,m),\;(t_1,m_1),\;(t_2,m_2) \in D$, 
\\and all
$x_1,x_2,z\in \Var$ such that $x_1,x_2,z \notin  X$, 
\\
if 
\\$z \not\in \{x_1,x_2\}$, 
$z\;\fresh\;t_1,t_2$, 
\\ 
$(\forall z.\;  z\;\fresh\;t_1 \implies z\;\fresh^\MM\;(t_1,m_1))$, 
\\
$(\forall z.\;  z\;\fresh\;t_2 \implies z\;\fresh^\MM\;(t_2,m_2))$, 
\\
$t_1[z\sw x_1] = t_2[z\sw x_1]$, and 
$(t_1,m_1)[z\sw x_1]^\MM = (t_2,m_2)[z\sw x_1]^\MM$,  
\\
then 
\\
$\Lm^\MM\;x_1\;(t_1,m_1) = \Lm^\MM\;x_2\;(t_2,m_2)$.  
\end{center} 

Notice how:
\begin{mmyitem}
	\item the variables $x_1,x_2,z$ are assumed not to be in $X$ according to the above transformation rule (1);
	\item the equation $\Lm\;x_1\;t_1 = \Lm\;x_2\;t_2$ 
	from the conclusion of \SwCg{} has become 
	$\Lm^\MM\;x_1\;(t_1,m_1) = \Lm^\MM\;x_2\;(t_2,m_2)$ according to transformation  rule (2); 
	\item the equation $t_1[z\sw x_1] = t_2[z\sw x_1]$
	from the hypotheses of \SwCg{} has become the conjunction of 
	$t_1[z\sw x_1] = t_2[z\sw x_1]$ and 
	$(t_1,m_1)[z\sw x_1]^\MM = (t_2,m_2)[z\sw x_1]^\MM$ according to transformation  rule (3); 
	\item the freshness hypothesis $z\;\fresh\;t_1$ has become  
	the conjunction of $z\;\fresh\;t_1$ and 
	$(\forall z.\;  z\;\fresh\;t_1 \implies z\;\fresh^\MM\;(t_1,m_1))$ according to transformation rule (4) 
	(and similarly for $t_2$). 	
\end{mmyitem} 

The treatment of the equations from the hypotheses---with considering both the original, 
here, $t_1[z\sw x_1] = t_2[z\sw x_1]$, and the model version $(t_1,m_1)[z\sw x_1]^\MM = (t_2,m_2)[z\sw x_1]^\MM$---honors the dual (term and semantic item) nature of full-fledged recursion. By contrast, including the original version in the conclusion too would be redundant, since that one follows anyway thanks to the properties of terms. 

The treatment of the freshness relations in the hypothesis is more involved, and departs from what would seem to be a natural rule, which is:  including both the original, $z\;\fresh\;t_1$, and 
the model version $z\;\fresh^\MM\;(t_1,m_1)$ as hypotheses. The reason why we instead include 
$z\;\fresh\;t_1$ and $(\forall z.\;  z\;\fresh\;t_1 \implies z\;\fresh^\MM\;(t_1,m_1))$ is because this way we obtain a weaker condition that still works, thus offering a stronger recursion principle. This is the freshness counterpart of the freeness optimization that is specific to the swap/free recursor (mentioned in Fig.~\ref{fig-recFeatures}). 

The properties involving the free-variable operator are interpreted using their freshness-based counterparts. 
For example, $\MM$ (with carrier $M$ and domain $D$) satisfying \FCB{} means the following: 
\begin{center}
There exists $x \in \Var$ such that 
$x\notin X$ and
\\
$x\notin \FV^\MM\,(\Lm\;x\;t)\,(\Lm^\MM x\,(t,m))$ 
\\for all $t\in\Trm$ and $m\in M$ such that 
$(t,m) \in D$.  
\end{center} 
Indeed, thinking of the conclusion of \FCB{} , namely $x\notin \FV\,(\Lm\;x\;t)$, in terms of freshness, i.e., as $x \,\fresh \alb (\Lm\;x\;t)$, we apply transformation rule (2) yielding 
$x\notin \FV^\MM\,(\Lm\;x\;t)\,(\Lm^\MM x\,(t,m))$. 
The outcome is a generalization of the standard \FCB{} used in nominal logic. 

Along the same recipe, we obtain the interpretations for \FvVr{}, \FvAp{} and \FvLm{}, where we can recognize generalizations of the ``optimized'' swap/free recursion clauses discussed in  \S\ref{subsec-nominalRec}:
\begin{mmyitem}
\item  For all $x\in \Var$, 
$\FV^\MM(\Vr\;x)\,(\Vr^\MM\;x) \su X \cup \{x\}$
\item For all $(m_1,t_1),(m_2,t_2) \in D$, 
if $\FV^\MM\;(m_1,t_1) \su X \cup \FV\;t_1$ and $\FV^\MM\;(m_2,t_2) \su X \cup \FV\;t_2$ 
then 
\\
\hspace*{-2ex}$\FV^\MM(\Ap\,t_1\,t_2)(\Ap^\MM(m_1,t_1)\,(m_2,t_2)) \su X \cup \FV\,t_1 \cup \FV\,t_2$   

\item For all $(m,t)\in D$ and $x \in  \Var \!\sm\! X$, 
if $\FV^\MM(m,t) \su X \cup \FV\;t$ 
then 
$\FV^\MM(\Lm\;x\;t)\,(\Lm^\MM x\;(t,m)) \su \FV\,t \sm \{x\}$ 
\end{mmyitem}

Indeed, these follow from the interpretations for their freshness-based counterparts 
\FrVr{}, \FrAp{} and \FrLm{}, which are produced according to the above transformation rules:
\begin{mmyitem} 
\item For all $x,z\in\Var$ such that $z\notin X$, 
 if $z \not=x$ then $z\;\fresh^\MM(\Vr\;x)\,(\Vr^\MM x)$
\item For all $z\in\Var \sm X$ and $(t,m),(s,n) \in D$, 
if $z\;\fresh\;s$ and $(\forall z\in \Var \sm X.\; z\;\fresh\;s \implies z\;\fresh^\MM\,(s,n))$, 
$z\;\fresh\;t$ and $(\forall z\in \Var \sm X.\; z\;\fresh\;t \implies z\;\fresh^\MM\,(t,m))$, 
then 
$z\;\fresh^\MM\,\Ap^\MM(s,n)\;(t,m)$
\item For all $x,z\in\Var \sm X$ and $(t,m) \in D$, 
if $z = x$ or ($z\;\fresh\;t$ and $(\forall z\in \Var \sm X.\; z\;\fresh\;t \implies z\;\fresh^\MM\,(t,m))$ ) 
then $ z\;\fresh^\MM\,\Lm^\MM\;x\;(t,m)$. 
\end{mmyitem}

All the recursion principles generalize from their stripped down versions 
to the enhanced versions: 

\medskip 
\begin{thm}\rm \label{thm-allNominalRecsEnhanced}
Thm.\ \ref{thm-allNominalRecs} still holds  if we replace 
the categories of $\Sigma_i$-models with those of $(X,\Sigma_i)$-models.  \qed
\end{thm}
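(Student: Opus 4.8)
The plan is to reduce Theorem~\ref{thm-allNominalRecsEnhanced} to Theorem~\ref{thm-allNominalRecs} using the same borrowing strategy laid out in App.~\ref{app-subsec-backToProofRecThms}, rather than re-running all nine initiality proofs in the $(X,\Sigma)$-setting. First I would check that each $r_i$, now read over the category of $(X,\Sigma_i)$-models, is still a \emph{tight pre-epi-recursor}: the base object $T=\TTrm(\Sigma_\ctor)$ is still a quotient of the term algebra, hence quasi-initial; the domain of $\TTrm(\Sigma_i)$ being the diagonal means morphisms out of it are determined by their underlying function on carriers, so the forgetful functor $R_i$ is still faithful; and $R_i\,I_i = T$ holds by construction. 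This part is routine bookkeeping with Def.~\ref{defi-model-extended} and Def.~\ref{defi-morphismExtended}.

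Next I would establish the enhanced analogues of the comparison results. The key observation is that Props.~\ref{prop-extCriterion} and \ref{prop-WeakExtCriterion} are stated abstractly for (pre-)epi-recursors, so they apply verbatim once I supply the pre-functors $F$ between the enhanced model categories. For the $\geq$-comparisons (Thm.~\ref{thm-expr}) the functors $F$ are the same operator-adding/operator-translating constructions as before—identity functor, negation between freshness and free-variables, restriction of permutation to transpositions, etc.—and one checks they lift to $(X,\Sigma)$-models: the point is that the $X$-avoiding closure conditions on domains and the $X$-modulo interpretation of the Fig.~\ref{fig-basicProps} properties were designed (via transformation rules (1)--(4)) precisely so that these constructions still send $(X,\Sigma_i,\Props_i)$-models to $(X,\Sigma_j,\Props_j)$-models and preserve the term model $\TTrm$. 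For the $\wgeq$-comparisons (Thm.~\ref{thm-qexpr}) I would redo the submodel constructions of App.~\ref{app-proofSketches} with the generalized notion of submodel: here the domain $D'\subseteq\Trm\times M$ (not just a subset $M'\subseteq M$) is exactly the flexibility already flagged in the third bullet after Def.~\ref{defi-model-extended}, so the minimal-submodel arguments, the nominal-style freshness definitions, and the $\pickFresh$ constructions go through with $X$ threaded through uniformly. Finally I would invoke Prop.~\ref{prop-borrow}: having proved directly (in the enhanced setting) that $r_6$ and $r_9$ are epi-recursors—i.e.\ running the two base initiality proofs with the $X$-modulo clauses—and having checked the relevant $F$ are full, all remaining $r_i$ inherit initiality.

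The one genuinely new piece of work, and the main obstacle, is the \emph{two direct initiality proofs} for $r_6$ (swap/fresh) and $r_9$ (renaming/fresh variant) in the enhanced setting. The difficulty is that the recursive clauses now carry extra term parameters and, crucially, the Barendregt set $X$ forces the $\lambda$-, swapping-, and substitution-commutation equations to hold only when the bound/substituted variables avoid $X$, and the freeness-optimized freshness hypotheses (transformation rule (4)) are weaker than the naive ones. So the usual construction of the candidate morphism $g:\Trm\to M$—by recursion on preterms, then showing $\alpha$-equivalent preterms map to equal values using a fresh representative outside $X$—must be carried out while carefully tracking that every chosen fresh variable avoids $X$, and the well-definedness argument must exploit exactly the optimized (implication-form) freshness clause rather than a plain freshness hypothesis. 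This is essentially the argument already present in the literature sources \cite{DBLP:journals/jar/GheriP20,popescu-renamable} for the $X$-parametrized recursors (Pitts's Thm.~\ref{thm-pittsRec} already has the $X$), so I would adapt those proofs; the bulk of the effort is verifying that the domain-closure conditions of Def.~\ref{defi-model-extended} supply exactly the instances of the commutation equations needed at each step. Everything downstream—tightness, the comparison functors, and the borrowing via Prop.~\ref{prop-borrow}—is then a matter of re-checking that constructions already verified in the stripped-down case survive the $X$-decoration, which the careful design of Defs.~\ref{defi-model-extended}--\ref{defi-morphismExtended} guarantees.
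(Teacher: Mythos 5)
Your overall strategy coincides with the paper's: check that the enhanced $r_i$ are still tight pre-epi-recursors, lift the comparison pre-functors of Thms.~\ref{thm-expr} and \ref{thm-qexpr} to $(X,\Sigma_i)$-models, carry out direct initiality proofs only for $r_6$ and $r_9$, and borrow the remaining seven recursors via Prop.~\ref{prop-borrow}. The one place you genuinely diverge is in the two base initiality proofs. You propose the classical route: define the candidate morphism by recursion on preterms and then show that $\alpha$-equivalent preterms receive equal values, using fresh representatives chosen outside $X$. The paper instead constructs the initial morphism by inductively defining a relation on $\Trm \times M$ and then proving that this relation is a total, single-valued function which commutes with the operations and preserves freshness---deliberately never descending to $\PTrm$. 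Both routes work; yours lets you reuse the literature proofs (Pitts, Norrish, Gheri--Popescu) almost verbatim at the cost of re-verifying respect for $\alpha$-equivalence with $X$ threaded through every fresh choice, while the paper's relational construction stays entirely at the abstraction level of terms and is, in the authors' assessment, simpler. One small caution on your closing claim: you say the design of Defs.~\ref{defi-model-extended}--\ref{defi-morphismExtended} \emph{guarantees} that the comparison constructions survive the $X$-decoration, but the paper reports the causality running the other way---making the quasi-strength comparisons go through in the enhanced setting actually \emph{forced} a strengthening of the enhancements beyond what the original recursors provided (e.g.\ nominal sets up to $X$ for $r_1$ and swapping structures up to $X$ for $r_4$, needed so that they are quasi-stronger than the enhanced $r_6$ via $r_3$), so some iteration between the axiomatizations and the comparison proofs is unavoidable rather than automatic.
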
 
\medskip 

Recall that Thms.\ \ref{thm-pittsRec}--\ref{thm-popRecRename} list 
existing nominal recursors from the literature. 
Thm.\ \ref{thm-allNominalRecs} 's recursors 
$r_1$, $r_4$, $r_6$, $r_7$ and $r_8$ were stripped-down versions of these recursors. 
By contrast, Thm.\ \ref{thm-allNominalRecsEnhanced}'s corresponding recursors are 
\emph{further enhancements} of the original recursors, obtained by putting together all the 
enhancements---because the strongest version of each enhancement now benefits each recursor.

In order to generalize our comparison results, 
we were actually compelled to strengthen the recursors even beyond the sum of all enhancements. For example, both the perm/free recursor (specific to nominal logic) and Norrish's swap/free recursor were based on axiomatizations of permutation and swapping: forming nominal sets in the case of the perm/free recursor and entities called \emph{swapping structures}  for the swap/free recursor \cite{primrecFOAS-Norrish04}. At the same time, both recursors had Barendregt enhancements that allowed the flexibility of working modulo  $X$, meaning that some axioms on the target domains operated modulo $X$---as seen in Thms.\ \ref{thm-pittsRec} and \ref{thm-norrishRec}. However, this flexibility was not affecting the notions of nominal set or swapping structure, which did not consider $X$.  
Our systematic approach to adding performing the Barendregt enhancement, reflected in particular in the $r_1$ and $r_4$ versions of our Thm.\ \ref{thm-allNominalRecsEnhanced}, makes this flexibility pervasive, thus strengthening Thms.\ \ref{thm-pittsRec} and \ref{thm-norrishRec} with what could be called \emph{nominal sets up to $X$} and \emph{swapping structures up to $X$}. 
We have not investigated whether such stronger recursors can make a  difference in practice, but it is in principle useful to have the strongest possible recursors at our disposal. 

All the expressiveness comparisons results for the stripped down recursors carry over to the enhanced recursors as well: 

\medskip 
\begin{thm} \rm \label{thm-exprEnhanced} 
	Thms.\ \ref{thm-expr} and \ref{thm-qexpr} still hold for the enhanced recursors 
	(employing $(X,\Sigma_i)$-models) described in Thm.\ \ref{thm-allNominalRecsEnhanced}.  \qed
	\looseness=-1
\end{thm}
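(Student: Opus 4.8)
The statement (Thm.\ \ref{thm-exprEnhanced}) asserts that the two expressiveness hierarchies established for the stripped-down recursors---the strict one $r_6 \geq r_5 \geq r_4 \geq r_2 \geq r_1 \equiv r_3$ together with $r_9 \geq r_8, r_7$ (Thm.\ \ref{thm-expr}), and the flat quasi-strength one $r_1 \simeq \cdots \simeq r_6 \wgeq r_8 \simeq r_9 \wgeq r_7$ (Thm.\ \ref{thm-qexpr})---persist verbatim when every category of $\Sigma_i$-models is replaced by the category of $(X,\Sigma_i)$-models of Def.\ \ref{defi-model-extended}. The approach is to show that the concrete comparison proofs are \emph{preserved by the enhancement}, i.e.\ that each functor $F$ and each initial segment used in the proofs of Thms.\ \ref{thm-expr} and \ref{thm-qexpr} lifts, canonically and functorially, to the enhanced setting, so that Props.\ \ref{prop-extCriterion} and \ref{prop-WeakExtCriterion} can be reapplied mutatis mutandis. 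No genuinely new comparison has to be invented; the content is a uniform ``transfer lemma''.

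\textbf{Step 1: the enhancement is a functorial construction on models.} First I would observe that passing from a $\Sigma$-model to an $(X,\Sigma)$-model, and vice versa, respects all the structure the comparison proofs care about. Concretely, the three ingredients used in the proofs of $r_i \geq r_j$ and $r_i \wgeq r_j$ are: (a) defining one operator from others (e.g.\ swapping as a restriction of permutation, a free-variable operator from freshness by negation, a swapping operator from renaming via $\pickFresh$, a nominal-style $\fresh$ from $\FvDSw$/$\FrDSw$); (b) taking the minimal submodel generated by the constructors; (c) invoking the term-model connection, i.e.\ that the unique morphism out of $\TTrm(\Sigma_i)$ lands in that submodel. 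I would check that each of (a)--(c) makes sense for $(X,\Sigma)$-models: the operator-definitions in (a) only ever use operations already present and are manifestly compatible with the extra term arguments and with the $X$-avoidance conditions of Defs.\ \ref{defi-model-extended} and \ref{defi-morphismExtended} (this is the point of the transformation rules (1)--(4) for interpreting properties); the minimal submodel in (b) now lives inside a flexible subset $D' \su \Trm \times M$ rather than $\Trm \times M'$---which is precisely why Def.\ \ref{defi-model-extended} was phrased with a domain $D$, as the excerpt flags---and is closed under the $(X,\Sigma)$-operations because the closure conditions of Def.\ \ref{defi-model-extended} are Horn-like and inherited; and (c) goes through by the same induction-on-depth / induction-on-$D'$ argument, with the $X$-avoidance built into the inductive steps via $\pickFresh$-style choices.

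\textbf{Step 2: lift each $F$ and each initial segment.} For the strict hierarchy, each inequality $r_i \geq r_j$ in the proof of Thm.\ \ref{thm-expr} was witnessed by a (pre-)functor $F_{ij}$ with $R'_j \circ F_{ij} = R_i$ and $F_{ij}(\TTrm(\Sigma_i)) = \TTrm(\Sigma_j)$. I would define the enhanced $F_{ij}$ by applying the very same operator-definitions pointwise to $(X,\Sigma_i)$-models (keeping the carrier and the domain $D$ fixed, only adding/forgetting/redefining operations), and be the identity on morphisms; verifying $R' \circ F = R$ and $F(\TTrm(\Sigma_i)) = \TTrm(\Sigma_i)$ is then immediate since the standard term operations on $\TTrm(\Sigma)$ with domain the diagonal restrict correctly. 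For the quasi-strength relations, the initial segments used in Thm.\ \ref{thm-qexpr} all sent a model to its minimal submodel with the inclusion; I would use the analogous construction for $(X,\Sigma_i)$-models (minimal sub-$D'$ generated by the constructors, inclusion morphism), check that the forgetful functors preserve these segments, and re-run the three key lemmas ($r_3 \wgeq r_6$, $r_8 \wgeq r_9$, $r_6 \wgeq r_8$) with the extra term-parameter bookkeeping carried along. Since $\wgeq \subseteq$ is closed under the remaining $\geq$-relations, the full flat chain follows exactly as before.

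\textbf{The main obstacle.} The genuinely delicate point is Step 1(b)--(c) in the presence of full-fledged recursion: once operators carry term arguments and models have a domain $D$, one must be sure that the minimal submodel is still closed under \emph{all} the enhanced operations (including the non-constructor ones, which now also take term arguments) and still satisfies the required properties \emph{under their $(X,\Sigma)$-interpretation} with the transformation rules (1)--(4) applied. The subtlest sub-case is the one already singled out in the proof of $r_6 \wgeq r_8$: proving that the definition of swapping from renaming is independent of the chosen fresh representative, which in the stripped-down setting already required rephrasing the choice-independence property as an equation to dodge the absence of inversion rules for freshness. In the enhanced setting this rephrasing must be redone with the term arguments in place, and one must confirm that the term-model connection still licenses borrowing that equation. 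I expect this, together with checking that the ``freeness optimization'' interpretation of $\FvAp$/$\FvLm$/$\FCB$ (the weaker hypotheses $\forall z.\; z \fresh t \implies z \fresh^\MM (t,m)$) does not break any of the inductions, to be where essentially all the real work lies; everything else is routine propagation of the $X$-avoidance side-conditions through proofs that are structurally unchanged. Finally, the enhanced recursion theorems themselves (Thm.\ \ref{thm-allNominalRecsEnhanced}) are then borrowed from the enhanced $r_6$ and $r_9$ exactly as in App.\ \ref{app-subsec-backToProofRecThms}, since the tightness argument (quasi-initiality of $T$, faithfulness of the identity-on-morphisms forgetful functor, fullness of $F$) is insensitive to the enhancement.
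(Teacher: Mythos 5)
Your proposal is correct and follows essentially the same route as the paper, whose own proof consists of re-running the comparison arguments for Thms.~\ref{thm-expr} and~\ref{thm-qexpr} through the same categorical criteria (Props.~\ref{prop-extCriterion} and~\ref{prop-WeakExtCriterion}), with the functors and minimal-submodel initial segments lifted to $(X,\Sigma_i)$-models and the $X$-avoidance and term-argument bookkeeping carried along. You also correctly locate the substantive content in the same place the paper does: the enhancements must be applied pervasively (e.g.\ ``nominal sets up to $X$'' and ``swapping structures up to $X$'' rather than the literal enhancements of the original recursors) for the quasi-strength comparisons via $r_3$ and the submodel constructions to go through.
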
 
\medskip 

The proofs follow 
the same lines as those we sketched for Thms.\ \ref{thm-expr} and \ref{thm-qexpr}, using the categorical criteria from Props.~\ref{prop-extCriterion} and \ref{prop-WeakExtCriterion}. One phenomenon worth mentioning is that the goal of  extending our comparison results to the enhanced recursors have forced us to perform more general enhancements than originally intended (and thought possible). For example, the aforementioned notion of performing 
Barendregt enhancement more comprehensively in  $r_1$ and $r_4$ (yielding nominal sets up to $X$ and swapping structures up to $X$) were required in order to prove that, via $r_3$, they are quasi-stronger than (the naturally enhanced version of) $r_6$.

\section{Non-well-founded infinitary terms 
}
\label{app-detailsIterms}

\subsection{Pre-iterms}
\label{appsub-preiterm}

%
We start with the set $\PITrm$ of pre-iterms, which is (co)freely generated by the grammar:
$$p ::= \PVr\;x  \mid \PAp\;p_1\;p_2  \mid \PLm\;x\;p$$

So $\PITrm$ is a standard coinductive datatype (codatatype) \cite{DBLP:journals/tcs/Rutten00,DBLP:journals/mscs/KozenS17} 
having constructors $\PVr : \Var \ra \PITrm$, $\PAp : \ITrm \ra \PITrm \ra \PITrm$ 
and $\PLm : \Var \ra \ITrm \ra \ITrm$. 

Recall from \S\ref{subsec-infTerms} that 
we wrote $\Vv$, $\Aa$ and $\Ll$ 
for the three injections into the sum type  $\mathsf{S} = \Var + 
\ITrm \times \ITrm + \PPne(\Var\times \ITrm)$ (so that
$\Vv : \Var \ra \mathsf{S}$, 
$\Aa : \ITrm \times \ITrm \ra \mathsf{S}$ and 
$\Ll : \PPne(\Var\times \ITrm) \ra \mathsf{S}$).
We will overload this notation to pre-iterms, thus writing 
$\Vv$, $\Aa$ and $\Ll$ 
also for the three injections into the sum type  $\mathsf{S'} = \Var + 
\PITrm \times \PITrm + \Var\times \PITrm$ (so that
$\Vv : \Var \ra \mathsf{S'}$, 
$\Aa : \PITrm \times \PITrm \ra \mathsf{S'}$ and 
$\Ll : \Var\times \PITrm \ra \mathsf{S'}$). 

We write $\PDest : \PITrm \ra \Var + 
\PITrm \times \PITrm + \Var\times \PITrm$ for this codatatype's destructor. Note that the destructor is the inverse of the constructors in the following sense: 
\begin{itemize}
\item $t = \PVr\;x \iff \PDest\;t = \Vv\;x$
 \item $t = \PAp\;t_1\;t_2 \iff \PDest\;t = \Aa (t_1,t_2)$
\item $t = \PLm\;x\;t' \iff \PDest\;t = \Ll(x,t')$
\end{itemize}

Thus, $(\PITrm,PDest)$ is the final coalgebra for the functor on sets defined as follows:
\begin{itemize}
	\item on objects, it takes any set $A$ to $\Var + A \times A + \Var \times A$;
	\item on morphisms, it takes any function $f:A\ra B$ to 
	$f + f \times f + \id_\Var \times f$.
\end{itemize}
Above, we used the following notation. For any two functions $u_1: A_1 \ra B_1$ and $u_2: A_2 \ra B_2$, we let:
\begin{itemize}
	\item $u_1 + u_2 : A_1 + A_2 \ra B_1 + B_2$ be the function defined by  $(u_1+u_2)(\In_1\,a_1) = \In_1 (u_1\,a_1)$ and $(u_1+u_2)(\In_2\,a_2) = \In_2 (u_2\, a_2)$, where $\In_1$ and $\In_2$ denote the two injections for the sum types; 
	\item $u_1 \times u_2 : A_1 \times A_2 \ra B_1 \times B_2$ be the function defined by  $(u_1\times u_2)(a_1,a_2) = (u_1\,a_1, u_2\,a_2)$. 
\end{itemize} 
(Thus, we use $+$ and $\times$ for the actions of the sum and product functors not only on objects, but also on morphisms.)

Like any ordinary codatatype, 
the pre-iterm codatatype features the following \emph{structural coinduction proof principle}, which 
states 
that equality is the largest destructor-bisimulation on pre-iterms:

\begin{prop} \label{prop-coind-piterm} \rm
Assume $\phi : \PITrm \ra \PITrm \ra \Bool$ is a relation on pre-iterms such that, for all $p,q\in\PITrm$, if $\phi\;p\;q$ then one of the following is true: 
\begin{mmmyitem} 
\item there exists $x$ such that $p = \PVr\;x = q$; 
\item there exist $p_1,p_2,q_1,q_2$ such that 
$p = \PAp\;p_1\;p_2$, \,$q = \PAp\;q_1\;q_2$, \,$\phi\;p_1\;q_1$ and $\phi\;p_2\;q_2$; 
\item there exist $x,p',q'$ such that 
$p = \PLm\;x'\;p'$, \,$q = \PLm\;x\;q'$ and $\phi\;p'\;q'$. 
\end{mmmyitem} 

Then $\phi$ is included in equality, in that $\forall p,q.\;\phi\;p\;q \Ra p = q$.
\end{prop} 

\smallskip 
The pre-iterm codatatype also features the following corecursion definition principle (coiteration to be more precise), which is just an expression of the fact that $(\PTrm,\PDest)$ is a final coalgebra:

\begin{prop} \label{prop-corec-piterm} \rm
If $(M,D)$ is a coalgebra of suitable type, namely 
$D :  M \ra \Var + M \times M + \Var\times M$, then there exists a unique coalgebra morphism between $(M,D)$ and $(\PTrm,\PDest)$, i.e., a unique function $g:M\ra \PTrm$ that commutes with the destructors, in that, for all $m\in M$, 
$\PDest(g\;m) = (1_\Var + g \times g + 1_\Var \times g)\,(D\;m)$. This commutation condition can also be phrased as the conjunction of three conditions, one for each summand:\footnote{In the first equation below, the first $\VV\;x$ is an element of $\Var + M \times M + \Var\times M$, whereas the second $\VV\;x$ is an element of $\Var + \PITrm \times \PITrm  + \Var\times \PITrm $. This is because of our ambiguous notation for $\Vv$; but the context should always disambiguate such situations.}
\begin{myitem} 
	\item $D\;m = \Vv\;x$ implies $\PDest(g\;m) = \Vv\;x$
	\item $D\;m = \Aa (m_1,m_2)$ implies $\PDest(g\;m) = \Aa(g\;m_1,g\;m_2)$
	\item $D\;m = \Ll (x,m')$ implies $\PDest(g\;m) = \Ll (x,g\;m')$
\end{myitem} 
and further, using pre-iterm constructors instead of destructor:
\begin{myitem} 
	\item $D\;m = \Vv\;x$ implies $g\;m = \PVr\;x$
	\item $D\;m = \Aa (m_1,m_2)$ implies $g\;m = \PAp\;(g\;m_1)\;(g\;m_2)$
	\item $D\;m = \Ll (x,m')$ implies $g\;m = \PLm\;x\;(g\;m')$
\end{myitem} 
\end{prop} 

In addition to the above, another definition and proof mechanism that is useful for concepts involving pre-iterms (and iterms as well), but that in itself is not bound to codatatypes, is \emph{rule coinduction}: Given any monotonic operator on predicates of some type (e.g., $n$-ary predicates/ relations on pre-iterms), we can take its greatest fixed point, which is also the greatest post-fixed point---whose existence (and uniqueness) is guaranteed by the Knaster-Tarski theorem \cite{knasterTarski}. Usually, this monotonic operator is described using a set of rules, and the greatest fixed point is the largest predicate that is consistent with (i.e., backwards-closed under) these rules. Hence, to prove that this greatest (post)fixed point includes another predicate $\phi$, it suffices to show that $\phi$ is consistent with these rules. 
We refer to \cite[\S21.1]{DBLP:books/daglib/0005958} for more details. 

\medskip
The swapping operator on pre-iterms, $\_[\_\sw\_] : \PITrm \ra \Var \ra \Var \ra \PITrm$, is defined corecursively by the following clause:
$$
\begin{array}{l}
p[z_1 \sw z_2] = \mbox{case $p$ of}
\\
\hspace*{5ex}\mid \PVr\;x \Ra \PVr\;(x[z_1\sw z_2]) 
\\
\hspace*{5ex}\mid \PAp\;p_1\;p_2 \Ra \PAp\;(p_1[z_1\sw z_2]) \;(p_2[z_1\sw z_2]) 
\\
\hspace*{5ex}\mid \PLm\;x\;p' \Ra \PLm\;(x[z_1\sw z_2]) \;(p'[z_1\sw z_2]) 
\end{array} 
$$
The above corecursive definition can also be expressed in destructor form:
$$
\begin{array}{l}
\PDest\;(p[z_1 \sw z_2]) = \mbox{case $\PDest\;p$ of}
\\
\hspace*{5ex}\mid \Vv\;x \Ra \Vv (x[z_1\sw z_2]) 
\\
\hspace*{5ex}\mid \Aa(p_1,p_2) \Ra \Aa(p_1[z_1\sw z_2],p_2[z_1\sw z_2]) 
\\
\hspace*{5ex}\mid \Ll(x,p') \Ra \Ll(x[z_1\sw z_2], p'[z_1\sw z_2]) 
\end{array} 
$$
What this definition means is that 
we organize the source domain of the (uncurried version of) the to-be-defined function, $M = \PITrm \times \Var \times \Var$, into a coalgebra $(M,D)$ by defining $D : M \ra \Var + M \times M + \Var\times M$ as follows: 
$$
\begin{array}{l}
D\;(p,z_1,z_2) = \mbox{case $\PDest\;p$ of}
\\
\hspace*{5ex}\mid \Vv\;x \Ra \Vv (x[z_1\sw z_2]) 
\\
\hspace*{5ex}\mid \Aa(p_1,p_2) \Ra \Aa((p_1,z_1,z_2),(p_2,z_1,z_2)) 
\\
\hspace*{5ex}\mid \Ll(x,p') \Ra \Ll(x[z_1\sw z_2],(p',z_1,z_2)) 
\end{array} 
$$
and define $p[z_1\sw z_2]$ as $g\,(p,z_1,z_2)$, where $g: (M,D) \ra (\PITrm,\PDest)$ is the unique coalgebra morphism guaranteed by the corecursion principle.
(In this particular case, the parameters $z_1$ and $z_2$ stay fixed, so they could have been left out 
of the source coalgebra's carrier.)

\medskip
The permutation operator on iterms, $\_[\_] : \PITrm \ra \Perm \ra \PITrm$, is defined similarly to swapping. (Swapping can of course be alternatively defined from permutation.)

\medskip
The freshness relation $\_ \# \_ : \Var \ra \PITrm \ra \Bool$ is defined coinductively by the following rules:
$$
\frac{z\not= x}
{z \;\fresh\;\PVr\;x}
\hspace*{13ex}
\frac{z \;\fresh\;p_1 \;\;\;\;\;\; z \;\fresh\;p_2}
{z \;\fresh\;\PAp\;p_1\;p_2}
\hspace*{13ex}
\frac{z = x \mbox{ \ or \ } z \;\fresh\;p'}
{z \;\fresh\;\PLm\;x\;p'}
$$
(What this means is that $\_\#\_$ defined to be the largest predicate $\phi: \Var \ra \PITrm \ra \Bool$ that is consistent with the above rules, in the following sense: For all $x\in\Var$ and $p\in\PITrm$, if $\phi\;z\;p$ then one of the following is true:
\begin{itemize}
	\item there exists $x$ such that $p=\PVr\;x$ and $z\not=x$;
	\item there exist $p_1,p_2$ such that $p= \PAp\;p_1\;p_2$, 
	$\phi\;z\;p_1$ and $\phi\;z\;p_2$; 
	\item there exists $x,p'$ such that $p= \PLm\;x\;p'$ 
	and (z = x or $\phi\;z\;p'$).)
\end{itemize}

\medskip
An important property that we wish to have for iterms, and to this end we first need to ensure it for pre-iterms, is that for any (pre-)iterm we have a supply of fresh variables. In fact, as a virtue of the cardinality of $\Var$ being $\aleph_1$, we have the following: For any pre-iterm $p$, since the set $\{x \mid x\;\fresh\;p\}$ is countable, there exist uncountably (in particular, infinitely) many fresh variables for $p$.

\medskip
The $\alpha$-equivalence relation on pre-iterms, $\_ \!\equiv\!\_ : \PITrm \ra \PITrm \ra \Bool$, is defined coinductively by the following rules:
$$
\PVr\;x \equiv \PVr\;x
\hspace*{13ex}
\frac{p_1 \equiv q_1 \;\;\;\;\;\; p_2 \equiv q_2 }
{\PAp\;p_1\;p_2 \equiv \PAp\;q_1\;q_2}
\hspace*{13ex}
$$
\vspace*{0.5ex}
$$
\frac{
	p[z\sw x] \equiv q[z\sw y] 
\hspace*{7ex}	
	z = x \mbox{ or } z \;\fresh\;p
\hspace*{7ex}
   z = y \mbox{ or } z \;\fresh\;q
}
{\PLm\;x\;p \equiv \PLm\;y\;q}
$$

Note that the last rule in this definition is in the style of the \SwCg{} and \ISwCg{} properties (on iterms). We obtain the same concept (i.e., we obtain the same relation $\equiv$) if we replace it with the following rule, in the style of \SwBvr{} and \ISwBvr{}: 
$$
\frac{
	p[x'\sw x] \equiv p'
	\hspace*{7ex}	
	x' = x \mbox{ or } x' \;\fresh\;p
}
{\PLm\;x\;p \equiv \PLm\;x'\;p'}
$$

Note that all these operators (swapping, permutation, freshness and $\alpha$-equivalence) would be defined in the same way for (finitary) preterms, i.e., we would write the same equations and rules, but replacing ``coinductive'' (``greatest fixed point'') with ``inductive'' (``least fixed point'') and ``recursive'' with ``corecursive''.

\subsection{Iterms}
\label{appsub-iterm}

It can be shown that $\equiv$ is an equivalence and is compatible with the swapping and permutation operations and freshness predicate, in that:
\begin{itemize}
\item $p \equiv q$ implies $p[z_1\sw z_2] \equiv q[z_1\sw z_2] $ 
\item $p \equiv q$ implies $p[\sigma] \equiv q[\sigma] $ 
\item $x \;\fresh\; p$ and $p \equiv q$ implies $x \;\fresh\; q$
\end{itemize}

We define iterms by quotienting pre-iterms, $\ITrm = \PITrm/\equiv$. For a pre-iterm $p$, let us write $p/\equiv$ for its $\alpha$-equivalence class. 

We define the corresponding operators on iterms, 
$\_[\_\sw\_] : \ITrm \ra \Perm \ra \ITrm$,  
$\_[\_] : \ITrm \ra \Perm \ra \ITrm$, $\_ \# \_ : \Var \ra \PITrm \ra \Bool$,  
 by lifting to iterms the pre-iterm operators and taking advantage of their compatibility with $\equiv$. For example, 
 given $t\in\ITrm$ we define 
 $t[z_1\sw z_2]$ to be $(p[z_1\sw z_2])/\equiv$, where $p$ is some pre-iterm such that $t = p/\equiv$ (whose choice is immaterial thanks to compatibility).  
 
 The free-variable operator $\FV: \ITrm \ra \Pow(\Var)$ is defined as expected, by $\FV\;t = \{x\in\Var \mid \neg\;x\;\fresh\;t\}$.
 
 All the properties involving swapping and/or permutation and/or equality and/or freshness/freeness on iterms from Figs.~\ref{fig-basicProps} and \ref{fig-basicCoProps} can now be proved by first establishing their pre-iterm counterparts (with $\equiv$ instead of equality) and then lifting them to iterms. 
 
 The following is the natural structural coinduction principle for iterms:
 
 \begin{prop} \rm 
 	\label{prop-coind-iterm}
 	Assume $\phi : \ITrm \ra \ITrm \ra \Bool$ is a relation on iterms such that, for all $t,s\in\ITrm$, if $\phi\;t\;s$ then one of the following is true: 
 	\begin{mmmyitem} 
 		\item there exists $x$ such that $t = \Vr\;x = s$; 
 		\item there exist $t_1,t_2,s_1,s_2$ such that 
 		$t = \Ap\;t_1\;t_2$, \,$s = \Ap\;s_1\;s_2$, \,$\phi\;t_1\;s_1$ and $\phi\;t_2\;s_2$; 
 		\item there exist $x,t',y,s'$ such that 
 		$t = \Lm\;x\;t'$, \,$s = \Lm\;y\;s'$, 
 		\,($y  = x$ or $y\;\fresh\;t'$) and 
 		$\phi\,(t'[y\sw x])\,s'$. 
 	\end{mmmyitem} 

 	Then $\phi$ is included in iterm equality, in that 
 	$\forall t,s.\;\phi\;t\;s \Ra t = s$.
 \end{prop}

Note that the above principle reflects the aforementioned alternative, \SwBvr{}/\ISwBvr{}-like definition of $\alpha$-equivalence. A principle that instead reflects the  \SwCg{}/\ISwCg{}-like definition is also possible, but is more tedious to use in proofs. 

The above principle cannot be inferred directly from the (alternative) definition of $\alpha$-equivalence, which by definition only gives us the following proof principle: 


	Assume $\phi : \PITrm \ra \PITrm \ra \Bool$ is a relation on pre-iterms such that, for all $p,q\in\ITrm$, if $\phi\;p\;q$ then one of the following is true: 
	\begin{mmmyitem} 
		\item there exists $x$ such that $p = \PVr\;x = q$; 
		\item there exist $p_1,p_2,q_1,q_2$ such that 
		$p = \PAp\;p_1\;p_2$, $q = \PAp\;q_1\;q_2$, \,$\phi\;p_1\;q_1$ and $\phi\;p_2\;q_2$; 
		\item there exist $x,p',y,q'$ such that 
		$p = \PLm\;x\;p'$, \,$q = \PLm\;y\;q'$, 
		\,($y  = x$ or $y\;\fresh\;p'$) and 
		$\phi\;(p[y\sw x])\;q'$. 
	\end{mmmyitem} 

	Then $\phi$ is included in $\alpha$-equivalence, in that 
	$\forall p,q.\;\phi\;p\;q \Ra p \equiv q$.
	%
\medskip

But in order to produce Prop.~\ref{prop-coind-iterm}, we need a stronger version whose hypotheses are weaker, in that pre-iterm equality is replaced by $\alpha$-equivalence---i.e., we need a form of $\alpha$-coinduction up to $\alpha$-equivalence. It turns out that we can prove such a stronger version, if we also assume that the predicate $\phi$ is compatible with $\alpha$-equivalence. This stronger principle is shown below, where we highlight the differences from the previous one: 

	Assume $\phi : \PITrm \ra \PITrm \ra \Bool$ is an \hlt{$\mbox{$\alpha$-equivalence-compatible}$} relation on pre-iterms 	
	such that, for all $p,q\in\ITrm$, if $\phi\;p\;q$ then one of the following is true: 
\begin{mmmyitem} 
	\item there exists $x$ such that $p \hlt{\equiv} \PVr\;x \hlt{\equiv} q$; 
	\item there exist $p_1,p_2,q_1,q_2$ such that 
	$p \hlt{\equiv} \PAp\;p_1\;p_2$, $q \hlt{\equiv} \PAp\;q_1\;q_2$, \,$\phi\;p_1\;q_1$ and $\phi\;p_2\;q_2$; 
	\item there exist $x,p',y,q'$ such that 
	$p \hlt{\equiv} \PLm\;x\;p'$, \,$q \hlt{\equiv} \PLm\;y\;q'$, 
	\,($y  = x$ or $y\;\fresh\;p'$) and 
	$\phi\;(p'[y\sw x])\;q'$. 
\end{mmmyitem} 

Then $\phi$ is included in $\alpha$-equivalence, in that 
$\forall p,q.\;\phi\;p\;q \Ra p \equiv q$.
\medskip

Now this last principle 
easily yields Prop.~\ref{prop-coind-iterm}, using the iterm to pre-iterm projection to transport the statement. 

\medskip 
The situation of the freshness predicate on iterms versus the one on pre-iterms is similar  
to one we just discussed, of iterm equality versus pre-iterm $\alpha$-equivalence. Namely, transporting directly to iterms the coinduction principle from the definition of pre-iterm freshness does not give a proof principle that is strong enough. So we need to play a game similar to the one above, working with an $\alpha$-compatible predicate and replacing equality with $\alpha$-equivalence, which then yields the desired principle (equivalent to the one we would get if we defined freshness directly on iterms coinductively, rather than defining it from pre-iterm freshness):\footnote{One may ask why we have not chosen to use this alternative definition of freshness, namely to define iterm freshness directly on iterms without using pre-iterm freshness. While we could have done that, the formal development would not have been simplified, since in order to recover some of the desired properties (those of the interaction between freshness and equality) we would have still needed to 
	connect iterm freshness with pre-iterm freshness..} 

 \begin{prop} \rm 
	\label{prop-coind-fresh}
	Assume $\phi : \Var \ra \ITrm \ra \Bool$ is a relation such that, for all $z\in\Var$ and $t\in\ITrm$, if $\phi\;z\;t$ then one of the following is true: 
	\begin{itemize}
		\item there exists $x$ such that $t=\Vr\;x$ and $z\not=x$;
		\item there exist $t_1,t_2$ such that $t= \Ap\;t_1\;t_2$, 
		\,$\phi\;z\;t_1$ and $\phi\;z\;t_2$; 
		\item there exists $x,t'$ such that $t= \Lm\;x\;t'$ 
		and ($z = x$ or $\phi\;z\;t'$).
	\end{itemize} 
	
	Then $\phi$ is included in $\_\#\_$, in that 
	$\forall x,t.\;\phi\;x\;t \Ra x \;\fresh\;t$.
\end{prop}

 \medskip
Similarly to what happens in the inductive world (for terms), the substitution operator 
 $\_[\_/\_] : \ITrm \ra \Var \ra \Var \ra \ITrm$ is not straightforward to define, because there is no corresponding well-behaved substitution operator that can be defined on pre-iterms. In fact, substitution is one of the cases where the nominal corecursors described in this paper can be deployed---see \S\ref{app-exaCorec}.  But next we describe a route that does not appeal to corecursors. 
 
 We will first define a ``pre-substitution'' operator $\_[\_/\_]' : \ITrm \ra \Var \ra \Var \ra \PITrm$ (thus targeting pre-iterms rather than iterms but still having iterms as source domain) by pre-iterm corecursion. To prepare for this definition, let us introduce the following operators:
 \begin{myitem}
 	\item $\isVr : \ITrm \ra \Bool$ and $\getVr : \ITrm \ra \Var$, where  
 	$\isVr\;t$ tests if the iterm $t$ has the form $\Vr\;x$, and in this case $\getVr\;t$ returns this unique $x$.
 	\item $\isAp : \ITrm \ra \Bool$ and $\getAp : \ITrm \ra \ITrm \times \ITrm$, where 
 	$\isAp\;t$ tests if $t$ has the form $\Ap\;t_1\;t_2$, and in this case $\getAp\;t$ returns this unique pair $(t_1,t_2)$.
 	\item $\isLm : \ITrm \ra \Bool$ and $\getLm : \ITrm \ra \Var \ra \ITrm \ra \Var \times \ITrm$, where 
 	$\isLm\;t$ tests if $t$ is a $\Lm$-abstraction, and in this case $\getLm\;s\;z\;t$ returns \emph{some} pair $(x,t')$ such that $t = \Lm\;x\;t'$, \,$x\;\fresh\;s$
 	and $x\not= z$. 
 \end{myitem}
 The correctness of all these definitions follows from the properties of iterms. For example, the possibility to write any $\Lm$-abstraction iterm as $\Lm\;x\;t'$ where $x$ is fresh for $s$ and $z$ follows from \SwBvr{} and the existence of infinitely many fresh variables for any term. 
 
We  let $\rep: \ITrm \ra \PITrm$ be the function that chooses a pre-iterm representative, i.e., such that $t = (\rep\;t)/\equiv$ for all $t\in\ITrm$. 

We are now ready for the corecursive definition of pre-substitution: 
$$
\begin{array}{l}
t[s / z]' = 
\mbox{if $\isVr\;t$ then (if $z = \getVr\;t$ then $\rep\;s$ else $\PVr\;(\getVr\;t)$)}
\\\phantom{t[s / z]' = \;} 
\mbox{else if $\isAp\;t$ then let $(t_1,t_2) = \getAp\;t$ in 
	$\PAp\;(t[s / z]' )\;(t[s / z]')$}
\\\phantom{t[s / z]' = \;} 
\mbox{else let $(x,t') = \getLm\;s\;z\;t$ in 
	$\PLm\;x\;(t'[s / z]' )$}
\end{array} 
$$

Finally, substitution on iterms is defined by $t[s / z] = t[s / z]'/\equiv$. 
The desired characteristic equations of substitution, namely:
\begin{itemize}
	\item $(\Vr\;x)[s/z] = $ (if $x = z$ then $s$ else $\Vr\;x$)
	\item $(\Ap\;t_1\;t_2)[s/z] = \Ap\;(t_1[s/z])\;(t_2[s/z])$ 
	\item $(\Lm\;x\;t)[s/z] = \Lm\;x\;(t[s/z])$ if $x\not= z$ and $x\;\fresh\;s$. 
\end{itemize}
can now be established by structural iterm coinduction (Prop.~\ref{prop-coind-iterm}) 
using the freshness-related properties of $\Lm$-abstractions.  
The proof of uniqueness, i.e., the fact that substitution is the only operator on iterms satisfying the above equations, also follows by structural iterm coinduction.\footnote{We took the trouble to sketch the development leading to the characteristic equations of iterm substitution because they have acted as an inspiration for our proof of the nominal corecursor theorem, Thm.~\ref{thm-allNominalCoRecs}---more precisely, for the direct proof of the $\ccr_2$ corecursion principle---see the \S\ref{app-proofCoSketches} proof sketch of Thm.~\ref{thm-allNominalCoRecs}.}

\medskip 
Renaming is of course a particular case of substitution, defined as 
$t[x/y] = t[\Vr\;x/y]$. After the characteristic equations of substitution have been established, all the Figs.~\ref{fig-basicProps} and \ref{fig-basicCoProps} properties involving substitution or renaming follow by iterm coinduction or rule coinduction. In these proofs, each time we need to split into cases according to the structure of an iterm, we make sure that in the abstraction case, $\Lm\;x\;t$, the binding variable $x$ is fresh for the rest of the proof context
---which in particular ensures that the above $\Lm$-clause for substitution can be applied. (This is of course a way to enforce Barendregt's variable convention. A local form of this convention, i.e., fresh cases analysis, seems sufficient in the coinductive world. On the other hand, a binding-aware coinductive notion analogous to fresh induction seems neither needed nor in fact possible.)

\medskip
\textbf{Iterms as an abstract (co)datatype. }
Given the fact that iterms are less well-known than terms, a valid question to ask is whether our definitions are correct, i.e., whether they capture correctly the notion of infinitary $\lambda$-calculus terms where the identity of bound variables does not matter. The possible uncertainty about this seems to be fed by the definitions via pre-iterms being rather low-level and tedious, not to mention that concepts such as $\alpha$-equivalence can be defined in several 
ways. Moreover, our way of defining iterms is certainly not the only way. For example, \cite{DBLP:conf/cmcs/KurzPSV12} define the iterms of finite support as the metric completion of the set of (finitary) $\lambda$-terms. 

To resolve this possible uncertainty, the concept of abstract datatype comes handy. After having proved for the above defined iterms:
\begin{myitem}
	\item all the properties listed in Figs.~\ref{fig-basicProps} and \ref{fig-basicCoProps} (with ``countable'' replacing ``finite'' for the last group in Fig.~\ref{fig-basicProps}), 
	\item the structural conduction principle described by Prop.~\ref{prop-coind-iterm} and 
	\item the corecursion principles described by Thm.~\ref{thm-allNominalCoRecs}, 
\end{myitem}
we have reached a highly redundant unique characterisation of iterms together with its operators as an abstract datatype (i.e., unique up to an operator-preserving bijection). So we can forget about how iterms were defined, in particular, can forget about pre-iterms and $\alpha$-equivalence. 
This process of ``forgetting'' is also useful from a proof development perspective, since the available proof and definition principles for iterms form a self-sufficient layer of abstraction.  

\section{More Details on Epi-Corecursors and Nominal Corecursors} 
\label{app-moreDetailsCorec}

\subsection{Miscellanea} 

The concept of epi-corecursor is depicted in Fig.~\ref{fig-epiCRP}. 

\begin{figure}
	$$
	\xymatrix@C=4pc@R=1pc{
		\Ccat \ar[d]_R &  C  \ar[r]^{\im_{C,\,J}} &  J
		\\
		\Bcat &  B = R\,C   \ar[r]^{R\,\im_{C,\,J}}  & T = R\,J
	}
	$$
	\vspace*{-2ex}
	\caption{Epi-corecursor
	}
	\label{fig-epiCRP}
	\vspace*{-2ex}
\end{figure}

The criterion that we used for proving that various epi-corecursors are more expressive than others is morphism-dual (though not functor-dual, i.e., the pre-functor's direction is not reversed) to that 
we used for recursors (Prop.~\ref{prop-extCriterion}): 

\begin{prop}\rm
	\label{prop-extCoCriterion} 
	Let $\ccr = (\Bcat,T,\Ccat,J,R)$ and 
	$\ccr' = (\Bcat,T,\Ccat',J',\alb R')$, and 
	assume $F : \Ccat \ra \Ccat'$ is a pre-functor  
	such that 
	$R' \circ F = R$ and 
	$F\;J = J'$.  
	Then $\ccr' \geq \ccr$. 
\end{prop}

In the main paper, we mentioned that a gentler/laxer comparison relation is available for epi-corecursors as well. This is indeed obtained immediately by morphism-dualizing the one from epi-recursors:

\begin{defi} \rm \label{defi-Coqstrong} 
	$\ccr'= (\Bcat,T,\Ccat',J',R')$ is 
	\emph{quasi-stronger} than $\ccr = (\Bcat,T,\Ccat,J,R)$, written $\ccr' \wgeq \ccr$,
	when there exists a final segment $(\Bcat_0,(m(B):B \ra o(B))_{B \in \Obj{\Bcat}})$ of $\Bcat$ such that, 
	for all 
	$g: B \ra T $ definable by $\ccr$, there exists 
	a morphism $g_0: o(B) \ra T$ such that $g_0$ is definable by $\ccr'$ and $g = g_0 \circ m(B) $.  
\end{defi}  

The effective criterion for checking $\wgeq$, Prop.~\ref{prop-WeakExtCriterion}, 
can also be morphism-dualized from epi-recursors to epi-corecursors:

\begin{prop}\rm
	\label{prop-CoWeakExtCriterion} 
	Let $\ccr = (\Bcat,T,\Ccat,J,R)$ and 
	$\ccr' = (\Bcat,T,\Ccat',J',\alb R')$.  
	Assume 
	$(\Bcat_0,(m(B):B\ra o(B))_{B \in \Obj{\Bcat}})$ is a final segment of $\Bcat$
	and $(\Ccat_0,\alb(m_1(C):C\ra o_1(C))_{C \in \Obj{\Ccat}})$ is a final segment of $\Ccat$ 
	such that $\Ccat_0$ contains $J$ and 
	$R$ preserves the above final segments, 
	and $F : \Ccat_0 \ra \Ccat'$ is a 
	pre-functor such that 
	$F\;J = J'$ and $R' \circ F = R_{\restr\Ccat_0}$ 
	(where $R_{\restr\Ccat_0}$ is the restriction of $R$ to $\Ccat_0$).  
	Then $\ccr' \wgeq \ccr$. 
	\looseness=-1
\end{prop}

So, as shown in Fig.~\ref{fig-CocritQuasiStrongerEpirec}, we start with a morphism $g$ 
definable by $\ccr$ and use the two final segments to factor it 
as 
a morphism $g_0$ definable by $\ccr'$  and a remainder morphism $m(B)$.  
\looseness=-1

\begin{figure*}[!ht]
	\vspace*{-2.8ex}
	$$
	\xymatrix@C=3.0pc@R=1.3pc{		
		\Ccat_0 \su \Ccat \ar@/_1.5pc/[dd]_{R}  \ar[d]^{\exists F} &  
		 C 
		\ar@/_1.5pc/[rrrr]^(.5){!_{C,\,J}}
		\ar[rr]^{m_1(C)} 
		&& o_1(C) \ar[rr]^{!_{o_1(C),\,J}} && J
		\\
		\Ccat' \ar[d]^{R'}  
		& 
		&& C' = F\;o_1(C)   
		  \ar[rr]^(.5){F\;!_{o_1(C),\,J} \,=\, !_{C',\,J'}}
		&&  J' = F\,J   
		\\
		\Bcat_0 \su  \Bcat &    
		B = R\;C		
		\ar@/_1.5pc/[rrrr]^(.5){g \,=\, R\;!_{C,\,J}}
		\ar[rr]^(.6){m(B) \,=\, R\;m_1(C)}  
		&& o(B) 
		\ar[rr]^{g_0 = R\,!_{o_1(C),\,J} = R'\,!_{C',\,J'}}
		&& 
		R\,J \!=\! R' J' \!=\! T  
	}
	\vspace*{-1.8ex}
	$$
	\caption{Gentler criterion for comparing corecursor expressiveness} 
	\label{fig-CocritQuasiStrongerEpirec}
	\vspace*{-1.7ex}
\end{figure*}

We also claimed in the main paper, that, unlike in the case of nominal recursors, this gentler comparison relation and criterion are unlikely to bring anything new in terms of concrete nominal corecursor comparisons. This is because, whereas in the case of recursor models $C$ we could fruitfully take $m_1(C):o_1(C)\ra C$ to be submodels where properties like finite support would hold and would enable the equivalence of different axiomatizations, here, in the dual case, our best bet would be to take $m_1(C):C \ra o_1(C)$ to be something like quotients---which would be unlikely to preserve even the given axiomatizations (with the \emph{conditional} equations and Horn clauses being particularly problematic), let alone produce stronger ones.

\subsection{Details on the connection with the \cite{DBLP:journals/pacmpl/BlanchetteGPT19} corecursor} 
\label{con-blanchette}

The syntax of $\lambda$-calculus is obtained by instantiating the \cite{DBLP:journals/pacmpl/BlanchetteGPT19}  binder type $F$ to the four-argument functor $F(A_1,A_2,T_1,T_2) = A_1 + T_1 \times T_1 + A_2 \times T_2$ and their binder dispatcher $\theta$ to $\{(2,2)\}$.  Here $A_1$ and $A_2$ refer to (hypothetical) types of free and  bound variables respectively, and $T_1$ and $T_2$ to (hypothetical) types of terms; and $\theta$ says that the second type of variables binds in the second type of terms. Our set $\ITrm$ of iterms and its constructors and destructor are obtained as the final solution of the equation (isomorphism)  $\ITrm \simeq_{\theta}  F(\Var,\Var,\ITrm,\ITrm)$, i.e., $\ITrm \simeq_{\theta}  \Var + \ITrm \times \ITrm + \Var \times \ITrm$, 
where the index $\theta$ indicates the quotienting modulo the $\alpha$-equivalence induced by the binder dispatcher $\theta$.  (Their equation is actually solved polymorphically in the variable type, but above we instantiated that type to $\Var$.) Their free-variable 
operator $\FVars : \ITrm \ra \Pow(\Var)$ is exactly our $\FV$. 
When restricted to permutations, 
their map operator 
$\Fmap : (\Var \ra \Var) \ra \ITrm \ra \ITrm$ 
is our permutation operator $[\_] : \ITrm \ra \Perm \ra \ITrm$ with reverse order of arguments. 

The \cite{DBLP:journals/pacmpl/BlanchetteGPT19} corecursor is described in \S7.2 of the cited paper.  Again fixing the type of variables to $\Var$ (which they instead keep polymorphic), 
their comodels (introduced in their definition 25), which are the targets for their recursors, become our $(\Sigma_2,\Props_2)$-models (for the perm/free variant corecursor $\ccr_2$) from \S\ref{subsec-hiarNomCorec}, provided we remove one of their unnecessary axioms:  
\begin{itemize}
	\item their term-like structure axioms (from their definition 20) correspond to our \PmId{}, \PmCp{}, \PmFv{} and \FvPm{} axioms;\footnote{This is one axiom more than what we assume in $\Props_2$; namely their definition 20's last axiom, which corresponds to \FvPm{}, is not in $\Props_2$ because it is not needed for our recursor.}
	\item their \textsf{DRen} axiom corresponds to our \IPmBvr{} axiom
	\item their \textsf{MD} axiom, when split according to the three summands of the underlying sum type, corresponds to our \IPmVr{}, \IPmAp{} and \IPmLm{}; 
	\item their \textsf{VD} axiom, again when split across the sum type, corresponds to our \IFrVr{}, \IFrAp{} and \IFrLm{}.
\end{itemize}
The conclusion of their corecursion theorem (Theorem 26) states the existence and uniqueness of a function subcommuting with the destructor, commuting with mapping and preserving the free variables---which in this case is the same as a morphism of $\Sigma_2$-models.

\section{Proof Sketches for the Corecursor Results} 
\label{app-proofCoSketches} 


\ \\
\textbf{Proof of Prop.~\ref{prop-extCoCriterion}.}  
The proof is dual to that of Prop.~\ref{prop-extCriterion}: 
Assume $g:B \ra T$ is definable by $r$, meaning that $g = R\;\im_{C,J}$ for some $C$ in $\Ccat$. 
Let $C' = F\;C$. 
By the finality of $J'$ and the fact that $F\;J = J'$, we have that 
$\im_{C',J'} = F\;\im_{C,J}$. Hence 
$g = R\;\im_{C,J} = R'\;(F\;\im_{C,J}) = R'\;\im_{C',J'} $, 
meaning that $g$ is definable by $r'$.  \qed

\ \\
\textbf{Proof of Prop.~\ref{prop-CoWeakExtCriterion}.}  Again, dual to that of 
Prop.~\ref{prop-WeakExtCriterion}. \qed 

\ \\
\textbf{Proof of Thm.~\ref{thm-exprCo}.}   
The proof of all inequalities $\ccr_i \geq \ccr_j$  in this theorem use Prop.~\ref{prop-extCoCriterion}, 
so we show how to (functorially) 
transform $(\Sigma_j,\Props_j)$-models to $(\Sigma_i,\Props_i)$-models in such a manner that $\ITTrm(\Sigma_j)$ 
becomes $\ITTrm(\Sigma_i)$. As before for recursors, we informally discuss these transformations and highlight the intuitions behind these expressiveness results. 

\textbf{Proof of $\ccr_3 \equiv \ccr_1$:} 
%
The correspondence between the swapping and permutation operators 
proceeds 
like in the proof of $r_3 \equiv r_1$ from Thm.~\ref{thm-expr} (which in turn is based on 
\cite[Section 6.1]{pitts_2013}). In short, just like there, we are able to move bijectively (and functorially) 
between $\Sigma_3$-models of \ISwId{}, \ISwIv{}, \ISwCp{} (i.e., pre-nominal sets axiomatized via swapping) 
and $\Sigma_1$-models of \IPmId{}, \IPmCp{} (i.e., pre-nominal sets axiomatized via permutation). 
Moreover, it is not hard to prove that, along this correspondence:
\begin{itemize}
	\item the properties expressing commutations of swapping or permutation with (the three components of) the 
	destructor, namely \ISwVr{}, \ISwAp{}, \ISwAp{} versus \IPmVr{}, \IPmAp{}, \IPmAp{}, correspond to each other;
	\item and so do the support-defining and bound-variable-renaming properties, namely \IFvDSw{}, \ISwBvrT{} versus 
	\IFvDPm{}, \IPmBvr{}. 
\end{itemize}
If we ignore the destructor part, what we ended up proving here is a variation of \citet{pitts_2013}'s result---not for nominal sets (i.e, finitely supported pre-nominal sets), but for 
countably-supported pre-nominal sets (though in the presence of an uncountable number of variables/atoms). 

\textbf{Proof of $\ccr_6 \geq \ccr_3$:}  We show that any $(\Sigma_3,\Props_3)$-model is 
a $(\Sigma_6,\Props_6)$-model via the usual translation of freeness into freshness. 
We need to show that (via this freeness-freshness translation) 
the $\Props_3$ axioms imply the $\Props_6$ axioms. 
First, we note that the conjunction of \ISwId{} and \ISwBvr{} (the latter being the freshness counterpart of \ISwBvrT{}) implies  \ISwCg{}: we fulfil the existential 
in the statement of \ISwCg{} by taking $z$ to be $x_2$. 
So we are left to show that, if we define freeness from swapping via \IFvDSw{} (i.e., employing the countability predicate), 
the ``expected'' properties that connect freeness/freshness with the destructor
(\IFrVr{}, \IFrAp{}, \IFrLm{}) and with swapping (\ISwFr{}, \IFrSw{}) hold. 
All these follow from the closure properties of countable sets and the structural properties of swapping (i.e., the pre-nomional 
set axioms). 
Thus, $\ccr_6 \geq \ccr_3$ follows from Prop.~\ref{prop-extCoCriterion} using the freeness-to-freshness translation functor. 

\textbf{Proof of $\ccr_5 \geq \ccr_6$:}  
Here, the signatures are equal ($\Sigma_5 = \Sigma_6$), and we employ the identity functor 
after showing that the $\Props_6$  axioms imply the $\Props_5$ axioms. We do this by showing 
that, in the presence of the other $\Props_6$ axioms, \ISwCg{} implies \ISwBvr{}. 
Indeed, assume $\Dest^\MM\;n = \Ll\;K$ and $\{(x,m),(x',m')\} \su K$. 
\ISwCg{} gives us a (quasi)fresh $z$ such that $m[z\sw x] = m'[z\sw x']$.
From the algebraic properties of swapping and \ISwFr{},  
we get 
$m[x'\sw x] = m[z \sw x][z \sw x'] = m'$; 
moreover, from the freshness of $z$ and \IFrSw{}, we obtain that 
$x' \not= x$ implies $x'\;\fresh^\MM\, m$, as desired for \ISwBvr{}. 
(In summary, thanks to the algebraic properties of swapping and freshness, 
we are able to use the weaker axiom \ISwCg{} to establish \ISwBvr{} by taking a roundabout 
through an ``auxiliary'' fresh variable $z$.)
%

\textbf{Proof of $\ccr_2 \equiv \ccr_5$:}  Already from the proof of 
$r_3 \equiv r_1$ (and $\ccr_3 \equiv \ccr_1$) we know that $\{$\ISwId{}, \ISwIv{}, \ISwCp{}$\}$
 and $\{$\IPmId{}, \IPmCp{}$\}$ correspond to each other (via a correspondence between swapping and permutation), in that one can move bijectively and functorially between models of one group and models of the other group. 
It is immediate to show that (if we further apply the freeness-freshness translation) this correspondence extends to \ISwFr{} versus \IPmFv{} 
(simply using that $\_[z_1\llra z_2]$ is the same as $\_[z_1\sw z_2]$ in this correspondence). Finally, by induction on the definition of permutation from swapping we can show that the correspondence also extends to $\{$\ISwVr{}, \ISwAp{}, \ISwAp{}$\}$ versus $\{$\IPmVr{}, \IPmAp{}, \IPmAp{}$\}$.

\textbf{Proof of $\ccr_{9} \geq \ccr_{8}$:} 
The signatures are the same, and we show that every $\Props_8$-model is a $\Props_9$-model. 
%
First, \IFrVr{}, \IFrAp{} and \IFrLm{} follow from the \IFrDRn{} contability-based definition of freshness from renaming, the corresponding properties of renaming (\IRnVr{}. \IRnAp{} and \IRnLmO{}), and the closure properties of countable sets. 
It remains to show that freshness (again, as defined from renaming via \IFrDRn{})  satisfies \IRnFr{}, \IRnChFr{} and \IFrRn{}. 
And indeed, \IFrDRn{} implies that, whenever 
$x \;\fresh^\MM\, m$, there exists $y\not=x$ such that $m=m[y/x]$. Using this, we can show that \IRnIm{} implies \IRnFr{}, and \IRnCh{} implies \IRnChFr{}. Finally, \IFrRnT{} already proves (in fact is equivalent to) half of \IFrRn{}, namely its left-to-right implication. The other implication, namely 
``($z = y$ or $z \;\fresh^\MM\,$) and ($y \;\fresh^\MM\,m$ or $x \not= z$) implies 
  $z \;\fresh^\MM\,m[x / y]$'', follows from  \IFrDRn{} and the closure properties of countable sets. 
 (We note that, in the analogous case of freshness from swapping, namely in the proof of $\ccr_6\geq \ccr_3$,  
   \IFrSw{} follows from \IFvDSw{} without any help of an axiom analogous to \IFrRnT{}---another virtue of swapping in comparison with renaming.)
   
\textbf{Proof of $\ccr_{9} \geq \ccr_{7}$:}  
After defining a renaming operator from the substitution operator 
as usual, we can show that all the ``\textrm{Sb}''-axioms instantiate to the corresponding ``\textrm{Rn}''-axioms.  
We note the following nuance though, which differs from the inductive case: In order to prove \IRnLmO{} from \ISbLm{}, we need to infer $x\not=z$ from $x\;\fresh^\MM\,(\Vr^\MM z)$. This not \IFrVr{} but its converse, which in turn follows from \VrInv{} and  \FrVr{}.


\textbf{Proof of $\ccr_{3} \geq \ccr_{8}$:}  
From \cite{DBLP:conf/cade/Popescu22}, we know that every renset,  i.e., model of \IRnId{}, \IRnIm{}, \IRnCh{}, \IRnCm{}  
of finite support 
gives rise (in a functorial manner) to a nominal set (in the swapping-based axiomatization \cite[Section 6.1]{pitts_2013}), i.e., a model of \ISwId{}, \ISwIv{}, \ISwCp{}  
of finite support. 
The idea is to define swapping, say, of $z_1$ with $z_2$, from renaming using the standard trick of an intermediate fresh variable $y$: first rename $z_1$ to $y$, then $z_2$ to $z_1$, and finally $y$ to $z_2$; such a fresh $y$ exists thanks to the renset being finitely supported; the nominal set properties then follow from the renset properties, after showing that the choice of $y$ does not matter. 

A similar proof works here, but using countable support (\IFSupFr{}) instead of finite support and taking advantage of the fact that we have uncountably many variables. Moreover (again using the freshness-freeness translation), \IFvDSw{} follows from \IFrDSw{}, and \ISwBvr{} follows from \IRnBvr{}.  
Finally, the destructor-commutation properties of swapping, \ISwVr{}, \ISwAp{} and \ISwLm{}, follow from the corresponding properties of renaming, \IRnVr{}, \IRnAp{} and \IRnLmO{}. 
In order to infer \ISwLm{} from (the definition of swapping from renaming and) \IRnLmO{}, we also need \IIRnBvr{}; this is because  \ISwLm{} expresses unconditional commutation, whereas \IRnLmO{} conditions commutation by freshness, and \IIRnBvr{} is needed to provide the necessary ``refresher'' to bridge this gap. 

\textbf{Proof of $\ccr_{5} \geq \ccr_{9}$:}  
The proof is similar to that of $\ccr_{3} \geq \ccr_{8}$, noting that the construction of swapping from renaming and the proof of its properties are independent from the tight coupling of freshness/freeness with renaming or swapping (via \FvDSw{} or \FrDRn{}). 
\qed

\medskip
In order to prove Thm.~\ref{thm-allNominalCoRecs} without having to prove eight different corecursion theorems, we use a similar trick to that for recursors described 
in \S\ref{app-subsec-backToProofRecThms}. Namely:
\begin{itemize}
	\item we prove the corecursion theorem only in 
	a most expressive case, $\ccr_2$; 
	\item we use a slight generalization of Thm.~\ref{thm-exprCo} (which assumes pre-epi-corecursors rather than epi-corecursors, but has essentially the same proof as the one sketched above for Thm.~\ref{thm-exprCo}) to borrow the result for $\ccr_2$ to the other seven cases, thus inferring the other seven epi-corecursors from $\ccr_2$. 
\end{itemize} 

The relevant definition and proposition follow---they are dual to those from \S\ref{app-subsec-backToProofRecThms}. 

\begin{defi}\rm 
	A \emph{pre-epi-corecursor} is a tuple $\ccr = (\Bcat,T,\Ccat,J,R)$ subject to the same condition as an epi-corecursor, but without the requirement that $J$ is the final object of $\Ccat$. 
	
	A pre-epi-corecursor is called \emph{tight}  
	if the following hold:
	\begin{myitem} 
		\item $T$ is a quasi-final object in $\Bcat$ (in that for every object $B$ in $\Bcat$ there exists at most one morphism from $B$ to $T$). 
		\item The functor $R$ is faithful (in that it is injective on morphisms).   	\qed 
	\end{myitem} 
\end{defi}
\medskip

All our pre-epi-corecursors $\ccr_i$ 
are tight. 
Indeed, the model $T=\ITTrm(\Sigmad)$ is quasi-final because its coinduction principle,  Prop.~\ref{prop-coind-iterm}, is stronger than 
that of the pre-iterm model, Prop.~\ref{prop-coind-piterm}; in other words, Prop.~\ref{prop-coind-piterm} holds for iterms as well, making 
$T$ a fully abstract (hence quasi-final) model. 
Moreover, in each case the morphism component of the functor $R$ is the identity. 

\begin{prop}\rm \label{prop-Coborrow} 
	Assume the following:
	\begin{itemize}
		\item $(\Bcat,T,\Ccat,J,R)$  is a tight pre-epi-recursor
		\item $\ccr'= (\Bcat,T,\Ccat',J',R') $ is an epi-corecursor
		\item The hypotheses of
		Prop.~\ref{prop-extCoCriterion} 
		hold, with the additional property that the pre-functor $F$ is full (i.e., it is surjecive on morphisms). 
	\end{itemize}

	Then $r$ is an epi-corecursor (i.e., $J$ is final).  
\end{prop} 
\begin{proof}
	Dual to the proof of Prop.~\ref{prop-borrow}. \qed 
\end{proof}
\medskip

\textbf{Proof of Thm.~\ref{thm-allNominalCoRecs}.}   
In light of the above discussion, is suffices to prove that $\ccr_2$ is an epi-corecursor, i.e., that $\ITTrm(\Sigma_2)$ is the final  $(\Sigma_2,\Props_2)$-model. So let $\MM$ be a $(\Sigma_2,\Props_2)$-model. We need to show that there exists a unique morphism $g : \MM \ra \ITTrm(\Sigma_2)$. 
As usual, we write $M$ for the carrier of $\MM$, $\PDest^\MM$ for its destructor, etc. 
The proof follows a similar route to (and is essentially aa generalization of) that we described for 
the substitution operator in \S\ref{appsub-iterm}. 

We first define a function to pre-iterms, $g' : M \ra \PITrm$, using (standard) pre-iterm corecursion: 
$$
\begin{array}{l}
g'\,m = \mbox{case $\PDest^\MM m$ of}
\\
\hspace*{8ex}\mid \Vv\;x \Ra \PVr\;x
\\
\hspace*{8ex}\mid \Aa\,(m_1,m_2) \Ra \PAp\;(g'\,m_1)\;(g'\,m_2)
\\
\hspace*{8ex}\mid \Ll\;K \Ra 
\mbox{let $(x,m') \in K$ in }
\PLm\;x\;(g'\,m') 
\end{array} 
$$
and then define $g : M \ra \ITrm$ by $g\;m = (g'\,m)/\equiv\,$. 
Note that the definition of $g'$, hence that of $g$ too, depends on a choice of a pair $(x,m')$ in $K$ (which is guaranteed to be non-empty).

The above definitions immediately imply that $g$ commutes with the variable and application cases of the destructor, namely
\begin{itemize}
	\item[(1)] $\Dest^\MM\,m = \Vv\;x$ implies $g\;m = \Vr\;x$, and 
	\item[(2)] $\Dest^\MM\,m = \Aa\,(m_1,m_2)$ implies $g\;m = \Ap\;(g\;m_1)\;(g\;m_2)$
\end{itemize}
but the problematic case is the abstraction case, where so far we only know:
\begin{itemize}
	\item[(3)] $\Dest^\MM\,m = \Ll\;K$ implies that \emph{there exists} $(x,m') \in K$ such that $g\;m = \Lm\;x\;(g\;m')$.
\end{itemize}
What we want for a morphism is a stronger version of (3) that replaces ``there exists'' with ``for all'. 

By freshness coinduction (Prop.~\ref{prop-coind-fresh}), using that $\MM$ satisfies \FrVr{}, \FrAp{} and \FrLm{}, we can now prove:
\begin{itemize}
\item[(4)] $g$ preserves freshness/freeness, in that $x \notin \FV^\MM\,m$ implies $x\;\fresh\;(g\;m)$; or, using free-variable notation for iterms, $x \notin \FV^\MM\,m$ implies $x \notin \FV\;(g\;m)$; i.e., 
$\FV\;(g\;m) \su \FV^\MM\,m$. 
\end{itemize}

To prove that $g$ commutes with permutation is trickier, and requires a generalization. Namely, we prove:
\begin{itemize}
	\item[(5)] $g(m[\sigma]^\MM)[\tau] = g(m)[\tau\circ \sigma]$ for all $m\in M$ and $\tau,\sigma\in\Perm$. 
	\end{itemize} 
This follows by iterm coinduction (Prop.~\ref{prop-coind-iterm}) using (1)--(4) and the fact that $\MM$ satisfies \PmVr{}, \PmAp{}, \PmLm{}, as well as  \PmCp{}, \PmFv{}, \PmBvr{}---this last group of properties is needed in addition to \PmLm{} for the case when $\Dest^\MM\,m$ is an abstraction. 

From (5) and the fact that $\MM$ satisfies \PmId{}, we immediately get 
\begin{itemize}
	\item[(6)] $g(m)[\tau] = g(m)[\tau]$ for all $m\in M$ and $\tau\in\Perm$,  
\end{itemize} 
i.e., commutation of $g$ with permutation.  
Now, from (3), (6) and the fact that $\MM$ satisfies \IPmBvr{}, we obtain the stronger version of commutation with abstractions: 
\begin{itemize}
	\item[(3')] $\Dest^\MM\,m = \Ll\;K$ implies that $g\;m = \Lm\;x\;(g\;m')$ \emph{for all} $(x,m') \in K$. 
\end{itemize}
Properties (1), (2), (3'), (4), (6) mean that $g$ is a morphism of $\Sigma_2$-models. Finally, the uniqueness of such a morphism, actually even more strongly the uniqueness of any function satisfying (1), (2) and (3'), follows by iterm coinduction (Prop.~\ref{prop-coind-iterm}). 
\qed  

\medskip
A note on the above proof: To prove the central fact (5), it was 
important to work with entire permutations rather than just swapping, essentially because the abstraction case in the proof by iterm coinduction adds a composition with a transposition (a reminiscence of the definition of $\alpha$-equivalence). This is why it seems hopeless to have a corecursor that is based on swapping (i.e., single-transposition permutation) without assuming the axioms necessary to extend swapping to permutation---which contrasts with the situation of recursors, where that was possible and yielded for swapping $\geq$-stronger recursors than for permutation (as seen with $r_4,r_5$ and $r_6$).

\section{Enhancements to the Corecursors} 
\label{app-enhCorec}

In \S\ref{app-addingBacknhancements} we discussed the notion of enhancing the nominal recursors along two main axes: (1) shifting from iteration to full recursion and (2) adding support for Barendregt's variable convention. 

The Baredregt enhancement does not seem to make sense in the case of nominal corecursors. But such an enhacement does not seems to be needed in the first place, essentially because any type of bound-variable avoidance condition can be integrated in the domain of the chosen model---indeed, unlike in the case of recursion, this is possible for corecursion because we have flexibility in the domain (rather than the codomain) of the to-be-defined function. We will illustrate this phenomenon in \S\ref{app-exaCorec} with the corecursive counterpart of the  paradigmatic situation that in the recursive case calls for Barendregt's convention: the definition of (parallel) substitution on iterms. 

\medskip
On the other hand, the enhancement of coiteration to full (structural) corecursion is possible for nominal corecursors, and is fairly straightforward: A full corecursion principle can be inferred from the coiteration principle similarly to how this is done for standard codatatypes. Below we illustrate this on the swap/fresh variant corecursor, $\ccr_5$. 

So we know from Thm.~\ref{thm-allNominalCoRecs} that 
$\ITTrm(\Sigma_5)$ is the final $(\Sigma_5,\Props_5)$-model. This means that, 
for all $(\Sigma_5,\Props_5)$-models $\MM$, there exists a unique morphism $g : \MM \ra \ITTrm(\Sigma_5)$, i.e., a unique function $g : M \ra \ITrm$ such that the following hold (where for better readability we write the sub-commutation of $g$ with the destructor in the alternative form that employs constructors for iterms): 
\begin{itemize}
	\item[(1)] $\Dest^\MM\,m = \Vv\;x$ implies $g\;m = \Vr\;x$ 
	\item[(2)] $\Dest^\MM\,m = \Aa\,(m_1,m_2)$ implies $g\;m = \Ap\;(g\;m_1)\;(g\;m_2)$
	\item[(3)] $\Dest^\MM\,m = \Ll\;K$ implies that $g\;m = \Lm\;x\;(g\;m')$ for all $(x,m') \in K$
	\item[(4)] $g(m[z_1 \sw z_2]^\MM) = g(m)[z_1 \sw z_2]$
	\item[(5)] $x\;\fresh^\MM\;m$ implies $x\;\fresh\;(g\;m)$
\end{itemize}

For full recursion, we consider \emph{generalized $\Sigma_5$-models} $\MM$, whose destructors $\Dest^{\MM}$ have type not $M \ra \Var + M \times M + \PPne(\Var \times M)$, but 
$M \ra (\Var + M \times M + \PPne(\Var \times M)) \hlt{+\, \ITrm}$. The purpose of this $\ITrm$ summand is (just like for standard codatatypes) the possibility to allow an immediate exit from the corecursion calls by returning an iterm. 
Morphisms of generalized $\Sigma_5$-models are defined as one would expect, in that commutation with the destructor now means 
$((1_\Var + g \times g + \img(1_\Var \times g))\hlt{+\, 1_{\ITrm}})\,(\Dest^{\MM} m) \sqsubseteq  \Dest^{\MM}(g\;m)$. Above, writing $\In_1$ and $\In_2$ for the two injections into the sum type, $\sqsubseteq$ on $(\Var + M' \times M' + \PPne(\Var \times M')) +\, \ITrm$ 
is defined by taking $u \leq v$ to mean that: 
either $u=\In_1(\Vv\;x) = v$ for some $x$; 
or $u=\In_1(\Aa(m_1',m_2')) = v$ for some $m_1',m_2'$;
or $u=\In_1(\Ll\;K)$, $v = \In_1(\Ll\;K')$ and $K\su K'$ for some $K,K'$; 
or $u = \In_2(t) = \In_2(t) = v$ for some $t\in\ITrm$. 
This condition is equivalent to the conjunction of the following four conditions:  
\begin{itemize}
	\item $\Dest^{\MM}\,m = \In_1(\Vv\;x)$ implies $g\;m = \Vr\;x$; 
	\item $\Dest^{\MM}\,m = \In_1(\Aa (m_1,m_2))$ implies $g\;m = \Ap\;(g\;m_1)\;(g\;m_2)$; 
	\item $\Dest^{\MM}\,m = \In_1(\Ll \;K)$ and $(x,m')\in K$ implies $g\;m = \Lm\;x\;(g\;m')$
	\item $\Dest^{\MM}\,m = \In_2\;t$ implies $g\;m = t$. 
\end{itemize}

Iterms become a generalized $\Sigma_5$-model by extending the destructor $\Dest : \ITrm \ra \Var + 
\ITrm \times \ITrm + \PPne(\Var\times \ITrm)$ to a ``generalized destructor''
$\Dest' : \ITrm \ra (\Var + 
\ITrm \times \ITrm + \PPne(\Var\times \ITrm)) \hlt{+\;\ITrm}$ defined by:
$\Dest'\;t = \In_1(\Dest\;t)$. Let us call this model $\ITTrm'(\Sigma_5)$. 

The notion of a generalized model satisfying a property from $\Props_5$ is extended from (standard) models in a straightforward manner, by simply inserting the $\In_1$ injection. For example, $\MM$ satisfying \ISwAp{} means: For all $m,m_1m_2\in M$ and $z_1,z_2\in\Var$, if $\Dest^\MM m = \In_1(\Aa(m_1,m_2))$ then 
$\Dest^\MM(m[z_1 \sw z_2]^\MM) = \Aa(m_1[z_1 \sw z_2]^\MM,m_2[z_1 \sw z_2]^\MM)$. 

Now, the full recursion principle states that $\ITTrm'(\Sigma_5)$ is the final generalized 
$(\Sigma_5,\Props_5)$-model, and can be proved from the iteration principle as follows. Let $\MM'$ be a generalized $(\Sigma_5,\Props_5)$-model. We build from it a (standard) $(\Sigma_5,\Props_5)$-model $\MM$ 
on the carrier set $M' + \ITrm$ 
by combining the operators of $\MM'$ with those of iterms: 
\begin{itemize}
	\item $M = M' + \ITrm$
	\item $\Dest^\MM\;(\In_1\;m) = \mbox{case $\Dest^{\MM'} m$ of}$
    \\$\begin{array}{l}
		\hspace*{25ex}\mid \Vv\;x \Ra \In_1(\Vv\;x) 
		\\
		\hspace*{25ex}\mid \Aa(m_1,m_2) \Ra \In_1(\Aa(m_1,m_2)) 
		\\
		\hspace*{25ex}\mid \Ll\K \Ra \In_1\,(\Ll\,\{(x,\In_1\;m) \mid (x,m') \in \Ll\;K\})
	\end{array} 
  $
  \\
  $\Dest^\MM\;(\In_2\;t) = \In_2(\Dest\;t)$
  \item $(\In_1\;m)[z_1\sw z_2]^\MM = \In_1(m [z_1\sw z_2]^{\MM'})$
  \hspace*{5ex}$(\In_2\;t)[z_1\sw z_2]^\MM = \In_2(t [z_1\sw z_2])$
   \item $z\;\fresh^\MM\,(\In_1\;m) \iff z\;\fresh^{\MM'} m $
  \hspace*{16ex}
    $z\;\fresh^\MM\,(\In_2\;t) \iff z\;\fresh\;t$
\end{itemize}
That $\MM$ satisfies $\Props_5$ follows from the fact that $\MM'$ and the model of iterms do. From coiteration, this gives us a unique morphism $g : \MM \ra \ITTrm(\Sigma_5)$, i.e., a unique function $g: M' + \ITrm = M \ra \ITrm$ satisfying clauses (1)--(5) above. Finally, we define $g': M' \ra \ITrm$ by $g'\,m = g(\In_1\;m)$. Then (1)--(5) imply that $g'$ satisfies the clauses:
\begin{itemize}
	\item[(1')] $\Dest^{\MM'}\,m = \In_1(\Vv\;x)$ implies $g'\,m = \Vr\;x$ 
	\item[(2')] $\Dest^{\MM'}\,m = \In_1(\Aa\,(m_1,m_2))$ implies $g'\,m = \Ap\;(g'\,m_1)\;(g'\,m_2)$
	\item[(3')] $\Dest^{\MM'}\,m = \In_1(\Ll\;K)$ implies that $g'\,m = \Lm\;x\;(g\;m')$ for all $(x,m') \in K$
	\item[(4')] $g'(m[z_1 \sw z_2]^{\MM'}) = g'(m)[z_1 \sw z_2]$
	\item[(5')] $x\;\fresh^{\MM'} m$ implies $x\;\fresh\;(g'\,m)$
\end{itemize}
and additionally the following follows by iterm coinduction:
\begin{itemize}
	\item[(6')] $\Dest^{\MM'}\,m = \In_2\,t$ implies $g'\,m = t$ 
\end{itemize} 
Clauses (1')--(6') mean that $g'$ is a morphism of generalized $\Sigma_5$-models, so $g' : \MM' \ra \ITTrm'(\Sigma_5)$.  The uniqueness of $g'$ follows again by iterm coinduction.

\section{Example of deploying a nominal corecursor} 
\label{app-exaCorec}

Next we show how the (capture-free) parallel substitution operator can be defined using the swap/fresh variant recursor $\ccr_5$. To keep the definition simple, we will use the full recursion enhancement of $\ccr_5$ described in \S\ref{app-enhCorec}. Let $\Env$ be the set \emph{(variable-term) environments}, which are functions $\rho:\Var \ra \ITrm$ whose support $\supp\;\rho$ is countable, where  $\supp\;\rho$ is defined to consist of all the variables $x$ that are changed by $\rho$ (in that $\rho\;x \not= \Vr\;x$) and all the free variables of the images of such variables, $\FV\;(\rho\;x)$; in other words, $\supp\;\rho = \bigcup_{x\in\Var,\rho\;x\not=\Vr\;x} (\{x\} \cup \FV(\rho\;x))$. (Note that, since iterms have countably many free variables, for $\supp\;\rho$ to be countable it suffices that the smaller set $\{x\in\Var \mid x\not=\Vr\;x\}$ is countable.) 
We wish to define $\psubst : \ITrm \ra  \Env \ra \ITrm$ satisfying the following clauses:
\begin{itemize}
	\item[(1')] $\psubst\;(\Vr\;x)\;\rho = \rho\;x$
	\item[(2')] $\psubst\;(\Ap\;t_1\;t_2)\;\rho = \Ap\;(\psubst\;t_1\;\rho)\;(\psubst\;t_2\;\rho)$
	\item[(3')] $\psubst\;(\Lm\;x\;t)\;\rho = \Lm\;x\;(\psubst\;t\;\rho)$ if $x\notin\supp\;\rho$ 
\end{itemize}
which can be reformulated as follows using the iterm destructor: 
\begin{itemize}
	\item[(1)] $\Dest\;t = \Vv\;x$ implies 
	$\psubst\;t\;\rho = \rho\;x$
	\item[(2)] $\Dest\;t = \Aa(t_1,t_2)$ implies 
	$\Dest\,(\psubst\;t\;\rho) = \Aa\, (\psubst\;t_1\;\rho,\psubst\;t_2\;\rho)$
	\item[(3)] $\Dest\;t = \Ll\;K$ implies that there exists $K'$ such that
	$\Dest\,(\psubst\;t\;\rho) = \Ll\;K'$ and $
	\{ (x,\psubst\;t'\;\rho) \mid   (x,t') \in K \mbox{ and } x\notin\supp\;\rho\} \su K'$ 
\end{itemize}
Asking how this to-be-defined function is supposed to interact with swapping and freshness, we obtain the following additional desired clauses:
\begin{itemize}
	\item[(4)] $(\psubst\;t\;\rho)[z_1\sw z_2] = 
	\psubst\;(t[z_1\sw z_2])\;(\rho[z_1\sw z_2])$ 
	\\where $\rho[z_1\sw z_2]$ is defined as $ \lambda x.\;\rho(x[z_1\sw z_2])[z_1\sw z_2]$ 
    \item[(5)] $x\;\fresh\;t$ and $x\notin\supp\;\rho$ implies 
    $x\;\fresh\;(\psubst\;t\;\rho)$
\end{itemize}

Now, clauses (1)--(5) determine the following generalized $\Sigma_5$-model $\MM$ of carrier set $M = \ITrm \times \Env$ (where clause (1) represents an ``exit'' point and thus takes advantage of the extra flexibility of full corecursion):
\begin{itemize}
	\item[(1m)] If $\Dest\;t = \Vv\;x$ then we define $\Dest^\MM(t,\rho) = \In_2\,(\rho\;x)$
	\item[(2m)] 
	If $\Dest\;t = \Aa(t_1,t_2)$ then we define 
	$\Dest^\MM (t,\rho) = \In_1(\Aa\,((t_1,\rho),(t_2,\rho)))$
	\item[(3m)] If $\Dest\;t = \Ll\;K$ then we define 
	$\Dest^\MM(t,\rho) =
	\In_1(\Ll\,\{ (x,(t',\rho)) \mid   (x,t') \in K \mbox{ and } x\notin\supp\;\rho\})$ 
	\item[(4m)] We define $(t,\rho)[z_1\sw z_2]^\MM = 
	(t[z_1\sw z_2],\rho[z_1\sw z_2])$  
	\item[(5m)] We define $x\;\fresh^\MM\;(t,\rho)$ 
	to mean 
	$x\;\fresh\;t$ and $x\notin\supp\;\rho$	
\end{itemize}

Note that the above definitions (1m)--(5m) of the operators of $\MM$ mirror the clauses (1)--(5).\footnote{This is similar to the situation we discussed for nominal recursors in \S\ref{subsec-purposeNomRec}.
Here we have some flexibility about the model's destructor in the abstraction case, because clause (3) states an inclusion; (3m) chooses the minimal solution to satisfy   (3).} Thus, stated about a presumptive function $\psubst : \ITrm \ra  \Env \ra \ITrm$, clauses (1)--(5) 
mean exactly that the curried version of $\psubst$, namely 
$\lambda (t,\rho).\;\psubst\;t\;\rho$, is a morphism of generalized $\Sigma_5$-models 
between $\ITTrm'(\Sigma_5)$ (the generalized model of iterms) and $\MM$. Thus, thanks to the recursion theorem for $\ccr_5$ (the full recursion version) 
all we need to do in order to obtain the desired function $\psubst$ satisfying (1)--(5) is to show that $\MM$ is $(\Sigma_5,\Props_5)$-model, i.e., it satisfies $\Props_5$. 

And this last fact follows routinely from the definitions and the properties of iterms. For example, the fact that $\MM$ satisfies \SwFr{} means: 
\begin{center}
For all $m\in M$, if $z_1 ,z_2\,\fresh^\MM m$ then 
$m[z_1\sw z_2]^\MM = m$. 
\end{center}
which means, using the definitions of $\MM$'s carrier and operators:
\begin{center}
	For all $t\in \ITrm$ and $\rho\in\Env$, if $z_1 ,z_2\,\fresh\,t$ and $z_1,z_2\notin\supp\;\rho$ 
	then 
	$t[z_1\sw z_2] = t$ and $\rho[z_1\sw z_2] = \rho$. 
\end{center}
This follows immediately from the fact that iterms satisfy \SwFr{} and from the definitions of swapping and support for environments. 

One may wonder where in the above development we needed that the environments have countable support: It was in ensuring that the destructor $\Dest^\MM(t,\rho)$ is well defined for the case when $\Dest\;t=\Ll\;K$, in that it returns $\Ll\;K'$ for a \emph{non-empty} set $K'$. 
Indeed, in the absence of the countable support assumption, the existence of a pair $(x,t') \in K$ such that $x\notin \supp\;\rho$ is not guaranteed. 

\medskip
Note that, similarly to term substitution, iterm  substitution must avoid the capturing of free variables in the case of $\Lm$-abstractions, as shown in clause (3'). However, here we do not need any kind of Barendregt enhancement, but have a different mechanism of ensuring that: Corecursion requires us to operate with the destructor-based clause (3) instead, which makes it clear that such avoidance conditions can be factored in the domain of the to-be-defined function (substitution).

\section{Isabelle Mechanization}
\label{app-isa}

We have mechanized our results about nominal (co)recursors as epi-(co)recursors and their comparisons in the theorem prover Isabelle/HOL \cite{LNCS2283}. More precisely, we have mechanized the following results:
\begin{itemize}
	\item the recursion theorem (Thm.~\ref{thm-allNominalRecs}), also in the enhanced-recursor version (Thm.~\ref{thm-allNominalRecsEnhanced}); 
	\item the two recursor comparison theorems (Thm.~\ref{thm-expr} and Thm.~\ref{thm-qexpr}), also in the enhanced-recursor version (Thm.~\ref{thm-exprEnhanced}); 
	\item the two negative (strictness) results on recursor comparison (Props.~\ref{prop-negRec}); 
	\item the corecursion theorem (Thm.~\ref{thm-allNominalCoRecs}); 
	\item the corecursor comparison theorem (Thm.~\ref{thm-exprCo}). 
\end{itemize}
What we have \emph{not} mechanized are the abstract criteria for comparing epi-recursors and epi-corecursors (Prop.~\ref{prop-extCriterion}, Prop.~\ref{prop-WeakExtCriterion} and Prop.~\ref{prop-extCoCriterion}). In our mechanized results, rather than invoking these criteria, we have inlined their content on a need basis, as we will explain below.

The mechanization is 
provided as a publicly available archive, 
containing the Isabelle sources as well as a browsable html version (documented by a README file and by comments in the sources).    
For the recursors, it covers both the stripped-down versions 
discussed in the main paper and their enhancements discussed in App.~\ref{app-addingBacknhancements}. 

We made heavy use of Isabelle's locales \cite{DBLP:conf/tphol/KammullerWP99,DBLP:journals/jar/Ballarin14}, which we found to be an excellent abstraction mechanism for representing the expressiveness relationships between (co)recursors. The readers not interested in locales but wishing to inspect the end mechanized results in a manner than closely matches the formulations from the paper can skip to \S\ref{app-subsec-localeFree}. 

A locale fixes some types, constants and assumptions.
One can perform definitions and prove theorems inside a locale, and everything happens relative to the entities fixed in that locale. Viewed from outside the locale, all these definitions and theorems are (1) polymorphic in that locale's fixed types, (2) universally quantified over that locale's constants, and (3) conditioned by that locale's assumptions. 
%

A locale can be \emph{interpreted} at the top level of an Isabelle theory by providing concrete types and constants for that locale's 
parameter types and constants, and verifying the locale's assumptions; after a successful interpretation, all the definitions performed and theorems proved in a locale are automatically instantiated with these concrete types and constants. 
A locale $L_2$ can also be interpreted relative to another locale $L_1$ by establishing a \emph{sublocale} relationship $L_1 \leq L_2$. This amounts to showing that the entities of $L_1$ can provide an interpretation of those of $L_2$; i.e., in the context of the fixed types, constants and assumptions of $L_1$,  one indicates some types and constants that instantiate those of $L_2$, and verifies the assumptions of $L_2$. 

The traditional application of locales is in 
modularizing the development of algebraic structures, such as groups, rings, fields etc. \cite{DBLP:journals/jar/Ballarin14,DBLP:journals/jar/Ballarin20}.  
Then (top-level) interpretations provide particular examples of such structures, e.g., interpreting the ring locale into the particular ring of integers. Moreover, sublocale relationships are useful for showing the inclusion between two types of structure, e.g., fields are particular kinds of rings, or more generally for showing that one type of structure induces another type of structure. 

Our own results in this paper are also algebraic / model-theoretic in nature. We used locales and sublocales to represent and connect our different recursor and corecursor models.

\subsection{Mechanization of the results about recursors}
\label{subapp-isa-recursors}

For each of the nine types of models underlying the nominal recursors, we have introduced a locale, as shown in Fig.~\ref{fig-localesRec}. 

\newcommand\ThAll{\small \textsf{All}}

\newcommand\PermFree{\small \textsf{PermFree$\_$model}}
\newcommand\PermFreeV{\small \textsf{PermFreeV$\_$model}}
\newcommand\SwapFreeV{\small \textsf{SwapFreeV$\_$model}}
\newcommand\SwapFree{\small \textsf{SwapFree$\_$model}}
\newcommand\SwapFreshV{\small \textsf{SwapFreshV$\_$model}}
\newcommand\SwapFresh{\small \textsf{SwapFresh$\_$model}}
\newcommand\SwapFreshSM{\small \textsf{submodel$\_$SwapFresh$\_$model}}
\newcommand\SubstFresh{\small \textsf{SubstFresh$\_$model}}
\newcommand\Renaming{\small \textsf{Renaming$\_$model}}
\newcommand\RenamingFreshV{\small \textsf{RenamingFreshV$\_$model}}

\newcommand\LambdaTerms
{\small \textsf{Lambda$\_$Terms}}
\newcommand\SwapVsPerm
{\small \textsf{Swap$\_$vs$\_$Perms}}
\newcommand\SwapSeveral
{\small \textsf{SwapFresh$\_$SwapFreshV$\_$SwapFree$\_$models}}
\newcommand\RenamingIsRenamingFreshV
{\small \textsf{Renaming$\_$model$\_$is$\_$RenamingFreshV$\_$submodel}}

\newcommand\SwapFreshHasSwapFreeV
{\small \textsf{SwapFresh$\_$model$\_$has$\_$SwapFreeV$\_$submodel}}

\newcommand\RenamingFreshVR
{\small \textsf{RenamingFreshV$\_$recursor}}

\newcommand\SwapFreshR
{\small \textsf{SwapFresh$\_$recursor}}

\newcommand\SubstFreshR
{\small \textsf{SubstFresh$\_$recursor}}

\begin{figure}
	\centering
	\begin{tabular}{c|c}
		Recursor & Corresponding locale
		\\\hline 
		$r_1$ (perm/free) & $\PermFree$
		\\
		$r_2$ (perm/free variant) &  $\PermFreeV$
		\\
		$r_3$ (swap/free variant) & $\SwapFreeV$
		\\
		$r_4$ (swap/free) & $\SwapFree$
		\\
		$r_5$ (swap/fresh variant) & $\SwapFreshV$
		\\
		$r_6$ (swap/fresh) &  $\SwapFresh$
		\\
		$r_7$ (subst/fresh) & $\SubstFresh$
		\\
		$r_8$ (renaming) & $\Renaming$
		\\
		$r_9$ (renaming fresh variant) &  $\RenamingFreshV$
		\\
	\end{tabular}
	\caption{Isabelle locales corresponding to recursors}
	\label{fig-localesRec}
\end{figure}

Each locale fixes the carrier type $M$ of a model and the operations and relations on the model: constructor, permutation, swapping, substitution, renaming, free-variable and freshness operators. 
Then it postulates the respective axioms. (In the case of the enhanced recursors, the locale also fixes the domain $D \su \Trm \times M$ and assumes that $D$ is closed under the operations, as explained in App.~\ref{app-addingBacknhancements}; it also fixes a set of variables $X$, assumes its finiteness, and the axioms are stated relative to $X$, again as explained in App.~\ref{app-addingBacknhancements}.) In short, each locale axiomatizes a class of models, namely that of $(\Sigma_i,\Props_i)$-models (and $(X,\Sigma_i,\Props_i)$-models) for each recursor $r_i$.  

\subsubsection{Mechanization of the recursor comparison results}

Recall that the results on strength comparison reported in Thm.~\ref{thm-expr} (and extended to enhanced recursors in Thm.~\ref{thm-exprEnhanced}) essentially show that 
one recursor is stronger than another, say $r_i \geq r_j$, by showing that any $(\Sigma_j,\Props_j)$-model $\MM$ is (or can be regarded as) a $(\Sigma_i,\Props_i)$-model---that is, after defining on $\MM$ the $\Sigma_i$-operations. We expressed this in Isabelle as follows: Say $L_i$ and $L_j$ are the locales for these two classes of models. Working inside locale $L_j$, we defined the $\Sigma_i$ operations and proved for them the $\Props_i$ properties. This allowed us  to prove the locale relationship $L_j \leq L_i$, which is a statement of $r_i \geq r_j$. This shallow embedding of the $\geq$ relationship allowed us to concretely borrow for $(\Sigma_j,\Props_j)$-models the $r_i$ recursor, in other words to infer the $r_j$ recursor from the $r_i$ recursor. 

\begin{figure*}
	\centering
	\includegraphics[width=\linewidth]{session_graph.pdf}
	\caption{The Isabelle theories for recursors}
	\label{fig-isabelleTheories}
\end{figure*}


To illustrate this more concretely, let us consider one of the statements of Thm.~\ref{thm-expr}, say $r_4 \geq r_2$, comparing the swap/free recursor $r_4$ with the perm/free variant recursor $r_2$. All the theories we reference below are located in the directory \textsf{Stripped$\_$Down} from the archive. The mechanisation of this result has the following components:
\begin{itemize} 
\item The class of models for each recursor corresponds to an Isabelle locale, which fixes a carrier set (as an unspecified type $'D$) and operations on it as indicated by the recursor's signature, and assumes the recursor's characteristic properties (sometimes called ``axioms'' in the paper). Namely:
\begin{itemize} 
     \item The models of $r_2$ (the perm/free variant recursor) are mechanized as the locale $\PermFreeV$ (located in theory \textsf{PermFree$\_$PermFreeV$\_$models}) which fixes the type (i.e., type variable) $'D$; and on this type it fixes constructors-like operators \textsf{VrD}, \textsf{ApD} and \textsf{LmD}, and permutation- and free-variable-like operators permD and FvarsD, and assumes the model properties required by $r_2$ -- the Isabelle notations for these properties coincide with the ones from the paper, e.g., \PmVr{}, \FvAp{}, etc. 
     \item Similarly, the models of $r_4$ (the swap/free recursor) are mechanized as the locale $\SwapFree$ (located in theory \textsf{SwapFresh$\_$SwapFreshV$\_$SwapFree$\_$models}) which again fixes the necessary model components (carrier $'D$, constructor-like operators \textsf{VrD}, \textsf{ApD} and \textsf{LmD}, and swapping- and free-variable-like operators \textsf{swapD} and \textsf{FvarsD}) and assumes the model properties required by $r_4$ (namely, $\Props_4$)---again, the Isabelle notations match the paper, e.g., \SwFv{}. 
\end{itemize} 
\item The definition of an operator, let us refer to it as $F$ (since this will represent an instance of the pre-functor $F$ from the paper's Prop.~\ref{prop-extCriterion}), that maps $r_2$-models (i.e., $(\Sigma_2,\Props_2)$-models) to $r_4$-models (i.e., $(\Sigma_4,\Props_4)$-models), was mechanized as follows:
\begin{itemize} 
      \item In the context of the $\PermFreeV$ locale, which fixes an (arbitrary) $r_2$-model, we defined a swapping-like operator \textsf{swapD} on the carrier $'D$ of that model. (This happened inside the theory \textsf{PermFreeV$\_$model$\_$is$\_$SwapFree$\_$model}.) 
      \item Then we proved that \textsf{swapD}, together with the constructor-like and free-variable-like operators (already provided by $r_2$-models), forms an $r_4$-model, i.e., satisfies the  properties $\Props_4$. This happened by first proving all the $\Props_4$ properties in the context of the locale, then using these properties to establish the sublocale relationship $\PermFreeV < \SwapFree$ via the command:
      
\begin{center}
	\begin{tabular}{l}
\textsf{sublocale PermFreeV$\_$model $<$ SwapFree$\_$model}
\\
\textsf{where swapD = swapD}
\end{tabular} 
\end{center} 

This command (which triggers a proof goal that must be discharged) 
makes the statement that, under the $\PermFreeV$ 
assumptions, i.e., for any $r_2$-model, 
the \textsf{swapD} operator just defined together with the other operators from $\PermFreeV$, 
namely \textsf{VrD}, \textsf{ApD}, \textsf{LmD} and \textsf{FvarsD}, satisfy all the $\SwapFree$ assumptions, i.e., form an 
$r_4$-model. Note that this sublocale relationship implicitly refers to the other operators, in 
other words the above is equivalent to the following command: 

\begin{center}
	\begin{tabular}{l}
		\textsf{sublocale PermFreeV$\_$model $<$ SwapFree$\_$model}
		\\
		\textsf{where swapD = swapD and VrD = VrD and ApD = ApD} 
		\\
		\textsf{and LmD = LmD and FvarsD = FvarsD}
	\end{tabular} 
\end{center} 

So this sublocale mechanizes the operator $F$. By definition, $F$ leaves unchanged the $(\textsf{VrD},\textsf{ApD},\alb \textsf{LmD})$-part of the models (i.e., factors through the forgetful operators to the constructor-only signature). Also, one can see that together with the identity on morphisms, $F$ is a functor---but we do not mechanize this fact. 
\end{itemize}
\item The above locale relationship, showing that $r_2$-models give rise to $r_4$-models (via the above operator $F$), is the core of the ordering $r_4 \geq r_2$. Indeed, taking advantage of this sublocale relationship, we showed that definability via the $r_2$ recursor implies definability via the $r_4$ recursor as follows: In the context of the $\PermFreeV$ locale, i.e., for any $r_2$-model, we showed that the unique morphism of $\Sigma_2$-models guaranteed by $r_2$ coincides (as a function) with the unique morphism to the induced $\Sigma_4$-model guaranteed by $r_4$; indeed, the latter, denoted in the formalisation by $ff0$ (and automatically made available in the $\PermFreeV$ locale via the sublocale relationship), was shown to be a morphism of  $\Sigma_2$-models. (This happened in the theory \textsf{PermFreeV$\_$model$\_$is$\_$SwapFree$\_$model}.) Note that, at this stage, $ff0$ had already been available in the context of the $\SwapFree$ locale and known to be the unique morphism between the term model and the (arbitrary) $(\Sigma_4,\Props_4)$-model fixed in the $\SwapFree$ locale.
\end{itemize} 


A similar, but slightly more involved mechanism was used for mechanizing 
the quasi-strength comparison results of Thm.~\ref{thm-qexpr} 
(extended to enhanced recursors in Thm.~\ref{thm-exprEnhanced}). Remember that $r_i \wgeq r_j$ was proved by showing that any $(\Sigma_j,\Props_j)$-model $\MM$ has a $(\Sigma_i,\Props_i)$-submodel---in that there exists a submodel $\MM'$ of $\MM$ that on the one hand still satisfies $\Props_j$, and on the other hand can be regarded as a $(\Sigma_i,\Props_i)$-model (again, via defining on $\MM'$ the $\Sigma_i$-operations). We expressed this in Isabelle as follows: %
Working inside locale $L_j$, we identified a suitable subset $M'$ of $\MM$'s carrier $M$ (and, for enhanced recursors, a suitable subset of $\MM$'s domain 
$D$) and proved that it is closed under the operations and satisfies the $\Props_j$ properties---as discussed in the proof sketch of Thm.~\ref{thm-qexpr}  
from App.~\ref{app-proofSketches}, this was in each case a minimal  set closed under the constructors, defined inductively. 
In other words, we built a $(\Sigma_j,\Props_j)$-submodel. To capture this using locales, we defined the locale $L_j'$ that extends $L_j$ with a subset $M'$ that forms a $(\Sigma_j,\Props_j)$-submodel, and proved $L_j \leq L_j'$ by defining $\MM'$ to be the aforementioned minimal submodel of $\MM$.  Then we defined the 
$\Sigma_i$ operations on $M'$ (more precisely, we defined them on the entire type and proved that $M'$ is closed under them), after which we 
proved for them the $\Props_i$ properties. This allowed us  to prove the locale relationship $L_j \leq L_i$, using the submodel $\MM'$ rather than the model $\MM$ as basis for constructing the model for $L_i$. 
These two locale inclusions together 
form a statement of $r_i \wgeq r_j$. Again, this mechanized relationship is effective, in that it allowed us to infer the $r_j$ recursor from the $r_i$ recursor.

\subsubsection{Mechanization of the recursion theorems}

As a byproduct of the above network of sublocales that allows borrowing recursors, we were able to infer all the nine recursors from just two of them, namely $r_6$ (the swap-fresh recursor) and $r_9$ (the renaming/fresh variant recursor)---as discussed in the proof sketch of Thm.~\ref{thm-allNominalRecs} from 
App.~\ref{app-proofSketches}.  For $r_6$ and $r_9$, we performed direct proofs of initiality. The initial morphism was constructed by first defining inductively a relation and then proving that it is a function and it commutes with the relevant operations and preserves freshness. This approach is distinct from (and we believe simpler than) previous techniques from the literature used to prove nominal recursion  principles. For example, \citet{primrecFOAS-Norrish04} bases the proof of his swap/free recursor on a previous recursor by 
 \citet{DBLP:conf/tphol/GordonM96}, which in turn uses the lifting of a function from preterms after proving that it respects $\alpha$-equivalence. Similarly, 
  \citet{pitts-AlphaStructural}'s proof of the perm/free recursor lifts a function from preterms. 
Our approach is simpler in that it does not delve into preterms, but operates entirely at the abstraction level of terms. 

To illustrate the borrowing process, let us give again a concrete example, considering the swap/fresh recursor $r_6$: In the context of the $\SwapFresh$ locale, i.e., fixing a $(\Sigma_6,\Props_6)$-model, consisting of a type $'D$ and some operators 
\textsf{VrD}, \textsf{ApD}, \textsf{LmD}, \textsf{swapD} and \textsf{freshD} satisfying the $\Props_6$ properties, we proved the existence and uniqueness of a $\Sigma_6$-model morphism from the term model to this (arbitrary) fixed model. This was done by defining a function $ff0$ from terms to $'D$, proving that it is is a morphism of $\Sigma_6$-models (i.e., commutes with the constructors, swapping and freshness operators), and proving that any other morphism of $\Sigma_6$-models must be equal to $ff0$. All this work was performed in the theory \textsf{Swap$\_$Fresh$\_$recursor}. The relevant theorems (as indicated via comments in the formalization) are called \textsf{ff0$\_$Vr}, \textsf{ff0$\_$Ap}, \textsf{ff0$\_$Lm}, \textsf{ff0$\_$swap} and \textsf{ff0$\_$fresh} (together stating the morphism property) and \textsf{ff0$\_$unique} (stating the uniqueness property). Note that the uniqueness property is actually stated in a stronger form: not only is $ff0$ the unique $\Sigma_6$-morphism, but is even unique among $\Sigmac$-morphisms, i.e., unique among functions commuting with the constructors. 

Now, we could have done direct proofs of soundness for all our recursors (like we did for $r_6$ and $r_9$), but we noticed that we can instead use the expressiveness relationships we discovered between them as a mechanism for \emph{borrowing} soundness from the more expressive ones. However, this was not possible with the relationships as stated in the paper because those already assumed the recursors to be sound (in fact our very notion of recursor assumed soundness); but it \emph{became} possible with a slight generalization of our results. Because this generalization does not bring much conceptually and might have distracted the reader from the main ideas, we decided not to include it in the main paper but to discuss it in the appendix (App.~\ref{app-proofSketches}) as a ``formal engineering optimization''. 

\subsubsection{Theory structure} 

The theory structure of our Isabelle development for recursors is shown in Fig.~\ref{fig-isabelleTheories}.  Everything is based on a formalization of terms as $\alpha$-equivalence classes of preterms, in the theory $\LambdaTerms$. Due to the need to borrow some properties from the terms model to arbitrary models (for given signatures) via the initial morphism (as explained in the proof sketch of Thm.~\ref{thm-qexpr} from App.~\ref{app-proofSketches}), a large theory of terms had to be formalized, comprising a wealth of results about the term operators, depth-based and fresh induction principles. 
Moreover, the auxiliary theory $\SwapVsPerm$ performs the conversions between swapping-based and permutation-based axioms: starting with the classic result on switching between the two alternative axiomatizations of nominal sets as described in Pitts's monograph \cite[Section 6.1]{pitts_2013}, and extending this correspondence in various ways as needed by the various recursors: to covering a separate freshness predicate (not reducible to swapping or permutation), to relaxing nominal sets to ``nominal sets modulo $X$'' for a more comprehensive application of Barendregt's convention (as discussed in App.~\ref{app-addingBacknhancements}), etc. 

The names of the  other theories in Fig.~\ref{fig-isabelleTheories} are 
self-explanatory. For example:
\begin{mmyitem}
\item the theory $\RenamingFreshV$ formalizes the models for the renaming/fresh variant recursor (and of course contains the locale with the same name);
\item the theory $\SwapSeveral$ formalizes the models corresponding to the swap/fresh, swap/fresh variant and swap/free recursors (and contains the corresponding locales); 
\item the theory $\RenamingIsRenamingFreshV$ proves that each renaming model is (can be regarded as) a renaming/fresh variant model, via the sublocale statement $\Renaming \leq \RenamingFreshV$; 
\item the theory $\SwapFreshHasSwapFreeV$ proves that each swap/fresh model has a swap/fresh submodel that is (can be regarded as) a swap/freee variant model, via the sublocale statements 
$\SwapFresh \leq \SwapFreshSM$ and $\SwapFresh \leq \SwapFreeV$. 
\end{mmyitem}

Note that there are three theories whose names refer explicitly to a recursor: $\RenamingFreshVR$, $\SwapFreshR$ and 
$\SubstFreshR$. The first two of these contain direct formalizations of the renaming/fresh variant and swap/fresh recursors. As discussed in the proof sketch of Thm.~\ref{thm-allNominalRecs}
in App.~\ref{app-proofSketches}, these two recursors (which are at the top of the $\geq$ hierarchy) have been used to derive all the other recursors. In all but one case, we have performed this derivation right after the sublocale result that enables it. For example, the swap/free variant recursor is derived from the swap/fresh recursor in theory  $\SwapFreshHasSwapFreeV$, right after the sublocale relationship $\SwapFresh \leq \SwapFreeV$ is established. 
The exception is the subst/fresh recursor, to which we dedicated its own theory $\SubstFreshR$---this was done in order to highlight the slightly more involved structure of the borrowing argument, which requires fresh induction. 

The theory $\ThAll$ imports all the relevant top theories (and a few of the relevant non-top ones for better documentation) and contains comments that map the formalization to the paper.\footnote{For the theories that are located in the figure below \textsf{All}, i.e., import  this theory, please see \S\ref{app-subsec-localeFree}.}

\subsection{Mechanization of the negative results} 
\label{subsec-neg}
The two negative results expressed in Prop.~\ref{prop-negRec} are mechanized in the theory \textsf{Prop16}.  
The mechanized statement follows closely the presentation from \S\ref{subsec-negResRec}, in each case stating that there exist morphisms definable by one recursor but not by the other. 
The mechanized proofs  also follow closely Prop.~\ref{prop-negRec}'s proof sketch given in the paper (and the extended proof sketch given in App.~\ref{app-proofSketches}). For each of the two $r_i \not\geq r_j$ results, we:
(1) build a $(\Sigma_j,\Props_j)$-model (and prove that it is indeed a $(\Sigma_j,\Props_j)$-model), 
and (2) show that there exists no extension of 
	the $\Sigmac$-part of that model to a $(\Sigma_i,\Props_i)$-model. 
(In each case, such an extension would consist of freeness and permutation operators that, together with the constructor-like operators, satisfy the $\Props_i$-properties.)

\subsection{Mechanization of the results about corecursors}
\label{subapp-isa-corecursors}

Our approach to mechanizing the corecursors is similar to that we took for recursors. Namely,  we have a locale for each of the eight types of models underlying the nominal corecursors, as shown in Fig.~\ref{fig-localesCoRec}. In a corecursor context, in the formalization (unlike in the paper) we use the term ``comodel'' rather than ``model''. 

\newcommand\PermFreeCo{\small \textsf{PermFree$\_$comodel}}
\newcommand\PermFreeVCo{\small \textsf{PermFreeV$\_$comodel}}
\newcommand\SwapFreeVCo{\small \textsf{SwapFreeV$\_$comodel}}
\newcommand\SwapFreshVCo{\small \textsf{SwapFreshV$\_$comodel}}
\newcommand\SwapFreshCo{\small \textsf{SwapFresh$\_$comodel}}
\newcommand\SwapFreshSMCo{\small \textsf{submodel$\_$SwapFresh$\_$comodel}}
\newcommand\SubstFreshCo{\small \textsf{SubstFresh$\_$comodel}}
\newcommand\RenamingCo{\small \textsf{Renaming$\_$comodel}}
\newcommand\RenamingFreshVCo{\small \textsf{RenamingFreshV$\_$comodel}}

\begin{figure}
	\centering
	\begin{tabular}{c|c}
		Corecursor & Corresponding locale
		\\\hline 
		$\ccr_1$ (perm/free) & $\PermFreeCo$
		\\
		$\ccr_2$ (perm/free variant) &  $\PermFreeVCo$
		\\
		$\ccr_3$ (swap/free variant) & $\SwapFreeVCo$
		\\
		$\ccr_5$ (swap/fresh variant) & $\SwapFreshVCo$
		\\
		$\ccr_6$ (swap/fresh) &  $\SwapFreshCo$
		\\
		$\ccr_7$ (subst/fresh) & $\SubstFreshCo$
		\\
		$\ccr_8$ (renaming) & $\RenamingCo$
		\\
		$\ccr_9$ (renaming fresh variant) &  $\RenamingFreshVCo$
		\\
	\end{tabular}
	\caption{Isabelle locales corresponding to corecursors}
	\label{fig-localesCoRec}
\end{figure}

The proof of Thm.~\ref{thm-exprCo} shows that 
one corecursor is stronger than another, say $\ccr_i \geq \ccr_j$, by transforming $(\Sigma_j,\Props_j)$-models to $(\Sigma_i,\Props_i)$-models; always the carrier is the same, and the specific $\Sigma_i$-operators are defined. Again this is done by working inside a locale $L_j$, which represents $(\Sigma_j,\Props_j)$-models by fixing $\Sigma_j$ operators and assuming the $\Props_j$ properties. Inside this locale, we define the $\Sigma_i$-operations and infer the $\Props_i$ properties from $\Props_j$. This allows us to prove the sublocale relationship $L_j \leq L_i$, where $L_i$ is the locale representing $(\Sigma_i,\Props_i)$-models. So $\ccr_i \geq \ccr_j$ is formalized as $L_j \leq L_i$.  This again allowed us to borrow for $(\Sigma_j,\Props_j)$-models the $r_i$ corecursor, in other words to infer the $r_j$ corecursor from the $r_i$ corecursor. 

\begin{figure*}
	\centering
	\includegraphics[width=\linewidth]{cosession_graph.pdf}
	\caption{The Isabelle theories for corecursors}
	\label{fig-isabelleCoTheories}
\end{figure*}

This sublocale hierarchy of locales, which matches exactly the $\geq$-hierarchy of Thm.~\ref{thm-exprCo}, allowed us to (1) prove the corecursion principle for $\ccr_2$ (which is at the top of the hierarchy), and infer all the others from it along sublocale relationships---as discussed in the proof sketch of Thm.~\ref{thm-allNominalCoRecs} from 
App.~\ref{app-proofCoSketches}.  

\newcommand\ParallelSubstitution
{\small \textsf{Parallel$\_$Substitution}}
\newcommand\InfinitaryLambdaTerms
{\small \textsf{Infinitary$\_$Lambda$\_$Terms}}
\newcommand\CoSwapSeveral
{\small \textsf{SwapFresh$\_$SwapFreshV$\_$comodels}}
\newcommand\RenamingIsSwapFreeV
{\small \textsf{Renaming$\_$comodel$\_$is$\_$SwapFreeV$\_$comodel}}

\newcommand\CoPermFreeVR
{\small \textsf{PermFreeV$\_$corecursor}}

\newcommand\CoSeveralR
{\small \textsf{SwapFreshV$\_$SwapFresh$\_$SwapFreeV$\_$PermFree$\_$corecursors}}

\newcommand\CoSeveralTR
{\small \textsf{Renaming$\_$RenamingFreshV$\_$SubstFresh$\_$corecursors}}

\newcommand\CoSwapFreeV{\small \textsf{SwapFreeV$\_$comodel}}

The theory structure of our Isabelle development for corecursors is shown in Fig.~\ref{fig-isabelleCoTheories}.  
The formalization of iterms as equivalence classes of pre-iterms and of all the operators and proof principles described in \S\ref{app-detailsIterms} is performed in the theory $\InfinitaryLambdaTerms$.

The names of the  other theories in Fig.~\ref{fig-isabelleCoTheories} are again
self-explanatory. For example:
\begin{mmyitem}
	\item the theory $\CoSwapFreeV$ formalizes the swap/free models (and contains the locale \\ $\SwapFreeVCo$); 
		
	\item the theory $\CoSwapSeveral$ formalizes the swap/fresh and swap/fresh variant models (and contains the corresponding locales); 

	\item the theory $\RenamingIsSwapFreeV$ proves that each renaming model is (can be regarded as) a swap/free variant model, via the sublocale statement $\RenamingCo \leq \SwapFreeVCo$. 
\end{mmyitem}

There are three theories whose names refer explicitly to  corecursors.  
$\CoPermFreeVR$ contains the direct formalization of the perm/free corecursor, i.e., the proof of the finality principle as discussed in the proof sketch of Thm.~\ref{thm-allNominalCoRecs}
from App.~\ref{app-proofCoSketches}. This corecursor (which is at the top of the $\geq$ hierarchy) has been used to derive all the other seven corecursors. These derivations happen along the corresponding sublocale relationships in the theories 
\\$\CoSeveralR$ and \\$\CoSeveralTR$.

The theory $\ParallelSubstitution$ contains the definition of parallel substitution on iterms using the swap/fresh variant corecursor $\ccr_5$ (as discussed in \S\ref{app-exaCorec}). The full-recursion enhancement of the swap/fresh variant corecursor described in \S\ref{app-enhCorec} is performed at the end of the theory \\$\CoSeveralR$. 

Again, there is a theory $\ThAll$ that imports the relevant top theories and has comments connecting the formalization to the paper.\footnote{For the theories that are located in the figure below \textsf{All}, i.e., import  this theory, we again refer the reader to  \S\ref{app-subsec-localeFree}.}

\subsection{Locale-free, top-level statements of the main results}
\label{app-subsec-localeFree}


We would like to stress that our mechanization, while walking a tight rope in order to minimize the number of recursion principles that are proved  directly (without borrowing), does \emph{not} suffer from any bootstrapping problem or incur any additional assumptions. Rather, it certifies 
Thms.~\ref{thm-allNominalRecs}, \ref{thm-expr}, \ref{thm-qexpr}, 
\ref{thm-allNominalCoRecs} and \ref{thm-exprCo} from the main paper (and also Thms.~\ref{thm-allNominalRecsEnhanced} and \ref{thm-exprEnhanced} from the appendix) 
as they are claimed in the paper, but 
using the Isabelle locale jargon. (And Prop.~\ref{prop-negRec} has a faithful formalization as well, but that does not make use of locales---see \S\ref{subsec-neg}.) 

For readers who are interested in inspecting the mechanized statements of the results but not in understanding the locale jargon, we have also reformulated the results in a manner that matches closely the statements from the main paper. 

In Fig.~\ref{fig-isabelleTheories} (for recursors), the locale-free statements are in the theories that inherit $\ThAll$, culminating with theories that have suggestive names, namely \textsf{Theorem9}, \textsf{Theorem12}, \textsf{Theorem15} and \textsf{Prop16}.  The main results in these theories, which can be found using the keyword ``theorem'', 
are statements of Thm.~\ref{thm-allNominalRecs}, Thm.~\ref{thm-expr},  Thm.~\ref{thm-qexpr} and Prop.~\ref{prop-negRec} that, just like their paper counterparts, 
refer to the nominal recursors $r_i$ using their epi-recursor structure and the notion of definability (introduced in the theory \textsf{Definability$\_$by$\_$Recursors}).  Most of the results are formalized using categories of models assumed to have the carrier sets as the entire type (which is very convenient in HOL formalizations); however, for Thm.~\ref{thm-qexpr} and one half of Prop.~\ref{prop-negRec}, we need the greater flexibility offered by considering explicit carrier sets (as subsets of the underlying types), so we formalized the set-based versions of these categories as well.  

For example, the part of Thm.~\ref{thm-allNominalRecs} that refers to $r_1$ states the recursion (initiality) principle as follows 
(in theory \textsf{Theorem9}): 

\medskip
\noindent 
\textsf{
theorem init$\_$I1:
\\
 \hspace*{3ex}isObjectC1 (VrD,ApD,LmD,permD,FvarsD) $\LRA$
\\
\hspace*{3ex}$\exists!$G. isMorphismC1 G I1 (VrD,ApD,LmD,permD,FvarsD)"
}
\medskip 

This states that for any object of the category \textsf{C1} (the category of models for $r_1$), 
there exists a unique morphism \textsf{G} from the term model \textsf{I1} to  \textsf{C1}. 
Note that an object in  \textsf{C1} is a tuple (VrD,ApD,LmD,permD,FvarsD) consisting of 
constructor-like, permutation-like and free-variable-like operators on a carrier type (not shown 
explicitly in the tuple). The predicates \textsf{isObjectC1} and \textsf{isMorphismC1} are defined appropriately 
(in particular, the models are required to satisfy the $\Props_1$ properties), and shown to form a category 
(via a separate Isabelle statement). The model \textsf{I1} is defined as the term model for the signature $\Sigma_1$:

\medskip 
\noindent
\textsf{
definition I1 where I1 = (Vr, Ap, Lm, perm, Fvars)
}
\medskip

Indeed, \textsf{Vr}, \textsf{Ap}, \textsf{Lm}, \textsf{perm}, \textsf{Fvars} are defined to be the standard operators on terms. 

In fact, we define all the components of the epi-recursor $r_1$ = \textsf{(B,T,C1,I1,R1)}: the base category \textsf{B} and its 
base object \textsf{T}, the category \textsf{C1} and its object \textsf{I1}, the functor \textsf{R1}, and wee prove that 
they are indeed categories and functor, that \textsf{R1} applied objects to \textsf{I1} yields \textsf{T}, etc. In short, 
\textsf{(B,T,C1,I1,R1)} is shown to be an epi-recursor. And the same is done for all the other recursor $r_i$. All this is done 
in theory \textsf{Theorem9}. 

The recursion theorem is alternatively expressed in combinator form (also in theory \textsf{Theorem9}): 

\medskip 
\noindent 
\textsf{
theorem REC1$\_$I1: 
\\
\hspace*{3ex}isObjectC1 (VrD,ApD,LmD,permD,FvarsD) $\LRA$ 
\\
\hspace*{3ex}isMorphismC1 (REC1 VrD ApD LmD) I1 (VrD,ApD,LmD,permD,FvarsD)
}
\medskip 

The above says that \textsf{REC1 VrD ApD LmD} is the unique morphism from the term model \textsf{I1} to the model \textsf{(VrD,ApD,LmD,permD,FvarsD)}. 
Note that the combintor \textsf{REC1} only depends on the constructor-like operators 
\textsf{VrD}, \textsf{ApD} and \textsf{LmD}, and not on the other two operators, \textsf{permD} and \textsf{FvarsD}; 
however, the fact that \textsf{REC1 VrD ApD LmD} is a morphism between \textsf{I1} and 
\textsf{(VrD,ApD,LmD,permD,FvarsD)} of course relies crucially on \textsf{permD} and \textsf{FvarsD} and the $\Props_1$ properties. 

\medskip 
Definability by recursor $r_1$ is expressed as follows (in theory \textsf{Definablity$\_$by$\_$Recursors}):\footnote{``\textsf{fun}'' is another way of introducing definitions in Isabelle. We prefer it here because, unlike ``\textsf{definition}'', it allows pattern matching.}  

\medskip 
\noindent 
\textsf{
fun definableByR1 where 
\\\hspace*{3ex}definableByR1 f (VrD,ApD,LmD) = 
\\\hspace*{3ex}$\exists$permD FvarsD. isObjectC1 (VrD,ApD,LmD,permD,FvarsD) $\wedge$ 
f = REC1 VrD ApD LmD
}
\medskip 

Thus, for a morphism \textsf{f} between the term $\Sigmac$-model and 
another $\Sigmac$-model (VrD,ApD,LmD), the definability predicate says that there exists a extension of (VrD,ApD,LmD) 
to a \textsf{C1} object \textsf{(VrD,ApD,LmD,permD,FvarsD)}  (i.e., a $(\Sigma_1,\Props_1)$-model) 
such that $f$ can be defined as the unique morphism between the term $(\Sigma_1,\Props_1)$-model 
and \textsf{(VrD,ApD,LmD,permD,FvarsD)}  (namely, using the \textsf{REC1} combinator). The above is done for all the recursors $r_i$.

Now, for example, the $r_5 \geq r_4$ part of Thm.~\ref{thm-expr} is expressed just like in the paper, 
but expanding the definition of $\geq$, 
as an implication between definabilities (in theory \textsf{Theorem12}): 

\medskip 
\noindent 
\textsf{
	theorem r5$\_$ge$\_$r4:
	\\\hspace*{3ex}definableByR4 f (VrD, ApD, LmD) $\LRA$
	\\\hspace*{3ex}definableByR5 f (VrD, ApD, LmD)
} 
\medskip 

And the $r_2 \not\geq r_4$ part of Prop.~\ref{prop-negRec} is expressed again like in the paper with the definition of $\geq$ expanded, 
stating the existence of an \textsf{f} definable by $r_4$ but not by $r_2$ (in theory \textsf{Prop16}): 

\medskip 
\noindent 
\textsf{
theorem not$\_$r2$\_$ge$\_$r4: 
	\\\hspace*{3ex}$\exists$f. definableByR4 f (VrD, ApD, LmD) $\wedge$ $\neg$ definableByR2 f (VrD, ApD, LmD)
}
\medskip 

\noindent 
\textit{Set-based versions of the concepts.}  
While working with models as tuples of operators on the entire type (i.e., assuming that the carrier of the model 
is an entire type) does not lose generality, sometimes we need more flexibility---for example, when we wish to consider 
submodels, whose carriers are usually not the entire type. Making disjoint copies and performing type definitions 
could get us by, but in these cases it is more convenient to employ a more flexible, ``set-based'' version of the models, 
with explicit carrier sets. In this more flexible setting, for example the models for $r_1$ are now 
\textsf{(D,VrD,ApD,LmD,permD,FvarsD)} which in  addition to the operators also feature a subset \textsf{D} of the carrier 
type, assumed to be closed under these operators, e.g., \textsf{ApD d1 d2} $\in$ D whenever \textsf{d1,d2} $\in$ \textsf{D}. 
We use primed notation to indicate these set-based concepts, e.g., \textsf{isObjectC1'}, 
\textsf{definableByR1'}, etc. While the primed versions are semantically equivalent to the originals, they are needed for modeling certain phenomena in a more finite-grained manner. They are formalized in 
the theories \textsf{Set$\_$Based$\_$Recursors} and  \textsf{Definablity$\_$by$\_$Recursors}.

The theory \textsf{Theorem12$\_$setBased} infers the set-based version of Thm.~\ref{thm-expr} by transferring to sets the results from theory \textsf{Theorem12}.  
For example, the $r_5 \geq r_4$ part of Thm.~\ref{thm-expr} 
is expressed as follows in the set-based setting (in theory \textsf{Theorem12$\_$setBased}): 

\medskip 
\noindent 
\textsf{
	theorem r5$\_$ge$\_$r4':
	\\\hspace*{3ex}definableByR4' f (D, VrD, ApD, LmD) $\LRA$
	\\\hspace*{3ex}definableByR5' f (D, VrD, ApD, LmD)
} 
\medskip 

The primed set-based versions yield the original versions, e.g., \textsf{r5$\_$ge$\_$r4'} yields \textsf{r5$\_$ge$\_$r4} by taking \textsf{D} to be universal set (comprising the entire type).

Because the $r_1 \not\geq r_2$ part of Prop.~\ref{prop-negRec} requires a set defined by a predicate, 
we also formalize it using the more general primed models. Namely, after defining 
a particular $(\Sigma_2,\Props_2)$-model as described in the proof of Prop.~\ref{prop-negRec}, 
which in the formalization we call \textsf{(E,VrE,ApE,LmE)}, we prove:\footnote{So in the following statement, \textsf{E,VrE,ApE,LmE} are \emph{not} universally quantified variables (as are, for example, \textsf{D,VrD,ApD,LmD} in the theorem \textsf{r5$\_$ge$\_$r4'} above), but certain defined constants.}

\medskip 
\noindent 
\textsf{
	theorem not$\_$r1$\_$ge$\_$r2: 
	\\\hspace*{3ex}$\exists$f. definableByR2' f (E, VrE, ApE, LmE) $\wedge$ $\neg$ definableByR1' f (E, VrE, ApE, LmE)
}
\medskip

The same is true for Thm.~\ref{thm-qexpr}, where we must consider initial segments formed by submodels.  For example, the 
$r_6\wgeq r_8$ part of Thm.~\ref{thm-qexpr} is formalized as follows (in theory \textsf{Theorem15}): 

\medskip 
\noindent 
\textsf{
theorem r6$\_$quasi$\_$ge$\_$r8':
\\\hspace*{3ex}definableByR8' g (D, VrD, ApD, LmD) $\LRA$  
\\\hspace*{3ex}$\exists$g0. isMorphismB' g0 T' (ob' (D, VrD, ApD, LmD)) $\wedge$ 
\\\hspace*{3ex}\phantom{$\exists$}
g = mo' (D, VrD, ApD, LmD) $\circ$ g0 $\wedge$
\\\hspace*{3ex}\phantom{$\exists$}
definableByR6' g0 (ob' (D, VrD, ApD, LmD))
}
\medskip 

Its formulation matches that from the paper, but again expands the definition of $\wgeq$: \textsf{mo'} 
and \textsf{ob'} formalize the initial segment $(\Bcat_0,(m(B):o(B)\ra B)_{B \in \Obj{\Bcat}})$ from Def.~\ref{defi-qstrong} 
(where \textsf{mo'}  is the morphism operator $m$ and \textsf{ob'} is the object operator $o$). 
For all the $\wgeq$ relationships stated in the theorem, \textsf{ob'}, when applied to an object of the category \textsf{B'}, 
returns its minimal submodel  
and \textsf{mo'} returns the inclusion morphism (as explained in the proof sketch of Thm.~\ref{thm-qexpr}). These are proved to form an initial segment of \textsf{B'}. 
So the above theorem says that any morphism \textsf{g} definable by $r_8$ can be written as a composition 
between the initial-segment morphism and a morphism \textsf{g0} definable by $r_6$ (like in the definition of $\wgeq$). 
Because this composition involves submodels, the definablity of \textsf{g0} must be expressed using the set-based version. This is why we shift to the set-based setting completely, and  
the morphisms \textsf{g0} and \textsf{g} dwell the set-based version of the base category.

\bigskip 
In Fig.~\ref{fig-isabelleCoTheories} (for corecursors), the locale-free statements are again in the theories that inherit 
$\ThAll$, culminating with theories that have suggestive names, namely \textsf{Theorem18} and \textsf{Theorem19}. 
The main results, which can again be found using the keyword ``theorem'',  
are statements of Thm.~\ref{thm-allNominalCoRecs} and Thm.~\ref{thm-exprCo}. Similarly to the case of recursors, we 
use concepts and terminology that matches the paper closely.  

The formalization follows a similar pattern to the one 
for recursors. For example, the $\ccr_5 \geq \ccr_6$ part of Thm.~\ref{thm-exprCo} is expressed as follows (in theory \textsf{Theorem19}): 

\medskip 
\noindent 
\textsf{
	theorem cr5$\_$ge$\_$cr6:
	\\\hspace*{3ex}definableByCR6 f DestD $\LRA$
	\\\hspace*{3ex}definableByCR5 f DestD
}

\end{document}